\renewcommand{\@seccntformat}[1]{{\normalfont
\csname the#1\endcsname .}\hspace{.3cm}}
\renewcommand{\section}{\@startsection
{section}{1} {0mm} {-\baselineskip}{0.2\baselineskip} {\bf}}
\renewcommand{\subsection}{\@startsection
{subsection}{1} {0mm} {-\baselineskip}{0.2\baselineskip} {\bf}}
\newcommand{\Perp}{\perp\!\!\!\perp}
\newcommand{\qed}{\hfill $\blacksquare$}
\numberwithin{equation}{section}
\renewcommand{\baselinestretch}{1.0}
\theoremstyle{plain}
\newtheorem{theorem}{Theorem}[section]
\newenvironment{proof}[1][Proof]{\begin{trivlist}
\item[\hskip \labelsep {\bfseries #1.}]}{\end{trivlist}}
\newtheorem{proposition}[theorem]{Proposition}
\newtheorem{lemma}[theorem]{Lemma}
\newtheorem*{remark}{Remark}
\begin{document}
\title{Inference from high-frequency data:\\ A subsampling approach\thanks{
We thank Yacine A\"{i}t-Sahalia (co-managing editor), Ilze Kalnina, Per Mykland, two anonymous referees, participants at the ``Market Microstructure and High-Frequency Data'' meeting in Chicago, USA; the ``High-Frequency Financial Econometrics'' conference in Barcelona, Spain; the AHOI workshop at Imperial College, London, UK; the Dynstoch workshop in Lund, Sweden; the ``Econometrics of High-Dimensional Risk Networks'' conference at the Stevanovich Center in Chicago, USA; the ``9th International Conference on Computational and Financial Econometrics'' in London, UK; the ``9th Annual SoFiE Conference'' in Hong Kong, China and in seminars at U. of Padova, Venice and Verona for helpful comments and suggestions. Christensen and Podolskij appreciate research funding from the Danish Council for Independent Research (DFF -- 4182-00050). In addition, Podolskij thanks the Villum Foundation for supporting his research on ``Ambit fields: Probabilistic properties and statistical inference.'' Thamrongrat gratefully acknowledges money from Deutsche Forschungsgemeinschaft through the Research Training Group (RTG -- 1953). This work was also supported by CREATES, which is funded by the Danish National Research Foundation (DNRF78). Please address correspondence to: kim@econ.au.dk.}}
\author{K. Christensen\thanks{CREATES, Department of Economics and Business Economics, Aarhus University, Fuglesangs All\'{e} 4, 8210 Aarhus V, Denmark.}
\and M. Podolskij\thanks{Department of Mathematics, Aarhus University, Ny Munkegade 118, 8000 Aarhus C, Denmark.} $^{\text{,}}$ \kern-0.15cm \footnotemark[2]
\and N. Thamrongrat\thanks{Institute of Applied Mathematics, Heidelberg University, Im Neuenheimer Feld 294, 69120 Heidelberg, Germany.}
\and B. Veliyev\footnotemark[2]}
\date{August, 2016}
\maketitle

\begin{abstract}
In this paper, we show how to estimate the asymptotic (conditional) covariance matrix, which appears in central limit theorems in high-frequency estimation of asset return volatility. We provide a recipe for the estimation of this matrix by subsampling; an approach that computes rescaled copies of the original statistic based on local stretches of high-frequency data, and then it studies the sampling variation of these. We show that our estimator is consistent both in frictionless markets and models with additive microstructure noise. We derive a rate of convergence for it and are also able to determine an optimal rate for its tuning parameters (e.g., the number of subsamples). Subsampling does not require an extra set of estimators to do inference, which renders it trivial to implement. As a variance-covariance matrix estimator, it has the attractive feature that it is positive semi-definite by construction. Moreover, the subsampler is to some extent automatic, as it does not exploit explicit knowledge about the structure of the asymptotic covariance. It therefore tends to adapt to the problem at hand and be robust against misspecification of the noise process. As such, this paper facilitates assessment of the sampling errors inherent in high-frequency estimation of volatility. We highlight the finite sample properties of the subsampler in a Monte Carlo study, while some initial empirical work demonstrates its use to draw feasible inference about volatility in financial markets.

\bigskip \noindent \textbf{JEL Classification}: C10; C80.

\medskip \noindent \textbf{Keywords}: bipower variation; high-frequency data; microstructure noise; positive semi-definite estimation; pre-averaging; stochastic volatility; subsampling.
\end{abstract}

\vfill

\thispagestyle{empty}
\pagebreak

\section{Introduction}

\setcounter{page}{1} \renewcommand{\baselinestretch}{1.6} \normalsize

Volatility is a key ingredient in the assessment and prediction of financial risk, be it in asset- and derivatives pricing \citep*[e.g.,][]{black-scholes:73a, sharpe:64a}, portfolio
selection \citep*[e.g.,][]{markowitz:52a}, or risk management and hedging \citep*[e.g.,][]{jorion:06a}.

Around the turn of the millennium, the advent of financial high-frequency data led to a surge in the nonparametric measurement of volatility \citep*[see, e.g.,][]{andersen-bollerslev-diebold:10a, barndorff-nielsen-shephard:07a}. High-frequency data are recorded at the tick-by-tick level and store information about the time, price (i.e., a bid-ask quote or transaction price), and size of individual orders and executions. In theory, the harnessing of high-frequency information leads to a perfect, error-free measure of ex-post volatility via the realized variance; a sum of squared intraday log-returns \citep*[e.g.,][]{andersen-bollerslev:98a, barndorff-nielsen-shephard:02a}.

After the initial---pioneering---work, the literature turned towards addressing two inherent shortcomings of realized variance. Firstly, realized variance can only estimate quadratic variation, and it does not separate continuous, diffusive volatility from discontinuous jump risk. This motivated the development and application of estimators that can robustly measure very general functionals of volatility, also in the presence of jumps \citep*[e.g.,][]{ait-sahalia-jacod:12a, andersen-dobrev-schaumburg:12a, barndorff-nielsen-shephard:04b, barndorff-nielsen-graversen-jacod-podolskij-shephard:06a, christensen-oomen-podolskij:10a, corsi-pirino-reno:10a, mancini:09a, mykland-shephard-sheppard:12a}. Secondly, when applied to tick-by-tick data realized variance is severely biased by common sources of noise, which form an integral part of any realistic model for securities' prices  \citep*[e.g.,][]{hansen-lunde:06b, zhou:96a}. This paved the way for the next cohort of estimators that were designed to be more resistant to noise, e.g., \citet*{ait-sahalia-mykland-zhang:05a, barndorff-nielsen-hansen-lunde-shephard:08a, jacod-li-mykland-podolskij-vetter:09a, podolskij-vetter:09a, zhang:06a}. As such, much progress has been made and today there is no shortage of estimators that can provide consistent estimates of volatility functionals in various contexts (either plain vanilla, or robustly to jumps or noise---or both).

The large battery of estimators at our disposal also brings with it an increasing demand for assessing estimation errors and drawing inference about volatility---e.g., in the form of confidence intervals or hypothesis tests. This is because whether the sample is small or large, as long as it is finite, there is necessarily some sampling error left in the estimate,
and when confidence intervals are computed in practice, high-frequency estimators of volatility are often found to contain sizable errors \citep*[e.g.,][]{barndorff-nielsen-hansen-lunde-shephard:08a}. The distinction between a realized measure of volatility and its population target is critical, because failing to properly take sampling uncertainty into account can severely distort parameter estimation of stochastic volatility models and be detrimental to the construction and evaluation of forecasts of volatility \citep*[e.g.,][]{andersen-bollerslev:98a, andersen-bollerslev-meddahi:05a,andersen-bollerslev-meddahi:11a, hansen-lunde:06a,hansen-lunde:14a,patton:11a}.

There are several problems associated with drawing inference about volatility functionals in high-frequency data. The first and foremost is of course to figure out the relevant distribution theory. The next hurdle is then to find a good proxy for the asymptotic (conditional) variance of the estimator. This is a formidable challenge in practice, because the asymptotic variance often depends on parameters that are substantially more difficult to back out from the available sample of high-frequency data. The expression for the asymptotic variance typically also rests heavily on the properties of the data and it is bound to change depending on these. This is an unpleasant concern with real high-frequency data, which are contaminated by market microstructure frictions. While the noise is often assumed to be i.i.d. and independent of the efficient price, there is some empirical and theoretical support for a serially correlated, heteroscedastic, and, potentially, endogenous noise process at the tick level \citep*[e.g.,][]{ait-sahalia-mykland-zhang:11b,diebold-strasser:13a, hansen-lunde:06b,kalnina-linton:08a}. An estimator of the asymptotic variance designed for i.i.d. and independent noise can not be expected to give valid inference, if the underlying conditions are violated. In practice, it is not trivial to verify the conditions imposed on the noise \citep*[e.g.,][]{hautsch-podolskij:13a}, which makes it more pressing to find estimators that are robust against modeling criteria. Finally, in multivariate analysis, inference would at some stage require an estimate of the asymptotic covariance matrix. Here, the proposed estimator should ideally be positive semi-definite, while, in contrast, some existing estimators of the asymptotic covariance matrix in the high-frequency setting are not assured to be that. As we show in this paper, this runs smack into problems in practically relevant and realistic settings (see Table \ref{Table:Proportion} in Section \ref{Sec:Simulation}).

In this paper, we propose to use subsampling for assessing the uncertainty embedded in high-frequency estimation of functionals of financial volatility. Subsampling is based on creating several---properly rescaled---estimates of the parameter(s) of interest using local stretches of sample data and then studying the sampling variation of these. It was originally developed in the context of stationary time series in the long-span domain \citep*[e.g.,][]{politis-romano:94a, politis-romano-wolf:99a}. The term appeared in the high-frequency literature in \citet*{zhang-mykland-ait-sahalia:05a}, who proposed a two-scale realized variance based on price subsampling. This is different from traditional subsampling and actually does not work for asymptotic variance estimation, because it leads to an overlapping samples problem in the subsampled returns, causing the subsample estimates to be too strongly correlated in large samples. This was pointed out by \citet*{kalnina-linton:07a} and \citet*{kalnina:11a}, who propose an inference strategy based on various alternative subsampling schemes, which lead to better asymptotic properties. \citet*{kalnina:15a} extends these ideas to inference about a multivariate parameter, while \citet*{ikeda:16a} and \citet*{varneskov:16a} consider subsample estimation of the asymptotic variance of the realized kernel.

As an inferential tool, subsampling has several attractive features from a practical point of view. First, subsampling is intuitive and relatively easy to compute, because it does not require an extra set of estimators; it uses copies of the original statistic. Second, in the multivariate context, it leads to variance-covariance matrix estimates that are positive semi-definite by construction. And, third, subsampling does not explicitly take the structure of the asymptotic variance into account. It is to a large degree automatic and has an innate ability to adapt to the problem at hand, which makes it highly robust against design criteria, as shown by \citet*{kalnina:11a}. This type of analysis, where inference is effectively carried out by bypassing the asymptotic variance, is also emphasized by \cite{mykland-zhang:17a}, who propose a so-called Observed Asymptotic Variance, which, as our approach, is based on the comparison of adjacent estimators.

This paper builds on these ideas. It contributes to extant literature in several directions. First, we propose to subsample bipower variation as a means to
estimate the asymptotic variance-covariance matrix of this statistic. We devise an estimator, which involves fewer tuning parameters compared to \citet*{kalnina:11a}. Second, we derive an asymptotic theory within this framework in both frictionless and noisy markets. We show our estimator is consistent under weak assumptions on the data-generating process, accommodating jumps in the price and volatility, while allowing the noise to be either heteroscedastic or autocorrelated. Third, with stronger conditions, we provide a decomposition of
the leading errors of the subsampler, from which we get insights about how to configure it by optimally choosing its tuning parameters (e.g., the number of subsamples). This yields a rate of convergence for our statistic; a result that has---to the best of our knowledge---not been derived in earlier work. It reveals that the robustness of subsampling is not free of charge, but leads to a loss of efficiency compared to existing estimators in the form of a slower rate of convergence. It implies a trade-off in that if, for example, one is prepared
to use an estimator, which is not positive semi-definite, a better rate can potentially be achieved. Or, if prior knowledge about the asymptotic variance matrix is available or parametric assumptions can be verified from the data, it is typically better to construct estimators which attempt to exploit that information relative to doing subsampling.\footnote{To paraphrase \citet*{politis-romano-wolf:99a}, subsampling is ``a robust starting point toward even more refined procedures.''} Still, in finite samples we show in a realistic setting with microstructure noise the subsampler produces convincing results compared to some available alternatives.

The rest of this paper goes as follows. Section 2 introduces the setting. In Section 3, we derive the theory first without and then with noise. In Section \ref{Sec:Simulation}, we do numerical simulations in order to inspect the finite sample performance of our estimator. In Section \ref{Sec:Empirical}, we confront our framework with some real high-frequency data, while the Appendix contains the proofs of our results.

\section{Theoretical framework}

We consider a scalar process $X = (X_{t})_{t \geq 0}$, which represents the log-price of some financial security. It is defined on a filtered probability space $(\Omega, \mathcal{F}, (\mathcal{F}_{t})_{t\geq 0}, \mathbb{P})$ and adapted to $(\mathcal{F}_{t})_{t \geq 0}$. We assume that $X$ can be described by a continuous It\^{o} semimartingale (or stochastic volatility model), as expressed by the equation:
\begin{equation}
\label{Eqn:X}
X_{t} = X_{0} + \int_{0}^{t} a_{s} \text{d}s + \int_{0}^{t} \sigma_{s}
\text{d} W_{s},
\end{equation}
where $X_{0}$ is the starting price, $a = (a_{t})_{t \geq 0}$
is a predictable and locally bounded drift process, $\sigma = (\sigma_{t})_{t
\geq 0}$ is an adapted, c\`{a}dl\`{a}g volatility process, while $W =
(W_{t})_{t \geq 0}$ is a standard Brownian motion.\footnote{A basic result
in financial economics states that if $X$ is drawn from an arbitrage-free,
frictionless market, it necessarily has a semimartingale structure
\citep*[e.g.,][]{back:91a,delbaen-schachermayer:94a}. If further $X$ has
continuous sample paths, then weak assumptions
ensure that $X$ can be represented as in Eq. \eqref{Eqn:X}. Later on,
in Section \ref{section:truncation}, we
provide further details about the robustness of our results, when $X$
exhibits jumps. We cover the noisy
setting with market frictions in Section \ref{Sec:Noise}.}

In these models, natural measures of ex-post volatility can be written,
for some suitable function $f$,
\begin{equation}
\label{Eqn:IV}
IV(f)_{t} = \int_{0}^{t} f(\sigma_{s}) \text{d}s,
\end{equation}
i.e. integrated functions of the diffusion coefficient.

We also point out that $a$ and $\sigma$ are left unspecified in this paper, and our results are completely nonparametric (within this class of models). While we do not impose assumptions a priori, we sometimes need to add additional, weak regularity conditions on $\sigma$, which nevertheless allow for very complex dynamics in the volatility process.\\[-0.50cm]

\noindent \textbf{Assumption (V)}: $\quad \sigma$ is of the form:
\begin{align}
\begin{split}
\label{Eqn:sigma}
\sigma_{t} &= \sigma_{0} + \int_{0}^{t} \tilde{a}_{s}\text{d}s + \int_{0}^{t}
\tilde{ \sigma}_{s}\text{d}W_{s} + \int_{0}^{t} \tilde{v}_{s}\text{d}B_{s}
\\[0.25cm]
&+ \int_{0}^{t} \int_{E} \tilde{ \delta}(s,x) 1_{ \{| \tilde{ \delta}(s,x)|
\leq 1 \}} (\tilde{ \mu} - \tilde{ \nu}) ( \text{d}s, \text{d}x) + \int_{0}^{t}
\int_{E} \tilde{ \delta}(s,x) 1_{ \{| \tilde{ \delta}(s,x)| > 1 \}}
\tilde{ \mu}( \text{d}s, \text{d}x),
\end{split}
\end{align}
where $\sigma_{0}$ is its initial value, $\tilde{a} = (\tilde{a}_{t})_{t
\geq 0}$, $\tilde{ \sigma} = (\tilde{ \sigma}_{t})_{t \geq 0}$ and
$\tilde{v} = (\tilde{v}_{t})_{t \geq 0}$ are
adapted, c\`{a}dl\`{a}g stochastic processes, while $B = (B_{t})_{t \geq 0}$
is a standard Brownian motion that is independent of $W$.
Furthermore, $(E, \mathcal{E})$ is a Polish space, $\tilde{ \mu}$ is a random
measure on $\mathbb{R}_{+} \times E$, which is independent of $(W,B)$ and has
an intensity measure $\tilde{ \nu}( \text{d}s, \text{d}x) = \text{d}s
\tilde{F}( \text{d}x)$, where $\tilde{F}$ is a $\sigma$-finite measure on
$(E, \mathcal{E})$. Also, $\tilde{ \delta}: \Omega \times \mathbb{R}_{+}
\times E \to \mathbb{R}$ is a predictable function and $(S_{k})_{k \geq 1}$
is a sequence of stopping times increasing to $\infty$ such that $| \tilde{ \delta}(
\omega,s,x)| \wedge 1 \leq \tilde{ \psi}_{k}(x)$ for all $( \omega, s, z)$ with
$s \leq S_{k}( \omega)$ and $\int_{E} \tilde{ \psi}^{2}_{k}(x) \tilde{F}(
\text{d}x) < \infty$ for all $k \geq 1$. \\[-0.50cm]

We are in the high-frequency setting. We suppose that historical data of
$X$ is available in the time frame $[0,1]$, i.e. we set $t = 1$. In this
interval, we assume that $X$ is recorded at equidistant time points $t_{i}
= i / n$, for $i = 0, 1, \ldots n$, so that $n + 1$ is the total number of
log-price observations in the sample. We define the $n$ increments, or
log-returns, of $X$ as:
\begin{equation}
\label{Eqn:return}
\Delta_{i}^{n} X = X_{i/n} - X_{(i-1)/n}, \quad \text{for }
i = 1, \ldots, n.
\end{equation}
The asymptotic theory we derive below is then infill, i.e. we are at some
point going to let $n \to \infty$.

To maintain a streamlined exposition, we only study the
univariate setting in this paper. All of our theoretical results extend
directly and without any changes (apart from additional notation) to
multivariate $X$, if the sampling times are equidistant, as above, and
recorded synchronously
across assets, see, e.g., \citet*{kalnina:15a} for related research
in that direction. If the high-frequency data are randomly spaced and
asynchronous, our results are probably still true using a previous-tick
rule to align prices and under suitable assumptions on the regularity
on the observation grid of individual assets, e.g.,
\citet*{christensen-podolskij-vetter:13a}.

\subsection{Bipower variation}

The econometric challenge is that the objects appearing in Eq.
\eqref{Eqn:IV} are latent, but they can be estimated from the available
sample of high-frequency data in Eq. \eqref{Eqn:return}. A popular
statistic, which is well-suited to do this, is the bipower variation of
\citet*{barndorff-nielsen-shephard:04b}.\footnote{Note that bipower
variation is nested within
a broader class of multipower variations, which adds additional lags in
Eq. \eqref{Eqn:BVn}, see, e.g, \citet*{barndorff-nielsen-shephard:04b}.
Our theoretical results extend to this framework.} Here,
we adopt the more general definition of bipower variation from
\citet*{barndorff-nielsen-graversen-jacod-podolskij-shephard:06a}:\footnote{As
we frequently cite this paper, it will henceforth be referred to as BGJPS6.}
\begin{equation}
\label{Eqn:BVn}
V(f,g)^{n} = \frac{1}{n} \sum_{i = 1}^{n - 1} f\bigl( \sqrt{n}
\Delta_{i}^{n} X \bigr) g\bigl( \sqrt{n} \Delta_{i + 1}^{n} X \bigr),
\end{equation}
where $f = (f_{1}, \ldots, f_{m})'$ and $g = (g_{1}, \ldots, g_{m})'$
are $\mathbb{R}^{m}$-valued functions. Note that in Eq. \eqref{Eqn:BVn}
the multiplication is understood to be done element-by-element.
Moreover, the components of $f = (f_{1}, \dots, f_{m})'$ and
$g = (g_{1}, \ldots, g_{m})'$ are assumed to fulfill
a condition, which we state with a generic function $h$.\\[-0.50cm]

\noindent \textbf{Assumption (K)}: $\quad$ $h: \mathbb{R} \mapsto
\mathbb{R}$ is even and continuously differentiable. Moreover, both
$h$ and its derivative $h'$ are at most of polynomial growth.\\[-0.50cm]

Assumption (\textbf{V}) and (\textbf{K}) are standard conditions for
the validity of central limit theorems for classical high-frequency statistics;
see, e.g., BGJPS6.
The following proposition, which is adapted from that paper, then describes the
limiting properties of bipower variation.

\begin{proposition}
\label{Thm:BvCLT}
Assume that $X$ is a continuous It\^{o} semimartingale as in Eq. \eqref{Eqn:X},
where the volatility process $\sigma$ follows Assumption $(\normalfont
\textbf{V})$ and Assumption $(\normalfont \textbf{K})$ is true for each
component of $f = (f_{1}, \dots, f_{m})'$ and $g = (g_{1}, \ldots, g_{m})'$.
Then, as $n \to \infty$, it holds that
\begin{equation}
\label{Eqn:MN}
\sqrt{n} \Bigl( V(f,g)^{n} - V(f,g) \Bigr) \overset{d_{s}}{\to}
\text{\upshape{MN}}(0, \Sigma),
\end{equation}
where
\begin{equation}
\label{Eqn:BV}
V(f,g) = \int_{0}^{1} \rho_{\sigma_{s}}(f) \rho_{\sigma_{s}}(g)
\text{\upshape{d}}s,
\end{equation}
and ``$\overset{d_{s}}{\to}$'' means convergence in law stably
(as described below). Moreover,
$\rho_{x}(f) = \mathbb{E} \bigl[f(xZ) \bigr]$ for $x \in \mathbb{R}$
and $Z \sim \text{\upshape{N}}(0,1)$. Finally, $\Sigma$ is the $m
\times m$ asymptotic conditional covariance matrix, which has elements
\begin{align}
\begin{split}
\label{Eqn:Sigmaij}
\Sigma_{ij} = \int_{0}^{1} \Bigl[ \rho_{ \sigma_{s}} (f_{i} f_{j}) \rho_{
\sigma_{s}}(g_{i} g_{j}) &+ \rho_{ \sigma_{s}}(f_{i}) \rho_{ \sigma_{s}}
(g_{j}) \rho_{ \sigma_{s}}(f_{j} g_{i}) \\[0.25cm]
&+ \rho_{ \sigma_{s}}(f_{j}) \rho_{ \sigma_{s}}(g_{i}) \rho_{
\sigma_{s}}(f_{i} g_{j}) -3 \rho_{ \sigma_{s}}(f_{i}) \rho_{
\sigma_{s}}(f_{j}) \rho_{ \sigma_{s}}(g_{i}) \rho_{ \sigma_{s}}(g_{j})
\Bigr] \text{\upshape{d}}s.
\end{split}
\end{align}
\end{proposition}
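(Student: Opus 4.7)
The plan is to adapt the classical argument from BGJPS6. First, by a standard localization through stopping times one may strengthen Assumption (\textbf{V}) so that, without loss of generality, $a$, $\sigma$ and the coefficients $\tilde{a}$, $\tilde{\sigma}$, $\tilde{v}$, $\tilde{\delta}$ in \eqref{Eqn:sigma} are uniformly bounded (hence $\sigma$ has moments of all orders), since stable convergence in law is preserved under such a reduction. Next, I would replace the true increments by Gaussian proxies: writing $\xi_i = \sqrt{n}\,\Delta_i^n W$ and $Y_i^n = \sigma_{(i-1)/n}\xi_i$, an It\^o expansion gives $\sqrt{n}\,\Delta_i^n X - Y_i^n = O_{L^2}(n^{-1/2})$, and a mean-value expansion of $f$ and $g$---legitimate by Assumption (\textbf{K}), since $f$, $g$ and their derivatives are of polynomial growth---then shows
\[
\sqrt{n}\bigl(V(f,g)^n - \widetilde V(f,g)^n\bigr) \xrightarrow{\mathbb{P}} 0, \quad \widetilde V(f,g)^n = \frac{1}{n}\sum_{i=1}^{n-1} f(Y_i^n)\,g(Y_{i+1}^n).
\]
Thus only the proxy statistic needs to be analysed.

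I would then split $\sqrt{n}\bigl(\widetilde V(f,g)^n - V(f,g)\bigr) = M_n + R_n$, where
\[
M_n = \frac{1}{\sqrt{n}}\sum_{i=1}^{n-1}\zeta_i^n, \quad \zeta_i^n = f(Y_i^n)g(Y_{i+1}^n) - \mathbb{E}\bigl[f(Y_i^n)g(Y_{i+1}^n)\,\big|\,\mathcal{F}_{(i-1)/n}\bigr],
\]
and $R_n$ is a Riemann-sum remainder vanishing in probability by c\`adl\`ag regularity of $\sigma$, smoothness of $x\mapsto\rho_x(\cdot)$, and a standard uniform-continuity argument on $s\mapsto\rho_{\sigma_s}(f)\rho_{\sigma_s}(g)$. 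Because each $\zeta_i^n$ depends on $(\xi_i,\xi_{i+1})$ only, the array is $1$-dependent with respect to $(\mathcal{F}_{i/n})$; after pairing consecutive terms one obtains a genuine martingale-difference triangular array to which Jacod's stable CLT applies.

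Computing the conditional moments by Gaussian integration in $\xi_i$, $\xi_{i+1}$ given $\mathcal{F}_{(i-1)/n}$ yields two contributions to the limiting variance. The diagonal term $n^{-1}\sum_i \mathbb{E}[\zeta_i^{n,(p)}\zeta_i^{n,(q)}\,|\,\mathcal{F}_{(i-1)/n}]$ converges to $\int_0^1[\rho_{\sigma_s}(f_pf_q)\rho_{\sigma_s}(g_pg_q) - \rho_{\sigma_s}(f_p)\rho_{\sigma_s}(g_p)\rho_{\sigma_s}(f_q)\rho_{\sigma_s}(g_q)]\,\mathrm{d}s$, while the lag-one contribution $n^{-1}\sum_i \mathbb{E}[\zeta_i^{n,(p)}\zeta_{i+1}^{n,(q)} + \zeta_i^{n,(q)}\zeta_{i+1}^{n,(p)}\,|\,\mathcal{F}_{(i-1)/n}]$ evaluates, by the factorisation through the shared proxy $Y_{i+1}^n$, to $\int_0^1[\rho_{\sigma_s}(f_p)\rho_{\sigma_s}(g_pf_q)\rho_{\sigma_s}(g_q) + \rho_{\sigma_s}(f_q)\rho_{\sigma_s}(g_qf_p)\rho_{\sigma_s}(g_p) - 2\rho_{\sigma_s}(f_p)\rho_{\sigma_s}(g_p)\rho_{\sigma_s}(f_q)\rho_{\sigma_s}(g_q)]\,\mathrm{d}s$; adding the two pieces produces the coefficient $-3$ in front of $\rho_{\sigma_s}(f_p)\rho_{\sigma_s}(g_p)\rho_{\sigma_s}(f_q)\rho_{\sigma_s}(g_q)$ that appears in \eqref{Eqn:Sigmaij}. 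A Lindeberg-type bound $n^{-2}\sum_i\mathbb{E}[\|\zeta_i^n\|^4\,|\,\mathcal{F}_{(i-1)/n}]\xrightarrow{\mathbb{P}}0$ is then immediate from Assumption (\textbf{K}) and the boundedness of $\sigma$.

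To upgrade weak to stable convergence, I would verify the asymptotic orthogonality $n^{-1/2}\sum_i\mathbb{E}[\zeta_i^n\,\Delta_i^n W\,|\,\mathcal{F}_{(i-1)/n}]\xrightarrow{\mathbb{P}}0$ and the analogous condition against any bounded martingale $N$ orthogonal to $W$. The first holds because $f$ and $g$ are even by Assumption (\textbf{K}), so the odd Gaussian moments arising in $\mathbb{E}[f(Y_i^n)g(Y_{i+1}^n)\xi_i\,|\,\mathcal{F}_{(i-1)/n}]$ vanish; the second follows from the independence structure built into Assumption (\textbf{V}) between $W$ and the driving noises $B$, $\tilde\mu$ of $\sigma$. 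The main obstacle is the bookkeeping for the lag-one covariance: one has to replace the three slightly different sampled values $\sigma_{(i-1)/n}$, $\sigma_{i/n}$, $\sigma_{(i+1)/n}$ by a single $\sigma_s$ before passing to the Riemann integral, and then justify dominated convergence of the triple Gaussian integral, both of which are delivered by c\`adl\`ag continuity of $\sigma$ combined with the uniform polynomial control of $f$, $g$ and their derivatives.
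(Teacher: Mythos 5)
The paper itself gives no proof of this proposition---it is quoted from BGJPS6 with the proof deferred to that reference---so the relevant comparison is with the classical argument there, and your sketch does follow exactly that route: localization, replacement of $\sqrt{n}\Delta_i^n X$ by the Gaussian proxy $\sigma_{(i-1)/n}\sqrt{n}\Delta_i^n W$, a stable CLT for a $1$-dependent triangular array, and the conditional-moment computation. Your variance bookkeeping is correct: the diagonal term gives $\rho(f_if_j)\rho(g_ig_j)-\rho(f_i)\rho(g_i)\rho(f_j)\rho(g_j)$ and the two lag-one terms give $\rho(f_i)\rho(g_j)\rho(f_jg_i)-\rho(\cdot)^4$ and its mirror image, which sums to the $-3$ coefficient in \eqref{Eqn:Sigmaij}.

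Two steps are stated considerably more lightly than they can be carried out. First, the assertion that $\sqrt{n}\bigl(V(f,g)^n-\widetilde V(f,g)^n\bigr)\overset{\mathbb{P}}{\to}0$ ``then shows'' from the per-increment bound $\sqrt{n}\Delta_i^n X-Y_i^n=O_{L^2}(n^{-1/2})$ is not a routine consequence: with $n$ summands scaled by $n^{-1/2}$, that bound alone only yields $O_p(1)$. The actual argument needs the cancellation structure --- $f,g$ even so $f',g'$ are odd, making the leading first-order error terms conditional martingale differences --- together with the It\^o-semimartingale form of $\sigma$ in Assumption (\textbf{V}) to control the second-order remainder; this is precisely the mechanism the present paper re-uses in its own Lemmas (the decomposition $\xi_i^n(1)+\xi_i^n(2)$ in the proof of Theorem \ref{c0}), and it is the technical heart of the BGJPS6 proof rather than a corollary of a mean-value expansion. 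Second, with your definition $Y_{i+1}^n=\sigma_{i/n}\,\xi_{i+1}$, the variable $\zeta_i^n$ is not conditionally centered given $\mathcal{F}_{(i-1)/n}$, since $\sigma_{i/n}$ is not measurable at that time; one must freeze the volatility of both increments at $(i-1)/n$ (and control the resulting substitution error), and the ``pairing of consecutive terms'' into a martingale-difference array, or the use of a CLT criterion tailored to $1$-dependent arrays, has to be spelled out for the stable-convergence conditions (orthogonality against $W$ and against martingales orthogonal to $W$) to go through. You flag the freezing issue at the end, so the outline is faithful, but as written these are the places where the proof is incomplete rather than merely compressed.
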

\begin{proof}
See BGJPS6.
\end{proof}

The proposition provides the foundation for making
inference about bipower variation. It shows that $V(f,g)^{n}$ is consistent
for $V(f,g)$. The asymptotic distribution of $V(f,g)^{n}$ is
mixed normal, i.e. it has a random variance-covariance matrix $\Sigma$, which
is independent from the randomness of the normal distribution. To
transform this into a probabilistic statement based on the standard normal
distribution, it would be tempting to look at Eq. \eqref{Eqn:MN} and
deduce that, assuming $\Sigma$ is invertible, $\Sigma^{-1/2} \sqrt{n}
\Bigl( V(f,g)^{n}
- V(f,g) \Bigr) \overset{d}{\to} \text{\upshape{N}}(0, I_{m})$, where
``$\overset{d}{\to}$'' is convergence in law and $I_{m}$
is the $m$-dimensional identity matrix. We cannot readily jump to this
conclusion from the regular definition of convergence in law, however, unless
$\Sigma$ is constant. In our setting, where $\Sigma$ is stochastic, we therefore exploit
an alternative, stronger type of convergence, which recovers the above feature.

More formally, let $Z_{n}$ be a sequence of random variables. Also, let $Z$
be a random variable, which is defined on some appropriate extension $(\Omega',
\mathcal{F}',  \mathbb{P}')$ of the original space $(\Omega, \mathcal{F},
\mathbb{P})$. $Z_{n}$ is then said to converge stably to $Z$, and we
shall write $Z_{n} \overset{d_{s}}{\to} Z$, if for any bounded, continuous
function $h$ and any bounded, $\mathcal{F}$-measurable random variable $Y$,
the convergence $\mathbb{E} \bigl[Yh(Z_{n}) \bigr] \to \mathbb{E}' \bigl[Yh(Z)
\bigr]$ holds, as $n \to \infty$. It follows that stable convergence implies
convergence in law by taking $Y = 1$. We refer to \citet*{jacod-protter:12a}
for further details of this concept.

Proposition \ref{Thm:BvCLT} is based on Assumption (\textbf{K}),
which assumes $f$ and $g$ are differentiable. It can be extended to
non-differentiable functions, given Assumption (\textbf{H'}) and (\textbf{K'})
from BGJPS6. Assumption (\textbf{H'}) says $\sigma$ should be bounded away
from zero, while Assumption (\textbf{K'}) puts restrictions on the
set, where $f$ and $g$ are not differentiable. We refer to BGJPS6 for a
concise, mathematical statement of these conditions. Throughout the paper,
we explain how this weaker setting affects our results.

A prime example that falls in the latter group is the original bipower
variation of \citet*{barndorff-nielsen-shephard:04b}. It is based on the
summation of products of the absolute value of adjacent high-frequency returns
and sets $f_{k}(x) = |x|^{q_{k}}$ and $g_{k}(x) = |x|^{r_{k}}$, for
$1 \leq k \leq m$ and $q_{k}, r_{k} \geq 0$. As it is extensively used in
applied work, we sometimes restrict attention to this class of estimators
below. To distinguish this ``pure'' bipower variation from the general
one,
we write this version as $V(q,r)^{n}$, where $q = (q_{1},
\ldots, q_{m})'$ and $r = (r_{1}, \ldots, r_{m})'$ are $m$-dimensional
vectors, whose coordinates index the powers:
\begin{equation}
\label{Eqn:OBV}
V(q_{k},r_{k})^{n} = \frac{1}{n} \sum_{i=1}^{n-1} | \sqrt{n} \Delta_{i}^{n}
X|^{q_{k}} | \sqrt{n} \Delta_{i+1}^{n} X|^{r_{k}} \quad \overset{p}{ \to}
\quad V(q_{k}, r_{k}) =
\mu_{q_{k}} \mu_{r_{k}} \int_{0}^{1} | \sigma_{s}|^{q_{k}+r_{k}}\text{d}s,
\end{equation}
where $\mu_{q} = \mathbb{E}[|Z|^{q}]$ and $Z \sim \text{N}(0,1)$.
Note that while $f_{k}(x) = |x|^{q_{k}}$ and $g_{k}(x) = |x|^{r_{k}}$ are
not differentiable at $x = 0$, if $q_{k} \leq 1$ or $r_{k} \leq 1$,
Proposition \ref{Thm:BvCLT} is nevertheless still true, if $\sigma > 0$.

Here, $\Sigma$ has the form:
\begin{equation}
\label{SigmaPowers}
\Sigma_{ij} = \left( \mu_{q_{i}+q_{j}} \mu_{r_{i}+r_{j}}+ \mu_{q_{i}}
\mu_{r_{j}} \mu_{q_{j}+r_{i}}+\mu_{q_{j}} \mu_{r_{i}} \mu_{q_{i}+r_{j}}
- 3 \mu_{q_{i}} \mu_{q_{j}} \mu_{r_{i}} \mu_{r_{j}} \right)
\int_{0}^1 |\sigma_{s}|^{q_{i}+q_{j}+r_{i}+r_{j}} \text{\upshape{d}}s,
\end{equation}
and the link between Eq. \eqref{Eqn:IV} and Eq. \eqref{Eqn:return} is
made apparent.

\section{Subsample estimation of $\Sigma$}

In order to do inference about bipower variation based on Proposition
\ref{Thm:BvCLT}, we also need to estimate $\Sigma$. Although there are
some existing estimators out there, they are typically not general enough,
or else they are inherently flawed, as outlined below.

For example, assume the goal is to estimate the asymptotic conditional
variance-covariance matrix of the pure bipower variation estimator
$V(q_{k},r_{k})^{n}$. The structure of $\Sigma$ unveiled by Eq.
\eqref{SigmaPowers} implies that here we can
do estimation componentwise with a second bipower variation:
\begin{equation}
\label{Eqn:Alt1}
c_{ij} V(q_{i}+q_{j},r_{i}+r_{j})^{n} \overset{p}{ \to} \Sigma_{ij},
\end{equation}
where $c_{ij}$ is a constant that can be determined from Eq. \eqref{Eqn:OBV} --
\eqref{SigmaPowers}. This is a direct element-by-element calculation,
which also works for the general case of Eq. \eqref{Eqn:Sigmaij} with some
minor modifications. However, this procedure is difficult to adapt to the
noisy setting.

Another idea, which is more generally applicable, is to write
\begin{equation}
\gamma_{i}^{n}(k) \equiv f_{k} \big( \sqrt{n} \Delta_{i}^{n} X) g_{k}(
\sqrt{n} \Delta_{i+1}^{n} X),
\end{equation}
and set
\begin{equation}
\label{Eqn:Alt2}
\tilde{ \Sigma}_{ij}^{n} = \frac{1}{n} \sum_{l=2}^{n-3} \sum_{m=-1}^{1}
\big( \gamma_{l}^{n}(i) \gamma_{l+m}^{n}(j) - \gamma_{l}^{n}(i)
\gamma_{l+2}^{n}(j) \big).
\end{equation}
Then, $\tilde{ \Sigma}^{n} \overset{p}{ \to} \Sigma$ as given by Eq.
\eqref{Eqn:Sigmaij}. In contrast to Eq. \eqref{Eqn:Alt1}, the latter
estimator is ``automatic,'' as it does not depend on explicit knowledge
of the asymptotic covariance matrix. It works, because the summands
$\gamma_{i}^{n}(k)$ are asymptotically 1-dependent, and the estimator
implicitly incorporates covariances up to lag one (hence the index bounds
in the last summation). Indeed, we compare a
vectorized version of $\tilde{ \Sigma}^{n}$ to our subsampler in Section
\ref{Sec:BipowerVariation}. But, while both of the above estimators are
$\sqrt{n}$-consistent, neither is positive semi-definite in finite samples.
In the noisy setting, this type of construction has been proposed by
\citet*{podolskij-vetter:09a}, which we thoroughly analyze in Section
\ref{Sec:Simulation}.

In this paper, we propose to estimate $\Sigma$ by subsampling of high-frequency data.  This was suggested for the asymptotic variance of realized variance in \citet*{kalnina-linton:07a} and its noise-robust two-scale version in \citet*{kalnina:11a}, following earlier work in the classical time series literature \citep*[e.g.,][]{politis-romano:94a,politis-romano-wolf:99a}. Subsampling is an attractive procedure. First, all it does is to compute rescaled copies of the original statistic based on local stretches of high-frequency data, and then it studies the sampling variation of these. This makes it highly intuitive and trivial to implement. Second, it leads to estimators of $\Sigma$ that are positive semi-definite by construction. This is critical in practice, as shown in Section \ref{Sec:Simulation}. And third, it has an innate robustness against the statistical properties of microstructure noise, as discussed in Section \ref{Sec:Noise}.

\subsection{Subsampling for power variation}

\label{sec:pv} In order to develop a complete theory for positive semi-definite estimation of the asymptotic, conditional covariance matrix $\Sigma$ of bipower variation using subsampling, we will start by analyzing the power variation case, i.e. $f = (f_{1}, \ldots, f_{m})'$. This should not only help to build intuition for the general case, but it turns out the results are interesting in their own right, because the rates of convergence differ according to whether we consider power variation, or multipower variation of order bigger than one.

Thus, we define the power variation estimator:
\begin{equation}
V(f)^{n} = \frac{1}{n} \sum_{i = 1}^{n} f \bigl( \sqrt{n} \Delta_{i}^{n} X \bigr).
\end{equation}
Then, we propose to set:
\begin{equation}
\label{Eqn:SigmaN}
\hat{ \Sigma}_{n} = \frac{1}{L} \sum_{l = 1}^{L} \biggl( \sqrt{
\frac{n}{L}} \Bigl( V_{l}(f)^{n} - V(f)^{n} \Bigr) \biggr) \biggl( \sqrt{
\frac{n}{L}} \Bigl( V_{l}(f)^{n} - V(f)^{n} \Bigr) \biggr)',
\end{equation}
where, assuming $L$ divides $n,$
\begin{equation}
V_{l}(f)^{n} =\frac{1}{n/L} \sum_{i = 1}^{n/L} f \bigl( \sqrt{n}
\Delta_{(i - 1 )L + l}^{n} X \bigr).
\end{equation}

In Figure \ref{Fig:subsample}, we illustrate the construction of the subsamples for power variation. As the figure shows, we start by splitting the full sample of high-frequency data into $L$ smaller subsamples. We assign log-returns successively to each subsample, starting from the first and going back to it after the $L$th subsample has been reached. We continue this process until the entire sample is exhausted. The $l$th subsample therefore consists of the increments $\bigl( \Delta_{(i - 1)L + l}^{n} X \bigr)_{i = 1, \ldots, n/L}$.

\begin{figure}[ht!]
\begin{center}
\caption{Infill return subsampling for power variation.\label{Fig:subsample}}
\begin{tabular}{c}
\includegraphics[height=10cm,width=15cm]{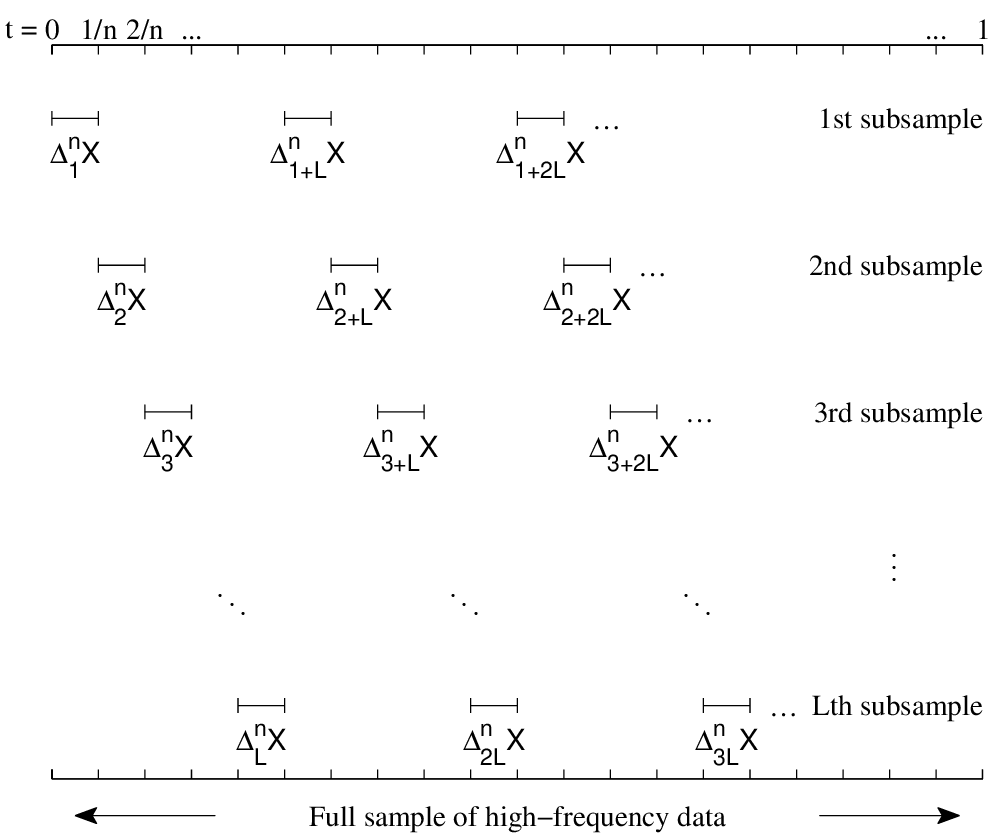}\\
\end{tabular}
\begin{small}
\parbox{0.925\textwidth}{\emph{Note.} The figure shows how the full sample of
available high-frequency data $(\Delta_{i}^{n} X)_{i = 1}^{n}$ is split into
smaller samples of size $n / L$ in order to compute subsampled
estimates of power variation.}
\end{small}
\end{center}
\end{figure}

We then compute the statistics $V_{l}(f)^{n}$, which is the power variation
estimator based on the $l$th subsample.
To estimate $\Sigma$, we look at the sampling variation of the
$V_{l}(f)^{n}$'s, after they have been suitably demeaned and scaled, with
a standard sample covariance matrix.

As we prove in a moment, we can choose the number of subsamples $L$, such that
$\hat{ \Sigma}_{n}$ is a consistent estimator of $\Sigma$. Intuitively, under
suitable conditions on $L$, $V_{l}(f)^{n} \overset{p}{\to} V(f)$, where
$V(f) = \int_{0}^{1} \rho_{\sigma_{s}} (f) \text{d}s$, and its asymptotic
distribution (more or less) follows from Proposition \ref{Thm:BvCLT}, except
that its rate of convergence is $(n/L)^{-1/2}$, i.e. $\displaystyle
\sqrt{\frac{n}{L}} \Bigl(V_l(f)^{n} - V(f) \Bigr) \overset{d_{s}}{\to}
\text{\upshape{MN}}(0,\Sigma)$. Moreover, as each subsample is based on
non-overlapping increments, the $V_{l}(f)^{n}$'s are, asymptotically,
conditionally independent. This suggests that by averaging the sum of outer
products of $\displaystyle \sqrt{\frac{n}{L}} \Bigl( V_{l}(f)^{n} - V(f)
\Bigr)$, we should get a consistent
estimator of $\Sigma$, which is then positive semi-definite by construction.

While $V_{l}(f)^{n}$ should in principle be centered around $V(f)$, the whole
problem to begin with is of course that $V(f)$ is latent. So it has to be
replaced by a consistent estimator to get a feasible estimator of $\Sigma$
that can be computed from data. In Eq. \eqref{Eqn:SigmaN}, we
plug in the full sample power variation estimate $V(f)^{n}$. This
does not affect the asymptotics, because $V(f)^{n}$ converges much
faster than $V_{l}(f)^{n}$.

We pause here for a moment to reflect a bit on the setup. It turns out that the
assumptions of Proposition \ref{Thm:BvCLT} are, by construction,
necessary to prove the consistency of $\hat{ \Sigma}_{n}$. Moreover, as we
show later, they are also sufficient, if $L \rightarrow \infty$ and $n/L
\rightarrow \infty$. Indeed, an underlying principle of this paper is
that, as long as an associated central limit theorem holds, subsampling can be
used to consistently estimate the asymptotic covariance matrix under minimal
conditions on the tuning parameters.
To be able to derive a convergence rate for
$\hat{ \Sigma}_{n}$ and provide an optimal choice of the parameter $L$, however,
we need some additional structure, which is not standard in the high-frequency
literature.

First, we are going to assume that all of the driving terms in both $X$ and
$\sigma$ can be modeled as Brownian semimartingales, i.e. we shall require
that: \\[0.50cm]
\noindent \textbf{Assumption (H)}: $\quad$ $\sigma$ is continuous and follows
Assumption (\textbf{V}),
and each of $a$, $\tilde{a}$, $\tilde{ \sigma}$ and $\tilde{v}$
is continuous  of the form in Eq. \eqref{Eqn:sigma}.\\[-0.50cm]

The second condition we impose is a highly technical requirement that concerns
the Malliavin smoothness of the random variables appearing in Assumption
(\textbf{H}). We summarize some selected elements and notations from
Malliavin calculus,
which are relevant to our paper, in Section \ref{secA.5} at the end of the
Appendix, while the list of necessary conditions for the main text are
comprised as Assumption \textbf{(M)}.\\[-0.50cm]

\noindent \textbf{Assumption (M)}: $\quad$ It holds that for any $0
\leq t \leq r \leq s$: $\sigma_{s}$, $\tilde{ \sigma}_{s}$, $\tilde{v}_{s}$,
$D_{t}( \sigma_{s})$, $D_{t}( \tilde{ \sigma}_{s})$, $D_{t}( \tilde{v}_{s})
\in \mathbb{D}_{1,2}$ with
\begin{align}
\begin{split}
\label{h1}
|| D_{t}( \sigma_{s}) ||_{L^{32}} + || D_{t}( \tilde{ \sigma}_{s}) ||_{L^{32}}
+ || D_{t}( \tilde{v}_{s}) ||_{L^{32}} &\leq C, \\[0.25cm]
|| D_{t}(D_{r}( \sigma_{s})) ||_{L^{16}} + || D_{t}(D_{r}( \tilde{
\sigma}_{s})) ||_{L^{16}} + || D_{t}(D_{r}( \tilde{v}_{s})) ||_{L^{16}} &\leq
C.
\end{split}
\end{align}
Moreover, $f \in C^{3}( \mathbb{R})$, while $f$, $f'$, $f''$ and $f'''$
exbibit polynomial growth. \\[0.50cm]
Note that if the involved processes are solutions to stochastic
differential equations, then condition \eqref{h1} is fulfilled given a
sufficient smoothness of the corresponding drift term and volatility function.
We refer to Eq. \eqref{MalSDE} for the computation of the Malliavin derivative
in this case.

The next result then leads to a rate of convergence for $\hat{ \Sigma}_{n}$.

\begin{theorem}
\label{c0}
Assume that $X$ is a continuous It\^{o} semimartingale as in Eq. \eqref{Eqn:X},
where Assumption $(\normalfont \textbf{H})$ and $(\normalfont \textbf{M})$ are
true, as is Assumption $(\normalfont \textbf{K})$ for each component of $f =
(f_{1}, \dots, f_{m})'$. As $n \to \infty$,
$L \to \infty$, and $n/L \to \infty$, it holds that
\begin{equation}
\hat{ \Sigma}_{n} - \Sigma = \underbrace{O_{p} \left( \frac{1}{ \sqrt{L}}
\right)}_{\text{\upshape{CLT}}} + \underbrace{O_{p} \left(
\frac{L}{n} \right)}_{ \text{\upshape{blocking}}}.
\end{equation}
\end{theorem}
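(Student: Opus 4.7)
The plan is to reduce $\hat{\Sigma}_n$ algebraically to an empirical second moment of centered subsample estimators and then control the CLT and blocking pieces separately. First I would use the identity $\sum_{l=1}^L V_l(f)^n = L\cdot V(f)^n$ to recenter on the latent target $V(f)$:
$$\hat{\Sigma}_n = \frac{n}{L^2}\sum_{l=1}^L\bigl(V_l(f)^n - V(f)\bigr)\bigl(V_l(f)^n - V(f)\bigr)' - \frac{n}{L}\bigl(V(f)^n - V(f)\bigr)\bigl(V(f)^n - V(f)\bigr)',$$
and by Proposition \ref{Thm:BvCLT} applied to power variation, the last term is $O_p(1/L)$, which is dominated by $1/\sqrt{L}$. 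I would then decompose $V_l(f)^n - V(f) = A_l + B_l$, where
$$A_l = \frac{L}{n}\sum_{i=1}^{n/L}\Bigl[f\bigl(\sqrt{n}\,\Delta_{(i-1)L+l}^n X\bigr) - \rho_{\sigma_{((i-1)L+l-1)/n}}(f)\Bigr], \qquad B_l = \frac{L}{n}\sum_{i=1}^{n/L}\rho_{\sigma_{((i-1)L+l-1)/n}}(f) - V(f),$$
are the within-return CLT fluctuation and the Riemann-sum bias for $V(f)$ at step $L/n$, respectively.

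The blocking contribution is $\frac{n}{L^2}\sum_l B_l B_l'$. An It\^{o} expansion of $s\mapsto \rho_{\sigma_s}(f)$ on each subinterval of length $L/n$ produces a leading mean-zero martingale increment of $L^2$-size $(L/n)^{3/2}$; summing the $n/L$ near-orthogonal blocks yields $B_l = O_p(L/n)$, so this piece is $O_p(L/n)$. The CLT term is $\frac{n}{L^2}\sum_l A_l A_l'$. Conditional on $\mathcal{F}^\sigma$, the summands in each $A_l$ are approximately independent, mean-zero, and locally Gaussian, so a direct calculation gives
$$\frac{n}{L^2}\sum_l \mathbb{E}\bigl[A_l A_l' \mid \mathcal{F}^\sigma\bigr] = \frac{1}{n}\sum_{k=0}^{n-1}\bigl[\rho_{\sigma_{k/n}}(ff') - \rho_{\sigma_{k/n}}(f)\rho_{\sigma_{k/n}}(f)'\bigr] = \Sigma + O_p(n^{-1/2}),$$
a fine Riemann sum for $\Sigma$. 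Because subsamples use disjoint returns, the $A_l$'s are conditionally independent across $l$, so a fourth-moment calculation gives conditional variance $O_p(1/L)$ for $\frac{n}{L^2}\sum_l A_l A_l'$, producing $O_p(1/\sqrt{L})$ fluctuations. The cross term $\frac{n}{L^2}\sum_l(A_l B_l' + B_l A_l')$ has conditional mean zero, and a variance computation using $\|A_l\|_{L^2}=O(\sqrt{L/n})$, $\|B_l\|_{L^2}=O(L/n)$, and conditional independence yields $O_p(1/\sqrt{n})$, dominated by $1/\sqrt{L}$.

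The hardest part will be making the fourth-moment bound on $\frac{n}{L^2}\sum_l A_l A_l'$ rigorous: the local Gaussian approximation $\sqrt{n}\,\Delta_k^n X \approx \sigma_{(k-1)/n} Z$ is only first-order, and through Assumption (\textbf{V}) the volatility $\sigma$ shares the Brownian driver $W$ with the returns, so plain conditioning does not cleanly decouple the problem. This is where Assumption (\textbf{M}) becomes essential: Malliavin integration by parts lets one expand products $f(\sqrt{n}\,\Delta^n X)\cdot f(\sqrt{n}\,\Delta^n X)'$ around $\rho_\sigma(ff')$ with remainders whose $L^p$-norms are controlled by the Malliavin-derivative bounds \eqref{h1} on $\sigma,\tilde{\sigma},\tilde{v}$. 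The $L^{32}$ and $L^{16}$ exponents in \eqref{h1}, together with the polynomial growth of $f,f',f'',f'''$, are precisely what is needed for the fourth-moment bound on $\frac{n}{L^2}\sum_l A_l A_l'$ to close at the stated $1/\sqrt{L}$ rate.
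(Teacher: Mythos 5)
Your skeleton parallels the paper's proof in most places: the recentering via $\sum_{l=1}^{L}V_{l}(f)^{n}=LV(f)^{n}$ with an $O_{p}(1/L)$ correction, the split of $V_{l}(f)^{n}-V(f)$ into a martingale-type fluctuation and a Riemann/blocking bias, the $O_{p}(L/n)$ bound for the bias part, and the conditional-second-moment Riemann sum converging to $\Sigma$ all correspond to the paper's steps (v), (ii)--(iv) and Lemma A.2. The genuine gap is your treatment of the cross term $\frac{n}{L^{2}}\sum_{l}(A_{l}B_{l}'+B_{l}A_{l}')$. Plain Cauchy--Schwarz with $\|A_{l}\|_{L^{2}}=O(\sqrt{L/n})$ and $\|B_{l}\|_{L^{2}}=O(L/n)$ only gives $O_{p}(\sqrt{L/n})$, which at the optimal choice $L\sim n^{2/3}$ is $n^{-1/6}$ and strictly worse than the claimed $n^{-1/3}$; so the whole theorem hinges on improving this bound. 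Your proposed improvement---``conditional mean zero'' plus ``conditional independence across $l$'' so that variances add and one gets $O_{p}(1/\sqrt{n})$---fails precisely in the setting the theorem allows: under Assumption (\textbf{V}) the volatility is driven by the same Brownian motion $W$ as the returns (leverage), so conditioning on the volatility path neither makes $A_{l}$ mean zero nor decouples $A_{l}$ from $B_{l}$, and the $B_{l}$'s are essentially the same random Riemann bias for every $l$, so there is no decorrelation across subsamples to exploit. This cross term (the paper's $D_{n}^{(2)}$, and within it $E_{n}(2.2)$) is exactly where Assumption (\textbf{M}) is consumed: the mixed expectations $\mathbb{E}\bigl[\chi_{(i_{2}-1)L+l_{a}}^{n}\chi_{(i_{4}-1)L+l_{b}}^{n}H\bigr]$ are shown to be $O(1/n)$ by representing the $\chi$'s through the Clark--Ocone formula and applying the Malliavin integration-by-parts (duality) formula to shift two derivatives onto $H$, with the $L^{32}$/$L^{16}$ bounds of Eq. \eqref{h1} controlling the result.

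Relatedly, you misplace where the difficulty and the Malliavin machinery live. The pure quadratic term $\frac{n}{L^{2}}\sum_{l}A_{l}A_{l}'$, which you flag as the hardest step requiring Malliavin expansions, is handled in the paper by elementary unconditional martingale arguments (discrete Burkholder plus the moment bounds of Lemma \ref{Lemma:Estimate}), giving the $1/\sqrt{L}$ and $1/\sqrt{n}$ errors without any conditioning on $\sigma$ and without Assumption (\textbf{M}). Unless you supply an argument of the integration-by-parts type for the bias-times-martingale interaction, your proof establishes at best the rate $O_{p}(1/\sqrt{L})+O_{p}(\sqrt{L/n})$, not the stated $O_{p}(1/\sqrt{L})+O_{p}(L/n)$.
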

\begin{proof}
See Appendix.
\end{proof}

Theorem \ref{c0} presents the leading errors inherent in $\hat{ \Sigma}_{n}$.
The first term, $1/\sqrt{L}$, intuitively follows from a central limit
theorem result, because $\hat{ \Sigma}_{n} $ is an empirical mean of $L$
asymptotically, conditionally independent statistics. However, it is not
easy to apply this relationship for a formal derivation of the error rate.
The second error is more subtle. It comes from freezing the volatility
process at the beginning of a subblock of length $L/n$. If volatility is
assumed to be H\"older continuous of order $\alpha \in (0,1]$, a rough
estimate implies an error rate of $(L/n)^{ \alpha}$. However, due to
the semimartingale structure of $\sigma$, we can improve this to $L/n$
by applying a more refined estimation technique. As such, we should point
out that the proof of Theorem \ref{c0}
is much more complex compared to subsampling of i.i.d. observations.

We find the fastest rate of convergence by balancing both errors.
This requires:
\begin{equation}
L = O \bigl( n^{2/3} \bigr),
\end{equation}
such that
\begin{equation}
\hat{ \Sigma}_{n} - \Sigma = O_{p} \bigl( n^{-1/3} \bigr).
\end{equation}
In a general model (e.g., where $\sigma$ is a diffusion process), we believe
this rate is sharp and cannot be improved within the context
of subsampling high-frequency data, but we shall not attempt to prove it.
Of course, if we impose stricter, parametric assumptions, such as $\sigma_{t}
= \sigma$ is constant and $\mu_{t} = 0$, the rate is faster and gets
arbitrarily close to $n^{-1/2}$, as here the additional blocking error in
Theorem \ref{c0} drops out, and then we can take $L = O(n^{1-\epsilon})$,
with $\epsilon > 0$ arbitrarily small.

We can combine the consistency of $\hat{ \Sigma}_{n}$ from Theorem
\ref{c0} with the convergence in distribution in Eq. \eqref{Eqn:MN}. If we
then appeal to the properties of stable convergence, we get the
feasible result:
\begin{equation}
\label{Eqn:feasible}
\hat{ \Sigma}_{n}^{-1/2} \sqrt{n} \Bigl( V(f)^{n} - V(f) \Bigr) \overset{d}{
\to} \text{N}(0, I_{m}),
\end{equation}
which can be used to construct confidence intervals for $V(f)$ or
do hypothesis testing. If the convergence had not been stable in law, this
result would not follow in general.

Theorem \ref{c0} is proved under the assumption that $\sigma$ is
continuous---as are all coefficients of the model---and $f$ is
differentiable. We note again the stable central limit theorem of Proposition
\ref{Thm:BvCLT} is also valid for a non-differentiable function $f$, and
possibly discontinuous volatility process, given Assumption \textbf{(H')} and
\textbf{(K')} from BGJPS6, but it appears out of reach to derive a convergence
rate for $\hat{ \Sigma}_{n}$ here.
We can nonetheless show that $\hat{ \Sigma}_{n}$ still converges in probability
to $\Sigma$ under these weaker conditions, which is relevant for applied work.
\begin{theorem}
\label{c0prime}
Assume that $X$ is a continuous It\^{o} semimartingale as in Eq. \eqref{Eqn:X},
where Assumption $(\normalfont \textbf{V})$ is true, as are Assumption
$(\normalfont \textbf{H'})$ and $(\normalfont \textbf{K'})$ from
BGJPS6. As $n \to \infty$, $L \to \infty$, and $n/L \to \infty$, it holds that
\begin{equation}
\hat{ \Sigma}_{n} \overset{p}{\to} \Sigma.
\end{equation}
\end{theorem}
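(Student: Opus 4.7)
The plan is to exploit the exact identity $\sum_{l=1}^{L} V_l(f)^n = L \cdot V(f)^n$, which holds because the $L$ subsamples partition the full set of increments. Writing $A_l^n = \sqrt{n/L}(V_l(f)^n - V(f))$ and $B^n = \sqrt{n/L}(V(f)^n - V(f))$, this identity gives $L^{-1}\sum_l A_l^n = B^n$, so that
\begin{equation*}
\hat{\Sigma}_n = \frac{1}{L}\sum_{l=1}^{L} (A_l^n - B^n)(A_l^n - B^n)' = \frac{1}{L}\sum_{l=1}^{L} A_l^n (A_l^n)' - B^n (B^n)'.
\end{equation*}
Proposition \ref{Thm:BvCLT} remains valid under $(\textbf{V})$, $(\textbf{H'})$, $(\textbf{K'})$ by the results of BGJPS6, so $\sqrt{n}(V(f)^n - V(f)) = O_p(1)$ and therefore $B^n (B^n)' = O_p(1/L) = o_p(1)$. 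It thus suffices to establish the law-of-large-numbers statement $L^{-1}\sum_l A_l^n (A_l^n)' \overset{p}{\to} \Sigma$.

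I would first invoke the standard localization procedure to assume that $\sigma$ is bounded above and, under $(\textbf{H'})$, bounded away from zero. The key observation is that each $A_l^n$ is itself a rescaled power variation, computed on the sub-grid with stride $L/n$, so the arguments underlying Proposition \ref{Thm:BvCLT}---applied with $n/L$ in place of $n$---give $A_l^n \overset{d_s}{\to} \text{MN}(0,\Sigma)$ for each fixed $l$. Because distinct subsamples use disjoint Brownian increments of $W$, their Gaussian skeletons are independent conditionally on $\sigma$. I would then prove the LLN in two steps: (i) show $L^{-1}\sum_l \mathbb{E}[A_l^n (A_l^n)' \mid \sigma] \overset{p}{\to} \Sigma$ via a blocking/Riemann-sum argument that freezes $\sigma$ on sub-intervals of length $1/M_n$, with $M_n \to \infty$ and $M_n L / n \to 0$, reducing each conditional expectation to the explicit Gaussian integrals appearing in Eq.\ \eqref{Eqn:Sigmaij}; (ii) show that the conditional variance of $L^{-1}\sum_l A_l^n (A_l^n)'$ is $o_p(1)$, by expanding the square: the $L$ diagonal terms contribute $O(1/L)$ from the bounded conditional fourth moments of $A_l^n$, while the $L(L-1)$ off-diagonal terms vanish in the limit through the conditional independence of the Gaussian skeletons across subsamples.

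The main obstacle is handling the weaker regularity in $(\textbf{K'})$ together with the merely càdlàg volatility permitted by $(\textbf{V})$: without differentiability of $f$ or Malliavin smoothness of $\sigma$ (as used in Theorem \ref{c0}), one cannot Taylor-expand $f(\sqrt{n}\Delta_i^n X)$ around its Gaussian approximant to obtain quantitative remainder bounds. The remedy is to mimic the truncation strategy of BGJPS6: approximate $f$ by a smoothed $f_\varepsilon$ away from the nondifferentiability set, use $(\textbf{H'})$ to bound the asymptotic contribution of returns falling near that set, control the remainder by the polynomial growth in $(\textbf{K'})$ together with the boundedness of $\sigma$, and absorb the countable jumps of $\sigma$ into a negligible error in the Riemann-sum step of (i). Since only qualitative convergence in probability is required---and no rate---all of these approximations can be carried out with slack, and the argument goes through under the weaker hypotheses of the theorem.
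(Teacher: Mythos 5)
Your overall architecture coincides with the paper's: the exact identity $\sum_{l=1}^{L} V_l(f)^n = L\,V(f)^n$ is precisely how the paper disposes of the plug-in centering (its step (v), which gives $\hat{\Sigma}_n - \Sigma_n = -\tfrac{n}{L}\bigl(V(f)^n - V(f)\bigr)^2 = O_p(1/L)$), and the remaining task of showing $\tfrac{1}{L}\sum_l A_l^n (A_l^n)' \overset{p}{\to} \Sigma$ is handled in the paper by recycling the decomposition from Theorem \ref{c0} through the Gaussian skeleton $\chi_i^n$ (i.e. $\Sigma_n \approx Q_n \approx U_n \approx R_n \to \Sigma$). Under the weak assumptions the only genuinely new ingredient there is a uniform-in-$l$ $L^{1+\epsilon}$ approximation of $\sqrt{n/L}\,(V_l(f)^n - V(f))$ by $\sqrt{L/n}\,\sum_i \chi_{(i-1)L+l}^n$, imported from Section 8 (Steps 2--4) of BGJPS6 and combined with H\"older's inequality; where you propose to redo the smoothing/truncation of $f$ near its nondifferentiability set yourself, the paper simply cites those steps. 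Your plan is sound in spirit, but note that you must carry out the approximation in a norm strong enough to survive squaring: convergence in probability of each $A_l^n$ is not enough, and the paper's $L^{1+\epsilon}$ bound paired with a complementary $L^{1+1/\epsilon}$ moment bound is exactly what makes the quadratic functional tractable.

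The one step that fails as written is the claim that the subsample statistics (or their Gaussian skeletons) are ``independent conditionally on $\sigma$'' because distinct subsamples use disjoint Brownian increments. Assumption (\textbf{V}) allows leverage ($\tilde{\sigma} \neq 0$), so the path of $\sigma$ is informative about $W$; conditioning on $\sigma$ neither preserves the Gaussianity of the increments nor renders them independent, and by the same token $\mathbb{E}\bigl[A_l^n (A_l^n)' \mid \sigma\bigr]$ is not the explicit Gaussian integral your step (i) requires. The correct device---and the one the paper uses---is filtration-based: with $\chi_i^n = f(\alpha_i^n) - \mathbb{E}\bigl[f(\alpha_i^n)\mid \mathcal{F}_{(i-1)/n}\bigr]$ the skeleton terms are martingale differences, cross-subsample products have zero expectation by conditioning at the left endpoint of the interval carrying the largest index (this yields the uncorrelatedness of the per-subsample sums and the $O(1/L)$ variance bound of the paper's step (ii)), and the Riemann-sum step is run on $R_n = \tfrac1n\sum_i \mathbb{E}\bigl[(\chi_i^n)^2 \mid \mathcal{F}_{(i-1)/n}\bigr]$, whose convergence to $\Sigma$ under (\textbf{V}), (\textbf{H'}), (\textbf{K'}) is again Step 2 of Section 8 in BGJPS6. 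With your conditioning-on-$\sigma$ scaffolding replaced by these filtration arguments, your proof goes through and is essentially the paper's.
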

\begin{proof}
See Appendix.
\end{proof}

\subsection{Subsampling for bipower variation}

\label{Sec:BipowerVariation}

In the previous section, we presented a subsampling estimator for the
asymptotic conditional covariance matrix of power variation. If we
are interested in bipower (or multipower) variation, the theory derived there does not
readily apply. This is because the summands in Eq. \eqref{Eqn:BVn}
are, asymptotically, 1-dependent, which the subsampling approach shown
in Figure \ref{Fig:subsample} does not adequately capture.

In order to consistently estimate $\Sigma$ in the bipower case, we use an
intuitive blocking approach, which is described next. We define the $i$th
block of high-frequency data by taking:
\begin{equation} \label{Block}
B_{i}(p) = \Bigl\{ j : (i-1)p \leq j \leq i p \Bigr\},
\end{equation}
where $p \geq 2$ is an integer, and $i \geq 1$.

$B_{i}(p)$ is composed of the observation index associated with the sample
of adjacent log-price observations
$X_{(i-1)p/n}, \ldots, X_{i p /n}$. From this, we can compute $p$
consecutive returns $\Delta_{(i-1)p+1}^{n}X, \ldots,
\Delta_{i p}^{n}X$. Therefore, $B_{i}(p)$ plays the role of the interval
$[(i-1)/n,i/n]$ for power variation, which was used to compute a single return
$\Delta_{i}^{n}X$. The only change is that we need to make this interval
longer, such that we can consistently estimate the covariance structure of
$V(f,g)^{n}$. As the $B_{i}(p)$'s are based on non-overlapping
increments, it still holds that bipower variations computed from different
subsamples are, asymptotically, conditionally independent.

We reset $\hat{ \Sigma}_{n}$ as follows:
\begin{equation}
\label{Eqn:SigmaNHatBipower}
\hat{ \Sigma}_{n} = \frac{1}{L} \sum_{l=1}^{L} \biggl( \sqrt{
\frac{n}{L}} \Bigl( V_{l} (f,g)^{n} - V(f,g)^{n} \Bigr)\biggr) \biggl(
\sqrt{ \frac{n}{L}} \Bigl( V_{l} (f,g)^{n} - V(f,g)^{n} \Bigr)\biggr)',
\end{equation}
where, assuming $L p$ divides $n$,
\begin{align}
\begin{split}
\label{Eqn:Hac}
V_{l}(f,g)^{n} &= \frac{L p}{n} \sum_{i = 1}^{n / L p}
v_{(i - 1)L + l}(f,g)^{n}, \\[0.25cm]
v_{i}(f,g)^{n} &= \frac{1}{p - 1} \sum_{j,j+1 \in B_{i}(p)} f \bigl(
\sqrt{n} \Delta_{j}^{n} X \bigr) g \bigl( \sqrt{n} \Delta_{j+1}^{n}
X \bigr).
\end{split}
\end{align}
Note that $n/L p$ is the number of blocks assigned to each subsample, and that
the subsample statistic $v_{i}(f,g)^{n}$ is computed only from data within the
$i$th block $B_{i}(p)$.
As in the above, we definitely require $n \to \infty$, $p \to \infty$,
$L \to \infty$, and $n/pL \to \infty$ to prove the asymptotic theory for
$\hat{ \Sigma}_{n}$. It turns out, however, we need a slightly
stronger condition for the last part to ensure consistency. This is because
the rate $\displaystyle
\sqrt{ \frac{n}{L}}$ in the definition of Eq. \eqref{Eqn:SigmaNHatBipower}
corresponds to the martingale part of $V_{l} (f,g)^{n} - V(f,g)^{n}$, while
the statistic $V_{l} (f,g)^{n} - V(f,g)^{n}$ also has a bias term, which is
of order $Lp/n$. Thus, to make the bias negligible with respect to the
martingale part, we need $n/Lp^2 \to \infty$. Hence,
our ``minimal'' assumptions are based on this condition.

\begin{theorem}
\label{c1}
Assume that $X$ is a continuous It\^{o} semimartingale as in Eq. \eqref{Eqn:X},
where Assumption $(\normalfont \textbf{H})$ and $(\normalfont \textbf{M})$ are
true, as is Assumption $(\normalfont \textbf{K})$ for each component of $f =
(f_{1}, \dots, f_{m})'$ and $g = (g_{1}, \dots, g_{m})'$. As $n \to \infty$,
$p \to \infty$, $L \to \infty$, and $n/Lp^2 \to \infty$, it holds that
\begin{equation}
\label{3.13}
\hat{ \Sigma}_{n} - \Sigma = \underbrace{O_{p} \left( \frac{1}{
\sqrt{L}} \right)}_{ \text{\upshape{CLT}}} +
\underbrace{O_{p} \left( \frac{Lp^2}{n} \right)}_{ \text{
\upshape{blocking}}} + \underbrace{O_{p} \left( \frac{1}{p}
\right)}_{ \text{ \upshape{HAC}}}.
\end{equation}
\end{theorem}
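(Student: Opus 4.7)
The plan is to adapt the argument for Theorem \ref{c0} to the block construction needed for bipower variation, tracking three error sources. Write $S_l = \sqrt{n/L}\bigl(V_l(f,g)^n - V(f,g)^n\bigr)$, so $\hat\Sigma_n = L^{-1}\sum_l S_l S_l'$. First I replace the full-sample center $V(f,g)^n$ by the population quantity $V(f,g)$: since $V(f,g)^n - V(f,g) = O_p(n^{-1/2})$ by Proposition \ref{Thm:BvCLT} and $V_l(f,g)^n - V(f,g) = O_p(\sqrt{L/n})$, the resulting cross terms are all $O_p(1/\sqrt L)$ and get absorbed into the eventual CLT error. So it suffices to analyze $S_l^\ast := \sqrt{n/L}(V_l(f,g)^n - V(f,g))$, which I decompose as $S_l^\ast = M_l + R_l$, where $M_l$ gathers the martingale fluctuations of each block statistic $v_i(f,g)^n$ around its conditional mean given $\sigma$, and $R_l$ is the deterministic-in-$\sigma$ remainder.

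The remainder $R_l$ measures how well the weighted sum $\tfrac{Lp}{n}\sum_i \mathbb{E}[v_{(i-1)L+l}(f,g)^n \mid \sigma]$ approximates $V(f,g)$. It breaks into a within-block freezing bias (each block of width $p/n$ contributes $O(p/n)$ per summand under assumption (M), and the aggregated contribution with prefactor is $O(p/n)$) plus an inter-block Riemann-sum error with step $Lp/n$ for the integral of the locally frozen density. A naive H\"older-$1/2$ estimate for both parts yields $V_l(f,g)^n - V(f,g) = O_p(\sqrt{Lp/n})$, but because (H) makes $\sigma$ a Brownian semimartingale and (M) supplies enough Malliavin smoothness, the integration-by-parts device already used in Theorem \ref{c0} upgrades this to $V_l(f,g)^n - V(f,g) = O_p(Lp/n)$, hence $R_l = O_p(\sqrt{Lp^2/n})$. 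Squaring and averaging gives $L^{-1}\sum_l R_l R_l' = O_p(Lp^2/n)$, the blocking term; the cross contributions $L^{-1}\sum_l M_l R_l'$ are of lower order because the $M_l$'s are conditionally centered.

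The martingale parts $M_l$ are built from disjoint blocks and are, conditional on $\sigma$, asymptotically independent across $l$. A direct within-block covariance computation shows
\[
\mathbb{E}\bigl[M_l M_l' \,\big|\, \sigma\bigr] = \Sigma + O_p(1/p),
\]
because the conditional variance of $\sqrt{p}\, v_i(f,g)^n$ reproduces, via Gaussian moment identities for $\rho_\sigma(\cdot)$, the four-term autocovariance kernel of Eq. \eqref{Eqn:Sigmaij}, up to a $O(1/p)$ boundary correction from using $p-1$ adjacent pairs inside a $p$-increment block --- this is the HAC error. The empirical fluctuation $L^{-1}\sum_l M_l M_l' - \mathbb{E}[M_l M_l'\mid \sigma]$ is then controlled by a conditional variance and law-of-large-numbers calculation for an average of $L$ asymptotically independent pieces and contributes $O_p(1/\sqrt L)$, the CLT term. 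Collecting the three contributions yields Eq. \eqref{3.13}.

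The main obstacle is the blocking step: upgrading $R_l$ from $\sqrt{Lp/n}$ to $\sqrt{Lp^2/n}$ requires the Malliavin integration-by-parts argument of Theorem \ref{c0}, but the bookkeeping is heavier because every Riemann node now represents a full block of $p$ increments rather than a single increment, and both the within-block freezing bias and the inter-block Riemann error must be treated simultaneously. A secondary difficulty is the HAC calculation: one has to carefully unpack the within-block covariance of the products $f(\sqrt n \Delta_j^n X)g(\sqrt n \Delta_{j+1}^n X)$ at lags $0$ and $\pm 1$ in order to recover all four summands of $\Sigma_{ij}$ up to the $O(1/p)$ boundary error.
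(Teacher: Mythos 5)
Your architecture matches the paper's: freeze volatility at block starts, split each subsample deviation into a martingale part and a blocking remainder, recover $\Sigma$ up to a $O(1/p)$ edge bias from the within-block conditional variance of the $1$-dependent summands, and control the empirical average of $L$ asymptotically independent pieces at rate $1/\sqrt{L}$, with the recentering by $V(f,g)^{n}$ costing $O_p(1/\sqrt{L})$. But there is a genuine gap where you write that the cross contributions $L^{-1}\sum_l M_l R_l'$ are ``of lower order because the $M_l$'s are conditionally centered.'' That dismissal does not survive scrutiny. With $M_l = O_p(1)$ and $R_l = O_p(\sqrt{Lp^2/n})$, Cauchy--Schwarz only gives $L^{-1}\sum_l M_l R_l' = O_p\bigl(\sqrt{Lp^2/n}\bigr)$, and at the optimal tuning $L \asymp n^{2/5}$, $p \asymp n^{1/5}$ this is $O_p(n^{-1/10})$, strictly larger than every term in Eq. \eqref{3.13}, which are all $O_p(n^{-1/5})$. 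Conditional centering does not rescue you: under Assumption (\textbf{H}) the volatility is driven in part by the same $W$ (leverage, $\tilde\sigma\neq 0$), so conditioning on the whole $\sigma$-path distorts the law of the Brownian increments inside the blocks, and the leading stochastic piece of the blocking remainder (integrals of $W_s - W_{t_{i-1,l-1}}$ against $\phi'(\sigma)\tilde\sigma$) is correlated both with $M_l$ and across different $l$'s. This cross term is exactly the hardest part of the paper's proof: it is the analogue of $D_n^{(2)}$, whose subterm $E_n(2.2)$ is bounded by writing the centered block statistics via the Clark--Ocone formula and then applying the Malliavin integration-by-parts identity $\mathbb{E}[F\delta^{k}(u)]=\mathbb{E}[\langle D^{k}F,u\rangle_{\mathbb{H}^{\otimes k}}]$; this is the only place Assumption (\textbf{M}) is actually needed.

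Relatedly, you misplace the role of the Malliavin machinery. The upgrade of the freezing/Riemann bias from the H\"older rate $\sqrt{Lp/n}$ to $Lp/n$ (hence $R_l = O_p(\sqrt{Lp^2/n})$) does not require integration by parts: as in step (iv) and Lemma \ref{l1}(A.2b) of the proof of Theorem \ref{c0}, a Taylor expansion of $x\mapsto\rho_x(f)\rho_x(g)$ plus the martingale-difference orthogonality of the Brownian contributions of $\sigma$ over disjoint blocks already yields the $Lp/n$ rate for the conditional-mean (bias) part. So your blocking and HAC estimates are fine in outline, but the proof is incomplete until the $M_lR_l'$ cross term is treated with the conditioning-plus-Clark--Ocone/Malliavin argument (or some equivalent device exploiting cancellation across blocks and subsamples) rather than asserted negligible.
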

\begin{proof}
See Appendix.
\end{proof}
The first two errors in Eq. \eqref{3.13} can be interpreted as
to those in Theorem \ref{c0}, except the second is also affected by the
block size $p$. Meanwhile, the decomposition of $\hat{ \Sigma}_{n} - \Sigma$ in
Theorem \ref{c1} has an extra error of order $O_{p}(1/p)$.
The additional term, which emerges from the computation of the conditional
variance of
$v_{i}(f,g)^{n}$, has an intuitive interpretation, if we recall that in the
current setting of bipower variation, the summands in Eq. \eqref{Eqn:Hac}
(or Eq. \eqref{Eqn:BVn}) are asymptotically 1-dependent.

Consider the following stylized example. Assume that $(Z_{i})_{i \geq 1}$ is
a sequence of stationary 1-dependent random
variates. Then,
\begin{equation}
\label{Eqn:Example}
\text{var} \left( \frac{1}{\sqrt{p}} \sum_{i = 1}^{p} Z_{i} \right) =
\text{var}(Z_{1}) + 2\frac{(p - 1)}{p} \text{cov} (Z_{1},Z_{2})
\to \text{var}(Z_{1}) + 2 \text{cov} (Z_{1},Z_{2}),
\end{equation}
as $p \to \infty$.

This calculation shows that the finite sample variance on the
left-hand side
is not equal to, but converges towards, the asymptotic variance.
The difference, i.e. the bias, is the
term $-2 \text{cov} (Z_{1},Z_{2})/p$, which has order $O(1/p)$.
This example also helps to illustrate that Theorem \ref{c1} does not change,
and in particular the convergence rate of $\hat{ \Sigma}_{n}$ is unaffected,
if we were to compute a higher order
multipower variation statistic. Then there would be more covariance
terms in Eq. \eqref{Eqn:Example}, but the bias in each of
them would still be $O(1/p)$.

The fastest rate is again found by balancing the errors, which
means taking:
\begin{equation} \label{opti}
L = O(n^{2/5}), \qquad p = O(n^{1/5}),
\end{equation}
for which
\begin{equation}
\hat{ \Sigma}_{n} - \Sigma = O_{p} (n^{-1/5}).
\end{equation}
While we do not offer a formal proof, we again believe that in a general
diffusion model this rate is optimal within the framework of subsampling
high-frequency data, as elaborated above.
Moreover, the consistency only result of $\hat{ \Sigma}_{n}$ holds under
weaker assumptions that do not require Assumption (\textbf{K}),
(\textbf{H}) and (\textbf{M}), while an extra condition $L/p \to \infty$
is necessary to deal with an additional bias term.

\begin{theorem}
\label{bipowerconsistency}
Assume that $X$ is a continuous It\^{o} semimartingale as in Eq. \eqref{Eqn:X},
where Assumption $(\normalfont \textbf{V})$ is true, as are Assumption
$(\normalfont \textbf{H'})$ and $(\normalfont \textbf{K'})$ from BGJPS6.
As $n \to \infty$, $p \to \infty,$ $L/p \to \infty$, and $n/L p^2 \to \infty$,
it holds that
\begin{equation}
\hat{ \Sigma}_{n} \overset{p}{\to} \Sigma.
\end{equation}
\end{theorem}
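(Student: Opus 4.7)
The strategy follows the blueprint of Theorem \ref{c1} but stays at the level of convergence in probability, so as to accommodate the weaker Assumptions (\textbf{V}), (\textbf{H}'), (\textbf{K}'). First, by a standard localization argument (as in BGJPS6), I may assume the coefficients in (\textbf{V}) are bounded and $\sigma$ is bounded and bounded away from zero. Non-differentiability of $f, g$ on the exceptional set allowed by (\textbf{K}') is handled as in BGJPS6: since $\sigma > 0$, the density of the normalized return $\sqrt{n}\Delta_j^n X$ is smooth, so the contribution of the exceptional set vanishes in the averaged quantities appearing in $\hat{\Sigma}_n$.

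\textbf{ANOVA step.} Setting $\bar{V}^n \equiv L^{-1}\sum_{l=1}^L V_l(f,g)^n$, I write
\[
\hat{\Sigma}_n = \frac{n}{L^2}\sum_{l=1}^L (V_l(f,g)^n - \bar{V}^n)(V_l(f,g)^n - \bar{V}^n)' + \frac{n}{L}(\bar{V}^n - V(f,g)^n)(\bar{V}^n - V(f,g)^n)'.
\]
The difference $\bar{V}^n - V(f,g)^n$ stems solely from the $\lfloor n/p \rfloor - 1$ cross-block products $f(\sqrt{n}\Delta_{kp}^n X)g(\sqrt{n}\Delta_{kp+1}^n X)$, which are built from disjoint Brownian increments and are asymptotically independent, giving $\bar{V}^n - V(f,g)^n = O_p(1/\sqrt{np})$ and making the correction $O_p(1/(Lp)) = o_p(1)$. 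It therefore suffices to prove the first term converges in probability to $\Sigma$.

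\textbf{Block freezing and decomposition.} For each block $k$, set $\tau_k = (k-1)p/n$ and introduce the frozen-volatility surrogate $\tilde{v}_k$ obtained by replacing every $\sigma_s$ inside $B_k(p)$ by $\sigma_{\tau_k}$. Continuity of $\sigma$ under (\textbf{V}) and polynomial growth in (\textbf{K}') give $v_k(f,g)^n - \tilde{v}_k = o_p(1)$ in the averaged sense relevant for $\hat{\Sigma}_n$. Decompose $\tilde{v}_k = \mu_k + \eta_k$ with $\mu_k = E[\tilde{v}_k \mid \mathcal{F}_{\tau_k}] = \rho_{\sigma_{\tau_k}}(f)\rho_{\sigma_{\tau_k}}(g) + O(1/p)$, the $O(1/p)$ being the HAC bias illustrated by Eq.~\eqref{Eqn:Example}, and $\eta_k = \tilde{v}_k - \mu_k$ conditionally centered. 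Although $f, g$ may be non-smooth, the map $x \mapsto \rho_x(f)\rho_x(g)$ is $C^\infty$ on $(0, \infty)$ (as a convolution of $f$ against a Gaussian density), so $\mu_k = h(\sigma_{\tau_k})$ for a smooth $h$. The $\eta_k$'s are conditionally independent across $k$ (built from disjoint Brownian increments), with conditional covariance matrix converging to $\Sigma^{\text{loc}}_k/p$, where $\Sigma^{\text{loc}}_k$ is the integrand of \eqref{Eqn:Sigmaij} evaluated at $\sigma_{\tau_k}$.

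\textbf{Conclusion and main obstacle.} Substituting the decomposition into the first term of the ANOVA identity produces three contributions: (a) a fluctuation-on-fluctuation piece whose conditional expectation is a Riemann sum for $\Sigma$ and whose fluctuations vanish by the conditional independence of the $\eta_k$'s, exactly as in the proof of Theorem \ref{c0prime}; (b) cross terms handled by Cauchy--Schwarz; and (c) a mean-on-mean piece together with the frozen-volatility approximation error, which jointly constitute the additional bias alluded to in the statement. The main obstacle is controlling (c) without Malliavin smoothness. The approach is to apply It\^o's formula to $h(\sigma_{\tau_k})$ using the semimartingale representation of $\sigma$ from (\textbf{V}), splitting the difference between the $l$-th subsample Riemann sum and the full Riemann sum of $\mu_k$ into a martingale part bounded by Burkholder--Davis--Gundy and a finite-variation part bounded by direct arguments. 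After squaring and averaging over $l$, the martingale contribution is $O_p(p/L)$, which is $o_p(1)$ precisely when $L/p \to \infty$, while the finite-variation contribution is $O_p(Lp^2/n) = o_p(1)$ under the standing condition. This book-keeping pinpoints both $L/p \to \infty$ and $n/(Lp^2) \to \infty$ as necessary and completes the proof.
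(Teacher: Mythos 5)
Your overall architecture (block freezing, splitting each block statistic into a conditional mean plus a conditionally centered fluctuation, a conditional-independence law of large numbers for the fluctuation part, and the $O(1/p)$ HAC bias) mirrors the paper's $Q_{n}, U_{n}, R_{n}$ chain, and the ANOVA recentering around the subsample average is a legitimate variant of the paper's step (v). However, there is a genuine gap at the step that is actually the hard one here. You dispose of the frozen-volatility replacement, $v_{k}(f,g)^{n}-\tilde{v}_{k}=o_{p}(1)$ ``in the averaged sense,'' by invoking continuity of $\sigma$ under Assumption (\textbf{V}). First, (\textbf{V}) does \emph{not} imply continuity of $\sigma$ -- it explicitly allows a jump component, and accommodating discontinuous volatility is one of the points of this theorem (the paper states that Proposition \ref{Thm:BvCLT} under (\textbf{H'}), (\textbf{K'}) covers a ``possibly discontinuous volatility process''). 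Second, and more importantly, an unquantified $o_{p}(1)$ is not enough: the replacement error enters $\hat{\Sigma}_{n}$ only after multiplication by $\sqrt{n/L}$ and squaring, so one needs control of $\sqrt{n/L}\bigl(V_{l}(f,g)^{n}-\tilde{V}_{l}(f,g)^{n}\bigr)$ uniformly in $l$ in an $L^{1+\epsilon}$ sense. This is exactly what the paper imports from Section 8 (Steps 2--4) of BGJPS6 in the proofs of Theorems \ref{c0prime} and \ref{NoiseAnyPowerThm}, to which the proof of the present theorem is reduced; without differentiability of $f,g$ there is no odd-function/martingale cancellation, which is why the available bound on this error deteriorates (the analog of Lemma \ref{s1}(c) changing from $Lp^{2}/n$-type to $p/n$-type) and why the condition $L/p\to\infty$ is needed at all.

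Relatedly, your book-keeping attaches the $O_{p}(p/L)$ term to the wrong source. The difference between the $l$-th subsample Riemann sum and the full Riemann sum of the \emph{smooth} conditional mean $h(\sigma_{\tau_{k}})=\rho_{\sigma_{\tau_{k}}}(f)\rho_{\sigma_{\tau_{k}}}(g)$ is of order $Lp/n$ (its martingale part, over the $n/(Lp)$ disjoint subsample gaps of length at most $Lp/n$, has standard deviation of that order, and so does its finite-variation part), so after forming $\frac{n}{L^{2}}\sum_{l}(\cdot)^{2}$ this piece contributes only $O_{p}(Lp^{2}/n)$ -- it cannot produce a $p/L$ term. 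The $O_{p}(p/L)$ contribution that forces $L/p\to\infty$ comes instead from the return-approximation error just discussed, i.e. from the non-smooth $f,g$ evaluated at the true increments versus the frozen-volatility Gaussian increments, which your It\^{o}/BDG argument on $h(\sigma_{\tau_{k}})$ never touches. So while you correctly name the two conditions $L/p\to\infty$ and $n/(Lp^{2})\to\infty$, the argument as written does not actually derive them; repairing it requires either the BGJPS6-type $L^{1+\epsilon}$ approximation results used by the paper or an explicit bound on $\mathbb{E}\bigl[|V_{l}(f,g)^{n}-\tilde{V}_{l}(f,g)^{n}|^{2}\bigr]$ valid under (\textbf{H'}), (\textbf{K'}) and c\`adl\`ag $\sigma$. (A minor further point: $\bar{V}^{n}-V(f,g)^{n}$ does not come solely from the omitted cross-block products; the normalizations $\frac{p}{n(p-1)}$ versus $\frac{1}{n}$ also differ, and the $O_{p}(1/\sqrt{np})$ claim holds only because the two resulting $O(1/p)$ systematic pieces cancel, which your argument should make explicit.)
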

\begin{proof}
See Appendix.
\end{proof}

To end this section, we should point out that for the power variation estimator
covered by Theorem \ref{c0} in the previous subsection, it follows from the work
of BGJPS6
that there exits another consistent, positive semi-definite estimator of
$\Sigma$:
\begin{equation}
\hat{S}_{n} = \frac{1}{2n} \sum_{i = 1}^{n - 1} \Bigl( f \bigl( \sqrt{n}
\Delta_{i}^{n} X \bigr) - f \bigl( \sqrt{n} \Delta_{i+1}^{n} X) \Bigr)
\Bigl( f \bigl( \sqrt{n} \Delta_{i}^{n} X \bigr) - f \bigl( \sqrt{n}
\Delta_{i+1}^{n} X) \Bigr)'.
\end{equation}
$\hat{S}_{n}$ has a better rate of convergence $n^{-1/2}$ compared to
$n^{-1/3}$ derived in the previous section for $\hat{\Sigma}_{n}$.
$\hat{S}_{n}$ is therefore more efficient for power variation, but it does
not work for bi- or multipower variation.

\subsubsection{Subsampling for truncated bipower variation}

\label{section:truncation}

In an efficient market, equilibrium prices should adjust instantly to new
information about fundamentals. If this leads to a significant revision of the
fair value of the asset, the price has to move sharply and, potentially,
discretely. This feature of price formation is not captured by the previous
setup, where $X$ has continuous sample paths. In this section, we therefore add
a jump term to $X$ and develop a framework for jump-robust inference about
volatility based on subsampling truncated bipower
variation \citep*[e.g.,][]{jacod-protter:12a,mancini:09a}.
Accordingly, we assume that:\\[-0.50cm]

\noindent \textbf{Assumption (J)}: $X$ is of the form:
\begin{align}
\begin{split}
\label{Eqn:XJumps}
X_{t} &= X_{0} + \int_{0}^{t} a_{s} \text{d}s + \int_{0}^{t} \sigma_{s}
\text{d}W_{s} \\[0.25cm]
&+ \int_{0}^{t} \int_{E} \delta(s,x) 1_{ \{| \delta(s,x)| \leq 1 \}} (\mu -
\nu) ( \text{d}s, \text{d}x) + \int_{0}^{t} \int_{E} \delta(s,x) 1_{ \{|
\delta(s,x)| > 1 \}} \mu( \text{d}s, \text{d}x),
\end{split}
\end{align}
where $X_{0}$, $a = (a_{t})_{t \geq 0}$, $\sigma = ( \sigma_{t})_{t \geq 0}$
and $W = (W_{t})_{t \geq 0}$ are defined as in Eq. \eqref{Eqn:X}, while $(E, \mathcal{E})$ is
a Polish space, $\mu$ is a random measure on $\mathbb{R}_{+} \times E$ with
compensator $\nu( \text{d}s, \text{d}x) = \text{d}s F( \text{d}x)$, where
$F$ is a $\sigma$-finite measure on $(E, \mathcal{E})$.
Also, $\delta: \Omega \times \mathbb{R}_{+} \times E \to \mathbb{R}$ is a
predictable function and $(S_{k})_{k \geq 1}$ is a sequence of stopping times
increasing to $\infty$ such that $| \delta( \omega,s,x)| \wedge 1 \leq
\psi_{k}(x)$ for all $( \omega, s, z)$ with $s \leq S_{k}( \omega)$
and  $\int_{E} \psi^{ \beta}_{k}(x) F( \text{d}x) < \infty$ for all $k \geq
1$ and $\beta \in [0,1)$. \\[-0.50cm]

$\beta$ relates to the activity index of the price jump process. The condition imposed on $\beta$ implies that the jumps in $X$ are (absolutely) summable, i.e. we restrict attention to jump processes with paths of finite variation, but, possibly, infinite activity.\footnote{As explained earlier, for the subsampling estimator to be consistent for the asymptotic conditional variance-covariance matrix, we typically require a central limit theorem to hold for the underlying statistic of interest. In this respect, the restriction on $\beta$ is a standard condition in the high-frequency volatility literature.}

Although the theory derived here should work with a general $f$ and $g$, it requires a lot of notation. To develop ideas and maintain a streamlined exposition, we focus on the class of pure
bipower variations in this section. The $k$th coordinate of the truncated bipower variation $\check{V}(q,r)^{n}$ is therefore:
\begin{equation}
\label{V(q_k,r_k)}
\displaystyle
\check{V}(q_{k}, r_{k})^{n} = \frac{1}{n} \sum_{i=1}^{n-1} | \sqrt{n} \Delta_{i}^{n} \check{X}|^{q_{k}} | \sqrt{n} \Delta_{i+1}^{n} \check{X}|^{r_{k}},
\end{equation}
where $\Delta_{i}^{n} \check{X} = \Delta_{i}^{n} X \cdot \mathbbm{1}_{
\{| \Delta_{i}^{n} X | \leq u_{n} \}}$ is the increment after jump-trucation
and the threshold level $u_{n} = \alpha n^{- \check{ \omega}}$ with
$\alpha >0$ and $\check{ \omega}\in (0,1/2)$. By excluding the largest
increments of $X$, the bipower variation statistic is, asymptotically,
merely based on those high-frequency returns that are compatible with a
continuous sample path model.

First, we recall the central limit theorem for $\check{V}(q,r)^{n}$.
\begin{proposition}
\label{jacodthmj}
Assume that $X$ is a jump-diffusion process as in Assumption
($\normalfont{ \textbf{J}}$) and $\sigma$
follows Assumption ($\normalfont{ \textbf{V}}$) with $\sigma > 0$. We denote by
$s = 1 \wedge \min \{ q_{k}, r_{k} : q_{k} > 0, r_{k} > 0, 1 \leq k \leq m \}$
and $s' = 1 \vee \max \{ q_{k}, r_{k}: 1 \leq k \leq m \}$. Then, if $\beta
\leq s$, $\displaystyle \check{ \omega} > \frac{s' - 1}{2(s' - \beta)}$, and as
$n \to \infty$, it holds that
\begin{equation}
\label{tg4}
\sqrt{n} \Big( \check{V}(q,r)^{n} - V(q,r) \Big) \overset{ d_{s}}{ \to}
\text{\upshape{MN}}(0, \Sigma),
\end{equation}
where the elements of $V(q,r)$ and $\Sigma$ are given as in Eq. \eqref{Eqn:OBV}
and \eqref{SigmaPowers}.
\end{proposition}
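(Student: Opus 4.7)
The plan is to reduce the statement to Proposition~\ref{Thm:BvCLT} applied to $f_k(x) = |x|^{q_k}$, $g_k(x) = |x|^{r_k}$, which is legitimate under Assumption~(\textbf{H}$'$) and (\textbf{K}$'$) of BGJPS6 because $\sigma > 0$. Split the price into its continuous and jump parts, $X = X^c + J$, where $X^c_t = X_0 + \int_0^t a_s\,\mathrm{d}s + \int_0^t \sigma_s \,\mathrm{d}W_s$, and define the untruncated bipower variation of the continuous part,
\begin{equation*}
V(q_k, r_k)^{n,c} = \frac{1}{n} \sum_{i=1}^{n-1} |\sqrt{n} \Delta_i^n X^c|^{q_k} |\sqrt{n} \Delta_{i+1}^n X^c|^{r_k}.
\end{equation*}
For this statistic Proposition~\ref{Thm:BvCLT} yields the stable convergence $\sqrt{n}(V(q,r)^{n,c} - V(q,r)) \overset{d_s}{\to} \text{\upshape{MN}}(0,\Sigma)$. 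Since stable convergence in law is preserved under $o_p(1)$ perturbations, it suffices to show
\begin{equation*}
R_n := \sqrt{n} \bigl( \check{V}(q,r)^n - V(q,r)^{n,c} \bigr) \overset{p}{\to} 0.
\end{equation*}

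To analyze $R_n$, fix $\varepsilon > 0$ and decompose the jump integrals as $J = J^{(b,\varepsilon)} + J^{(s,\varepsilon)}$, where $J^{(b,\varepsilon)}$ collects jumps of size larger than $\varepsilon$ and $J^{(s,\varepsilon)}$ the small jumps. After the usual localization I may assume all processes and characteristics in Assumption~(\textbf{J}) are bounded, which in particular makes $J^{(b,\varepsilon)}$ have only finitely many jumps. I would then classify each index $i$ as \emph{bad} when $[(i-1)/n,(i+1)/n]$ contains a jump of $J^{(b,\varepsilon)}$, and \emph{good} otherwise. The number of bad indices is $O_p(1)$; for each such index one has $|\Delta_i^n X| > u_n$ eventually, so $\Delta_i^n \check{X} = 0$, and the total contribution of bad indices to $\sqrt{n} R_n$ is negligible. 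On good indices $\Delta_i^n X - \Delta_i^n X^c = \Delta_i^n J^{(s,\varepsilon)}$, and I would compare the summands of $\check{V}(q,r)^n$ and $V(q,r)^{n,c}$ using the elementary inequality
\begin{equation*}
\bigl| |a|^q - |b|^q \bigr| \leq C \bigl( |a-b|^q + |a-b|(|a|^{q-1} + |b|^{q-1}) \bigr), \qquad q > 0,
\end{equation*}
combined with a separate remainder accounting for the event $\{|\Delta_i^n X| > u_n\}$, on which truncation spuriously zeros out an essentially continuous increment.

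The main obstacle is bounding the small-jump and spurious-truncation remainders uniformly. The standard moment estimate $\mathbb{E}[|\Delta_i^n J^{(s,\varepsilon)}|^p] \leq C n^{-1} \varepsilon^{p - \beta}$ for $p \geq \beta$, and the Gaussian tail bound $\mathbb{P}(|\Delta_i^n X^c| > u_n/2) = O(n^{-K})$ for every $K \geq 1$ (valid because $u_n = \alpha n^{-\check{\omega}}$ with $\check{\omega} < 1/2$), supply all the ingredients. Plugging them into the two types of remainder, summing over $i$, multiplying by $\sqrt{n}$, and applying H\"older's inequality produces, after some bookkeeping, bounds that tend to zero precisely under the hypotheses stated: the condition $\beta \leq s$ guarantees finiteness of the small-jump moments at the exponent $s$ that the minimum power controls, while the condition $\check{\omega} > (s'-1)/(2(s'-\beta))$ balances the blow-up $n^{(s'-1)/2}$ of the maximum-power summands against the truncation rate $u_n^{s'-\beta}$. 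Letting first $n \to \infty$ and then $\varepsilon \to 0$ gives $R_n \overset{p}{\to} 0$, and stability of the limit in Proposition~\ref{Thm:BvCLT} delivers \eqref{tg4}. This argument is a specialization of the ones developed in \citet*{jacod-protter:12a}, Chapter~13, and \citet*{mancini:09a} to the bipower setting, to which we refer for the detailed estimates.
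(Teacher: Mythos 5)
Your proposal is correct and, in substance, the same as the paper's: the paper proves this proposition simply by citing Theorem 13.2.1 and Example 13.2.2 of \citet*{jacod-protter:12a}, and your outline (splitting $X$ into continuous and jump parts, handling big jumps, small jumps and spurious truncation separately, and balancing $n^{(s'-1)/2}$ against $u_n^{s'-\beta}$ to recover the conditions $\beta \leq s$ and $\check{\omega} > (s'-1)/(2(s'-\beta))$) is exactly the argument behind that cited result, to which you also ultimately defer for the detailed estimates. Since the quantitative ``bookkeeping'' you leave implicit is precisely the content of the cited theorem, there is no substantive difference between the two proofs.
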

\begin{proof}
See Theorem 13.2.1 and Example 13.2.2 in \citet*{jacod-protter:12a}.
\end{proof}
In the jump-diffusion setting, we define the subsample estimator of $\Sigma$ as:
\begin{equation}
\hat{ \Sigma}_{n} = \frac{1}{L} \sum_{l = 1}^{L} \biggl(
\sqrt{ \frac{n}{L}} \Bigl( \check{V}_{l}(q,r)^{n} - \check{V}(q,r)^{n} \Bigr)
\biggr) \biggl( \sqrt{ \frac{n}{L}} \Bigl( \check{V}_{l}(q,r)^{n} -
\check{V}(q,r)^{n} \Bigr) \biggr)',
\end{equation}
where, assuming $L p $ divides $n$,
\begin{align}
\begin{split}
\check{V}_{l}(q_{k},r_{k})^{n} &= \frac{L p}{n}\sum_{i = 1}^{n/L p
} v_{(i-1)L+l}(q_{k},r_{k})^{n}, \\[0.25cm]
v_{i}(q_{k},r_{k})^{n} &= \frac{1}{p-1}
\sum_{j,j+1 \in B_{i}(p)}|\sqrt{n} \Delta_{j}^{n} \check{X}|^{q_{k}}
|\sqrt{n} \Delta_{j+1}^{n} \check{X}|^{r_{k}},
\end{split}
\end{align}
and $B_{i}(p)$ is given as in Eq. \eqref{Block}.

Finally, we are ready to state a consistency result.
\begin{theorem}
\label{theorem:truncation}
Assume that $X$ is a jump-diffusion process as in Assumption
($\normalfont{ \textbf{J}}$) and $\sigma$ follows Assumption
($\normalfont{ \textbf{V}}$) with $\sigma > 0$. Moreover, we require that
$\beta \leq s$ and $\displaystyle \check{ \omega} > \frac{s' - 1}{2(s'
- \beta)}$. Then, as $n \to \infty$, $p \to \infty$, $L/p \to
\infty$ and $n/L p^{2} \to \infty$, it holds that
\begin{equation}\label{mainthm}
\hat{ \Sigma}_{n} \overset{p}{\to} \Sigma.
\end{equation}
\end{theorem}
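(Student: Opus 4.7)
The plan is to reduce the jump-diffusion case to the continuous setting already handled by Theorem \ref{bipowerconsistency}. Heuristically, under the truncation conditions $\beta \leq s$ and $\check{\omega} > (s'-1)/(2(s'-\beta))$, the threshold $u_{n} = \alpha n^{-\check{\omega}}$ is large enough to dominate typical continuous increments but small enough to kill essentially all jumps, so the truncated statistic behaves, up to negligible error, like the same statistic computed from the continuous part $X^{c} = X_{0} + \int_{0}^{\cdot} a_{s} \mathrm{d}s + \int_{0}^{\cdot} \sigma_{s} \mathrm{d}W_{s}$ of $X$. This is precisely the mechanism behind Proposition \ref{jacodthmj}, and I would lift it to the subsampling level.

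First, a standard localization (e.g. Lemma 4.4.9 of \citet*{jacod-protter:12a}) lets me assume without loss of generality that $a$, $\sigma$, $1/\sigma$, and the jump coefficient $\delta$ are uniformly bounded. I then introduce the \emph{oracle} subsampler $\hat{\Sigma}_{n}^{c}$, defined exactly as $\hat{\Sigma}_{n}$ but with every occurrence of $\Delta_{j}^{n} \check{X}$ replaced by $\Delta_{j}^{n} X^{c}$. Since $X^{c}$ is a continuous It\^o semimartingale with volatility satisfying Assumption (\textbf{V}) and $\sigma > 0$, Theorem \ref{bipowerconsistency} applies directly and yields $\hat{\Sigma}_{n}^{c} \overset{p}{\to} \Sigma$ under the stated tuning parameter regime $p \to \infty$, $L/p \to \infty$, and $n/(Lp^{2}) \to \infty$.

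The remaining and main step is to prove $\hat{\Sigma}_{n} - \hat{\Sigma}_{n}^{c} \overset{p}{\to} 0$. Writing the difference of the two outer products and polarizing, it suffices to control, componentwise,
\begin{equation}
D_{n} := \frac{1}{L} \sum_{l=1}^{L} \frac{n}{L} \bigl( \check{V}_{l}(q_{k},r_{k})^{n} - V_{l}^{c}(q_{k},r_{k})^{n} \bigr)^{2},
\end{equation}
together with a cross-term, which is handled by Cauchy--Schwarz once $D_{n}$ is controlled and one uses that $V_{l}^{c}(q_{k},r_{k})^{n}$ has bounded second moment (a consequence of the bounded coefficients and Assumption (\textbf{V})). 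Expanding $\check{V}_{l} - V_{l}^{c}$ into its constituent summands over the blocks $B_{(i-1)L+l}(p)$ and then within each block over the pair $(j,j+1)$, one applies the standard truncation estimates (see Chapter 13 of \citet*{jacod-protter:12a}) to each summand $|\sqrt{n}\Delta_{j}^{n}\check{X}|^{q_{k}}|\sqrt{n}\Delta_{j+1}^{n}\check{X}|^{r_{k}} - |\sqrt{n}\Delta_{j}^{n} X^{c}|^{q_{k}}|\sqrt{n}\Delta_{j+1}^{n} X^{c}|^{r_{k}}$. The condition $\check{\omega} > (s'-1)/(2(s'-\beta))$ is exactly what is required for each such summand to have an $L^{2}$-norm of order $o(n^{-1/2})$ with margin, so that, after accounting for the $n - 1$ summands and the prefactor $n/L^{2}$ in $D_{n}$, one obtains $D_{n} \overset{p}{\to} 0$ under the same tuning regime. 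Combining with $\hat{\Sigma}_{n}^{c} \overset{p}{\to} \Sigma$ yields \eqref{mainthm}.

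The main obstacle is the bookkeeping in this last step: the truncation bounds need to be applied uniformly across all $L$ subsamples and across the boundaries between blocks (so that summands straddling two consecutive $B_{i}(p)$, which are excluded from $v_{i}(q_{k},r_{k})^{n}$, do not create an additional non-negligible bias). No new probabilistic idea beyond those already present in the proofs of Theorem \ref{bipowerconsistency} and Proposition \ref{jacodthmj} is needed.
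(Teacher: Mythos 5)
Your overall strategy is the same as the paper's: localize, introduce an oracle subsampler built from the continuous part $X'$ of $X$, invoke Theorem \ref{bipowerconsistency} for its consistency, and then show that the difference between the truncated subsampler and the oracle vanishes. The paper does exactly this, also factoring the difference of outer products (difference of squares) before estimating it.

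However, the key quantitative step in your argument has a genuine gap. You claim that $\check{\omega} > \frac{s'-1}{2(s'-\beta)}$ is ``exactly what is required'' for each summand difference $\bar{\eta}_{j}^{n} = |\sqrt{n}\Delta_{j}^{n}\check{X}|^{q_{k}}|\sqrt{n}\Delta_{j+1}^{n}\check{X}|^{r_{k}} - |\sqrt{n}\Delta_{j}^{n} X'|^{q_{k}}|\sqrt{n}\Delta_{j+1}^{n} X'|^{r_{k}}$ to have $L^{2}$-norm $o(n^{-1/2})$, and you then run the whole bookkeeping through Cauchy--Schwarz. That $L^{2}$ bound is not available under the stated assumptions. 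The relevant estimate (Eq. (13.2.21) in \citet*{jacod-protter:12a}) gives $\mathbb{E}\bigl[|\bar{\eta}_{j}^{n}|^{m}\bigr] \leq n^{-m/2}\phi_{n}$ only under $\check{\omega} \geq \frac{ms'+\epsilon-1}{2(ms'-\beta)}$; the threshold assumed in the theorem corresponds to the limit $m \to 1$, which is why the paper applies the estimate with $m = 1+\epsilon$ for small $\epsilon \in (0,1-\beta]$. For $m=2$ the required threshold is $\frac{2s'-1}{2(2s'-\beta)}$, which is strictly larger than $\frac{s'-1}{2(s'-\beta)}$ whenever $\beta<1$ (e.g., for $s'=1$, $\beta=0$ you would need $\check{\omega} \geq 1/4$ while the theorem allows any $\check{\omega}>0$). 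So your Cauchy--Schwarz route would require strengthening the condition on $\check{\omega}$ and fails as stated. The missing idea is precisely how the paper circumvents the lack of second moments: it bounds $\sqrt{n/L}\,(V_{l}(q,r)^{n}-V_{l}'(q,r)^{n})$ and the full-sample analogue in $L^{1+\epsilon}$ (via the discrete H\"older inequality applied to the $\bar{\eta}_{j}^{n}$), bounds the ``regular'' factors $\sqrt{n/L}\,(V_{l}(q,r)^{n}-V(q,r)^{n})$ and $\sqrt{n/L}\,(V_{l}'(q,r)^{n}-V'(q,r)^{n})$ in $L^{1+1/\epsilon}$ using the arguments of Lemma \ref{l1}, and concludes $\mathbb{E}\bigl[|\hat{\Sigma}_{n}-\hat{\Sigma}_{n}'|\bigr]\to 0$ by H\"older with the dual exponents $1+\epsilon$ and $1+1/\epsilon$. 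Your concern about block-boundary terms is, by comparison, a minor issue; the moment problem is the real obstacle.
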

\begin{proof}
See Appendix.
\end{proof}

\subsection{Microstructure noise}
\label{Sec:Noise}
In practice, assets are not traded within a frictionless market. The
recorded data constitute a discrete sample of transactions or bid-ask
quotes, whose prices are affected by common sources of market
imperfections, such as bid-ask spreads, price discreteness, and so
forth \citep*[e.g.,][]{black:86a,niederhoffer-osborne:66a,roll:84a}.
Even if these were small enough to be ignored, high-frequency data
are also corrupted by outliers (due to bugs in the data transmission,
fat-finger errors, etc.) and subject to other irregularities (e.g., quote
stuffing, screen fighting, etc.). The combination of these effects leads
to marked differences between real data and those generated by a diffusion
model.

To accommodate this, we need to modify the setup. We are going to
take the observed price as the true, underlying price perturbed by an
additive noise term, i.e.
\begin{equation}
\label{Eqn:Y}
Y_{i/n} = X_{i/n} + \epsilon_{i/n},
\end{equation}
where $X$ is defined as in Eq. \eqref{Eqn:X}, while $\epsilon = (
\epsilon_{t})_{t \geq 0}$ is a noise process. We impose the
following: \\[-0.50cm]

\noindent \textbf{Assumption (N)}: (i) $\epsilon$ is i.i.d. with $\mathbb{E}[
\epsilon_{t}] = 0$ and $\text{var}( \epsilon_{t}) = \omega^{2}$ for all $t \geq
0$, (ii) $\epsilon$ is independent of $X$, (iii) the distribution of $\epsilon$
is symmetric around 0, and (iv) $\mathbb{E} \big[ | \epsilon_{t}|^{s} \big] <
\infty$ for some $s > 0$.

\subsubsection{Pre-averaging}

To alleviate the impact of noise, we make use of the notion that as $X$
is
continuous and $\epsilon$ is i.i.d., we can locally smooth $Y_{i/n}$
in the vicinity of $i/n$ to retrieve an estimate, say $\bar{Y}_{i/n}$,
which tends to be close to $X_{i/n}$, because the
noise is largely averaged away
\citep*[e.g.,][]{jacod-li-mykland-podolskij-vetter:09a,
podolskij-vetter:09a,podolskij-vetter:09b}. Averaging our discrete sample of
noisy high-frequency data this way leads to a new set of increments,
$\Delta \bar{Y}_{i}^{n}$, based on pre-averaged prices.\footnote{In the
context of volatility estimation, there are several tools at our disposal
to handle microstructure noise, including the realized kernel; based on
auto-covariance corrections
\citep*[see, e.g.,][]{barndorff-nielsen-hansen-lunde-shephard:08a,
barndorff-nielsen-hansen-lunde-shephard:11a}, and the two- or multi-scale
realized variance; based on price subsampling
\citep*[see, e.g.,][]{zhang:06a,zhang-mykland-ait-sahalia:05a}. Out of these,
pre-averaging is the most general approach, as it is applicable to a large
number of estimation problems.}

To implement pre-averaging, we need some extra notation. We choose
a sequence $k_{n}$ of integers (the pre-averaging window) and a scalar
$\theta > 0$, such that
\begin{equation}
k_{n} = \theta \sqrt{n} + o \bigl( n^{-1/4} \bigr).
\end{equation}

We also need a weight function $w: \mathbb{R} \mapsto
\mathbb{R}$ to do averaging. We assume $w$ is continuous on $[0,1]$
and piecewise continuously differentiable with a
piecewise Lipschitz derivative $w'$. Moreover, we assume
that $w(0) = w(1) = 0$ and $\int_{0}^{1} (w(t))^{2}
\text{d}t > 0$. The following numbers and functions are associated with $w$:
\begin{align}
\label{m2}
\begin{split}
\phi_{1}(s) &= \int_{s}^{1}w'(u)w'(u-s)\text{d}u, \quad \phi_{2}(s) =
\int_{s}^{1}w(u) w(u-s) \text{d}u, \quad \text{for } s \in[0,1], \\[0.25cm]
\psi_{1} &= \phi_{1}(0), \quad \psi_{2} = \phi_{2} (0), \quad
\Phi_{ij} = \int_{0}^{1} \phi_{i}(s) \phi_{j}(s) \text{d}s, \quad \text{for }
i,j = 1,2, \\[0.25cm]
\psi_{1}^{n} &= k_{n} \sum_{j = 0}^{k_{n}}
(w_{j+1}^{n} - w_{j}^{n})^{2}, \quad
\psi_{2}^{n} = \frac{1}{k_{n}} \sum_{j = 1}^{k_{n}} (w_{j}^{n})^{2},
\end{split}
\end{align}
where $w_{j}^{n} = w(j/k_{n})$. In passing, we note that
\begin{equation}
\label{psierror}
\psi_{1}^{n} = \psi_{1} + O \bigl(n^{-1/2} \bigr) \quad \text{ and }
\quad \psi_{2}^{n} = \psi_{2} + O \bigl(n^{-1/2} \bigr),
\end{equation}
which means that in the asymptotic theory only $\psi_{1}$ and $\psi_{2}$
appear. Still, it is recommendable to use $\psi_{1}^{n}$ and $\psi_{2}^{n}$
for simulations and empirical work, as it entails better finite sample
properties.

The return series, following pre-averaging, is:
\begin{equation}
\label{Eqn:PreavgY}
\Delta \bar{Y}_{i}^{n} = \sum_{j = 1}^{k_{n}} w_{j}^{n} \Delta_{i+j}^{n}
Y = - \sum_{j = 0}^{k_{n}} (w_{j+1}^{n} - w_{j}^{n}) Y_{\frac{i+j}{n}}, \quad \text{for }
i = 1, \ldots, n - k_{n} + 2.
\end{equation}

\subsubsection{Pre-averaged bipower variation}

The addition of microstructure noise creates further complications for
inference procedures from high-frequency data. To make our framework
analytically tractable, yet practically relevant, we therefore again
restrict attention to the class of pure bipower variations.

The $k$th coordinate of $V^{*}(q,r)^{n}$ is defined as:
\begin{equation}
\label{Eqn:PreavgBV}
\displaystyle V^{*}(q_{k},r_{k})^{n} = \frac{1}{n - 2k_{n} + 2} \sum_{i =
1}^{n - 2k_{n} + 2} |n^{1/4}\Delta \bar{Y}_{i}^{n}|^{q_{k}} | n^{1/4}
\Delta \bar{Y}_{i+k_{n}}^{n}|^{r_{k}}.
\end{equation}
The intuition behind this construction is that pre-averaging induces some
autocorrelation (of order $k_{n}$) in the pre-averaged price series, which
is broken by multiplying pre-averaged returns that are $k_{n}$ terms apart.
In essence, this leads to a lower, effective sample of size $n - 2k_{n} + 2$.

\citet*{podolskij-vetter:09a} show that
\begin{equation}
\label{Eqn:CLTnoise}
n^{1/4} \Bigl( V^{*}(q,r)^{n} - V^{*}(q,r) \Bigr) \overset{d_{s}}{
\to} \text{MN}(0, \Sigma^{*}),
\end{equation}
where
\begin{equation}
\label{Eqn:plimPBV}
V^{*}(q_{k},r_{k}) = \mu_{q_{k}}
\mu_{r_{k}} \int_{0}^{1} \biggl( \theta \psi_{2} \sigma_{s}^{2} +
\frac{1}{\theta} \psi_{1} \omega^{2} \biggr)^{\frac{q_{k} + r_{k}}{2}}
\text{d}s.
\end{equation}
Thus, pre-averaging slows down the rate of convergence, but $n^{-1/4}$
is nonetheless the fastest rate in noisy
diffusion models \citep*{gloter-jacod:01a,gloter-jacod:01b}.

In the above, $\Sigma^{*}$ is the $m \times m$ conditional covariance
matrix of $V^{*}(q,r)^{n}$.\footnote{\label{Foot:avar_rv}
We do not state the expression of $\Sigma^{*}$ here, but it can be
found in \citet*{podolskij-vetter:09a}. In general, $\Sigma^{*}$ has a
complicated structure (even with i.i.d., independent noise), and it
is typically not known in closed-form. An exception, where $\Sigma^{*}$
can be computed analytically, is if $(q,r)$ consists of even
non-negative integers, as in Theorem \ref{m11}. As an example,
which is used in the simulations, take the pre-averaged realized variance.
It sets $(q,r) = (2,0)$ and has
$\Sigma^{*}(2,0) = 4 \int_{0}^{1} \Bigl( \theta^{3} \Phi_{22} \sigma_{s}^{4}
+ 2 \theta \Phi_{12} \sigma_{s}^{2} \omega^{2} + \frac{1}{\theta} \Phi_{11}
\omega^{4} \Bigr)\text{d}s$.}  \citet*{podolskij-vetter:09a} go on to
develop a consistent estimator of $\Sigma^{*}$ (defined in Eq.
\eqref{Eqn:SigmaPV} in the simulation section), but it is based on
element-by-element estimation. The disadvantage of this approach is that
it does not ensure that the whole covariance matrix estimate is positive
definite in finite samples and, as
our simulations and empirical analysis show, a large fraction of such
estimates in fact fail to be positive definite. Moreover, even if this property does
hold, the estimate is often near-singular, resulting in an ill-conditioned
and highly unstable covariance matrix.

\subsubsection{Subsampling noisy high-frequency data}

To construct our estimator in the noisy setting, we follow the procedure
from before by splitting the full sample of noisy high-frequency data into
subsamples using a blocking approach.

We redefine:
\begin{equation}
B_{i}(p) = \Bigl\{ j : (i-1)pk_{n} \leq j \leq
i p k_{n} \Bigr\},
\end{equation}
where $p \geq 3$ is an integer and $i \geq 1$.

$B_{i}(p)$ is now the $i$th block of noisy high-frequency data. As readily
seen, the only change compared to the noiseless setting is that $B_{i}(p)$
uses a larger block size. This implies we can do a sufficient amount of
averaging within each block in order to diminish the noise, while still
preserving enough of an effective sample size to estimate the correlation
structure of $V^{*}(q,r)^{n}$.

Then, we set
\begin{equation}
\label{Eqn:SigmaNStar}
\hat{ \Sigma}_{n}^{*} = \frac{1}{L} \sum_{l = 1}^{L} \biggl(
\frac{n^{1/4}}{ \sqrt{L}} \Bigl( V^{*}_{l}(q,r)^{n} - V^{*}(q,r)^{n} \Bigr)
\biggr) \biggl( \frac{n^{1/4}}{ \sqrt{L}} \Bigl( V^{*}_{l}(q,r)^{n} -
V^{*}(q,r)^{n} \Bigr) \biggr)',
\end{equation}
where, assuming $L p k_n$ divides $n$,
\begin{align}
\begin{split}
V^{*}_{l}(q_{k},r_{k})^{n} &= \frac{L p k_{n}}{n}\sum_{i = 1}^{n/L p
k_{n}} v_{(i-1)L+l}(q_{k},r_{k})^{n}, \\[0.25cm]
v_{i}(q_{k},r_{k})^{n} &= \frac{1}{pk_{n}-2 k_{n}+2}
\sum_{j,j+k_n-1 \in B_{i}(p)}|n^{1/4} \Delta \bar{Y}_{j}^{n}|^{q_{k}}
|n^{1/4} \Delta \bar{Y}_{j+k_{n}}^{n}|^{r_{k}}.
\end{split}
\end{align}
As in the above, we should note that the summands $v_{i}(q_{k},r_{k})^{n}$
in the subsample estimates $V^{*}_{l}(q_{k},r_{k})^{n}$ exploit data
solely from $B_{i}(p)$. Therefore, pre-averaging has to be done locally
within the block, so that there is no overlap in the pre-averaged returns
across the various blocks.
\begin{theorem}
\label{m11}
Assume that $Y_{t} = X_{t} + \epsilon_{t}$ is a noisy diffusion model, where
$X_{t}$ is given by Eq. \eqref{Eqn:X}, that fulfills Assumption $(\normalfont
\textbf{H})$ and $(\normalfont{\textbf{M}})$. Also, Assumption
$(\normalfont{\textbf{N}})$ with $s > 3 \vee \max \{ 2(q_{k}+r_{k}): 1 \leq k
\leq m \}$ is true. Let $q = (q_{1}, \ldots, q_{m})'$ and
$r = (r_{1}, \ldots, r_{m})'$ be vectors of even non-negative integers. Then,
as $n \to \infty$, $p \to \infty$, $L \to \infty$ and $\sqrt{n}/Lp^2 \to
\infty$, it holds that
\begin{equation}
\label{m12}
\hat{ \Sigma}_{n}^{*} - \Sigma^{*} = \underbrace{O_{p} \left( \frac{1}{
\sqrt{L}} \right)}_{ \text{\upshape{CLT}}} + \underbrace{O_{p} \left(
\frac{Lp^2}{ \sqrt{n}} \right)}_{ \text{ \upshape{blocking}}} +
\underbrace{O_{p} \left( \frac{1}{p} \right)}_{ \text{ \upshape{HAC}}}
\end{equation}
\end{theorem}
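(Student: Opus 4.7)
The plan is to mirror the proof of Theorem \ref{c1}, adapting it to the noisy setting through two essential changes. First, pre-averaging slows the effective rate of convergence from $n^{-1/2}$ to $n^{-1/4}$, which is why the blocking denominator $n$ in \eqref{3.13} is replaced by $\sqrt{n}$ in \eqref{m12}. Second, the pre-averaging window $k_{n} = \theta\sqrt{n} + o(n^{-1/4})$ now sits inside each block, so a block $B_{i}(p)$ of $pk_{n}$ raw observations carries only about $p$ (asymptotically 1-dependent) effective pre-averaged summands. The CLT and HAC terms in \eqref{m12} therefore retain the same form as in Theorem \ref{c1}, since they are driven by the same structural features of the estimator.

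First, I would replace the centering $V^{*}(q,r)^{n}$ by the unobservable population target $V^{*}(q,r)$ inside the outer product. Because $V^{*}(q,r)^{n} - V^{*}(q,r) = O_{p}(n^{-1/4})$ by \eqref{Eqn:CLTnoise} and the scaling factor is $n^{1/4}/\sqrt{L}$, this substitution costs at most an $O_{p}(1/\sqrt{L})$ cross-term, which is already absorbed in the CLT error. Working with the resulting proxy
\begin{equation*}
\tilde{\Sigma}_{n}^{*} = \frac{1}{L} \sum_{l=1}^{L} \biggl( \frac{n^{1/4}}{\sqrt{L}} \bigl( V_{l}^{*}(q,r)^{n} - V^{*}(q,r) \bigr) \biggr)\biggl( \frac{n^{1/4}}{\sqrt{L}} \bigl( V_{l}^{*}(q,r)^{n} - V^{*}(q,r) \bigr) \biggr)',
\end{equation*}
I would freeze the spot volatility at the start of each block $B_{i}(p)$, replacing $\sigma_{s}$ on $s \in [(i-1)pk_{n}/n, ipk_{n}/n]$ by $\sigma_{(i-1)pk_{n}/n}$ inside $v_{i}(q,r)^{n}$. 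Because $q$ and $r$ are vectors of even non-negative integers, the conditional expectation of $v_{i}(q,r)^{n}$ given the frozen volatility admits an explicit polynomial expansion in $\sigma^{2}$ and $\omega^{2}$ with coefficients built from $\psi_{1}$, $\psi_{2}$ and $\Phi_{ij}$, using the i.i.d., symmetric noise structure of Assumption $(\textbf{N})$ and the moment condition imposed on $s$ to control the noise polynomials uniformly in $n$.

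Once $\sigma$ is frozen, the subsample statistics $V_{l}^{*}(q,r)^{n}$ become, conditionally on $\sigma$, essentially independent across $l = 1, \ldots, L$, since the blocks in different subsamples draw on disjoint collections of pre-averaged returns---this is precisely why the block length is taken as $pk_{n}$ rather than $p$. A conditional-variance computation applied to the empirical average of $L$ near-i.i.d. outer products then yields the $O_{p}(1/\sqrt{L})$ term. The HAC error arises from the within-block conditional variance of $v_{i}(q,r)^{n}$: as in the stylized calculation \eqref{Eqn:Example}, the summands $|n^{1/4}\Delta\bar{Y}^{n}_{j}|^{q_{k}}|n^{1/4}\Delta\bar{Y}^{n}_{j+k_{n}}|^{r_{k}}$ are asymptotically 1-dependent in the index $j$ (modulo the pre-averaging smoothing), and the $1/(pk_{n}-2k_{n}+2)$ normalization leaves an $O(1/p)$ truncation bias in the estimated long-run variance of each block.

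The main obstacle is the blocking step. Unlike the noiseless setting of Theorem \ref{c1}, where $\Sigma$ depends on $\sigma$ polynomially, here $\Sigma^{*}$ depends on $\sigma$ through the highly nonlinear expression $(\theta\psi_{2}\sigma_{s}^{2} + \psi_{1}\omega^{2}/\theta)^{(q_{k}+r_{k})/2}$ arising from \eqref{Eqn:plimPBV}, so the Taylor expansion around the frozen spot variance is substantially more delicate. This is where Assumption $(\textbf{M})$ plays its most important role: the Malliavin derivative bounds in \eqref{h1} feed an integration-by-parts argument that converts the martingale increments of $\sigma_{s}^{2} - \sigma_{(i-1)pk_{n}/n}^{2}$ over the block into expectations that are linear in the block length, rather than the crude $(pk_{n}/n)^{1/2}$ bound one gets from H\"older continuity alone; this gain is what distinguishes the sharp $Lp^{2}/\sqrt{n}$ rate from its inferior H\"older counterpart. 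Assembling the three sources of error yields \eqref{m12}, and balancing them produces the optimal choice $L = O(n^{1/5})$, $p = O(n^{1/10})$, for a final rate $\hat{\Sigma}_{n}^{*} - \Sigma^{*} = O_{p}(n^{-1/10})$.
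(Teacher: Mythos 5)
Your proposal is correct in substance and follows essentially the same route as the paper: centering at the population target $V^{*}(q,r)$ (the paper's step controlling $\hat{\Sigma}_{n}^{*}-\Sigma_{n}^{*}$), freezing $\sigma$ at the start of each block of length $pk_{n}$ and exploiting the even powers together with the symmetry and independence of the noise via a binomial expansion, obtaining the $O_{p}(1/\sqrt{L})$ term from the asymptotic conditional independence across subsamples, the $O_{p}(1/p)$ term from the within-block truncation of the ($k_{n}$-scale) autocovariances, and the $O_{p}(Lp^{2}/\sqrt{n})$ term from the volatility-freezing bias refined beyond the H\"older bound through the semimartingale structure and the Malliavin/integration-by-parts control of the cross terms under Assumption (\textbf{M}). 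The only caveats are cosmetic: the pre-averaged summands are $2k_{n}$-dependent in $j$ (1-dependent only at the block level), and in the paper the Malliavin argument is used to bound the interaction between the martingale fluctuation of $\sigma$ and the CLT-type fluctuations across distinct subsamples, rather than directly in the Taylor expansion of the nonlinear map of $\sigma^{2}$, but this is the same mechanism you describe.
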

\begin{proof}
See Appendix.
\end{proof}
As in the previous subsection, the minimal assumptions we need to prove
consistency are $n \to \infty$, $p \to \infty$, $L \to \infty$ and
$\sqrt{n}/Lp^2 \to \infty$. The last condition again
ensures that a bias term of the statistic $V^{*}_{l}(q,r)^{n}
- V^{*}(q,r)^{n}$ is negligible with respect to its martingale part.

Now, we achieve the best rate
\begin{equation}
\hat{ \Sigma}_{n}^{*} - \Sigma^{*} = O_{p} \bigl( n^{-1/10} \bigr),
\end{equation}
by choosing
\begin{equation}
L = O(n^{1/5}) \quad \text{and} \quad  p = O(n^{1/10}).
\end{equation}
Thus, the existence of microstructure frictions also adversely affects the
speed of convergence of $\hat{ \Sigma}_{n}^{*}$.

Theorem \ref{m11} is binding, because it restricts the choice of the powers
$q_{k}$ and $r_{k}$ to even non-negative integers. But we can actually prove
a weaker consistency result for any pre-averaged bipower variation, which is
useful for practical work. First, we recall that
allowing for general powers $q_{k}, r_{k} \geq 0$ by itself requires some
stronger assumptions to prove the underlying central limit theorem in Eq.
\eqref{Eqn:CLTnoise}. In particular,
\citet*{podolskij-vetter:09a} impose that the noise distribution
is (i) symmetric with (ii) $\mathbb{E}[|\epsilon|^{a}] < \infty$ for $a
\in (-1,0)$ (their Assumption (\textbf{A})) and that the noise distribution
fulfills Cramer's condition, i.e. $\lim \sup_{|t| \to \infty} \chi(t) < 1$,
where $\chi$ is the characteristic function of $\epsilon$ (their Assumption
(\textbf{A'})). Of course, we also need this. Note, however, that we can
again dispense with Assumption (\textbf{M}) for consistency.

\begin{theorem}
\label{NoiseAnyPowerThm}
Assume that $Y_{t} = X_{t} + \epsilon_{t}$ is a noisy diffusion model, where
$X_{t}$ is given by Eq. \eqref{Eqn:X}, $\sigma$ is continuous and fulfills
Assumption $(\normalfont\textbf{V})$ with $\sigma>0$, while the noise fulfills
Assumption $(\normalfont{ \textbf{N}})$ with $s > 3 \vee \max \{ 2(q_{k}+r_{k}):
1 \leq k \leq m \}$ and also Assumption $(\normalfont{\textbf{A}})$ and
$(\normalfont{ \textbf{A'}})$ from
\citet*{podolskij-vetter:09a}. Then, as $n \to \infty$, $p \to \infty$,
$L/p \to \infty$ and $\sqrt{n}/L p^2 \to \infty$, it holds for any $q,r \geq 0$
that
\begin{equation}
\hat{ \Sigma}_{n}^{*} \overset{p}{\to} \Sigma^{*}.
\end{equation}

\end{theorem}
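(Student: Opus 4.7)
The plan is to mirror the proof of Theorem \ref{m11} but discard the Malliavin-calculus refinements, which were needed only to obtain the explicit blocking rate $Lp^2/\sqrt n$. For mere consistency, it suffices to establish (i) that $\mathbb{E}[\hat{\Sigma}_n^*\,|\,\mathcal{F}^{\sigma}]\overset{p}{\to}\Sigma^*$ for $\mathcal{F}^{\sigma}$ the $\sigma$-field carrying $\sigma$ and the noise path, and (ii) that $\hat{\Sigma}_n^* - \mathbb{E}[\hat{\Sigma}_n^*\,|\,\mathcal{F}^{\sigma}]\overset{p}{\to} 0$. The Podolskij--Vetter CLT \eqref{Eqn:CLTnoise}, valid for general $q,r\geq 0$ under the moment and Cramer-type conditions (\textbf{A}), (\textbf{A'}), furnishes the underlying distributional framework.

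I would begin with the standard localization to bounded coefficients $a$, $\sigma$, and $\sigma^{-1}$ (using $\sigma>0$). Then, expanding the outer product in \eqref{Eqn:SigmaNStar} and noting that $\frac{1}{L}\sum_{l=1}^{L} V^*_l(q,r)^n$ differs from $V^*(q,r)^n$ only by boundary effects of order $k_n/n$, I would pivot around $V^*(q,r)$ to write
\begin{equation*}
\hat{\Sigma}_n^* = \frac{\sqrt n}{L^2}\sum_{l=1}^{L}\bigl(V^*_l(q,r)^n - V^*(q,r)\bigr)\bigl(V^*_l(q,r)^n - V^*(q,r)\bigr)' - \frac{\sqrt n}{L}\bigl(V^*(q,r)^n - V^*(q,r)\bigr)\bigl(V^*(q,r)^n - V^*(q,r)\bigr)' + o_p(1).
\end{equation*}
The second term is $O_p(1/L)$ by the CLT rate $n^{-1/4}$ of \eqref{Eqn:CLTnoise} and is hence negligible.

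The remaining sum is analyzed via a block-wise decomposition of each $V^*_l(q,r)^n - V^*(q,r)$ along the subsample blocks $B_{(i-1)L+l}(p)$, splitting it into a centered ``martingale'' part, a Riemann-sum bias from freezing $\sigma$ at block boundaries, and a within-block HAC discrepancy. Taking conditional expectations of the outer products and Riemann-summing across $l$ delivers a leading term $\frac{1}{L}\sum_l \Sigma^*_l \overset{p}{\to} \Sigma^*$. The other contributions vanish under the stated conditions: $p\to\infty$ kills the HAC term of order $1/p$; $\sqrt n/(Lp^2)\to\infty$ kills the blocking bias; $L\to\infty$ kills the conditional variance of $\hat{\Sigma}_n^*$ itself; and the additional requirement $L/p\to\infty$ absorbs a residual cross-block bias term which, in the rate-based Theorem \ref{m11}, was controlled by Assumption (\textbf{M}).

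The main obstacle will be securing uniform $L^{\kappa}$-bounds, for $\kappa$ large enough to drive the Chebyshev step, on the pre-averaged summands $|n^{1/4}\Delta\bar Y^n_j|^{q_k}|n^{1/4}\Delta\bar Y^n_{j+k_n}|^{r_k}$ when $q_k$ or $r_k$ lies in $[0,1)$. Here the polynomial expansions of $|x|^{q_k}$ used in Theorem \ref{m11} (where the powers were even integers) are unavailable, and one must instead, as in \citet*{podolskij-vetter:09a}, invoke Cramer's condition (\textbf{A'}) on the characteristic function of $\epsilon$, which yields enough smoothness of the law of $\Delta\bar Y^n_j$, combined with Assumption (\textbf{A}) to control the negative-moment behavior of these pre-averaged increments near zero. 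Once the moment bounds are in place, the LLN step across subsamples and the control of all bias terms follow the same routine pattern as in the proofs of Theorem \ref{m11} and Theorem \ref{bipowerconsistency}.
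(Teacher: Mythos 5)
Your overall route is the paper's route: recycle the approximation chain from the proof of Theorem \ref{m11}, drop the machinery that was only needed for the explicit rate, and re-establish the key moment bounds (the analogue of Lemma \ref{s1}) for arbitrary powers $q,r\geq 0$ by appealing to Lemmas 4--5 of \citet*{podolskij-vetter:09a} under their Assumptions (\textbf{A}) and (\textbf{A'}); your diagnosis that the real obstacle is the non-smoothness of $|x|^{q}$ near zero, cured by Cramer's condition on the noise, is exactly the point the paper delegates to those lemmas. The recentering step (replacing $V^{*}(q,r)$ by $V^{*}(q,r)^{n}$) and its $O_{p}(1/L)$ cost also match the paper's treatment, although your claim that $\frac{1}{L}\sum_{l}V_{l}^{*}(q,r)^{n}$ differs from $V^{*}(q,r)^{n}$ only by boundary effects of order $k_{n}/n$ is loose --- the subsample statistics discard the roughly $2k_{n}$ pre-averaged returns straddling each block boundary, so the discrepancy is of relative order $1/p$ per block; this is harmless, but it is handled by crude bounds, not by an exact identity as in the power-variation case.

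The one substantive misstatement is your explanation of why $L/p\to\infty$ is required. It does not absorb a cross-block bias that Assumption (\textbf{M}) controlled in Theorem \ref{m11}: once you only want consistency, the (\textbf{M})-dependent sharp analysis of the cross term $D_{n}^{(2)}$ can simply be replaced by Cauchy--Schwarz together with Lemma \ref{s1}, which gives a bound of order $1/\sqrt{L}+p\sqrt{L}/n^{1/4}$ and therefore needs only $L\to\infty$ and $\sqrt{n}/(Lp^{2})\to\infty$. The extra condition comes instead from dropping Assumption (\textbf{H}): without the It\^{o}-semimartingale structure of $a,\tilde{a},\tilde{\sigma},\tilde{v}$, the estimate in Eq. \eqref{s5} degrades from $p^{4}/n^{3}$ to $p^{2}/n^{2}$, so the freezing-of-$\sigma$ error in Lemma \ref{s1}(A.6c) worsens from $Lp^{2}/n$ to $p/\sqrt{n}$; after the scaling $\sqrt{n}/L$, the term $D_{n}^{(3)}$ (and, via Cauchy--Schwarz, $D_{n}^{(1)}$) is then only $O(p/L)$, and it is precisely this term that forces $L/p\to\infty$. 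So, contrary to your allocation of conditions, the blocking error is \emph{not} fully killed by $\sqrt{n}/(Lp^{2})\to\infty$ in this weaker setting; when you execute your decomposition you must re-derive the within-block approximation bound without (\textbf{H}) and invoke $L/p\to\infty$ at that point, otherwise the argument as planned has a hole.
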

\begin{proof}
See Appendix.
\end{proof}

\begin{remark}
\rm The subsampling idea can be applied to other estimators, which admit
asymptotic mixed normality, such as the two-scale and kernel-based
realized variance of \citet*{zhang-mykland-ait-sahalia:05a} and
\citet*{barndorff-nielsen-hansen-lunde-shephard:08a}. The consistency
of the subsample estimator of the asymptotic conditional covariance matrix
of these estimators has already been proved in prior work
\citep*[i.e.,][]{ikeda:16a,kalnina:11a,kalnina:15a,kalnina-linton:07a,
varneskov:16a}. The optimal choice of the tuning
parameters for pre-averaged bipower variation---derived in this paper---should
continue to hold in the context
of these other estimators, but it needs to be verified formally. We leave
this for future work.
\end{remark}

\subsubsection{Extension to dependent and heteroscedastic noise}

\label{section:general_noise}

The i.i.d. framework on the microstructure noise $\epsilon$ is a
convenient outset, but it is hard to defend at the tick frequency, both
in theory and practice \citep*[e.g.,][]{diebold-strasser:13a}. An
intriguing
ability of the subsampling estimator $\hat{ \Sigma}_{n}^{*}$ is that it
tends to be robust against the intricate features of the noise process,
as long as an associated central limit theorem holds. \cite{kalnina:11a}
studied subsampling in the presence of both autocorrelated and
heteroscedastic noise for the two-scale realized variance. In this
subsection, we show how our theoretical results adapt to such models,
allowing for more general structure in the noise process.

\medskip

\noindent \textit{1. Dependent noise}

\medskip

\noindent Autocorrelation in tick-by-tick returns can extend
beyond the first lag, depending a bit on how you gather the data
\citep*[e.g.,][]{hansen-lunde:06b,ait-sahalia-mykland-zhang:11b}.
This cannot be captured by independent noise, so we start
by weakening this assumption to so-called
$m$-dependent noise. Thus, we now assume that the noise process $(
\epsilon_t)_{t \geq 0}$ is stationary and
that the random variables $\epsilon_{i/n}$ and $\epsilon_{j/n}$ are
independent, only if $|i - j| > m$. \citet*{hautsch-podolskij:13a} prove
a
central limit theorem for pre-averaging in this setup (based on the
estimator $V^{*}(2,0)^{n}$). As indicated by
their results, the law of large numbers and the central limit theorem
for the pre-averaged bipower variation estimator $V^{*}(q,r)^n$ do not
change, except that the noise variance $\omega^{2}$ has to be replaced
by the expression $\rho^{2} = \omega^{2} + 2 \sum_{j = 1}^{m}
\text{cov}( \epsilon_{i/n}, \epsilon_{(i+j)/n})$, i.e.
\begin{equation}
V^{*}(q_{k}, r_{k})^{n} \overset{p}{ \to} V^{*}(q_{k}, r_{k}) =
\mu_{q_{k}}
\mu_{r_{k}} \int_{0}^{1} \biggl( \theta \psi_{2} \sigma_{s}^{2} +
\frac{1}{\theta} \psi_{1} \rho^{2} \biggr)^{\frac{q_{k} + r_{k}}{2}}
\text{d}s.
\end{equation}
Here, the form of $\Sigma^{*}$ changes.\footnote{In particular,
\citet*{hautsch-podolskij:13a} show that the asymptotic variance of the
pre-averaged realized variance $V^{*}(2,0)^{n}$ presented in Footnote
\ref{Foot:avar_rv} is unchanged, apart from replacing $\omega$ with
$\rho$ everywhere.} Our proposed estimator $\hat{ \Sigma}_{n}^{*}$ is still
consistent though, as the change in the bias caused through replacing
$\omega^2$ by $\rho^2$ is corrected by construction
in Eq. \eqref{Eqn:SigmaNStar}.
This is because $\hat{ \Sigma}_{n}^{*}$ imitates
the underlying covariation, irrespective of the true microstructure model.
This implies that the asymptotic results of Theorem \ref{m11} and
\ref{NoiseAnyPowerThm} are also true for the $m$-dependent noise model.

\medskip

\noindent \textit{2. Heteroscedastic noise}

\medskip

\noindent The market microstructure reveals itself, for example, via the
bid-ask spread. It has been noticed in many empirical asset price
series that such margins are not constant through time, but tend to vary
systematically within the day in the form of a U-shape. Thus,
in the equity market spreads are typically larger in the morning and
afternoon than during the middle of the day.
To accommodate this, another setup that has been studied
is heteroscedastic noise
\citep*[e.g.,][]{bandi-russell:06a,kalnina-linton:08a}.
Here, we follow the exposition in
\citet*{jacod-li-mykland-podolskij-vetter:09a} by assuming that
\begin{equation}
\mathbb{E} \big[ \epsilon_{t} \mid X \big] = 0, \qquad \mathbb{E} \big[
\epsilon_{t}^{2} \mid X \big] = \omega_{t}^{2} \quad
\text{is c\`{a}dl\`{a}g and $(\mathcal{F}_{t})$-adapted},
\end{equation}
while, conditional on $X$, $\epsilon_{t}$ and $\epsilon_{s}$ are
independent for any $t \neq s$. By
construction, this model exhibits a time-varying variance structure of the
noise, which depends on the efficient price $X$.\footnote{An
example of this model is additive, uniform
noise plus rounding: $Y_{t} = \gamma \big[ (X_{t} + u_{t}) / \gamma \big]$,
where $u = (u_{t})_{t \geq 0}$ is an i.i.d. $\mathcal{U}([0,
\gamma])$-distributed
process that is independent of $X$, and $\gamma > 0$ is a fixed rounding
level. In this setting, the conditional variance of the noise process is given
by: $\omega_{t}^{2} = \gamma^{2} \bigg( \Big\{ \frac{X_{t}}{ \gamma} \Big\} -
\Big\{ \frac{X_{t}}{ \gamma} \Big\}^{2} \bigg)$, with $\{x\} = x - [x]$
denoting the fractional part of $x$.} Note that these conditions do not
contradict unconditional dependence in $\epsilon$. In this case, the
consistency result translates to:
\begin{equation}
V^{*}(q_{k}, r_{k})^n \overset{p}{\to} V^{*}(q_{k},r_{k})=
\mu_{q_{k}}
\mu_{r_{k}} \int_{0}^{1} \biggl( \theta \psi_{2} \sigma_{s}^{2} +
\frac{1}{\theta} \psi_{1} \omega_{s}^{2} \biggr)^{\frac{q_{k} + r_{k}}{2}}
\text{d}s.
\end{equation}
Again, the estimator $\hat{ \Sigma}_{n}^{*}$ automatically adapts to the new
environment, which, in particular, implies that the consistency result of
Theorem \ref{NoiseAnyPowerThm} is still true.
In order to maintain unchanged error rates as in Theorem \ref{m11}, however,
we need to also impose identical assumptions on the process $( \omega_{t})_{t
\geq 0}$ as for the volatility $(\sigma_{t})_{t \geq 0}$. This is
because the role of both processes are identical in all asymptotic expansions.

\section{Simulations}

\label{Sec:Simulation}

In this section, we conduct a small Monte Carlo study. It takes a closer
look at the finite sample properties of covariance matrix estimation by
subsampling. Throughout, we restrict attention to the noisy setting and
estimation of $\Sigma^{*}$. We examine the ability of our proposed
estimator $\hat{ \Sigma}_{n}^{*}$ to assist in drawing feasible inference
about the pre-averaged bipower variation.

The efficient log-price $X$ is simulated as:
\begin{align}
\label{Eqn:Heston}
\begin{split}
\text{d}X_{t} &= \sigma_{t}\text{d}W_{t}, \\[0.10cm]
\text{d}\sigma_{t}^{2} &= \kappa(\sigma^{2} - \sigma_{t}^{2}) \text{d}t
+ \xi \sigma_{t} (\rho \text{d}W_{t} + \sqrt{1 - \rho^{2}}\text{d}B_{t}),
\end{split}
\end{align}
where $W_{t}$ and $B_{t}$ are independent standard Brownian motions, while
$\kappa, \sigma^{2}, \xi$ and $\rho$ are parameters. The process adopted for
$\sigma_{t}^{2}$ is a \citet*[][]{heston:93a} model, which is mean-reverting;
features square-root volatility; and accommodates a leverage
effect.\footnote{The leverage effect describes a negative correlation between
an asset's return and volatility \citep*[e.g.,][]{black:76a,christie:82a}.
Thus, if a leverage effect is present, one would expect $\rho$ to be
negative.}

To get a version of the model from which we can actually simulate data,
we apply a standard Euler approximation to the continuous time formulation
in Eq. \eqref{Eqn:Heston}. We then simulate 10,000 independent sample
path realizations of the discretized system of bivariate
equations.\footnote{To avoid a systematic effect from an assumed initial
condition of volatility, $\sigma_{0}^{2}$, we restart the variance process
in each simulation by drawing at random from its stationary distribution,
$\sigma_{t}^{2} \sim \text{Gamma}(2 \kappa \sigma^{2} \xi^{-2}, 2 \kappa
\xi^{-2})$.} We use two different sample sizes of $n$ = 2,340 and 23,400.
In our empirical investigation, we look at high-frequency equity data from
NYSE. With a US stock exchange trading session running from 9:30am to
4:00pm---or 6.5 hours---these sample sizes translate into receiving a new
price update every ten and one second(s). Our sample sizes are therefore
representative of more frequently traded securities.

We assume that the parameter values in the volatility equation are
$\kappa = 5$, $\sigma^{2} = 0.04$, $\xi = 0.50$ and $\rho = -0.50$, which
is broadly consistent with prior work
\citep*[e.g.,][]{ait-sahalia-mykland-zhang:11b,kalnina:11a}.
This implies that $\sigma_{t}$ is about 20\% on an average, annualized
basis, but the configuration of the model adopted here can generate a
substantial degree of intraday variation in volatility via $\xi$. An
example simulation is provided in Figure \ref{Figure:Heston}.

\begin{figure}[t!]
\caption{An illustration of a simulation from the Heston model.
\label{Figure:Heston}}
\begin{center}
\begin{tabular}{cc}
\small{Panel A: Cumulative log-return.} &
\small{Panel B: Annualized spot volatility.}\\
\includegraphics[width = 0.45\textwidth]{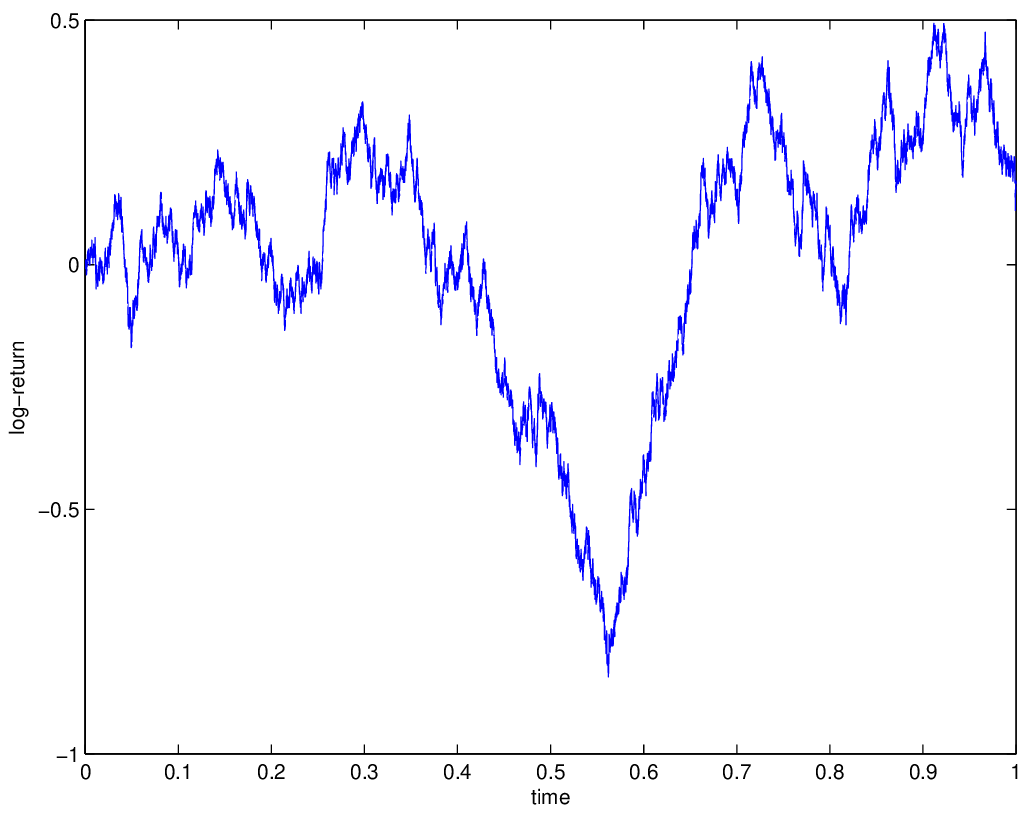} &
\includegraphics[width = 0.45\textwidth]{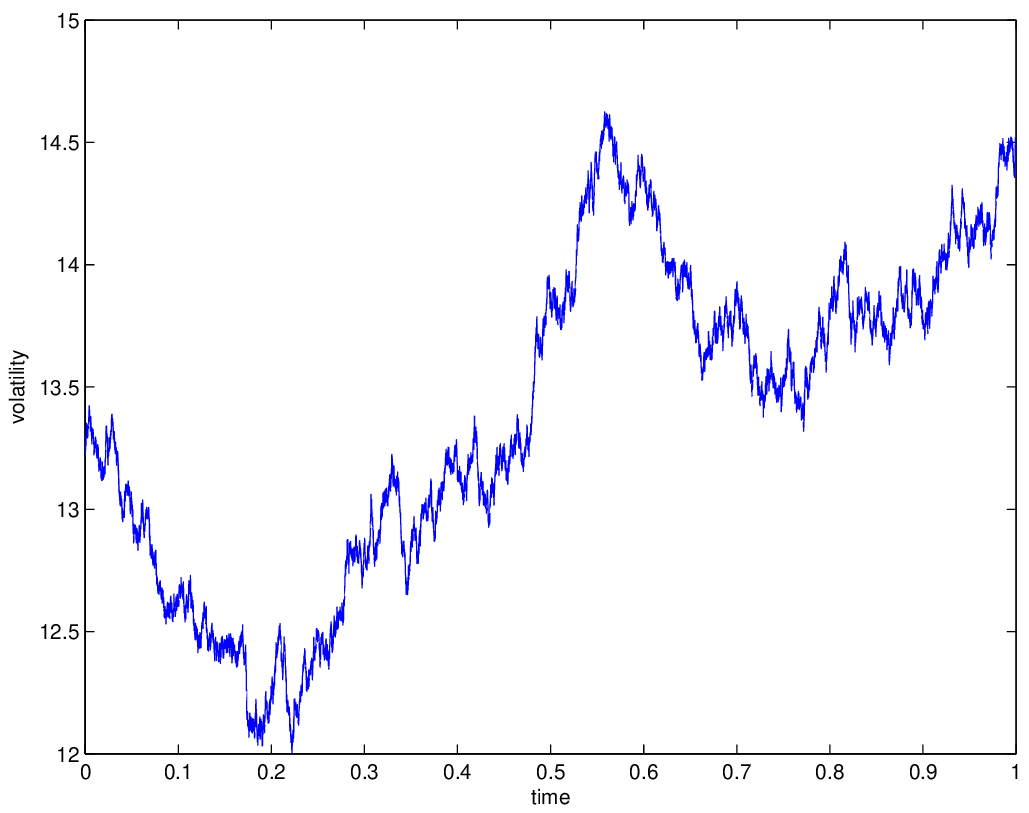}\\
\end{tabular}
\end{center}
\end{figure}

An autocorrelated and heteroscedastic noise term is added to $X$. First, we
create a set of serially dependent standard normal random variables via an MA(1)
filter: $u_{i/n} = u_{i/n}' + \zeta u_{(i-1)/n}'$,
where $\zeta$ is a parameter and $\displaystyle u_{i/n}' \overset{
\text{i.i.d.}} \sim \text{N} \bigg(0, \frac{1}{1 + \zeta^{2}} \bigg)$, independent of
$X$. Hence, $u_{i/n} \sim \text{N} \big(0, 1 \big)$ and $\text{cov}(u_{i/n},
u_{(i-1)/n}) = \displaystyle \frac{ \zeta}{1+\zeta^{2}}$, so that $\zeta$
controls the degree of first-order serial correlation in the noise. In our
simulations, we take $\zeta = -0.4$. Second, as in
\citet*{ait-sahalia-jacod-li:12a}, we set $\displaystyle
\epsilon_{i/n} = \gamma \frac{\sigma_{i/n}}{\sqrt{n}} u_{i/n}$, where $\gamma$
is the noise-ratio parameter \citep*[e.g.,][]{oomen:06a}.
This formulation implies that, conditional on $\sigma$, $\displaystyle
\omega_{i/n} = \gamma \frac{ \sigma_{i/n}}{ \sqrt{n}}$ and ensures
microstructure noise variation is conditionally heteroscedastic and
proportional to the spot volatility of the
efficient price. We assume that $\gamma = 0.50$,
which is a realistic choice for more liquid assets, see, e.g.,
\citet*[][]{christensen-oomen-podolskij:14a}.\footnote{We also experimented
with a much larger noise-ratio of $\gamma = 2$, as done by
\citet*{ait-sahalia-jacod-li:12a}. The results were almost identical,
albeit slightly worse, than those we report in the main text and, hence, are
omitted to conserve space.} Although this noise setting is not formally
covered by our theoretical frame, we include it here as a robustness check.
To alleviate the impact of noise,
we pre-average using the bandwidth $k_{n} = [ \theta \sqrt{n} ]$,
and we experiment with two choices of the tuning parameter $\theta = 1/3$
and $1$. We set the weight function $w(x) = \min(x, 1-x)$, which
has been shown to deliver nearly efficient estimates of the integrated variance,
when further parametric assumptions are imposed
\citep*[see, e.g.,][]{podolskij-vetter:09a}.

\subsection{A preliminary analysis}

We begin by verifying the conjecture made in the introduction of this
paper, namely that the finite sample properties of existing estimators of
$\Sigma^{*}$ are poor, which renders a significant fraction of such
estimates either nonpositive definite or at least ill-conditioned.
To do this, we implement the estimator suggested by
\citet*{podolskij-vetter:09a}. It is constructed by first defining
\begin{equation}
\tilde{Y}_{i,m}^{n} = \frac{1}{\sqrt{n}}
|n^{1/4} \Delta \bar{Y}_{m}^{n}|^{q_{i}}
|n^{1/4} \Delta \bar{Y}_{m+k_{n}}^{n}|^{r_{i}},
\end{equation}
and setting
\begin{equation}
\chi_{ml}^{n} = \frac{1}{2} \left[ \tilde{Y}_{i,m}^{n} \left(
\tilde{Y}_{j,m+l}^{n} - \tilde{Y}_{j,m+2k_{n}}^{n} \right) +
\tilde{Y}_{j,m}^{n} \left( \tilde{Y}_{i,m+l}^{n} - \tilde{Y}_{i,m+2k_{n}}^{n}
\right) \right],
\end{equation}
for any $0 \leq m \leq n - 4k_{n} + 1$ and $0 \leq l \leq 2k_{n}$. Then,
\begin{equation}
\label{Eqn:SigmaPV}
\tilde{\Sigma}_{ij,n}^{*} = \frac{2}{\sqrt{n}} \sum_{m = 0}^{n - 4k_{n} + 1}
\sum_{l = 0}^{2k_{n} - 1} \chi_{ml}^{n} \overset{p}{\to} \Sigma_{ij}^{*},
\end{equation}
and it follows that $\tilde{\Sigma}_{n}^{*} = \bigl( \tilde{
\Sigma}_{ij,n}^{*} \bigr) \overset{p}{\to} \Sigma^{*}$.

\renewcommand{\baselinestretch}{1.4}

\setlength{\tabcolsep}{0.30cm}
\begin{sidewaystable}[!ht]
\begin{center}
\caption{Proportion of ill-conditioned covariance matrix estimates,
$\tilde{ \Sigma}_{n}^{*}$.\label{Table:Proportion}}
\smallskip
\begin{tabular}{llcccccccccccccc}
\hline
\hline
&&& \multicolumn{5}{c}{general noise}
&& \multicolumn{5}{c}{i.i.d. noise}\\
\cline{4-8} \cline{10-14}
&&& \multicolumn{2}{c}{BM} && \multicolumn{2}{c}{SV} && \multicolumn{2}{c}{BM} && \multicolumn{2}{c}{SV} && SPY\\
\cline{4-5} \cline{7-8} \cline{10-11} \cline{13-14} \cline{16-16}
&$n =$ && 2,340 & 23,400 && 2,340 & 23,400  && 2,340 & 23,400 && 2,340 & 23,400 && $n_{\text{actual}}$ \\
\hline
\\[-0.25cm]
\multicolumn{16}{c}{2-dimensional setting}\\
\multicolumn{16}{l}{\emph{Panel A: Nonpositive definite}}\\
~~~~~$\theta=$ & 0.33 && 0.177 & 0.094 && 0.180 & 0.101 && 0.178 & 0.099 && 0.181 & 0.111 && 0.111 \\
               & 1.00 && 0.340 & 0.242 && 0.343 & 0.251 && 0.348 & 0.241 && 0.347 & 0.249 && 0.233 \\
\multicolumn{16}{l}{\emph{Panel B: Negative variance}}\\
~~~~~$\theta=$ & 0.33 && 0.141 & 0.066 && 0.149 & 0.072 && 0.140 & 0.070 && 0.146 & 0.077 && 0.064 \\
               & 1.00 && 0.286 & 0.199 && 0.290 & 0.210 && 0.292 & 0.201 && 0.289 & 0.206 && 0.184 \\
\multicolumn{16}{l}{\emph{Panel C: Condition number $\geq 20$}}\\
~~~~~$\theta=$ & 0.33 && 0.073 & 0.053 && 0.071 & 0.058 && 0.069 & 0.057 && 0.071 & 0.062 && 0.045 \\
               & 1.00 && 0.080 & 0.096 && 0.081 & 0.101 && 0.079 & 0.095 && 0.081 & 0.100 && 0.088 \\
\\[-0.25cm]
\multicolumn{16}{c}{4-dimensional setting}\\
\multicolumn{16}{l}{\emph{Panel D: Nonpositive definite}}\\
~~~~~$\theta=$ & 0.33 && 0.454 & 0.303 && 0.457 & 0.305 && 0.453 & 0.312 && 0.458 & 0.314 && 0.312 \\
               & 1.00 && 0.799 & 0.577 && 0.810 & 0.578 && 0.809 & 0.568 && 0.806 & 0.589 && 0.544 \\
\hline
\hline
\end{tabular}
\smallskip
\begin{scriptsize}
\parbox{0.95\textwidth}{\emph{Note}. We show the proportion of
ill-conditioned covariance matrix estimates, when the
\citet*{podolskij-vetter:09a} estimator $\tilde{ \Sigma}_{n}^{*}$ is used.
$\tilde{ \Sigma}_{n}^{*}$ is defined
in Eq. \eqref{Eqn:SigmaPV}. The columns report results from a general noise
model, where
$\epsilon$ is autocorrelated and heteroscedastic, and an i.i.d.
noise process, plus for a Brownian motion (BM) and stochastic volatility (SV)
model for $\sigma$. $n$ is the sample
size. The simulation design appears in Section \ref{Sec:Simulation}.
SPY is based on real high-frequency data, which are further analyzed in
Section \ref{Sec:Empirical}. $n_{\text{actual}}$ denotes the actual sample
size, which varies across days, cf. Table \ref{Table:DescriptiveStat}.
In Panel A, we report the
fraction of $\tilde{ \Sigma}_{n}^{*}$ estimates that are nonpositive definite,
i.e. with a minimum eigenvalue $\min(\lambda_{i}) \leq 0$. In Panel B, we
report how often the linear combination
$\omega = [1, -\mu_{1}^{-2}]'$ of the pre-averaged bipower variation
$V^{*}(q,r)^{n}$ leads to a negative variance estimate $\omega'\tilde{
\Sigma}_{n}^{*}\omega \leq 0$. $V^{*}(q,r)^{n}$ is defined in Eq.
\eqref{Eqn:PreavgBV}. B implies A, but not vice versa. In Panel C,
we compute the fraction of the positive definite $\tilde{ \Sigma}_{n}^{*}$
estimates that return a condition number $\text{cond}(\tilde{ \Sigma}_{n}^{*})
\geq 20$. Throughout Panels A -- C, the table is based on $q = (2,1)'$ and
$r = (0,1)'$, while Panel D reports the updated numbers from Panel A,
after changing the estimation problem to $q = (2,1,4,2)'$ and
$r = (0,1,0,2)'$.}
\end{scriptsize}
\end{center}
\end{sidewaystable}

\renewcommand{\baselinestretch}{1.6}

\afterpage{\clearpage}

Table \ref{Table:Proportion} shows several diagnostics that highlight the
properties of $\tilde{ \Sigma}_{n}^{*}$ based on $q = (2,1)'$ and $r = (0,1)'$,
i.e. pre-averaged realized variance and $(1,1)$-bipower variation.
In the table, we report the outcome from the general noise model, but we also
include a comparable i.i.d. noise environment, which we base on setting $\zeta =
0$ in the above and replacing $\sigma_{i/n}$ by $\sqrt{ \int_{0}^{1}
\sigma_{s}^{2} \text{d}s}$ in the noise variance, while noting that
\citet*{podolskij-vetter:09a} operate under the latter conditions.
In addition, the results
are also obtained for a scaled Brownian motion (BM), in which volatility has been
fixed at its steady-state value of $\sigma^{2}$.
The column SV is for the Heston stochastic volatility model, while
SPY represents some real high-frequency data that are further commented on in
Section \ref{Sec:Empirical}.

In Panel A, we report the fraction of the computed $\tilde{ \Sigma}_{n}^{*}$,
which fail to be positive definite, i.e. which have a minimum eigenvalue $\min(
\lambda_{i}) \leq 0$. It suggests that for a small sample of $n = 2,340$ and
depending on $\theta$, between 18\% -- 35\% of $\tilde{ \Sigma}_{n}^{*}$ are
nonpositive definite. As expected, these numbers decrease as the sample size
increases, but even with a fairly large sample of $n = 23,400$ the failure
rate is far from negligible. Moreover, it increases if a longer pre-averaging
horizon is employed. As such, this issue therefore has substantial bite in
practice, because larger values of $\theta$ are typically preferred, when
the noise is suspected to violate the i.i.d. assumption
\citep*[e.g.,][]{christensen-oomen-podolskij:14a,hautsch-podolskij:13a}.
Note that going from constant to stochastic volatility changes the
numbers only slightly, so allowing volatility to be time-varying has no
discernable impact on the failure rate.

Turning next to Panel B, we investigate how often the linear combination
$\omega' V^{*}(q,r)^{n}$ with $\omega = (1, -\mu_{1}^{-2})'$ results
in a negative variance estimate $\omega' \tilde{ \Sigma}_{n}^{*} \omega \leq
0$. The difference $V^{*}(2,0)^{n} - \mu_{1}^{-2}V^{*}(1,1)^{n}$ is often
used in applied work, as it provides information about presence of jumps
in the price process and permits a statistical test of this hypothesis.
Even when the covariance matrix estimate is not positive definite,
it could still result in a positive variance estimate $\omega' \tilde{
\Sigma}_{n}^{*} \omega > 0$, thereby allowing the
t-statistic to be computed. While the numbers in Panel B are lower compared
to Panel A, they are still high.

In Panel C, we look at those $\tilde{ \Sigma}_{n}^{*}$ estimates
that are positive definite. We compute the percentage of these, which return
a condition number $\text{cond}( \tilde{ \Sigma}_{n}^{*}) \geq 20$.%
\footnote{The condition number of an
invertible matrix $\text{A}$ is defined as $\text{cond}(\text{A}) =
||\text{A}|| \cdot ||\text{A}^{-1}||$, where $|| \cdot ||$ is the L$_{2}$
matrix norm. $\text{cond}(\text{A})$ can be shown to be the ratio between
the largest
and smallest singular value of $\text{A}$. It can be interpreted as
saying how much small changes in the input matrix get
amplified under matrix inversion.}
The condition number measures the numerical accuracy of a matrix, and
a value above $20 = 10 \times \text{dim}( \tilde{ \Sigma}_{n}^{*})$ is
generally taken as a sign of an ill-conditioned and nearly singular matrix
\citep*[e.g.,][]{greene:11a,hautsch-kyj-oomen:12a}. As readily seen, we find
that about 5\% -- 10\% of the $\tilde{ \Sigma}_{n}^{*}$ that are deemed ok
by a definiteness criteria show signs of being badly scaled.

Lastly, in Panel D we attempt to estimate the joint asymptotic covariance
matrix of a 4-dimensional parameter by using $q = (2,1,4,2)'$ and
$r = (0,1,0,2)'$. We then report the percentage of the $\tilde{
\Sigma}_{n}^{*}$ estimates, which are not positive definite, i.e. the numbers
can be compared to Panel A. Not surprisingly, increasing the complexity of the
problem has a detrimental impact on the estimation errors, and up to 80\% of
the estimates are now nonpositive definite, making this a devastating issue for
inference
\citep*[e.g.,][employ a 3-dimensional statistic to test for the parametric
form of volatility in diffusion models (both with or without noise)]{
dette-podolskij-vetter:06a,dette-podolskij:08a,vetter-dette:12a}.

\subsection{Implementation of the subsampler}

We now turn to the subsampler, where we again base our investigation on
$V^{*}(q,r)^{n}$ using the parameters $q = (2,1)'$ and $r = (0,1)'$.
As $\hat{ \Sigma}_{n}^{*}$ depends on two
tuning parameters, $p$ and $L$, we compute
it by varying these across a broad range of values in order to gauge the
sensitivity of our estimator to specific choices. We set
$p = 3, 5$ and $10$, so that the block length of noisy returns before
pre-averaging goes from three to ten times the pre-averaging horizon
$k_{n}$. Moreover, we slice the sample into $L = 5, 10$ and $15$
subsamples, yielding a
total of nine combinations of $p$ and $L$.\footnote{This implies that for
the sample size $n = 2,340$, there are some combinations of $p$ and $L$,
for which there is not enough data to compute the subsampler. Therefore,
we restrict attention to $n = 23,400$ in the following. The results for
$n = 2,340$, when attainable, are not materially worse, reflecting the
slow rates of convergence, and all are available by request.}

Our initial set of simulations suggested that, for small $p$ and $L$, the
raw estimator defined by Eq. \eqref{Eqn:SigmaNStar} is downward biased,
thereby leading to a systematic underestimation of $\Sigma^{*}$. We
therefore start by briefly outlining a few corrections that
are important in finite samples.

First, in order to center the $V_{l}^{*}(q,r)^{n}$, we should in theory
use the unobserved
$V^{*}(q,r)$, which we are forced to replace by a feasible, consistent
estimator, i.e.
$V^{*}(q,r)^{n}$. While this has no impact asymptotically, because
$V^{*}(q,r)^{n}$ converges much faster than $V_{l}^{*}(q,r)^{n}$, a
closer inspection of $\hat{ \Sigma}_{n}^{*}$ shows that the substitution does
entail a standard small sample correction. This implies that $\hat{
\Sigma}_{n}^{*}$ should be divided by $1 - 1/L$, i.e. the ``right''
normalization in Eq. \eqref{Eqn:SigmaNStar} is $L - 1$ and
not $L$.

Second, there is a HAC error associated with $p$,
which---in contrast to the $L$ correction---is more subtle to deal with. The
problem originates from the estimation of the autocovariances of the
pre-averaged returns, $|n^{1/4} \Delta \bar{Y}_{i}^{n}|^{q_{k}} |
n^{1/4} \Delta \bar{Y}_{i+k_{n}}^{n}|^{r_{k}}$, as exemplified by Eq.
\eqref{Eqn:Example}. As such, it depends on the
covariance structure of this series, which, in turn, is a function of several
parameters and variables, including spot volatility, the weight
function and the
variance of the noise process \citep*[see, e.g., the proof of Theorem
\ref{m11} in the Appendix around Eq. \eqref{RE1}, or][]{podolskij-vetter:09a}.
If we approximate this function, e.g., by assuming that volatility
is constant, a detailed calculation (which is omitted here, but available upon
request) shows that for the pre-averaged bipower variation estimator
$V^{*}(q_{k}, r_{k})^{n}$ defined by Eq. \eqref{Eqn:PreavgBV}, we can
roughly correct for the $p$ error by dividing $\hat{ \Sigma}_{n}^{*}$ with
$1 - 1/p$.

It turns out, however, that this is too much, if either $q_{k}$
or $r_{k}$ is zero, as it happens for the pre-averaged realized variance,
$V^{*}(2,0)^{n}$. This is because the summands in Eq. \eqref{Eqn:PreavgBV}
are, asymptotically, $2k_{n}$-dependent for non-zero values of both $q_{k}$
and $r_{k}$, while
they are only $k_{n}$-dependent, if either is zero. Thus, the bias induced
by $p$ in the latter is, loosely speaking, half that of the former. This
indicates that we
ought to divide the elements of $\hat{ \Sigma}_{n}^{*}$ involving
the variance of $V^{*}(2,0)^{n}$ and its covariance with
$V^{*}(1,1)^{n}$ only by $1 - 0.5/p$. This is
less appealing, because it would break the positive semi-definiteness property
of $\hat{ \Sigma}_{n}^{*}$. We therefore proceed by using a constant scaling
for the entire matrix, and---to strike a balance between the two
alternatives---we propose to meet in the middle and rescale
$\hat{ \Sigma}_{n}^{*}$ by $(1 - 0.75/p)$. This choice produces excellent
results for the values of $p$ considered in this paper, as corroborated by our
numerical experiments below, while for $p \geq 10$ the correction is of
limited importance and can be ignored.

Third, our theoretical results hinge on $n$ being a
multiple of $Lpk_{n}$. In practice, where $n$ varies randomly over time and can
be very odd, this is an unrealistic assumption, which is almost never
satisfied.
Instead, for a given choice of $k_{n}$, $p$ and $L$, the maximum number
of blocks of length $pk_{n}$ that can be assigned to each of the $L$ subsample
estimates $V_{l}^{*}(q_{k},r_{k})^{n}$ is:
\begin{equation}
\displaystyle n_{\text{block}} = \biggl\lfloor \frac{ \left\lfloor \displaystyle
n/pk_{n} \right \rfloor}{L} \biggr\rfloor.
\end{equation}
The effective amount of data used to construct the subsampler is
therefore often less than the total sample size, i.e. $n_{ \text{block}}Lpk_{n}
\leq n$.

We compute $\hat{ \Sigma}_{n}^{*}$ from the data
that fall within the window $[0, n_{ \text{block}}Lpk_{n}/n]$ and subsequently
inflate this
estimate to cover the whole unit interval. While this entails
some loss of information about the underlying variation of the process towards
the end of the sample, in our experience this has a very limited influence on
the results, unless the data, from which $\hat{ \Sigma}_{n}^{*}$ is computed, is
not representative of the overall level of volatility. In practice, one can
minimize this effect by
choosing the parameters, such that $n_{ \text{block}}Lpk_{n}$ is close to $n$.

\subsection{Results}

In Figure \ref{Fig:Univariate}, we plot some kernel smoothed density
estimates of the standardized pre-averaged bipower variation, i.e. $n^{1/4}
\bigl( V^{*}(q_{k},r_{k})^{n} - V^{*}(q_{k},r_{k}) \bigr) / \sqrt{ \hat{
\Sigma}_{kk,n}^{*}}$, where $\hat{ \Sigma}_{kk,n}^{*}$ is the $k$th
diagonal element of our subsampling covariance matrix estimate $\hat{
\Sigma}_{n}^{*}$. Here, we use $\theta = 1$. The results in Panels A -- B
are for $V^{*}(1,1)^{n}$,
while Panels C -- D are for $V^{*}(2,0)^{n}$. In addition, the left-hand
portion of the figure is for $L = 15$ and $p$ changing, while the right-hand
part is based on $p = 10$ and for different $L$. The infeasible result
for $V^{*}(2,0)^{n}$ replaces the subsampler with the true variance, which
is known here (cf. footnote \ref{Foot:avar_rv}).

As the figure shows, the studentized pre-averaging estimators tend to track
the asymptotic normal approximation closely
across combinations of $p$ and $L$. The sole exception, appearing
in Panel A, is for $V^{*}(1,1)^{n}$, when $\hat{ \Sigma}_{n}^{*}$ is
implemented using $p = 3$. Note that if $p = 3$, the effective sample size
within a block of noisy high-frequency data is $k_{n}+2$ after pre-averaging,
whereas the summands of
$V^{*}(1,1)^{n}$ are $2k_{n}$-dependent. With such a small value of $p$,
the block length is therefore inadequate to permit estimation of all the
required autocovariances. As $|n^{1/4} \Delta \bar{Y}_{i}^{n}|^{q_{k}} |
n^{1/4} \Delta \bar{Y}_{i+k_{n}}^{n}|^{r_{k}}$ is strongly positively
autocorrelated, this leads to
a severe underestimation of the true variation of $V^{*}(1,1)^{n}$ and,
hence,
a pronounced overdispersion of the estimated density.
As a practical guide, one should therefore avoid computing $\hat{
\Sigma}_{n}^{*}$ with $p = 3$, if both $q_{k}$ and $r_{k}$ are different
from zero, for any $i = 1, \dots, m$.
In comparison, the
corresponding graph for $V^{*}(2,0)^{n}$ in Panel C is much better scaled,
reflecting the lesser dependence inherent in this estimator. Apart
from that, the fit tends to improve for larger values of $p$ and $L$, as
expected.\footnote{Of course, with a fixed sample size $n$ and
pre-averaging window $k_{n}$, the parameters $p, L$, and $n_{ \text{block}}$ are
not free. Thus, everything else is not ``fixed'', because
$n_{ \text{block}}$ is decreasing for larger values of either $p$ or $L$ (while
holding the other fixed).} Lastly, comparing with the
infeasible result in Panels C -- D, we note that the estimated densities appear
slightly negatively skewed, owing to
a modest, positive correlation between $V^{*}(q_{k},r_{k})^{n}$ and $\hat{
\Sigma}_{kk,n}^{*}$.

\begin{figure}[t!]
\caption{Kernel density estimate of the standardized
$V^{*}(q_{k},r_{k})^{n}$: changing $p$ and $L$. \label{Fig:Univariate}}
\begin{center}
\begin{tabular}{cc}
\footnotesize{Panel A: $V^{*}(1,1)^{n}$ ($L = 15$)} &
\footnotesize{Panel B: $V^{*}(1,1)^{n}$ ($p = 10$)}\\
\includegraphics[height=6cm,width=0.45\textwidth]{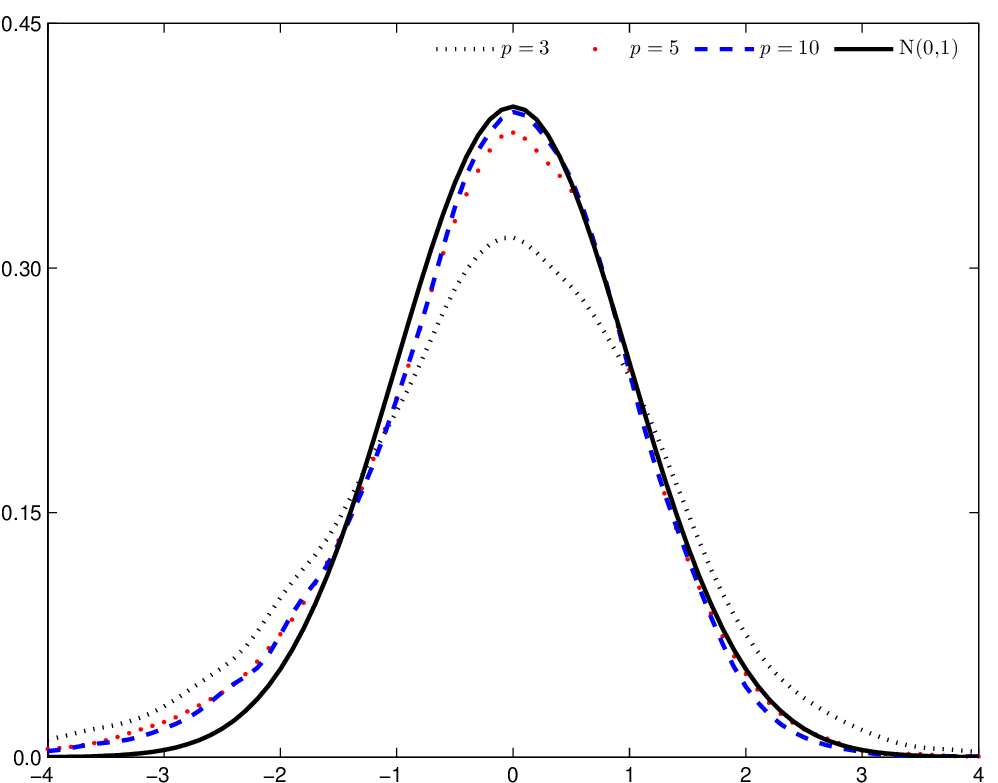} &
\includegraphics[height=6cm,width=0.45\textwidth]{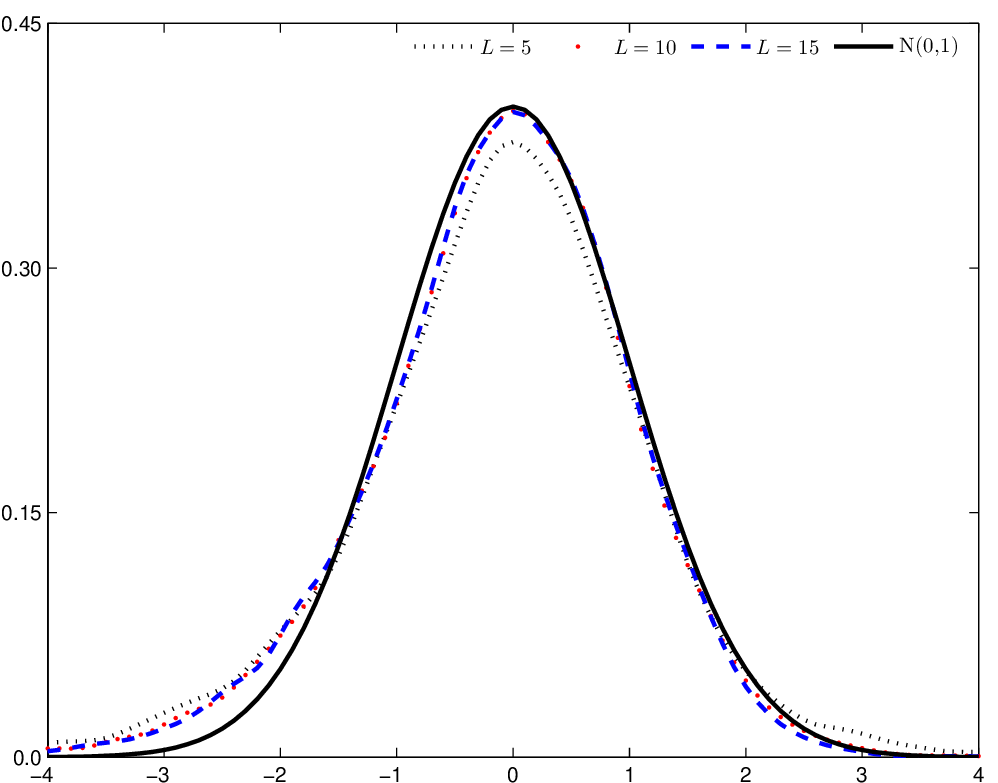}\\
\footnotesize{Panel C: $V^{*}(2,0)^{n}$ ($L = 15$)} &
\footnotesize{Panel D: $V^{*}(2,0)^{n}$ ($p = 10$)}\\
\includegraphics[height=6cm,width=0.45\textwidth]{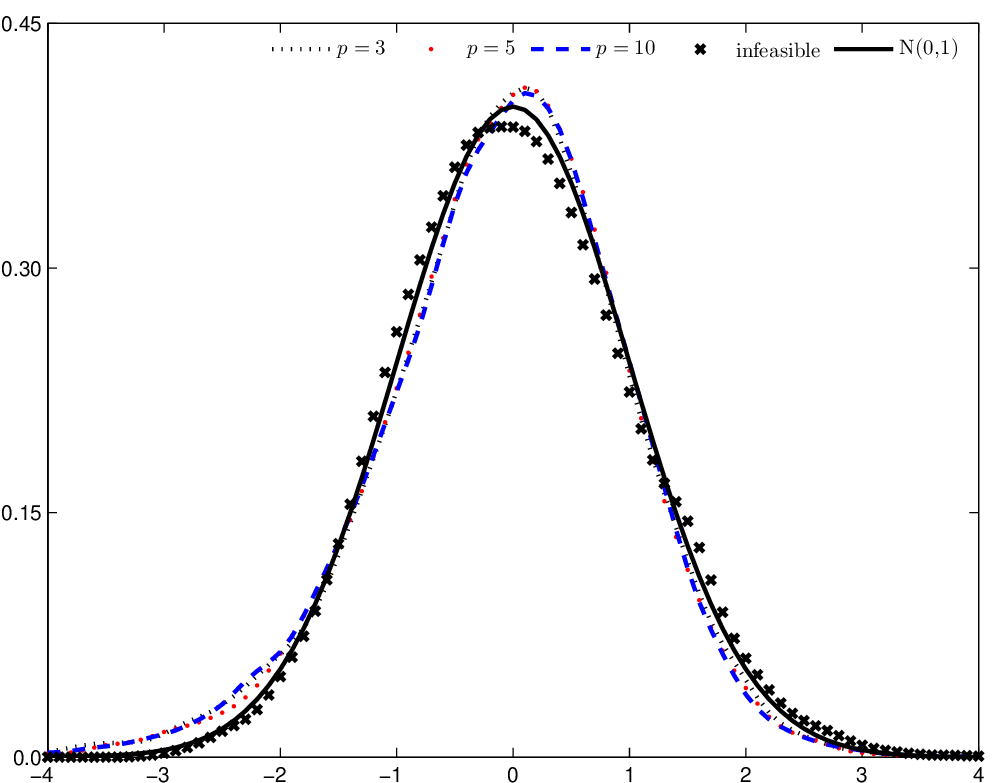} &
\includegraphics[height=6cm,width=0.45\textwidth]{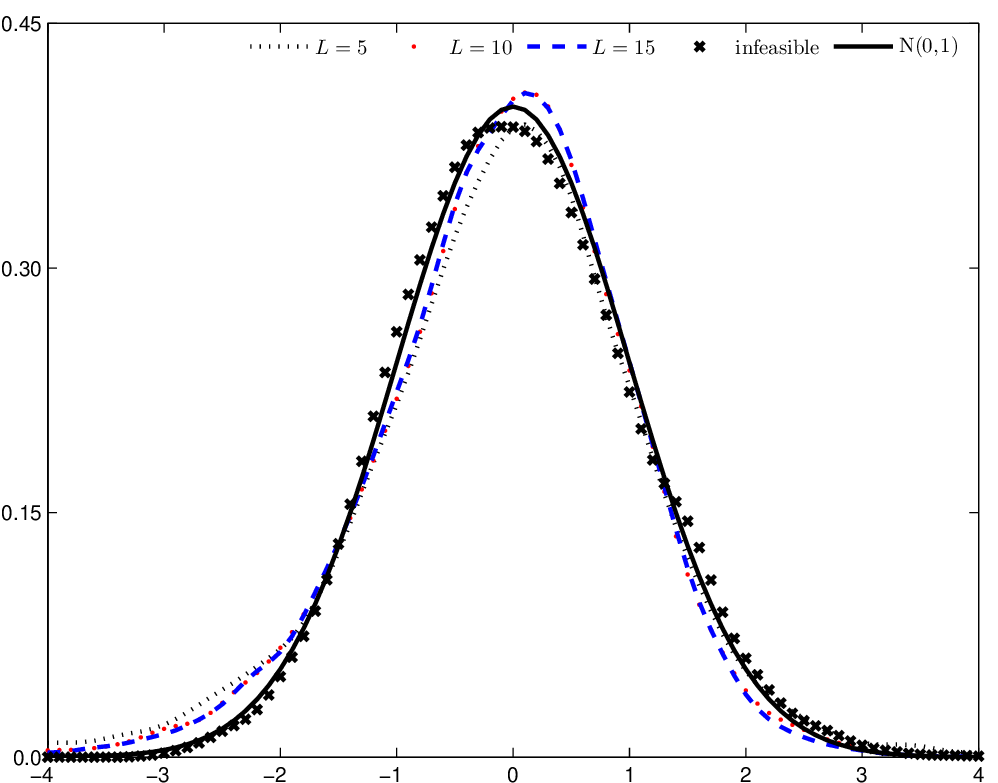}\\
\end{tabular}
\begin{scriptsize}
\parbox{0.925\textwidth}{\emph{Note}. We show the
kernel smoothed density estimates of the standardized pre-averaged bipower
variation estimator: $n^{1/4} \bigl( V^{*}(q_{k},r_{k})^{n} - V^{*}(q_{k},r_{k})
\bigr) / \sqrt{ \hat{ \Sigma}_{kk,n}^{*}}$, where $\hat{ \Sigma}_{kk,n}^{*}$
is
the $k$th diagonal element of $\hat{ \Sigma}_{n}^{*}$. Throughout this figure,
$n = 23,400$, $\theta = 1$ and we set $k_{n} = [\theta \sqrt{n}]$ to implement
pre-averaging. The simulation data is from a Heston stochastic volatility
model, as described in the main text. In the left panel, the subsampler is based
on $L = 15$ subsamples, while varying the block length at $p = 3, 5$ or $10
\times k_{n}$. The right panel is based on $p = 10$,
while changing the number of subsamples at $L = 5, 10$ or $15$. The $n_{
\text{sim}} = 10,000$ simulated t-statistics
are smoothed using a Gaussian kernel with optimal bandwidth
selection $h = 1.06\hat{ \sigma}n_{ \text{sim}}^{-1/5}$, where $\hat{ \sigma}$
is the sample standard deviation of the data. The infeasible result
for $V^{*}(2,0)^{n}$ replaces the subsampler with the true variance
(cf., footnote \ref{Foot:avar_rv}). The density function of a standard
normal random variable (the solid black line) is superimposed as a visual
reference.}
\end{scriptsize}
\end{center}
\end{figure}

We turn next to Figure \ref{Fig:UnivariateTheta}, where we explore how
sensitive our findings are to the choice of $\theta$. In this figure,
and throughout the remainder of this section, we fix the parameters of
$\hat{ \Sigma}_{n}^{*}$ to $p = 10$ and $L = 15$. As apparent from both
panels, the relatively large change in $\theta$ has only a
minuscule effect on the shape of the estimated density for $V^{*}(1,1)^{n}$
and $V^{*}(2,0)^{n}$.

\begin{figure}[t!]
\caption{Kernel density estimate of the standardized $V^{*}(q_{k},r_{k})^{n}$:
changing $\theta$. \label{Fig:UnivariateTheta}}
\begin{center}
\begin{tabular}{cc}
\footnotesize{Panel A: $V^{*}(1,1)^{n}$ $(p = 10, L = 15)$} &
\footnotesize{Panel B: $V^{*}(2,0)^{n}$ $(p = 10, L = 15)$}\\
\includegraphics[height = 6cm, width = 0.45\textwidth]{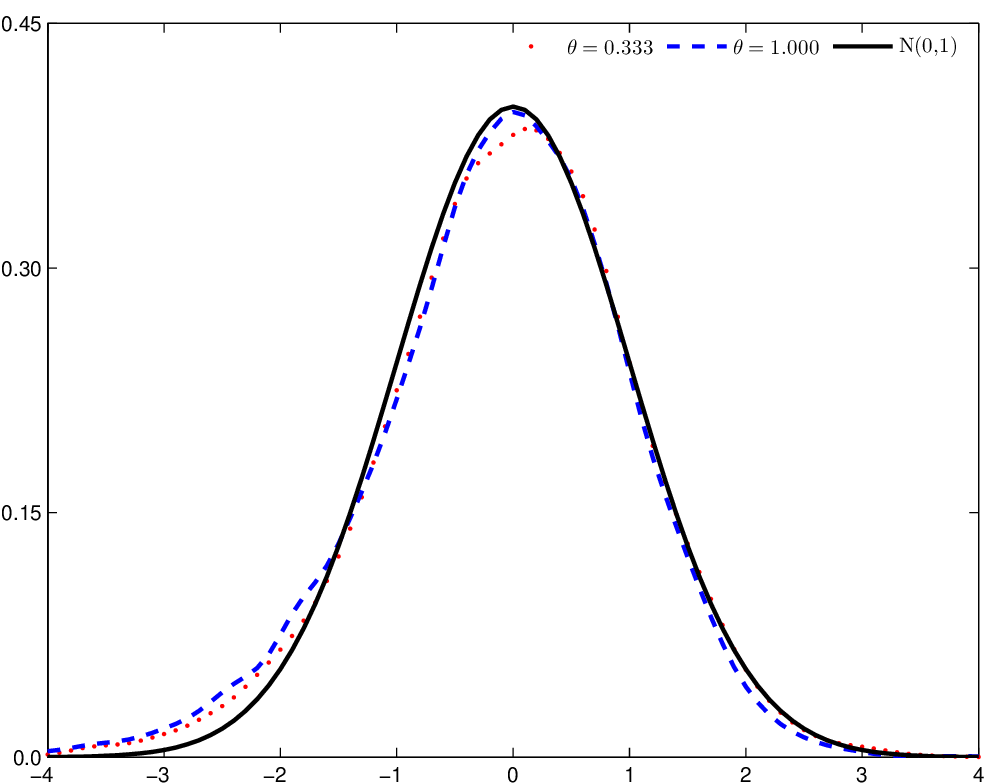} &
\includegraphics[height = 6cm, width = 0.45\textwidth]{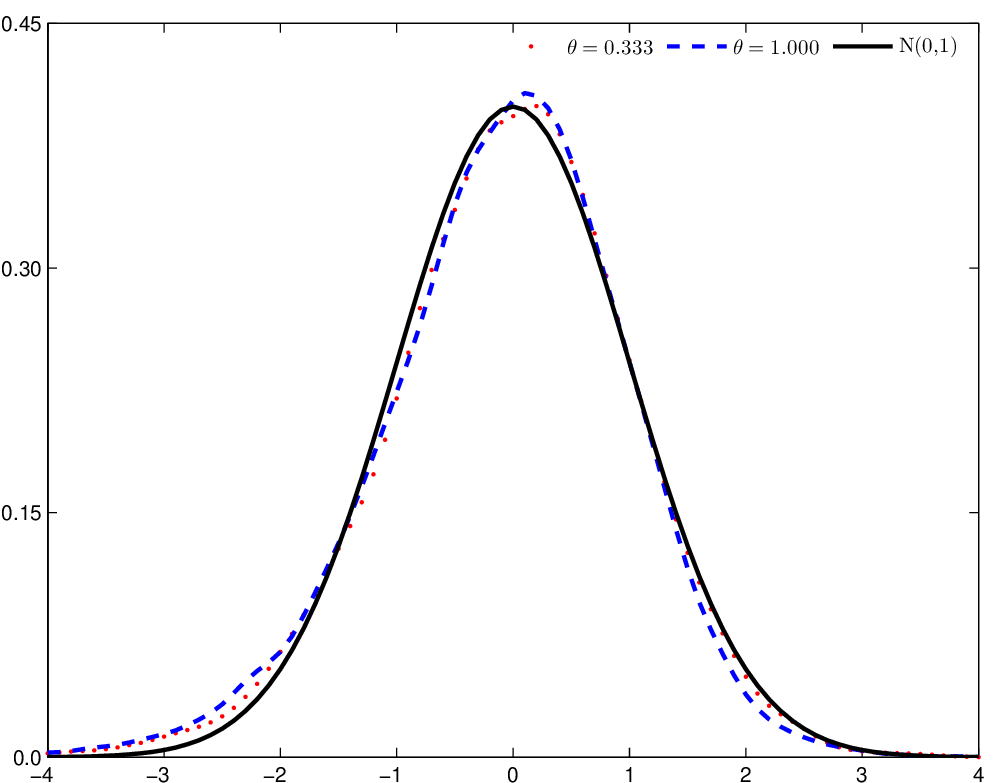}\\
\end{tabular}
\begin{scriptsize}
\parbox{0.925\textwidth}{\emph{Note}. We show the
kernel smoothed density estimates of the standardized pre-averaged bipower
variation estimator: $n^{1/4} \bigl( V^{*}(q_{k},r_{k})^{n} - V^{*}(q_{k},r_{k})
\bigr) / \sqrt{ \hat{ \Sigma}_{kk,n}^{*}}$, where $\hat{ \Sigma}_{kk,n}^{*}$
is
the $k$th diagonal element of $\hat{ \Sigma}_{n}^{*}$. Throughout this figure,
$n = 23,400$, $L = 15$, $p = 10$, and we set $k_{n} = [\theta \sqrt{n}]$ to
implement pre-averaging using $\theta = 1/3$ and $\theta = 1$.
The simulation data is from a Heston stochastic volatility model, as
described in the main text. The left panel holds the results for
$V^{*}(1,1)^{n}$, while the right panel is for $V^{*}(2,0)^{n}$.
The $n_{ \text{sim}} = 10,000$ simulated t-statistics
are smoothed using a Gaussian kernel with optimal bandwidth
selection $h = 1.06\hat{ \sigma}n_{ \text{sim}}^{-1/5}$, where $\hat{ \sigma}$
is the sample standard deviation of the data. The density function of a standard
normal random variable (the solid black line) is superimposed as a visual
reference.}
\end{scriptsize}
\end{center}
\end{figure}

In Figure \ref{Fig:SimJV}, using $\theta = 1$, we look at an application that
requires one to use information about the full covariance matrix estimate
by reporting some results
for the linear combination $\omega' V^{*}(q,r)^{n}$ with $\omega =
(1, -\mu_{1}^{-2})'$, i.e. $V^{*}(2,0)^{n} - \mu_{1}^{-2} V^{*}(1,1)^{n}$.
In Table \ref{Table:Proportion}, we noted it was problematic to standardize
this difference with the covariance matrix estimator put forth by
\citet*{podolskij-vetter:09a}, which was often found to be nonpositive
definite. The subsampler does not suffer from this issue.
In Panel A of Figure \ref{Fig:SimJV}, we
therefore plot the time series of the studentized statistic across the
simulations runs, using the delta method to conclude that $n^{1/4} \omega'
\bigl( V^{*}(q,r)^{n} - V^{*}(q,r) \bigr) / \sqrt{ \omega' \hat{
\Sigma}_{n}^{*} \omega} \overset{d}{ \to} \text{N}(0,1)$. Panel B
inspects the kernel smoothed density estimate of the t-statistic. As
evident, the asymptotic distribution theory is a decent description of
the actual finite sample variation, although the fit is not perfect. As
explained above, $V^{*}(2,0)^{n} - \mu_{1}^{-2} V^{*}(1,1)^{n}$
provides information about the presence of jumps in asset prices, and
significant positive values would lend support to this hypothesis.
Here, the shape of the estimated density implies that the one-sided
coverage probabilities of the t-statistic are slightly too
large in the right tail. This would render such a hypothesis test mildly
conservative, which is preferable in practice (e.g., using the 95\%
quantile from the standard normal distribution gives a coverage rate of
96.7\% in the above figure).
We note that it requires additional properties of $\hat{ \Sigma}_{n}^{*}$
for such a test to retain any power under the alternative. That is,
$\hat{ \Sigma}_{n}^{*}$ is not jump-robust in its current form, so it
needs to be accompanied by truncation, as shown in the noiseless setting
in Theorem \ref{theorem:truncation}.
We shall explore the joint impact of noise and price jumps on the
subsampler in a
companion paper, and the results could of course
change with a different, albeit related, estimator of $\Sigma^{*}$.

\begin{figure}[t!]
\caption{Properties of the standardized $V^{*}(2,0)^{n} -
\mu_{1}^{-2} V^{*}(1,1)^{n}$.
\label{Fig:SimJV}}
\begin{center}
\begin{tabular}{cc}
\footnotesize{Panel A: Point estimate} &
\footnotesize{Panel B: Kernel density estimate}\\
\includegraphics[height = 6cm, width = 0.45\textwidth]{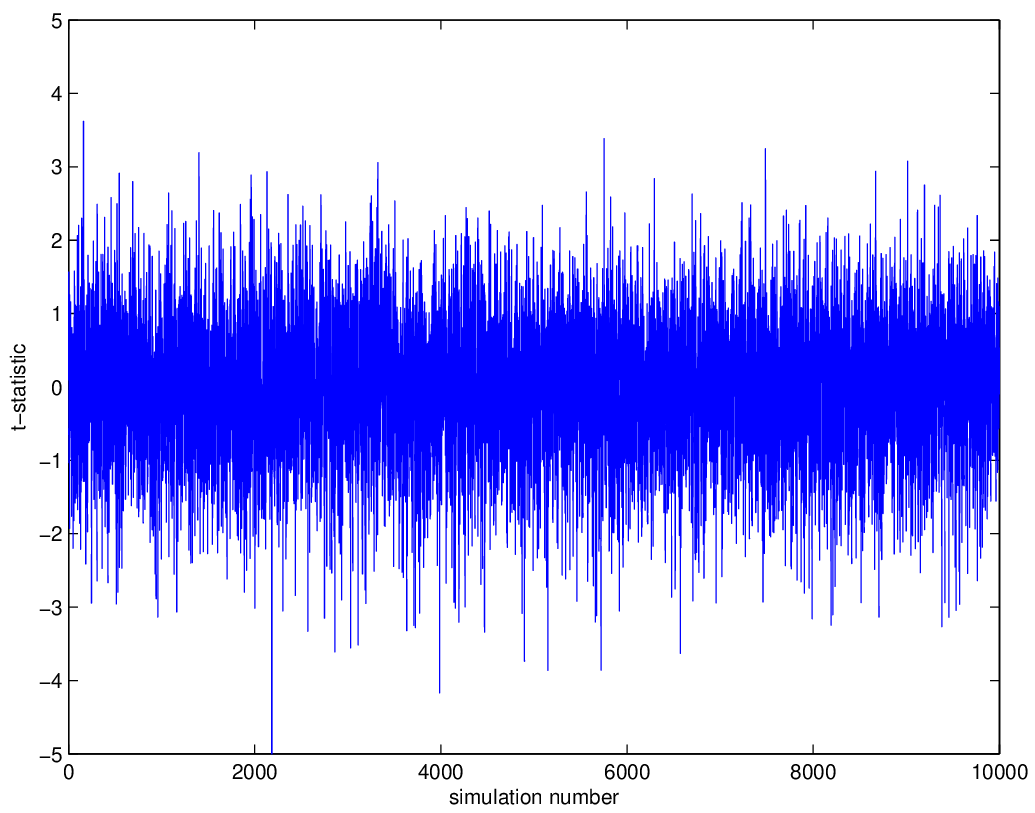} &
\includegraphics[height = 6cm, width = 0.45\textwidth]{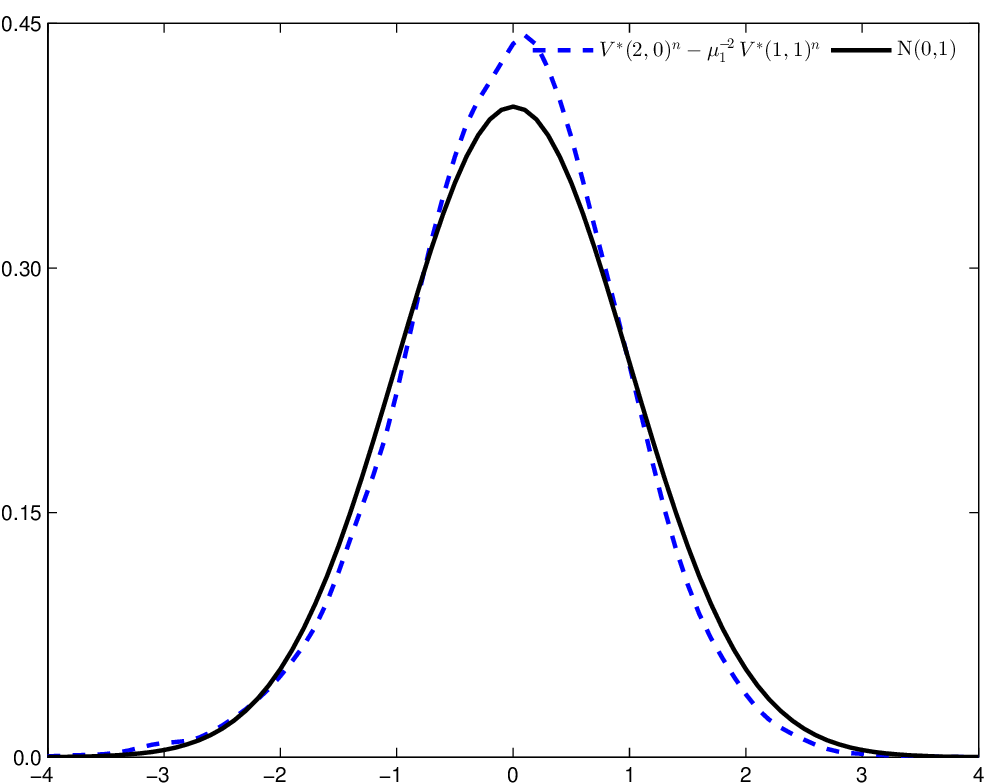}\\
\end{tabular}
\begin{scriptsize}
\parbox{0.925\textwidth}{\emph{Note}. We plot $\omega' V^{*}(q,r)^{n}$ with
$\omega = (1, -\mu_{1}^{-2})'$, after it has been standardized by
$\hat{ \Sigma}_{n}^{*}$
based on $p = 10$ and $L = 15$, i.e. $n^{1/4} \bigl( V^{*}(2,0)^{n} - \mu_{1}^{-2}
V^{*}(1,1) \bigr) / \sqrt{ \omega' \hat{ \Sigma}_{n}^{*} \omega}$. In Panel A,
we plot the point
estimates of this t-statistic across simulations, while
Panel B displays the corresponding kernel smoothed density estimate. Throughout
the figure, $n = 23,400$, and we set $k_{n} = [\theta \sqrt{n}]$ to implement
pre-averaging using $\theta = 1$. The simulation data is from a
Heston stochastic volatility model, as described in the main text.
In Panel B, the $n_{ \text{sim}} = 10,000$ simulated t-statistics
are smoothed using a Gaussian kernel with optimal bandwidth
selection $h = 1.06\hat{ \sigma}n_{ \text{sim}}^{-1/5}$, where $\hat{ \sigma}$
is the sample standard deviation of data. The density function of a standard
normal random variable (the solid black line) is superimposed as a visual
reference.}
\end{scriptsize}
\end{center}
\end{figure}

At last, we compare our subsampler
$\hat{ \Sigma}_{n}^{*}$ to an alternative, nonparametric estimator
of $\Sigma^{*}$, namely the observed asymptotic
variance (AVAR) of \citet*{mykland-zhang:17a}. As in our setting, the observed
AVAR is based on squared increments (or outer products) of the original
statistic(s) computed on smaller streches of high-frequency data, but there are
several differences between the construction of the subsampler and observed
AVAR. Moreover, there is little guidance on how to select tuning
parameters for the latter. We therefore proceed as follows. The
sampling grid consists (using their notation) of $B = L$ blocks at
the outset. This helps to ensure comparability with $\hat{ \Sigma}_{n}^{*}$.
We then compute the observed AVAR using a two-scale approach, as a linear
combination of the $K$-averaged apparent quadratic covariation with
$K_{1} = 1$ and $K_{2} = 2$;
see Eq. (24) in \citet*{mykland-zhang:17a}.\footnote{The observed AVAR has
a bias, which---although asymptotically negligible---could impair its
accuracy in finite samples. The virtue of the two-scale construction, as
advocated by \citet*{mykland-zhang:17a}, is that
the bias term cancels out.}
A forward half-interval approach is adopted to reduce the impact of edge
effects induced by pre-averaging.
The outcome is reported in Figure \ref{figure:mykland-zhang}, where we plot
the standardized pre-averaged bipower variation $V^{*}(2,0)^{n}$ and
$V^{*}(1,1)^{n}$ using both $\hat{ \Sigma}_{n}^{*}$ and the observed AVAR.
As apparent, standardization with the subsampler tracks the standard normal
curve closer compared to the observed AVAR. Indeed, the standard error of the
studentized pre-averaged bipower variation is about 1.05 using $\hat{
\Sigma}_{n}^{*}$, while it is about 1.20 using the observed AVAR.

\begin{figure}[t!]
\caption{Comparison of the subsampler and observed asymptotic variance.
\label{figure:mykland-zhang}}
\begin{center}
\begin{tabular}{cc}
\footnotesize{Panel A: $V^{*}(1,1)^{n}$} &
\footnotesize{Panel B: $V^{*}(2,0)^{n}$}\\
\includegraphics[height = 6cm, width = 0.45\textwidth]{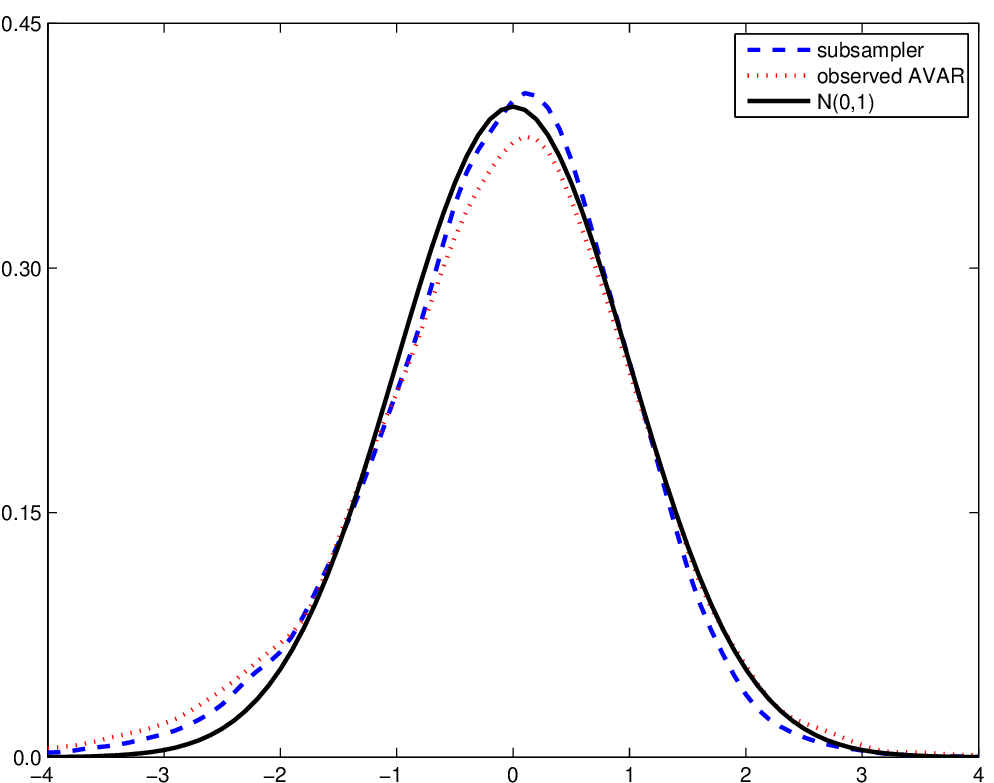} &
\includegraphics[height = 6cm, width = 0.45\textwidth]{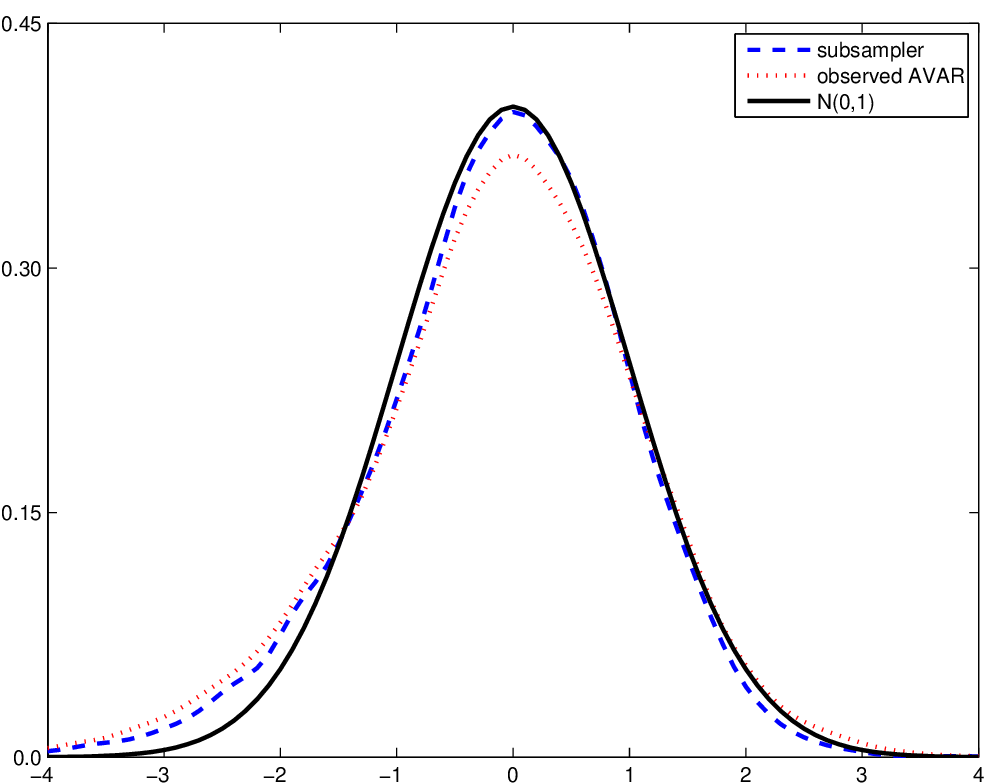}\\
\end{tabular}
\begin{scriptsize}
\parbox{0.925\textwidth}{\emph{Note}. We show
kernel smoothed density estimates of the standardized pre-averaged bipower
variation estimator: $n^{1/4} \bigl( V^{*}(q_{k},r_{k})^{n} - V^{*}(q_{k},r_{k})
\bigr) / \sqrt{ \Sigma_{kk}^{*}}$, where $\Sigma_{kk}^{*}$ is the $k$th
diagonal element of $\Sigma^{*}$. We replace $\Sigma^{*}$ by the subsampler
(based on $L = 15$ and $p = 10$) and the observed asymptotic variance.
The latter is computed with $B = 15$, and a
two-scale combination of the $K$-averaged apparent quadratic covariation with
$K_{1} = 1$ and $K_{2} = 2$ using
forward half-interval estimators, as explained in \citet*{mykland-zhang:17a}
around Eq. (24).
In the figure, $n = 23,400$ and we set $k_{n} = [\theta \sqrt{n}]$ to
implement pre-averaging using $\theta = 1$.
The simulation data is from a Heston stochastic volatility model, as
described in the main text. The left panel holds the results for
$V^{*}(1,1)^{n}$, while the right panel is for $V^{*}(2,0)^{n}$.
The $n_{ \text{sim}} = 10,000$ simulated t-statistics
are smoothed using a Gaussian kernel with optimal bandwidth
selection $h = 1.06\hat{ \sigma}n_{ \text{sim}}^{-1/5}$, where $\hat{ \sigma}$
is the sample standard deviation of the data. The density function of a standard
normal random variable (the solid black line) is superimposed as a visual
reference.}
\end{scriptsize}
\end{center}
\end{figure}

Overall, the simulation results suggest that inference based on
$\hat{ \Sigma}_{n}^{*}$ is fairly robust and delivers excellent outcomes,
even for modest values of its tuning parameters.

\section{Empirical work}
\label{Sec:Empirical}

Here, we provide a brief illustration of the subsample estimator in the context
of some real financial high-frequency data. We analyze tick-data
from the Standard and Poor's depository receipts, which is an exchange-traded
fund that tracks the performance of the S\&P 500 stock index.
The shares are listed on several U.S. stock exchanges and
trade under the ticker symbol SPY. It is a highly liquid
security and provides a good starting point for the subsampler, which is
data-intensive. We extracted
a transaction price series of the SPY from the TAQ database. The data are
recorded at milli-second precision and our complete sample covers the
time period from January, 2007 to March, 2011; a total of 1,169 business days.
The raw data was
filtered for outliers using the recommendations of
\citet*{christensen-oomen-podolskij:14a}.\footnote{The
\citet*{christensen-oomen-podolskij:14a} filter is to a large extent based
on the cleaning routines of
\citet*{barndorff-nielsen-hansen-lunde-shephard:09a}. The former
use a backward-forward matching algorithm to compare a trade to the
quote conditions prevailing in the market around the time of the transaction.
The latter evaluate each trade against a single preceding bid-ask quote, which
may lead to excessive removal of data in fast-moving markets.
Apart from that, the filters are identical.} We also restrict
attention to the NYSE trading session, which runs from 9:30am till 4:00pm
Eastern Time. Table \ref{Table:DescriptiveStat} provides a few descriptive
statistics that summarize key features of the dataset.

\setlength{\tabcolsep}{0.45cm}
\begin{table}[!ht]
\begin{center}
\caption{Summary statistics of the SPY high-frequency data.
\label{Table:DescriptiveStat}}
\smallskip
\begin{tabular}{lrcccccccc}
\hline
\hline
Statistic & Sample average & [Min; Max]\\
$r_{oc}$ & 0.003 & [-8.254;7.349]\\
$\hat{ \sigma}_{r_{oc}}$ & 17.037 & [3.474;127.923]\\
$n$ & 113.769 & [13.127;533.203]\\
$K$ & 320 & [115;730]\\
\hline
\hline
\end{tabular}\smallskip
\begin{scriptsize}
\parbox{0.95\textwidth}{\emph{Note.} We report some descriptive statistics of
the SPY high-frequency data. $r_{oc}$ is the open-to-close return
(in percent), i.e. the difference between the log-price of the last and
first transaction of the day. $\hat{ \sigma}_{r_{oc}}$ is a realized measure of
the standard deviation of $r_{oc}$. We set $\hat{ \sigma}_{r_{oc}} =
100 \times \sqrt{ 250 \times \widehat{IV}}$, where
$\widehat{IV}$ is defined in Eq. \eqref{Eqn:ivhat}. $n$ is the
sample size (in 1,000s), while $K$ is the pre-averaging window. The sample
period is January, 2007 through March, 2011.}
\end{scriptsize}
\end{center}
\end{table}

\begin{figure}[!ht]
\caption{Autocorrelation function of SPY return series.\label{Fig:acf}}
\begin{center}
\begin{tabular}{cc}
\footnotesize{Panel A: Noisy returns, $\Delta_{i}^{n} Y$} &
\footnotesize{Panel B: Pre-averaged returns, $\Delta \bar{Y}_{i}^{n}$}\\
\includegraphics[height = 6cm, width = 0.45\textwidth]{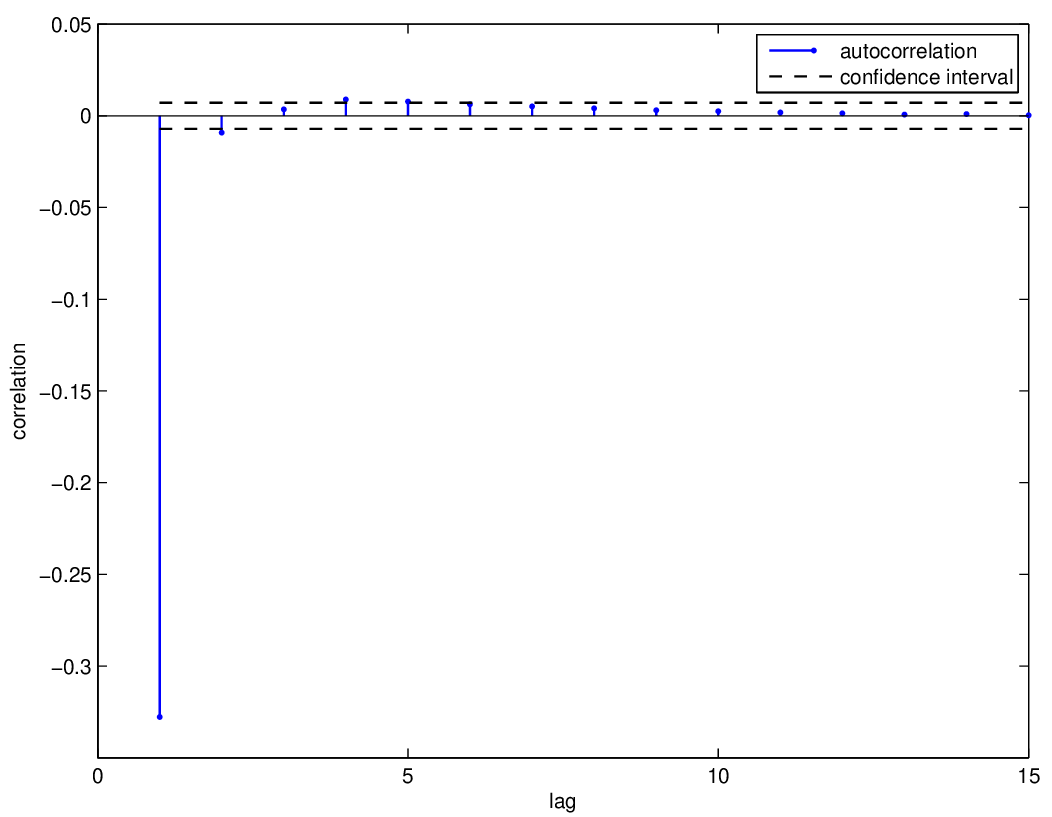} &
\includegraphics[height = 6cm, width = 0.45\textwidth]{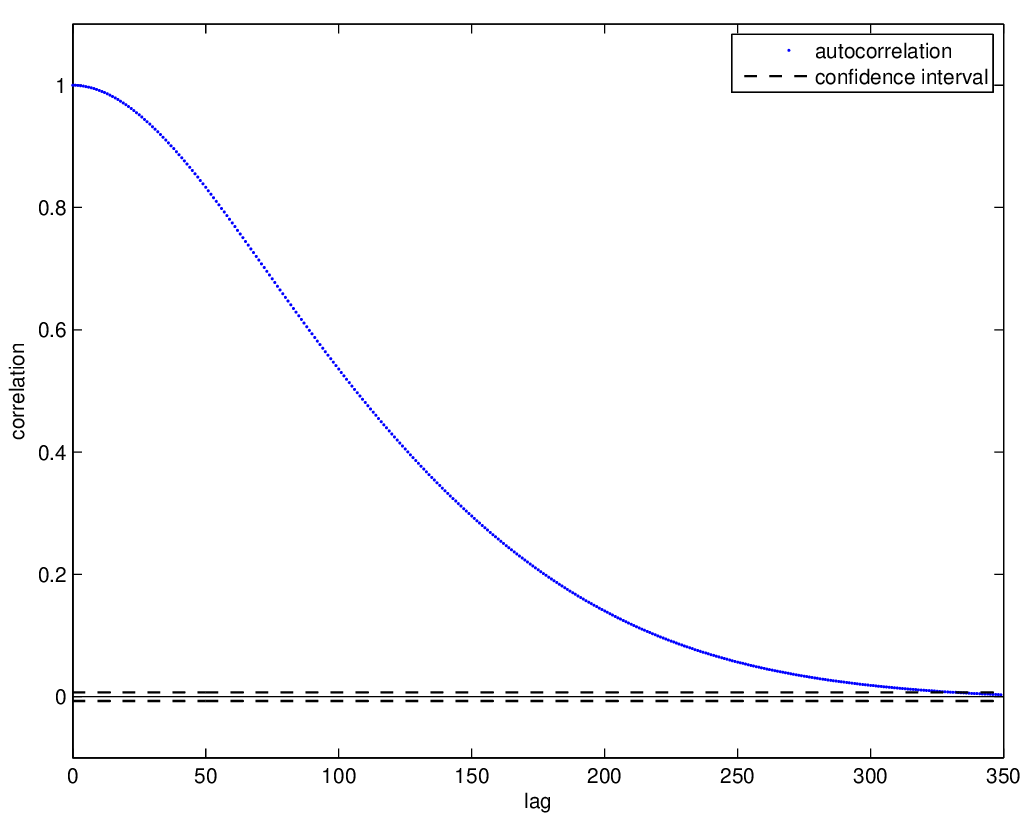}\\
\end{tabular}
\begin{scriptsize}
\parbox{0.925\textwidth}{\emph{Note}. We compute the empirical
autocorrelation function (acf) of the SPY returns. Panel A is for
the noisy returns (defined in Eq. \eqref{Eqn:Y}), while Panel B is for
the pre-averaged returns (defined in Eq. \eqref{Eqn:PreavgY}). The
acf is estimated daily and then averaged over time. The sample period
covers January, 2007 through March, 2011. The dashed line represents
a 95\% confidence interval for assessing the white noise null hypothesis.}
\end{scriptsize}
\end{center}
\end{figure}
In Panel A of Figure \ref{Fig:acf}, we plot the sample autocorrelation
function (acf) of the noisy return series, $\Delta_{i}^{n} Y$, up to lag 15.
There is a pronounced, significantly negative first-order autocorrelation of
about -0.35, which is consistent with a bid-ask bounce interpretation of
microstructure noise. The acf then increases and turns positive
at lag three. The fourth and fifth autocorrelation actually fall
outside the 95\% confidence bands based on a white noise
null hypothesis. Together with the subsequent monotonic decay of the
acf, this indicates that
noise operating at the tick-level is not i.i.d., as consistent with prior
work \citep*[e.g.,][]{hansen-lunde:06b}. We therefore
proceed using a pre-averaging window based on $\theta = 1$, which should
be a robust choice in light of the empirical evidence. The
acf of the corresponding pre-averaged returns, $\Delta \bar{Y}_{i}^{n}$,
is presented in Panel B of the figure. As expected,
there is a strong dependence in this series up to lag $k_{n}$.

\begin{figure}[!ht]
\caption{Time series of integrated variance estimates
and standard error.\label{Fig:ts}}
\begin{center}
\begin{tabular}{cc}
\footnotesize{Panel A: Integrated variance} &
\footnotesize{Panel B: Standard error}\\
\includegraphics[height = 6cm, width = 0.45\textwidth]{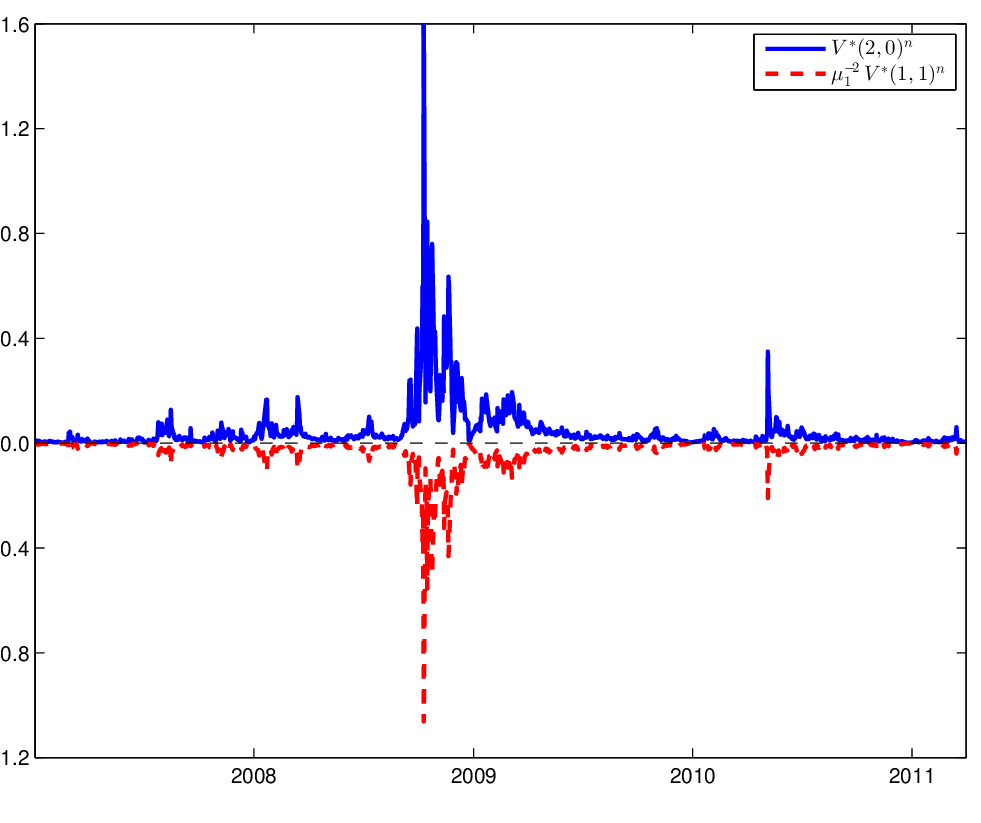} &
\includegraphics[height = 6cm, width = 0.45\textwidth]{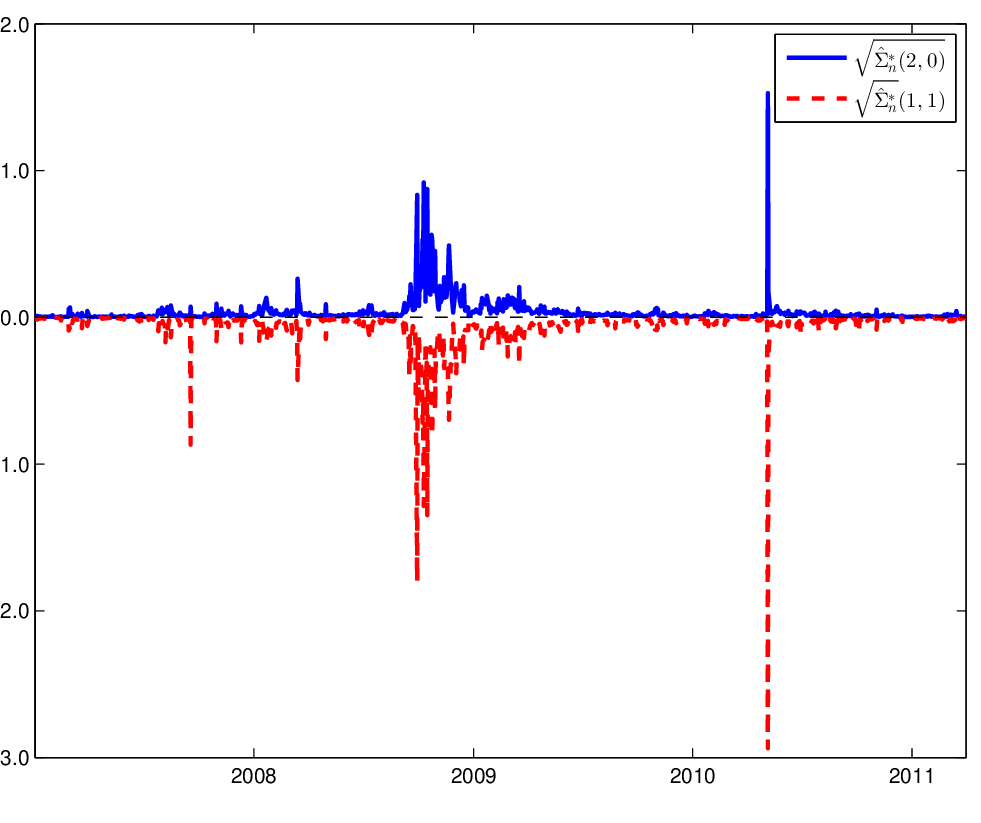}
\end{tabular}
\begin{scriptsize}
\parbox{0.925\textwidth}{\emph{Note}. In Panel A, we report the time series
of the daily $V^{*}(2,0)^{n}$ and $\mu_{1}^{-2} V^{*}(1,1)^{n}$ estimates.
The series were transformed into measures of the daily integrated variance,
as detailed in Eq. \eqref{Eqn:ivhat}. In Panel B, we plot the associated
standard errors, based on $\sqrt{\hat{ \Sigma}_{n}^{*}(2,0)}$ and $\sqrt{\hat{
\Sigma}_{n}^{*}(1,1)}$. We use $p = 10$ and $L = 15$ to implement the
subsampler. The sample period covers January, 2007 through March, 2011.
To facilitate the readability of the figure, the series based on $\mu_{1}^{-2}
V^{*}(1,1)^{n}$ and $\sqrt{\hat{ \Sigma}_{n}^{*}(1,1)}$ are reflected in the
$x$-axis.}
\end{scriptsize}
\end{center}
\end{figure}
We report the resulting time series of $V^{*}(2,0)^{n}$ and $\mu_{1}^{-2}
V^{*}(1,1)^{n}$ in Panel A of Figure \ref{Fig:ts}. Note that the graph for
$\mu_{1}^{-2} V^{*}(1,1)^{n}$ has been reflected in the $x$-axis. The
statistics are first computed day-by-day across the whole sample and
subsequently updated using Eq. \eqref{Eqn:plimPBV} to provide annualized
measures of the integrated variance (assuming 250 trading days p.a., on
average), i.e.:
\begin{equation}
\label{Eqn:ivhat}
\widehat{\text{IV}} = \frac{V^{*}(2,0)^{n}}{ \theta \psi_{2}^{k_{n}}} -
\frac{\psi_{1}^{k_{n}} \hat{ \omega}^{2}}{ \theta \psi_{2}^{k_{n}}}
\overset{p}{ \to} \int_{0}^{1} \sigma_{s}^{2} \text{d}s,
\end{equation}
with an identical transformation of $\mu_{1}^{-2} V^{*}(1,1)^{n}$.

The term $\displaystyle \frac{\psi_{1}^{k_{n}} \hat{ \omega}^{2}}{\theta
\psi_{2}^{k_{n}}}$
is a small bias
correction that compensates the pre-averaged bipower variation for the
residual effect of microstructure noise.\footnote{The bias correction
in Eq. \eqref{Eqn:ivhat} is only correct, when the noise is i.i.d.
Meanwhile, the estimator of $\omega^{2}$ we propose in Eq.
\eqref{equation:omega2_hat} is robust to the presence of a heteroscedastic
noise process, but it is generally not consistent for $\rho^{2}$ from
Section \ref{section:general_noise}, if the noise is
autocorrelated. As the current application is merely illustrative, we
ignore that issue here.}
$\hat{ \omega}^{2}$ is an estimate of the noise variance,
$\omega^{2}$. There are several estimators, which can serve the role of
$\hat{ \omega}^{2}$ \citep*[see, e.g.,][]{gatheral-oomen:10a}.
Among these, we adopt the one from \citet*{oomen:06a}, which relies on
the first-order autocorrelation of the noisy returns:
\begin{equation}
\label{equation:omega2_hat}
\hat{ \omega}^{2} = - \frac{1}{n - 1} \sum_{i = 1}^{n-1} \Delta_{i}^{n}Y
\Delta_{i+1}^{n}Y \overset{p}{ \to} \omega^{2}.
\end{equation}
We find a high degree of time-variation and persistence in the $\widehat{\text{IV}}$ series. The onset of the financial crisis and---in particular---the unprecedented volatility surrounding the collapse of Lehman Brothers in 2008 stands out visibly. To attach a measure of uncertainty to these estimates, Panel B charts the associated standard error estimate, $\sqrt{ \hat{ \Sigma}_{11,n}^{*}} / \theta \psi_{2}^{k_{n}}$ and $\sqrt{ \hat{ \Sigma}_{22,n}^{*}} / \theta \psi_{2}^{k_{n}}$, where the latter are based on the subsampler with $L = 15$ and $p = 10$. As expected, high levels of volatility spill over into the standard errors and tend to decrease estimation accuracy. The apparent outliers showing up in the standard error series in Q3, 2007 and Q2, 2010 correspond to single days with unusual market activity. The first is September 18, 2007, where the Federal Open Market Committee (FOMC) announced an unexpected reduction of its target for the federal funds rate by 50 basis points, while the second is May 6, 2010; the day of the S\&P 500 Flash Crash.

\begin{figure}[!ht]
\caption{Inference about $\ln \bigl( V^{*}(2,0)^{n} \bigr) - \ln \bigl(
\mu_{1}^{-2} V^{*}(1,1)^{n} \bigr)$.
\label{Fig:JV}}
\begin{center}
\begin{tabular}{cc}
\footnotesize{Panel A: Point estimate} & \footnotesize{Panel B: Confidence interval}\\
\includegraphics[height = 6cm, width = 0.45\textwidth]{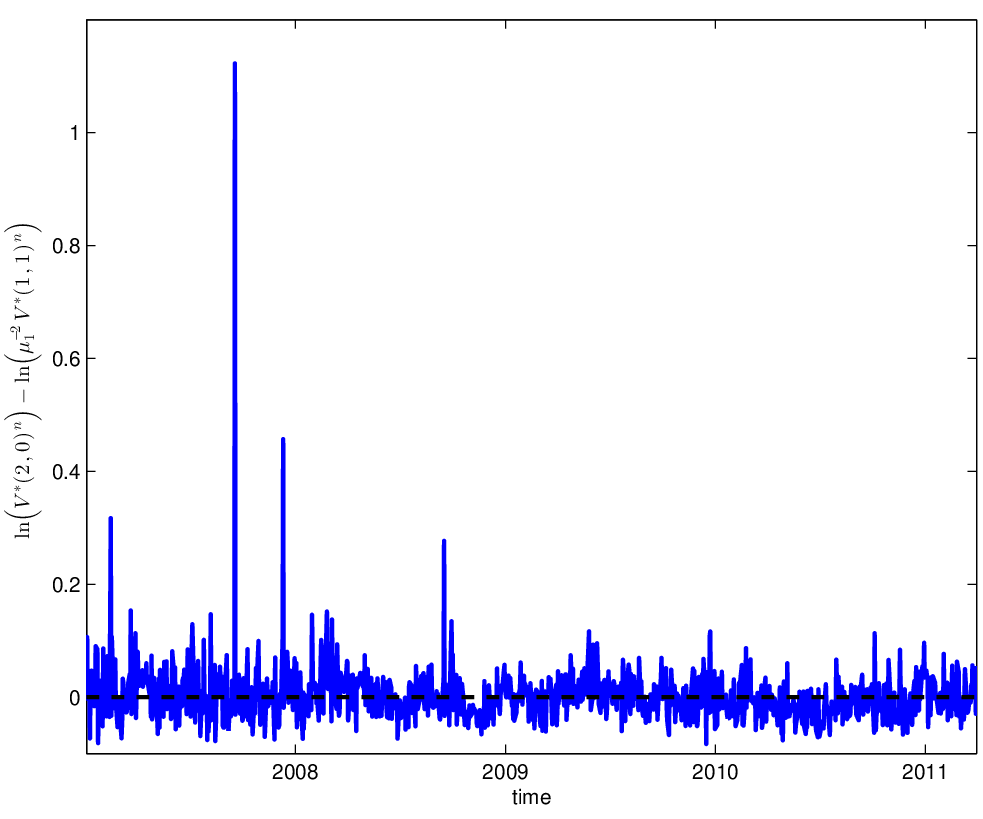} &
\includegraphics[height = 6cm, width = 0.45\textwidth]{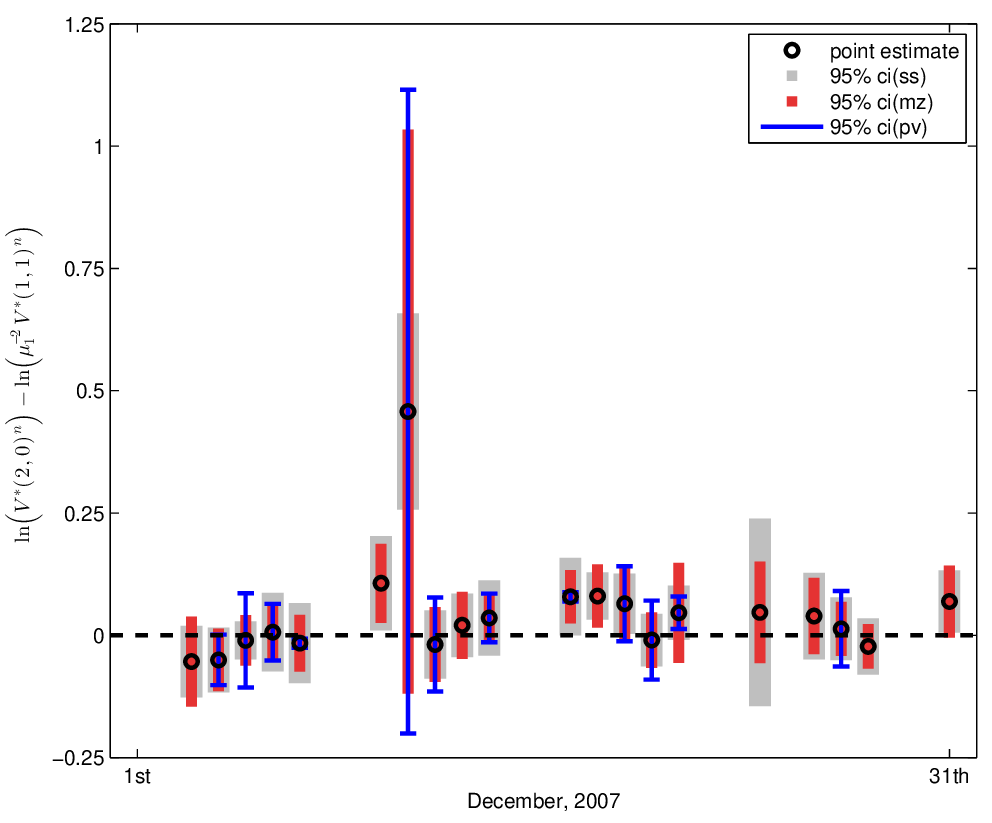}
\end{tabular}
\begin{scriptsize}
\parbox{0.925\textwidth}{\emph{Note}. We compute the difference $\ln \bigl(
V^{*}(2,0)^{n} \bigr) - \ln \bigl( \mu_{1}^{-2} V^{*}(1,1)^{n} \bigr)$, which
shows potential violations of the assumed continuous sample path model.
$V^{*}(2,0)^{n}$ and $V^{*}(1,1)^{n}$ are defined in Eq.
\eqref{Eqn:PreavgBV}. In Panel A, we plot the time series of the daily estimates
of this number across the sample, which covers January, 2007 through March, 2011.
In Panel B, we add a two-sided 95\% confidence interval for the log-difference
during the month
of December, 2007. The standard errors are found by applying the delta method to
the joint asymptotic distribution in Eq. \eqref{Eqn:CLTnoise}. We replace the
asymptotic covariance matrix by the
subsampler (wide, grey box), $\hat{ \Sigma}_{n}^{*}$,
the \citet*{mykland-zhang:17a} observed asymptotic variance computed as described
in the simulation section (narrow, red box), and the estimator proposed in
\citet*{podolskij-vetter:09a} (blue whisker), $\tilde{ \Sigma}_{n}^{*}$. The former
is implemented by setting
$p = 10$ and $L = 15$. In both panels, the dashed line
represents the limiting value in a pure diffusion model.}
\end{scriptsize}
\end{center}
\end{figure}

Turn next to Figure \ref{Fig:JV}, where we conduct inference about
$V^{*}(2,0)^{n} - \mu_{1}^{-2} V^{*}(1,1)^{n}$. In Panel A, we
compute the difference in the logarithms of these numbers, i.e. $\ln \bigl(
V^{*}(2,0)^{n} \bigr) - \ln \bigl( \mu_{1}^{-2} V^{*}(1,1)^{n} \bigr)$, which
tends to be less volatile compared to the raw statistic. As shown,
the majority of the point estimates hover around zero, which
is the theoretical limit in diffusion models. There are some notable
exceptions though, and in Panel B we examine one of these by zooming in on
the month of December, 2007. Alongside the statistic, we here report a
two-sided 95\% confidence interval. Standard
errors were found by applying the delta method (for the function $f(x,y) =
\ln(y) - \ln(\mu_{1}^{-2}x)$) to the joint asymptotic distribution in Eq.
\eqref{Eqn:CLTnoise} and then replacing the asymptotic variance
of the difference by a feasible estimate. In particular, we compare a
set of intervals based on the subsampler, $\hat{ \Sigma}_{n}^{*}$,
with those computed from the observed AVAR of
\citet*{mykland-zhang:17a}, which is again computed as explained in the
simulation section, and to the \citet*{podolskij-vetter:09a} estimator,
$\tilde{ \Sigma}_{n}^{*}$. If the latter leads to a negative variance
estimate, it is excluded. As consistent with
Table \ref{Table:Proportion}, this is a recurrent problem. Moreover,
if all three estimates are well-defined, they are often
closely aligned, but both the subsampler and observed AVAR
appear less erratic, while
$\tilde{ \Sigma}_{n}^{*}$ is often very narrow or wide. This is most
visible from the big discrepancy on December 11, 2007, marking a day
with yet another rate cut by the Fed. On this day, the condition number
of $\tilde{ \Sigma}_{n}^{*}$ is $\text{cond}( \tilde{ \Sigma}_{n}^{*}) =
452.36$, which suggest that the underlying covariance matrix estimate is
very fragile. The corresponding figure for the observed asymptotic variance
is $64.68$, which is again rather high, and indeed it also leads to a very
large confidence interval here. Meanwhile, the condition number of the
subsampler is more modest at
$\text{cond}( \tilde{ \Sigma}_{n}^{*}) = 15.16$, and it generally appears
to be the most stable over time.

To end the paper, we provide an alternative application, where
the subsampler is used to draw inference about
the amount of heteroscedasticity in noisy high-frequency data. To do
this, we start by computing the statistics $V^{*}(2,0)^{n}$ and
$V^{*}(4,0)^{n}$, i.e. the pre-averaged bipower variation based on the
parameter $q = (4,2)'$ (and $r = (0,0)'$). Taking these as input, we
appeal to Eq. \eqref{Eqn:plimPBV} by forming the estimate:
\begin{equation}
\label{Eqn:iqhat}
\widehat{\text{IQ}} = \frac{ \mu_{4}^{-1} V^{*}(4,0)^{n}}{ (\theta
\psi_{2}^{k_{n}})^{2}} - \frac{2 \psi_{2}^{k_{n}}
\psi_{1}^{k_{n}} \hat{ \omega}^{2}}{ (\theta
\psi_{2}^{k_{n}})^{2}}
\widehat{\text{IV}} - \frac{ (\psi_{1}^{k_{n}} \hat{
\omega}^{2})^{2}}{(\theta^{2}
\psi_{2}^{k_{n}})^{2}}
\overset{p}{ \to} \int_{0}^{1} \sigma_{s}^{4} \text{d}s,
\end{equation}
which converges to the so-called integrated quarticity. We then exploit
that $\sqrt{
\widehat{\text{IQ}}} / \widehat{\text{IV}} \overset{p}{ \to} \sqrt{
\int_{0}^{1} \sigma_{s}^{4} \text{d}s} / \int_{0}^{1} \sigma_{s}^{2}
\text{d}s \geq 1$, with equality if and only if $\sigma$ is constant.
Thus, an estimated ratio far above one suggests there is
significant variation in volatility within the day, while a ratio close to
one means $\sigma$ can be regarded, as if it was approximately
constant. This type of statistic has been exploited in earlier work to
test for the parametric form of volatility
\citep*[e.g.,][]{dette-podolskij-vetter:06a,vetter-dette:12a}, and it
is also finds use in the jump-testing literature
\citep*[e.g.,][]{barndorff-nielsen-shephard:06a,kolokolov-reno:16a}.

\begin{figure}[!ht]
\caption{Inference about $\ln \Big( \sqrt{ \widehat{ \text{IQ}}} / \widehat{
\text{IV}} \Big)$.
\label{Fig:constsigma}}
\begin{center}
\begin{tabular}{cc}
\footnotesize{Panel A: Point estimate} & \footnotesize{Panel B: Confidence interval}\\
\includegraphics[height = 6cm, width = 0.45\textwidth]{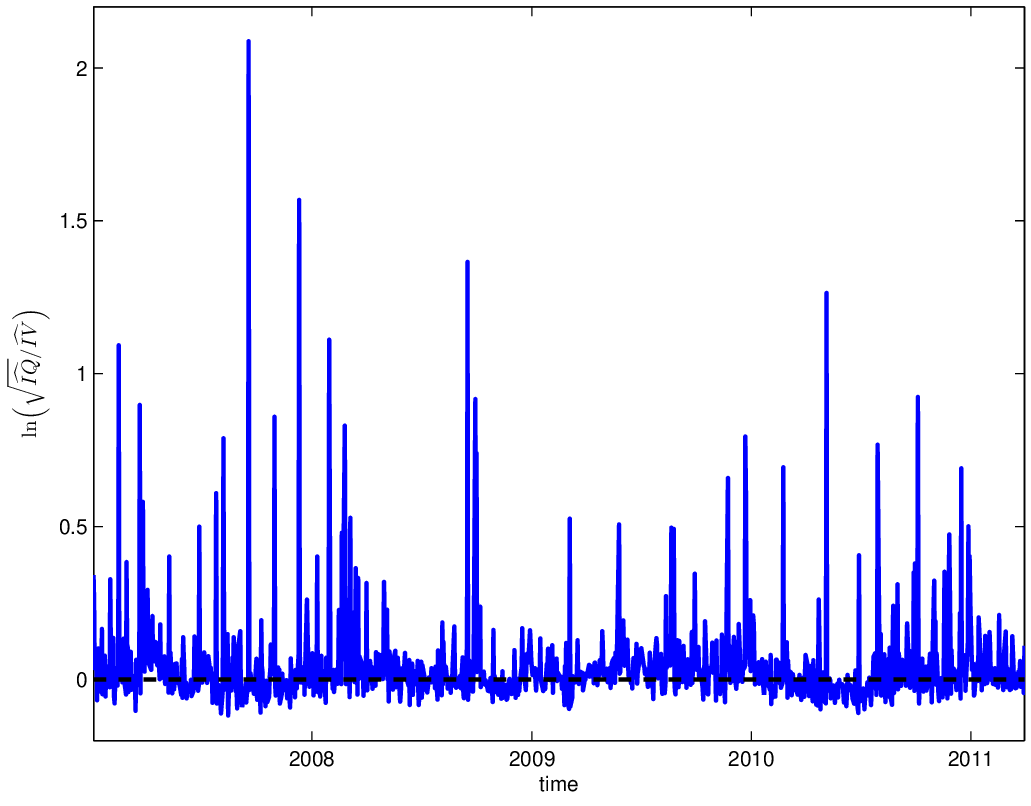} &
\includegraphics[height = 6cm, width = 0.45\textwidth]{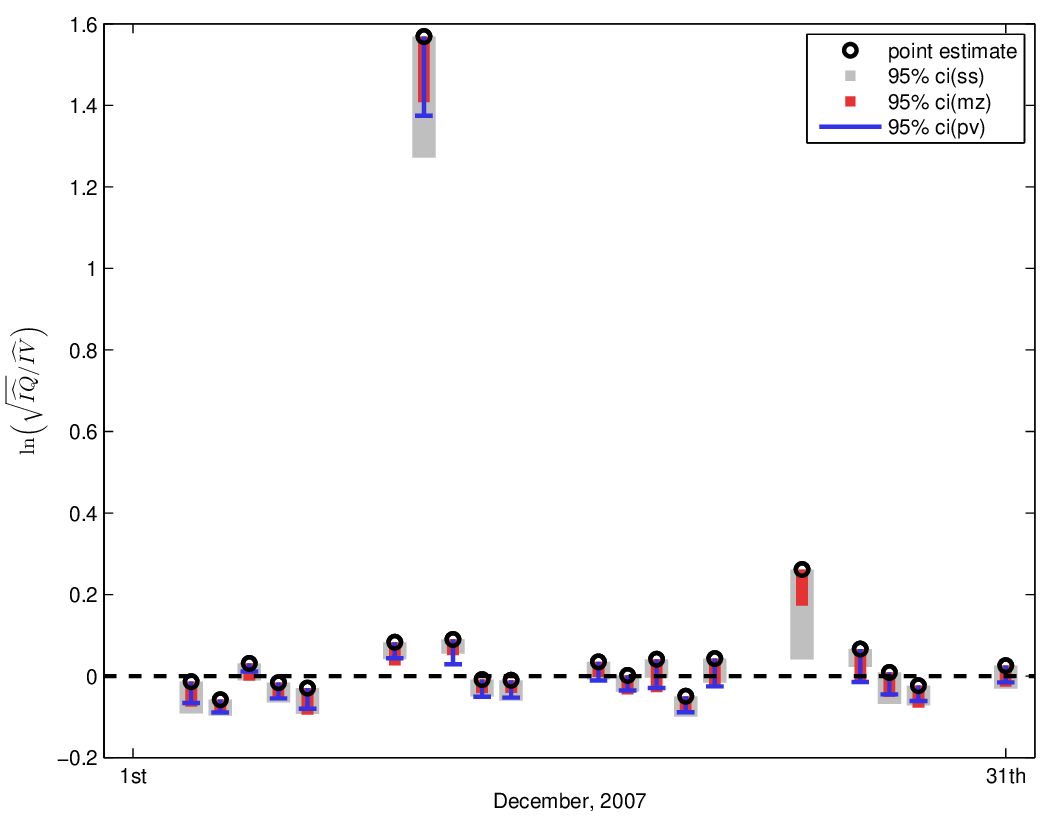}
\end{tabular}
\begin{scriptsize}
\parbox{0.925\textwidth}{\emph{Note}. We compute the log-ratio $\ln \Big( \sqrt{
\widehat{\text{IQ}}} / \widehat{ \text{IV}} \Big)$, which measures the degree of
heteroscedasticity in $\sigma$ within the day. $\widehat{ \text{IV}}$ and
$\widehat{ \text{IQ}}$ are defined in Eqs.
\eqref{Eqn:ivhat} and \eqref{Eqn:iqhat}. In Panel A, we plot the time series of the
daily estimates of this number across the sample, which covers January, 2007 through
March, 2011. In Panel B, we add a left-sided 95\% confidence interval for the log-ratio
during the month of December, 2007 (the upper end of the interval extending to
$+\infty$ is not shown).
The standard errors are found by applying the delta method to
the joint asymptotic distribution in Eq. \eqref{Eqn:CLTnoise}.
We replace the asymptotic covariance matrix by the
subsampler (wide, grey box), $\hat{ \Sigma}_{n}^{*}$,
the \citet*{mykland-zhang:17a} observed asymptotic variance computed as described
in the simulation section (narrow, red box), and the estimator proposed in
\citet*{podolskij-vetter:09a} (blue whisker), $\tilde{ \Sigma}_{n}^{*}$.
The former is implemented by setting
$p = 10$ and $L = 15$. In both panels, the dashed line
represents the limiting value in a constant volatility model.}
\end{scriptsize}
\end{center}
\end{figure}
The outcome of this exercise is collected in Figure \ref{Fig:constsigma}.
In Panel A, we plot the time series of the estimated log-ratio, i.e.
$\ln \Bigl( \sqrt{ \widehat{\text{IQ}}} / \widehat{ \text{IV}} \Bigr)$.
We again use a log-transformation in order to improve the scaling of the
results and facilitate interpretation of the graphs. The
log-ratio should cluster around zero, if volatility is constant. We
observe an extreme degree of fluctuation in this statistic over time,
and, as anticipated, there are many days, where the log-ratio
is large. Still, we also find a decent portion of estimates, which are
close to zero. Small negative numbers can be observed
as a result of sampling variation. In Panel B, we complement the analysis
by looking at the month of December, 2007. We add a left-sided 95\%
confidence interval for $\ln \Bigl( \sqrt{ \widehat{\text{IQ}}} /
\widehat{ \text{IV}} \Bigr)$, where standard errors are again
retrieved via the delta method and three estimates of the asymptotic
covariance matrix. The interpretation is that on some days, such as
the day of the FOMC meeting, volatility is changing a lot, while on others
it is not moving much, which is consistent with the findings of
\citet*[][Figure 1]{mykland-zhang:17a}. Of course, the latter finding can
also arise, if the high-frequency data is not informative
enough to discriminate random sampling errors from genuine parameter
variation in $\sigma$, which could be difficult in
times of severe stress in financial markets. In this respect, it is important
to acknowledge the limitations of the subsampler, which, albeit
consistent, is itself subject to a substantial degree of sampling uncertainty
in practice.

\section{Conclusion}

In this paper, we propose a subsample estimator of the asymptotic conditional
covariance matrix of bipower variation. The theory is developed for
diffusion models both with and without microstructure noise. We show our
estimator is consistent and, under suitable conditions, we find an error
decomposition of our statistic, from which we are able to derive its rate of
convergence. To complement the theory, we conduct a Monte Carlo study, which
documents how the subsampler can be used to draw feasible
inference about pre-averaged bipower variation in the presence of noise.
The results are compelling and show that the subsampler delivers accurate
inference and that it is robust to the choice of its tuning parameters,
i.e. the number of subsamples and the block length. We concluded with
an empirical analysis that provides an illustration of a few of the
directions, in which the subsampler can be applied to real
high-frequency data.

In future research, several extensions of the current paper are possible.
First, as we noted here, we can use the subsampler as an ingredient to test for
the presence of jumps in noisy high-frequency data. This requires the
subsampler to be robust against such jumps, while also being resistant to
the influence of microstructure noise, which the current implementation is
not. This is a line of research that we are currently looking into ourselves in
a companion paper. Second, there is some work to be done on the optimal
selection of tuning parameters. Finally, as we stressed in the main text,
pre-averaging can not only be used to consistently estimate integrated variance
and quarticity, but also more general functionals of volatility. Meanwhile,
subsampling facilitates estimation of the asymptotic
covariance matrix of such pre-averaged high-frequency statistics under mild
conditions on the noise process. The combination is therefore potent, and we
envision that a lot of empirical papers can benefit from and
follow in the aftermath of our work.

\clearpage
% LIST OF REFERENCES
\renewcommand{\baselinestretch}{1.0}
\small
\bibliography{userref}

\pagebreak

\appendix

\renewcommand{\baselinestretch}{1.6}
\normalsize

\section{Appendix of proofs}

In this appendix, we prove the theoretical results that are found in the main
text. Throughout the proofs, all positive constants are denoted by
$C$ (or $C_{p}$, if it depends on a parameter $p$), although they may change from
line to line.  It follows from a standard localization procedure that we can
assume that $a, \sigma, \tilde{a}, \tilde{ \sigma}$ and $\tilde{v}$ are bounded,
e.g., BGJPS6. Furthermore, under Assumption ($\mathbf{V}$), we also assume without
loss of generality that
\begin{align}
\label{Eqn:sigmanew}
\sigma_{t} = \sigma_{0} + \int_{0}^{t} \tilde{a}_{s}\text{d}s + \int_{0}^{t}
\tilde{ \sigma}_{s}\text{d}W_{s} + \int_{0}^{t} \tilde{v}_{s}\text{d}B_{s} +
\int_{0}^{t} \int_{E} \tilde{ \delta} (s,x) ( \tilde{ \mu} - \tilde{ \nu})
( \text{d}s, \text{d}x),
\end{align}
and it holds that
\begin{equation*}
\sup_{\omega \in \Omega, s \geq 0} | \tilde{ \delta}( \omega,s,x)| \leq
\tilde{ \psi}(x) \leq C \qquad \text{and}
\qquad \int_{E}  \tilde{ \psi}^2(x) \tilde{F}( \text{d}x) <\infty.
\end{equation*}
By Assumption ($\mathbf{J}$), we can also apply this localization to the jump
component of $X$. Moreover, due to the polarization identity, we can (and shall)
assume throughout that $m = 1$, so that all
statistics are 1-dimensional.

\subsection{Proof of Theorem \ref{c0}}
We start by introducing some notation. We define:
\begin{equation}
\label{c2}
\alpha_{i}^{n} = \sqrt{n} \sigma_{ \frac{i-1}{n}} \Delta_{i}^{n}W \qquad \text{and} \qquad
\chi_{i}^{n} = f( \alpha_{i}^{n}) - \mathbb{E} \left[ f\left(\alpha_{i}^{n}
\right) \mid \mathcal{F}_{ \frac{i-1}{n}} \right].
\end{equation}
We note that $\alpha_{i}^{n}$ is a first-order approximation of $\sqrt{n} \Delta_{i}^{n} X$.
%Also, we set
%\begin{equation*}
%\chi_{i}^{n} = f( \alpha_{i}^{n}) - \mathbb{E} \left[ f\left(\alpha_{i}^{n}
%\right) \mid \mathcal{F}_{ \frac{i-1}{n}} \right].
%\end{equation*}
The next lemma is shown in BGJPS6.

\begin{lemma}
\label{Lemma:Estimate}
Assume that $p \geq 2$ and let $h$ be any function of polynomial growth. Then,
\begin{equation}
\label{c3}
\mathbb{E} \bigl[ | \alpha_{i}^{n} |^{p} \bigr] + \mathbb{E} \bigl[ | \sqrt{n}
\Delta_{i}^{n} X |^{p} \bigr] + \mathbb{E} \bigl[ | h( \alpha_{i}^{n}) |^{p}
\bigr] + \mathbb{E} \bigl[ | \chi_{i}^{n}|^{p} \bigr] \leq C_{p},
\end{equation}
and, if $\sigma$ is continuous,
\begin{equation}
\label{c5}
\mathbb{E} \bigl[ | \sqrt{n} \Delta_{i}^{n} X - \alpha_{i}^{n} |^{p} \bigr]
\leq C_{p} n^{-p/2}.
\end{equation}
\end{lemma}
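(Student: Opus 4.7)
My plan is to exploit the standing localization in the appendix, which makes $a,\sigma,\tilde a,\tilde\sigma,\tilde v$ (and the jump intensity via $\tilde\psi$) uniformly bounded, and then to reduce both estimates to textbook applications of Hölder's inequality and the Burkholder--Davis--Gundy (BDG) inequality. Everything that needs to be controlled is either a bounded Lebesgue integral over an interval of length $1/n$ or a stochastic integral over such an interval, so the natural scale $n^{-1/2}$ will arise automatically from BDG.

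For \eqref{c3} I would treat the four summands one at a time. Since $\sqrt n \Delta_i^n W \sim \mathcal N(0,1)$ and $|\sigma_{(i-1)/n}|\le C$, the bound on $\mathbb E[|\alpha_i^n|^p]$ is immediate from Gaussian moments. For $\mathbb E[|\sqrt n\Delta_i^n X|^p]$ I would split
\[
\sqrt n\Delta_i^n X=\sqrt n\!\int_{(i-1)/n}^{i/n}\! a_s\,\mathrm ds+\sqrt n\!\int_{(i-1)/n}^{i/n}\!\sigma_s\,\mathrm dW_s,
\]
bound the drift piece deterministically by $C n^{-1/2}$, and handle the martingale piece with BDG plus Hölder in $s$, yielding a uniform constant in $p$. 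The bound on $\mathbb E[|h(\alpha_i^n)|^p]$ follows by combining polynomial growth of $h$ with the moment bound on $\alpha_i^n$, and the bound on $\mathbb E[|\chi_i^n|^p]$ comes from conditional Jensen applied to $f(\alpha_i^n)$, whose $L^p$ norm is already controlled by the $h$-case.

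For \eqref{c5}, the key decomposition is
\[
\sqrt n\Delta_i^n X-\alpha_i^n=\sqrt n\!\int_{(i-1)/n}^{i/n}\! a_s\,\mathrm ds+\sqrt n\!\int_{(i-1)/n}^{i/n}\!\bigl(\sigma_s-\sigma_{(i-1)/n}\bigr)\,\mathrm dW_s .
\]
The drift term is again $O(n^{-1/2})$ deterministically. For the diffusive term I would apply BDG followed by Jensen in $s$ to obtain
\[
\mathbb E\Bigl[\Bigl|\!\int_{(i-1)/n}^{i/n}\!(\sigma_s-\sigma_{(i-1)/n})\,\mathrm dW_s\Bigr|^{p}\Bigr]\le C_p n^{-p/2+1}\!\int_{(i-1)/n}^{i/n}\!\mathbb E\bigl[|\sigma_s-\sigma_{(i-1)/n}|^{p}\bigr]\,\mathrm ds,
\]
and then use the semimartingale form \eqref{Eqn:sigmanew} of $\sigma$ together with BDG a second time (applied separately to $\int\tilde a_u\,\mathrm du$, $\int\tilde\sigma_u\,\mathrm dW_u$, $\int\tilde v_u\,\mathrm dB_u$) to get the increment bound $\mathbb E[|\sigma_s-\sigma_{(i-1)/n}|^p]\le C_p\,(s-(i-1)/n)^{p/2}$. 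Plugging this back yields $C_p n^{-p}$ before multiplying by $\sqrt n^{p}$, which produces the announced rate $C_p n^{-p/2}$.

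The only mild subtlety is the moment bound on $\sigma_s-\sigma_{(i-1)/n}$ when $\sigma$ is only assumed continuous in the sense of Assumption (\textbf{V}): one must verify that the localization genuinely removes the large jumps and that the compensated small-jump martingale, if present, still satisfies BDG with the same $(s-(i-1)/n)^{p/2}$ scaling---under Theorem \ref{c0} Assumption (\textbf{H}) rules the jump part out entirely, so the continuous-coefficients bound is clean. Everything else is a bookkeeping exercise in powers of $n$, and the constants $C_p$ can be taken uniform in $i$ thanks to the global bounds on the coefficients.
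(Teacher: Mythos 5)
Your argument is correct, but note that the paper itself contains no proof of this lemma: it is imported from BGJPS6 (``The next lemma is shown in BGJPS6''), so the only in-paper ingredient is the localization announced at the start of the appendix, which you use in exactly the intended way. Your self-contained derivation is the standard one: for \eqref{c3}, boundedness of $\sigma_{(i-1)/n}$ plus the fact that $\sqrt{n}\Delta_i^n W$ is standard normal and independent of $\mathcal{F}_{(i-1)/n}$ gives the $\alpha_i^n$ bound, BDG and H\"older give the $\sqrt{n}\Delta_i^n X$ bound, polynomial growth handles $h(\alpha_i^n)$, and Minkowski/conditional Jensen handles $\chi_i^n$; for \eqref{c5}, your decomposition reduces everything to the increment estimate $\mathbb{E}\bigl[|\sigma_s-\sigma_u|^{p}\bigr]\le C_p|s-u|^{p/2}$, which is precisely the inequality the paper records as Eq.\ \eqref{Eqn:Burkholder} for continuous It\^o semimartingales with (localized) bounded coefficients. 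Your closing caveat is also the right one: the $|s-u|^{p/2}$ scaling genuinely needs continuity of $\sigma$, since a compensated jump martingale component would only contribute order $|s-u|$ for $p\ge 2$; this is exactly why \eqref{c5} is stated under the continuity hypothesis and used under Assumption (\textbf{H}), so no gap remains.
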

In the proofs, we make use of Burkholder's inequality several times. This
inequality basically says that for any continuous process $Z$ of the form in Eq.
\eqref{Eqn:X} and for any $p \geq 2$:
\begin{equation}
\label{Eqn:Burkholder}
\mathbb{E} \bigl[ |Z_{t} - Z_{s}|^{p} \bigr] \leq C_{p} |t - s|^{p/2}.
\end{equation}
Following the comments above, the definition of $\hat{ \Sigma}_{n}$
here collapses to:
\begin{equation*}
\hat{ \Sigma}_{n} = \frac{1}{L} \sum_{l = 1}^{L} \biggl( \sqrt{
\frac{n}{L}} \Bigl( V_{l}(f)^{n} - V(f)^{n} \Bigr) \biggr)^{2} ,
\end{equation*}
To estimate $\hat{ \Sigma}_{n} - \Sigma$, we make the following approximations:
\begin{align*}
\Sigma_{n} &= \frac{1}{L} \sum_{l = 1}^{L} \biggl( \sqrt{\frac{n}{L}} \Bigl(
V_{l}(f)^{n} - V(f) \Bigr) \biggr)^{2}, \qquad
Q_{n} = \frac{1}{n} \sum_{l = 1}^{L} \Biggl( \sum_{i = 1}^{n/L} \chi_{(i - 1)L
+ l}^{n} \Biggr)^{2}, \\[0.25cm]
U_{n} &= \frac{1}{n} \sum_{l = 1}^{L} \sum_{i = 1}^{n/L} \bigl( \chi_{(i -
1)L + l}^{n} \bigr)^{2}, \qquad
R_{n} = \frac{1}{n} \sum_{l = 1}^{L} \sum_{i = 1}^{n/L} \mathbb{E} \left[
\bigl( \chi_{(i - 1)L + l}^{n} \bigr)^{2} \mid \mathcal{F}_{ \frac{(i - 1)L
+ l - 1}{n}} \right].
\end{align*}
The goal is then to find an optimal upper bound on the error entailed by these
approximations. In particular, the proof is completed, if we can show that
under the conditions of Theorem \ref{c0}:
\begin{align*}
\mathbb{E} \bigl[ | \Sigma_{n} - Q_{n} | \bigr] &\leq C \biggl( \frac{L}{n} +
\frac{1}{ \sqrt{n}} \biggr), \tag{i} \\[0.25cm]
\mathbb{E} \bigl[ | Q_{n} - U_{n} | \bigr] &\leq \frac{C}{\sqrt{L}}, \tag{ii}
\\[0.25cm]
\mathbb{E} \bigl[ | U_{n} - R_{n} | \bigr] &\leq \frac{C}{\sqrt{n}}, \tag{iii}
\\[0.25cm]
\mathbb{E} \bigl[ | R_{n} - \Sigma | \bigr] &\leq \frac{C}{n}, \tag{iv}
\\[0.25cm]
\mathbb{E} \bigl[ |\hat{ \Sigma}_{n} - \Sigma_{n} | \bigr] &\leq \frac{C}{L}.
\tag{v}
\end{align*}
We prove these estimates in the order (iii), (v), (ii), (iv), and (i); the last
step being the hardest.

\begin{proof} [Proof of (iii)]
This part is almost trivial. We note that
\begin{equation*}
U_{n} - R_{n} = \frac{1}{n} \sum_{i = 1}^{n} \bigg( \bigl( \chi_{i}^{n} \bigr)^{2} -
\mathbb{E} \left[ \bigl( \chi_{i}^{n} \bigr)^{2} \mid \mathcal{F}_{ \frac{i -
1}{n}} \right] \bigg),
\end{equation*}
which is a sum of martingale differences. Lemma \ref{Lemma:Estimate} implies
that $\mathbb{E} \Bigl[ \bigl( \chi_{i}^{n} \bigr)^{4} \Bigr] \leq C$. Then,
we are done, as
\begin{equation}
\label{cc5}
\mathbb{E} \bigl[ | U_{n} - R_{n} |^{2} \bigr] \leq \frac{C}{n}.
\end{equation} \qed
\end{proof}

\begin{proof}[Proof of (v)]
First, the identity
\begin{equation*}
\sum_{l = 1}^{L} V_{l}(f)^{n} = L V(f)^{n},
\end{equation*}
together with some simple algebra imply that
\begin{equation*}
\hat{\Sigma}_{n} - \Sigma_{n} = - \frac{n}{L} \bigl( V(f)^{n} -
V(f) \bigr)^{2}.
\end{equation*}
%Note that this implies that in the frictionless setting, the conditional
%expectation of $\hat{\Sigma}_{n}$ is smaller than that of the infeasible
%statistic $\Sigma_{n}$ (which should be closer to $\Sigma$), so the
%estimator has a slight downward bias, which tends to dissipate as $L$
%increases.
Then, we finish the proof of (v) by using that
\begin{equation*}
\mathbb{E} \Bigl[ \bigl( V(f)^{n} - V(f) \bigr)^{2} \Bigr] \leq \frac{C}{n},
\end{equation*}
which has already been established in prior work cited above. \qed
\end{proof}

\begin{proof}[Proof of (ii)]
We define $S_{l}^{m} = \sum_{i = 1}^{m} \chi_{(i - 1)L + l}^{n}$ and
$T_{l}^{m} = \sum_{i = 1}^{m} \bigl(\chi_{(i - 1)L + l}^{n} \bigr)^2$ and observe that
\begin{align*}
Q_{n} - U_{n} = \frac{1}{n} \sum_{l = 1}^{L} A_{l}^{n},
\end{align*}
where, for each $l$, we use the notation
\begin{align*}
A_{l}^{n}=\left( S_{l}^{n/L} \right)^{2} - T_{l}^{n/L} = \sum_{i,j=1, i \neq
j}^{n/L} \chi_{(i-1)L+l}^{n} \chi_{(j-1)L+l}^{n}.
\end{align*}
As $A_{l_{1}}^n$ and $A_{l_{2}}^n$ are uncorrelated for every $l_{1} \neq
l_{2}$, we find that
\begin{align}
\begin{split}
\label{c14}
\mathbb{E} \bigl[ (Q_{n} - U_{n})^{2} \bigr] &= \frac{1}{n^{2}}
\sum_{l = 1}^{L} \mathbb{E} \bigl[ (A_{l}^{n})^{2} \bigr] \\[0.25cm]
&\leq \frac{C}{n^{2}} \sum_{l = 1}^{L} \left( \mathbb{E} \bigl[
(S_{l}^{n/L})^{4} \bigr] + \mathbb{E} \bigl[ (T_{l}^{n/L})^{2} \bigr]
\right).
\end{split}
\end{align}
We note that $(S_{l}^{m})_{m = 1}^{n/L}$ is a discrete martingale for each
fixed $l$. Then, the discrete Burkholder and Cauchy-Schwarz inequalities
together with Lemma \ref{Lemma:Estimate} imply that
\begin{equation}
\label{cc16}
\mathbb{E} \Bigl[ \bigl(S_{l}^{n/L} \bigr)^{4} \Bigr] \leq C \mathbb{E} \Biggl[
\biggl( \sum_{i=1}^{n/L} \bigl( \chi_{(i - 1)L + l}^{n} \bigr)^{2} \biggr)^{2}
\Biggr] \leq C \left( \frac{n}{L} \right)^{2}.
\end{equation}
We conclude the proof by using Lemma \ref{Lemma:Estimate} once again, but for
the $T_{l}^{n/L}$ term. This yields:
\begin{equation*}
\mathbb{E} \bigl[ (Q_{n} - U_{n})^{2} \bigr] \leq \frac{C}{L}.
\end{equation*} \qed
\end{proof}

\begin{proof}[Proof of (iv)]
We start by noting that
\begin{equation*}
\mathbb{E} \left[ \bigl( \chi_{(i - 1)L + l}^{n})^{2} \mid \mathcal{F}_{
\frac{(i - 1)L + l - 1}{n}} \right] = \rho_{ \sigma_{ \frac{(i - 1)L + l -
1}{n}}}(f^{2}) - \rho_{ \sigma_{ \frac{(i - 1)L + l - 1}{n}}}^{2}(f),
\end{equation*}
so $R_{n}$ is a Riemann approximation of $\Sigma$, because
\begin{equation*}
R_{n} = \frac{1}{n} \sum_{i = 1}^{n} \rho_{ \sigma_{ \frac{i - 1}{n}}}(f^{2})
- \rho^{2}_{ \sigma_{ \frac{i - 1}{n}}}(f).
\end{equation*}
In the remainder of the proof, we suppress the dependence on $f$ by defining the
function $\tau(x) = \rho_{x}(f^{2}) - \rho^{2}_{x}(f)$. Now, we define the mapping:
\begin{equation*}
\phi(x) \equiv \rho_{x}(f) = \int_{\mathbb{R}} \frac{1}{\sqrt{ 2 \pi x^{2}}}
\exp \left( -\frac{y^{2}}{2x^{2}} \right) f(y) \text{d}y.
\end{equation*}
We have $\phi \in C^{3}( \mathbb{R})$, as $f \in C^{3}(\mathbb{R}).$ Hence, it follows that
$\tau \in C^{3}(\mathbb{R})$.
Then, Taylor's theorem and the inequality in Eq. \eqref{Eqn:Burkholder} applied
to $\sigma$ mean that
\begin{equation*}
R_{n} - \Sigma = \sum_{i = 1}^{n} \int_{ \frac{i - 1}{n}}^{ \frac{i}{n}} \Bigl[
\tau_{ \sigma_{ \frac{i - 1}{n}}}- \tau_{ \sigma_{s}}\Bigr] \text{d}s
= \sum_{i = 1}^{n} \mu_{i}^{n}(1)+\sum_{i = 1}^{n} \mu_{i}^{n}(2) + O_{p}
\biggl( \frac{1}{n} \biggr),
\end{equation*}
where
\begin{align*}
\mu_{i}^{n}(1) &= -\tau' \bigl( \sigma_{ \frac{i - 1}{n}} \bigr) \int_{
\frac{i - 1}{n}}^{ \frac{i}{n}} \biggl( \int_{ \frac{i - 1}{n}}^{s}
\tilde{a}_{u} \text{d}u \biggr) \text{d}s \quad \text{and}
\\[1.5 ex]
\mu_{i}^{n}(2) &= -\tau' \bigl( \sigma_{ \frac{i - 1}{n}} \bigr) \int_{
\frac{i - 1}{n}}^{ \frac{i}{n}} \biggl( \int_{ \frac{i - 1}{n}}^{s}
\tilde{\sigma}_{u} \text{d}W_{u}+\int_{ \frac{i - 1}{n}}^{s}
\tilde{v}_{u} \text{d}B_{u}
 \biggr) \text{d}s.
\end{align*}
We find that:
\begin{equation}
\label{e2}
\mathbb{E} \bigl[ | \mu_{i}^{n}(1)|^{2} \bigr] \leq \frac{C}{n^{4}} \qquad
\text{and} \qquad \mathbb{E} \bigl[ | \mu_{i}^{n}(2)|^{2} \bigr] \leq
\frac{C}{n^{3}}.
\end{equation}
Using the martingale difference property, this implies that
\begin{equation}
\label{e3}
\mathbb{E} \biggl[ \bigl| \sum_{i = 1}^{n} \mu_{i}^{n}(2) \bigr|^{2} \biggr] =
\mathbb{E} \biggl[ \sum_{i = 1}^{n} | \mu_{i}^{n}(2)|^{2} \biggr] \leq
\frac{C}{n^{2}}.
\end{equation}
The result then follows from Eqs. \eqref{e2} -- \eqref{e3} and the
Cauchy-Schwarz inequality. \qed
\end{proof}
To prove Eq. (i), we need a bit of preparation. We let
\begin{equation*}
\tilde{V}_{l}(f)^{n} = \frac{1}{n/L} \sum_{i = 1}^{n/L} f( \alpha_{(i - 1)L +
l}^{n}) \qquad \text{and} \qquad \hat{V}_{l}(f)^{n} = \frac{1}{n/L} \sum_{i =
1}^{n/L} \mathbb{E} \Bigl[ f( \alpha_{(i - 1)L + l}^{n}) \mid \mathcal{F}_{
\frac{(i-1)L+l-1}{n}} \Bigr].
\end{equation*}
With the decomposition
\begin{equation*}
V_{l}(f)^{n} - V(f) = \Bigl( V_{l}(f)^{n} - \tilde{V}_{l}(f)^{n} \Bigr) +
\Bigl( \tilde{V}_{l}(f)^{n} - \hat{V}_{l}(f)^{n} \Bigr) + \Bigl(
\hat{V}_{l}(f)^{n} - V(f) \Bigr),
\end{equation*}
we get that
\begin{equation*}
\Sigma_{n} - Q_{n} = D_{n}^{(1)} + D_{n}^{(2)} + D_{n}^{(3)} + D_{n}^{(4)},
\end{equation*}
where
\begin{align*}
D_{n}^{(1)} &= \frac{2n}{L^{2}} \sum_{l = 1}^{L} \Bigl( V_{l}(f)^{n} -
\tilde{V}_{l}(f)^{n} \Bigr) \Bigl( \tilde{V}_{l}(f)^{n} - V(f) \Bigr),
\\[0.25cm]
D_{n}^{(2)} &= \frac{2n}{L^{2}} \sum_{l = 1}^{L} \Bigl( \hat{V}_{l}(f)^{n} -
V(f) \Bigr) \Bigl( \tilde{V}_{l}(f)^{n} - \hat{V}_{l}(f)^{n} \Bigr),
\\[0.25cm]
D_{n}^{(3)} &= \frac{n}{L^{2}} \sum_{l = 1}^{L} \Bigl( V_{l}(f)^{n} -
\tilde{V}_{l}(f)^{n} \Bigr)^{2}, \\[0.25cm]
D_{n}^{(4)} &= \frac{n}{L^{2}} \sum_{l = 1}^{L} \Bigl( \hat{V}_{l}(f)^{n} -
V(f) \Bigr)^{2}.
\end{align*}
To provide an estimate of these terms, we exploit the following preliminary
result.
\begin{lemma}
\label{l1}
Assume that the conditions of Theorem \ref{c0} are fulfilled. Then,
uniformly in $l$:
\begin{align}
\mathbb{E} \left[ | \tilde{V}_{l}(f)^{n} - \hat{V}_{l}(f)^{n}|^{2} \right]
&\leq C \frac{L}{n}, \tag{A.2a} \\[0.25cm]
\mathbb{E} \left[ | \hat{V}_{l}(f)^{n} - V(f)|^{2} \right] &\leq C
\frac{L^{2}}{n^{2}}, \tag{A.2b} \\[0.25cm]
\mathbb{E} \left[ |V_{l}(f)^{n} - \tilde{V}_{l}(f)^{n}|^{2} \right]
&\leq C \frac{L}{n^{2}}. \tag{A.2c}
\end{align}
\end{lemma}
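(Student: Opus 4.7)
My plan is to prove (A.2a)--(A.2c) in turn, unified by the observation that within each subsample the indexed quantities $(\cdot)_{(i-1)L+l}$ for $i=1,\ldots,n/L$ can be arranged into martingale difference sequences with respect to the coarsened filtration $\mathcal{G}^{l}_{i} = \mathcal{F}_{((i-1)L+l)/n}$. This orthogonality replaces the naive Cauchy--Schwarz bound $(n/L)\cdot(\text{individual }L^{2})$ by a sharper sum-of-variances bound, which, combined with the normalizing prefactor $(L/n)^{2}$, recovers the three advertised rates.

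For (A.2a), observe that $\tilde{V}_{l}(f)^{n} - \hat{V}_{l}(f)^{n} = (L/n)\sum_{i=1}^{n/L} \chi_{(i-1)L+l}^{n}$, and the $\chi_{(i-1)L+l}^{n}$ are $\mathcal{G}^{l}$-martingale differences by construction (since $\mathcal{F}_{((i-1)L+l-1)/n} \supseteq \mathcal{G}^{l}_{i-1}$ when $L\geq 1$). Lemma \ref{Lemma:Estimate} gives $\mathbb{E}|\chi_{(i-1)L+l}^{n}|^{2} \leq C$, so orthogonality yields $\mathbb{E}|\sum_{i} \chi_{(i-1)L+l}^{n}|^{2} \leq C(n/L)$, and multiplying by $(L/n)^{2}$ gives $CL/n$.

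The bound (A.2b) parallels step (iv) above but at the coarser mesh $L/n$ in place of $1/n$. Setting $\phi(x) = \rho_{x}(f)$, which is smooth, I write $\hat{V}_{l}(f)^{n} - V(f)$ as a sum over $n/L$ subintervals of length $L/n$ of $\int [\phi(\sigma_{t_{i}^{l}}) - \phi(\sigma_{s})]\text{d}s$, plus a negligible boundary term. A first-order Taylor expansion of $\phi$ combined with the semimartingale decomposition of $\sigma$ from Assumption $(\mathbf{H})$ isolates a leading martingale increment of the form $-\phi'(\sigma_{t_{i}^{l}})\int_{t_{i}^{l}}^{t_{i+1}^{l}}(t_{i+1}^{l}-u)(\tilde{\sigma}_{u}\text{d}W_{u}+\tilde{v}_{u}\text{d}B_{u})$, obtained via stochastic Fubini. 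Each such increment has squared $L^{2}$-norm of order $(L/n)^{3}$, and summing $n/L$ orthogonal copies yields $\mathbb{E}|\cdot|^{2} \leq C(L/n)^{2} = CL^{2}/n^{2}$; the drift term and the quadratic Taylor remainder contribute at the same or smaller order by direct estimates under Assumption $(\mathbf{H})$.

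The genuine technical work lies in (A.2c), which I expect to be the main obstacle. Writing $j=(i-1)L+l$, I decompose $f(\sqrt{n}\Delta_{j}^{n}X) - f(\alpha_{j}^{n}) = M_{j}^{n} + E_{j}^{n}$, where $E_{j}^{n} = \mathbb{E}[f(\sqrt{n}\Delta_{j}^{n}X) - f(\alpha_{j}^{n})\mid\mathcal{F}_{(j-1)/n}]$ is a bias and $M_{j}^{n}$ is its martingale-centered remainder. A mean value argument using Eq. \eqref{c5} and polynomial growth of $f'$ gives $\mathbb{E}|M_{j}^{n}|^{2} \leq C/n$, so orthogonality along the subsampling filtration delivers $\mathbb{E}|\sum_{i} M_{(i-1)L+l}^{n}|^{2} \leq C/L$ and, after the $(L/n)^{2}$ prefactor, the desired $CL/n^{2}$. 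The sharper bound $|E_{j}^{n}| \leq C/n$ (rather than the naive $C/\sqrt{n}$) is the crux: I would Taylor-expand $f$ around $\alpha_{j}^{n}$ to first order, dispatch the quadratic remainder via Eq. \eqref{c5}, and for the linear term $f'(\alpha_{j}^{n})(\sqrt{n}\Delta_{j}^{n}X - \alpha_{j}^{n})$ split $\sqrt{n}\Delta_{j}^{n}X - \alpha_{j}^{n}$ into its drift $\sqrt{n}\int a_{s}\text{d}s$ and its volatility-variation $\sqrt{n}\int(\sigma_{s}-\sigma_{(j-1)/n})\text{d}W_{s}$ contributions. The evenness of $f$ under Assumption $(\mathbf{K})$ makes $f'$ odd, and the conditionally centered Gaussian law of $\alpha_{j}^{n}$ then kills the leading-order conditional expectation of both pieces; what remains is controlled by Assumption $(\mathbf{H})$ and, for the volatility-variation term, by a short Malliavin integration-by-parts justified by Assumption $(\mathbf{M})$. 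Then $|\sum_{i} E_{(i-1)L+l}^{n}| \leq C/L$, so the bias contribution after the prefactor is $O(1/n^{2})$ and is absorbed into $CL/n^{2}$.
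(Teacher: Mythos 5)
Your proposal is correct and takes essentially the same route as the paper's proof: the discrete martingale/orthogonality bound along the subsample for (A.2a), the mesh-$L/n$ Riemann--Taylor argument (drift plus stochastic-Fubini martingale term) mirroring the proof of step (iv) for (A.2b), and for (A.2c) the same three ingredients — Lemma \ref{Lemma:Estimate} with Eq. \eqref{c5}, the oddness of $f'$ against the conditionally symmetric Brownian increments, and the It\^{o}-semimartingale structure of the coefficients under Assumption (\textbf{H}) — to push the conditional bias down to $O(1/n)$. The only cosmetic difference is bookkeeping in (A.2c): you center $f(\sqrt{n}\Delta_{j}^{n}X)-f(\alpha_{j}^{n})$ at its conditional mean and then bound the bias, whereas the paper splits $\sqrt{n}\Delta_{j}^{n}X-\alpha_{j}^{n}$ into the frozen-coefficient part $\xi_{j}^{n}(1)$ and a small remainder $\xi_{j}^{n}(2)$ and applies the martingale-difference property directly to $f'(\alpha_{j}^{n})\xi_{j}^{n}(1)$; also, your appeal to Malliavin integration by parts is unnecessary here, since the symmetry and Burkholder estimates suffice (the paper reserves Assumption (\textbf{M}) for the term $E_{n}(2.2)$ in Lemma \ref{m30}).
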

\begin{proof}[Proof of Lemma \ref{l1}]
Part (A.2a) is shown by using the discrete Burkholder inequality as in Eq.
\eqref{cc16}. The proof of part (A.2b) follows along the lines of the proof
of Eq. (iv). To prove part (A.2c), we recall
condition $(\textbf{V})$ and make the decomposition
\begin{equation*}
\xi_{i}^{n} =\sqrt{n}\Delta_{i}^{n}X - \alpha_{i}^{n} = \sqrt{n} \left(
\int_{ \frac{i - 1}{n}}^{ \frac{i}{n}} a_{s} \text{d}s + \int_{ \frac{i -
1}{n}}^{ \frac{i}{n}} \left( \sigma_{s} - \sigma_{ \frac{i - 1}{n}} \right)
\text{d}W_{s} \right) \equiv \xi_{i}^{n}(1) + \xi_{i}^{n}(2),
\end{equation*}
where
\begin{align*}
\xi_{i}^{n}(1) &= \sqrt{n} \left( \frac{1}{n} a_{ \frac{i - 1}{n}} + \int_{
\frac{i - 1}{n}}^{ \frac{i}{n}} \left[ \tilde{ \sigma}_{ \frac{i - 1}{n}}
\left( W_{s} - W_{ \frac{i - 1}{n}} \right) + \tilde{v}_{ \frac{i - 1}{n}}
\left(B_{s} - B_{ \frac{i - 1}{n}} \right)
 \right] \text{d}W_{s} \right),
\\[0.25cm]
\xi_{i}^{n}(2) &= \sqrt{n} \Biggl( \int_{ \frac{i - 1}{n}}^{ \frac{i}{n}}
\left( a_{s} - a_{ \frac{i - 1}{n}} \right) \text{d}s + \int_{ \frac{i -
1}{n}}^{ \frac{i}{n}} \left[ \int_{ \frac{i - 1}{n}}^{s} \tilde{a}_{u}
\text{d}u \right] \text{d}W_{s} \\[0.25cm]
&+ \int_{ \frac{i - 1}{n}}^{ \frac{i}{n}} \left[ \int_{ \frac{i - 1}{n}}^{s}
\left( \tilde{ \sigma}_{u} - \tilde{ \sigma}_{ \frac{i - 1}{n}} \right)
\text{d}W_{u} + \int_{ \frac{i - 1}{n}}^{s} \left( \tilde{v}_{u} - \tilde{v}_{
\frac{i - 1}{n}} \right) \text{d}B_{u}
\right] \text{d}W_{s} \Biggr).
\end{align*}
By the assumptions of Theorem \ref{c0}, the Burkholder and Cauchy-Schwarz
inequalities yield that
\begin{align}
\label{d1}
\mathbb{E} \bigl[ | \xi_{i}^{n}(1) |^{4} \bigr] &\leq \frac{C}{n^{2}},
\\[0.25cm]
\label{d2}
\mathbb{E} \bigl[ | \xi_{i}^{n}(2) |^{4} \bigr] &\leq \frac{C}{n^{4}}.
\end{align}
Then, using Taylor's theorem, we may write $V_{l}(f)^{n} - \tilde{V}_{l}(f)^{n}
= S_{l}^{n}(1) + S_{l}^{n}(2) + O_p(1/n)$, where
\begin{align*}
S_{l}^{n}(1)= \frac{1}{n/L} \sum_{i=1}^{n/L} f'( \alpha_{(i-1)L+l}^{n})
\xi_{(i - 1)L + l}^{n}(1) \quad \mbox{ and } \quad
S_{l}^{n}(2)= \frac{1}{n/L} \sum_{i=1}^{n/L} f'( \alpha_{(i-1)L+l}^{n})
\xi_{(i-1)L+l}^{n}(2).
\end{align*}
As $f$ is even, $f'$ is odd. This implies a martingale difference property
\begin{equation*}
\mathbb{E} \Bigl[ f'( \alpha_{(i-1)L+l}^{n}) \xi_{(i-1)L+l}^{n}(1) \mid
\mathcal{F}_{ \frac{(i-1)L+l-1}{n}} \Bigr] = 0.
\end{equation*}
Then, the Cauchy-Schwarz inequality, $f'$ being of polynomial growth, Lemma
\ref{Lemma:Estimate} and Eq. \eqref{d1} imply
\begin{equation}
\label{d4}
\mathbb{E} \Bigl[ |S_{l}^{n}(1)|^{2} \Bigr] = \frac{L^{2}}{n^{2}}
\sum_{i=1}^{n/L} \mathbb{E} \bigl[ |f'( \alpha_{(i-1)L+l}^{n})
\xi_{(i-1)L+l}^{n}(1)|^{2} \bigr] \leq \frac{L}{n^{2}}.
\end{equation}
From the Cauchy-Schwarz inequality, Lemma \ref{Lemma:Estimate} and Eq. \eqref{d2}, we
also get that
\begin{align}
\label{d5}
\mathbb{E} \Bigl[ |S_{l}^{n}(2)|^{2} \Bigr] \leq \frac{L}{n} \sum_{i=1}^{n/L}
\mathbb{E} \bigl[ |f'( \alpha_{(i-1)L+l}^{n}) \xi_{(i-1)L+l}^{n}(2)|^{2} \bigr]
\leq \frac{1}{n^{2}}.
\end{align}
Then, we finish the proof via Eqs. \eqref{d4} -- \eqref{d5}. \qed
\end{proof}
The next result then implies Eq. (i), and the entire proof is complete.

\begin{lemma}
\label{m30}
Assume that the conditions of Theorem \ref{c0} are fulfilled. Then,
\begin{align*}
\mathbb{E} \left[ | D_{n}^{(4)} | \right] &\leq C \frac{L}{n}, \tag{A.3a}
\\[0.25cm]
\mathbb{E} \left[ | D_{n}^{(3)} | \right] &\leq C \frac{1}{n}, \tag{A.3b}
\\[0.25cm]
\mathbb{E} \left[ | D_{n}^{(1)} | \right] &\leq C \frac{1}{\sqrt{n}},
\tag{A.3c} \\[0.25cm]
\mathbb{E} \left[ | D_{n}^{(2)} | \right] &\leq C \left( \frac{L}{n} +
\frac{1}{ \sqrt{n}} \right) \tag{A.3d}.
\end{align*}
\end{lemma}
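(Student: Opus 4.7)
The plan is to treat (A.3a)--(A.3d) in order of increasing difficulty, using Lemma \ref{l1} as the main input. For (A.3a) and (A.3b), the terms $D_n^{(4)}$ and $D_n^{(3)}$ are empirical means of $L$ non-negative squares, so taking expectation and applying (A.2b) (respectively (A.2c)) gives directly $\mathbb{E}[|D_n^{(4)}|] \leq (n/L^{2}) \cdot L \cdot CL^{2}/n^{2} = CL/n$ and $\mathbb{E}[|D_n^{(3)}|] \leq (n/L^{2}) \cdot L \cdot CL/n^{2} = C/n$. For (A.3c), I apply termwise Cauchy--Schwarz to obtain
\[
\mathbb{E}\bigl[|D_n^{(1)}|\bigr] \leq \frac{2n}{L^{2}} \sum_{l=1}^{L} \sqrt{\mathbb{E}\bigl[(V_l(f)^{n} - \tilde{V}_l(f)^{n})^{2}\bigr]\,\mathbb{E}\bigl[(\tilde{V}_l(f)^{n} - V(f))^{2}\bigr]}.
\]
The first factor is bounded by $CL/n^{2}$ via (A.2c); for the second, splitting $\tilde{V}_l - V(f) = (\tilde{V}_l - \hat{V}_l) + (\hat{V}_l - V(f))$ and combining (A.2a) with (A.2b) gives $CL/n$ (the $L^{2}/n^{2}$ contribution being dominated by $L/n$ when $L \leq n$). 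Multiplying out yields $\mathbb{E}[|D_n^{(1)}|] \leq C/\sqrt{n}$.

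The main obstacle is (A.3d), where the same termwise Cauchy--Schwarz would only produce $C\sqrt{L/n}$, strictly weaker than the target $C(L/n + 1/\sqrt{n})$. At the optimal $L \sim n^{2/3}$ this bound would read $n^{-1/6}$ instead of the required $n^{-1/3}$ and would invalidate the stated rate of Theorem \ref{c0}. To refine, I exploit that $\tilde{V}_l(f)^{n} - \hat{V}_l(f)^{n} = (L/n)\sum_i \chi_{(i-1)L+l}^{n}$ is a martingale-difference sum while $\hat{V}_l(f)^{n} - V(f)$ admits an It\^{o} decomposition. Setting $\rho(x) = \rho_x(f)$, which is $C^{3}$ with polynomial-growth derivatives under Assumption $(\mathbf{M})$, It\^{o}'s formula applied to $\rho(\sigma_s)$ via \eqref{Eqn:sigmanew} under Assumption $(\mathbf{H})$ splits each block-wise Riemann error into a drift piece $X_j^{A}$ of pathwise size $O(L/n)$ and a pure-martingale piece $X_j^{M}$ with $\mathbb{E}[X_j^{M} \mid \mathcal{F}_{t_0^j}] = 0$, where $t_0^j$ and $t_1^j$ denote the endpoints of the $j$-th block within the $l$-th subsample.

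Writing $Y_i = \chi_{(i-1)L+l}^{n}$, the cross-expectation expands as $\sum_l (L/n)^{2} \sum_{i,j} \mathbb{E}[(X_j^{A} + X_j^{M}) Y_i]$. For $i > j$, conditioning on $\mathcal{F}_{t_1^j}$ and using $\mathbb{E}[Y_i \mid \mathcal{F}_{t_1^j}] = 0$ annihilates both contributions; for $i < j$, conditioning on $\mathcal{F}_{t_0^j}$ annihilates the $X_j^{M}$ part by its pure-martingale property. Thus only the diagonal $(i = j)$ of the martingale piece and the off-diagonal $(i < j)$ of the drift piece survive. The diagonal is handled by block-wise $L^{2}$ bounds and contributes $O(1/\sqrt{n})$ after summation and rescaling. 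The surviving drift pairs are the hard part: $X_j^{A}$ is $\mathcal{F}_{t_1^j}$-measurable but not $\mathcal{F}_{t_0^i}$-measurable, so direct martingale arguments fail and the extra smallness must be extracted via integration by parts in Malliavin calculus---precisely the role of the $L^{16}$ and $L^{32}$ moment bounds on $D_t(\sigma_s)$, $D_t(\tilde{\sigma}_s)$, $D_t(\tilde{v}_s)$ imposed in \eqref{h1}---and this yields the $O(L/n)$ contribution. Summation over $l$ and rescaling by $n/L^{2}$ then produces $C(L/n + 1/\sqrt{n})$. This Malliavin-based step is by far the hardest part of the argument, and is exactly what pins down the $n^{-1/3}$ convergence rate claimed in Theorem \ref{c0}.
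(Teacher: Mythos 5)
Your (A.3a)--(A.3c) are fine and essentially the paper's argument (the paper routes (A.3c) through $(\mathbb{E}[|D_n^{(1)}|])^2 \le C\,\mathbb{E}[|D_n^{(3)}|]$ rather than termwise Cauchy--Schwarz, but the content is the same). The gap is in (A.3d). The lemma asks for a bound on the first \emph{absolute} moment $\mathbb{E}[|D_n^{(2)}|]$, but after splitting the block Riemann errors into drift pieces $X_j^A$ and martingale pieces $X_j^M$ you only expand the \emph{expectation} of the double sum and then discard every pairing whose expectation vanishes (all $i>j$, and $X_j^M Y_i$ for $i<j$). Those mean-zero pairings do not disappear from $\mathbb{E}[|D_n^{(2)}|]$: they form a centered random variable whose $L^1$ (equivalently $L^2$) size is exactly the critical quantity. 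If you bound it by direct norm estimates ($\|\sum_j X_j^M\|_2 \sim L/n$, $\|\sum_i Y_i\|_2 \sim \sqrt{n/L}$), the cross-subsample contribution to its variance is only $O(L/n)$, i.e. you recover precisely the $\sqrt{L/n}$ rate you set out to beat. Improving this to $O(1/n)$ is the hardest step of the paper's proof: it is the term $E_n(2)$, whose second moment is split into within-subsample terms $E_n(2.1)$ (uncorrelatedness plus (A.2a)) and cross-subsample terms $E_n(2.2)$, and the latter are controlled by conditioning on the largest time index, representing the $\chi$'s via the Clark--Ocone formula, and applying the second-order Malliavin integration-by-parts formula -- this is where Assumption (M), i.e. the moment bounds in \eqref{h1}, is actually consumed. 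Your proposal never addresses this variance, so the stated bound $C(L/n + 1/\sqrt{n})$ does not follow from it.

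Two further points. First, you deploy the Malliavin machinery on the surviving drift pairs $X_j^A Y_i$, $i<j$; in the paper these are handled without any Malliavin calculus, by writing the Riemann sum of $\phi'(\sigma)\tilde a$ as its limit $Q$ plus a remainder: the $Q$-part couples with the full-sample martingale sum $\frac{1}{n}\sum_i \chi_i^n$ and gives $1/\sqrt{n}$, while the remainder has $L^2$-size $\sqrt{L/n}$ and pairs with (A.2a) to give $L/n$ (the terms $E_n(1.1)$ and $E_n(1.2)$; $F_n$ is analogous). So you have relocated the genuinely hard step to a place where elementary arguments suffice and removed it from the place where it is indispensable. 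Second, your diagonal claim is not justified as stated: block-wise $L^2$ bounds give $|\mathbb{E}[X_j^M Y_j]| \le C(L/n)^{3/2}$, hence a total of order $\sqrt{L/n}$ after summation and rescaling, not $O(1/\sqrt{n})$; getting more requires either exploiting the evenness of $f$ (odd/even cancellation within the block) or, as the paper does, folding the diagonal into the second-moment computation for $E_n(2)$.
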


\begin{proof}[Proof of Lemma \ref{m30}]
We observe that part (A.3a) is a direct consequence of Lemma \ref{l1} Eq.
(A.2b). Regarding parts (A.3b) and (A.3c), we note that Lemma \ref{l1}
implies that
\begin{equation*}
\mathbb{E} \Bigl[ \bigl( \tilde{V}_{l}(f)^{n} - V(f) \bigr)^{2} \Bigr] \leq
C \frac{L}{n}.
\end{equation*}
Then, the Cauchy-Schwarz inequality and the above equation yield
\begin{equation*}
\Bigl( \mathbb{E} \left[ | D_{n}^{(1)} | \right] \Bigr)^{2} \leq C
\frac{n}{L^{2}} \sum_{l = 1}^{L} \mathbb{E} \left[ |V_{l}(f)^{n} -
\tilde{V}_{l}(f)^{n}|^{2} \right] = C \mathbb{E} \left[ | D_{n}^{(3)} |
\right].
\end{equation*}
Hence, it suffices to show part (A.3b), which follows from Lemma \ref{l1}
Eq. (A.2c).

We proceed to the proof of part (A.3d). In this part of the proofs, we adopt
the notation $t_{i,l} = (iL+l)/n$. Applying Taylor's theorem and Eq.
\eqref{Eqn:Burkholder} for $\sigma$, we can rewrite
\begin{equation*}
D_{n}^{(2)} = E_{n} + F_{n} + O_{p}(L/n)+O_{p}(1/\sqrt{n}),
\end{equation*}
with
\begin{align*}
E_{n} &= \frac{2n}{L^{2}} \sum_{l=1}^{L} \Biggl( \sum_{i=1}^{n/L} \phi'(
\sigma_{t_{i-1, l-1}}) \int_{t_{i-1,l-1}}^{t_{i,l-1}} \bigl[ \sigma_{
t_{i-1, l-1}} - \sigma_{s} \bigr]
\text{d}s \Biggr) \times \Bigl( \tilde{V}_{l}(f)^{n} - \hat{V}_{l}(f)^{n}
\Bigr), \\[0.25cm]
F_{n} &= \frac{-n}{L^{2}} \sum_{l=1}^{L} \Biggl( \sum_{i=1}^{n/L} \phi''(
\sigma_{t_{i-1, l-1}}) \int_{t_{i-1, l-1}}^{t_{i, l-1}}  \bigl[
\sigma_{ t_{i-1, l-1}} - \sigma_{s} \bigr]^{2} \text{d}s \Biggr) \times
\Bigl( \tilde{V}_{l}(f)^{n} - \hat{V}_{l}(f)^{n} \Bigr).
\end{align*}
We note that the $O_{p}(1/\sqrt{n})$ error is due to the boundary integral term
around 0 and 1. If we then recall Assumption (\textbf{H}), an application of
Eq. \eqref{Eqn:Burkholder} for $\tilde{a}$, $\tilde{\sigma}$ and $\tilde{v}$
implies that
\begin{equation*}
E_{n} = - E_{n}(1) - E_{n}(2) + O_{p}(L/n),
\end{equation*}
where
\begin{align*}
E_{n}(1) =& \frac{2n}{L^{2}} \sum_{l=1}^{L} \Biggl( \sum_{i=1}^{n/L}
\phi'(\sigma_{t_{i-1, l-1}})
\frac{L^{2}}{2n^{2}} \tilde{a}_{t_{i-1, l-1}} \Biggr) \Bigl(
\tilde{V}_{l}(f)^{n} -
\hat{V}_{l}(f)^{n} \Bigr), \\[0.25cm]
E_{n}(2) =& \frac{2n}{L^{2}} \sum_{l=1}^{L} \Biggl( \sum_{i=1}^{n/L}
\phi'(\sigma_{ t_{i-1, l-1}})
\int_{t_{i-1, l-1}}^{t_{i, l-1}} \Bigl[ \tilde{\sigma}_{t_{i-1, l-1}}
\bigl(W_{s} - W_{ t_{i-1, l-1}} \bigr)
+ \tilde{v}_{t_{i-1, l-1}} \bigl( B_{s} - B_{t_{i-1, l-1}}\bigr)
\Bigr] \text{d}s  \Biggr)
\\[0.25cm]
&\times
\Bigl( \tilde{V}_{l}(f)^{n}-\hat{V}_{l}(f)^{n} \Bigr).
\end{align*}
Now, to deal with the term $E_{n}(1)$, we first define:
\begin{equation*}
Q_{l}^{n} = \frac{L}{n} \sum_{i=1}^{n/L} \phi'( \sigma_{t_{i-1, l-1}})
\tilde{a}_{t_{i-1, l-1}},
\end{equation*}
which has the limit $\displaystyle Q = \int_{0}^{1} \phi'( \sigma_{s})
\tilde{a}_{s}\text{d}s$. We find that:
\begin{align*}
E_{n}(1) &= \frac{1}{L} \sum_{l=1}^{L} Q \bigl( \tilde{V}_{l}(f)^{n} -
\hat{V}_{l}(f)^{n} \bigr) + \frac{1}{L} \sum_{l=1}^{L} \bigl( Q_{l}^{n} - Q
\bigr) \bigl( \tilde{V}_{l}(f)^{n} - \hat{V}_{l}(f)^{n} \bigr) \\[0.25cm]
&\equiv E_{n}(1.1) + E_{n}(1.2).
\end{align*}
We note that
\begin{equation*}
E_{n}(1.1) =  \frac{Q}{n} \sum_{i=1}^n \chi_{i}^{n} \qquad \text{and} \qquad
\mathbb{E} [ | Q_{l}^{n} - Q |^{2} ] \leq C \frac{L}{n},
\end{equation*}
uniformly in $l$. Then, using the Cauchy-Schwarz inequality, the martingale
difference property of the $\chi_{i}^{n}$'s and Lemma (A.2a), we find that
\begin{equation*}
\label{m35}
\mathbb{E} \bigl[ | E_{n}(1) | \bigr] \leq \mathbb{E} \bigl[ | E_{n}(1.1)
| \bigl] + \mathbb{E} \bigl[ | E_{n}(1.2) | \bigr] \leq C \biggl(
\frac{1}{ \sqrt{n}} + \frac{L}{n} \biggr).
\end{equation*}
If we also apply these techniques to the term $F_{n}$, we get that
\begin{equation*}
\mathbb{E} \bigl[ | F_{n} | \bigr] \leq C \biggl( \frac{1}{ \sqrt{n}} +
\frac{L}{n} \biggr).
\end{equation*}
So in the rest of the proof, we are left with the term $E_{n}(2)$ only.
Here we assume that $\tilde{v}_{s} = 0$. Apart from expositional purposes,
this is without loss of generality, as the terms involving the product of
$\tilde{v}$ and $B$ are much simpler to handle, because $W$ and $B$
are independent.

We define:
\begin{align}
\label{m40}
G_{i,l}^{n} &= \phi'(\sigma_{ t_{i-1, l-1}})
\int_{t_{i-1, l-1}}^{t_{i, l-1}} \Bigl[ \tilde{\sigma}_{t_{i-1, l-1}}
\bigl(W_{s} - W_{ t_{i-1, l-1}} \bigr)
\Bigr] \text{d}s.
 \end{align}
Then,
\begin{equation}
\label{m41}
E_{n}(2) = \frac{2{n}}{L^{2}} \sum_{l=1}^{L} \Bigg( \sum_{i=1}^{n/L}
G_{i,l}^{n} \Bigg) \Bigl( \tilde{V}_{l}(f)^{n} - \hat{V}_{l}(f)^{n} \Bigr).
\end{equation}
We find that
\begin{equation}\label{m39}
\bigl( E_{n}(2) \bigr)^{2} = E_{n}(2.1) + E_{n}(2.2),
\end{equation}
where
\begin{align*}
E_{n}(2.1) &= \frac{4n^{2}}{L^{4}} \sum_{l=1}^{L} \biggl( \sum_{i=1}^{n/L}
G_{i,l}^{n} \biggr)^{2} \Bigl( \tilde{V}_{l}(f)^{n} - \hat{V}_{l}(f)^{n}
\Bigr)^{2}, \\[0.25cm]
E_{n}(2.2) &= \frac{4n^{2}}{L^{4}} \sum_{l_{a} \neq l_{b}}^{L} \biggl(
\sum_{i=1}^{n/L} G_{i,l_{a}}^{n} \biggr) \Bigl( \tilde{V}_{l_{a}}(f)^{n}
- \hat{V}_{l_{a}}(f)^{n} \Bigr) \biggl( \sum_{i=1}^{n/L} G_{i,l_{b}}^{n}
\biggr) \Bigl( \tilde{V}_{l_{b}}(f)^{n} - \hat{V}_{l_{b}}(f)^{n} \Bigr).
\end{align*}
As, for each fixed $l$, $G_{i,l}^{n}$ and $G_{j,l}^{n}$ are uncorrelated for
$i \neq j$, the Cauchy-Schwarz inequality and Lemma (A.2a) imply that
\begin{equation}
\label{m55}
\mathbb{E} \bigl[ E_{n}(2.1) \bigr] \leq  \frac{C}{n}.
\end{equation}
Recalling Eq. \eqref{m39}, we will be done with the proof of Lemma
(A.3d), if we show
\begin{equation}
\label{E22}
\mathbb{E} \bigl[ E_{n}(2.2) \bigr] \leq \frac{C}{n}.
\end{equation}
It turns out that proving Eq. \eqref{E22} is rather advanced.
We start by noting that
\begin{align}
\label{E22term}
E_{n}(2.2) = \frac{4}{L^{2}} \sum_{l_{a} \neq l_{b}}^{L} \sum_{i_{1}, i_{2},
i_{3}, i_{4} = 1}^{n/L} G_{i_{1},l_{a}}^{n} \chi_{(i_{2}-1)L+l_{a}}^{n}
G_{i_{3},l_{b}}^{n} \chi_{(i_{4}-1)L+l_{b}}^{n}.
\end{align}
We fix $l_{a}$ and $l_{b} $ in Eq. \eqref{E22term}. Then, for several choices
of $i_{1}, i_{2}, i_{3}$ and $i_{4}$, the expectation of the summand is zero.
To see this, recall that expected value of each $G_{i}^{n}$ and $\chi_{i}^{n}$
is zero. Hence, if we take the conditional expectation on the left endpoint of
the largest interval, and if the other three terms are measurable with respect
to this point in time, then the expectation vanishes. Thus, we need to study
the expectation of the summands, only when the two largest indexes are equal,
which for a fixed pair $l_{a}, l_{b}$ reduces the effective number of summands
to $Cn^{3}/L^{3}$. Without loss of generality, we assume in the following that
$i_{2} \leq i_{4} < i_{1} = i_{3}$ and $l_a < l_b$ (other relations between
the indexes are handled in an identical way as shown below).

Thus, by conditioning, we find that
\begin{align}
\begin{split}
\label{step1}
\mathbb{E} \Big[ G_{i_{1},l_{a}}^{n} \chi_{(i_{2}-1)L+l_{a}}^{n}
G_{i_{3},l_{b}}^{n} \chi_{(i_{4}-1)L+l_{b}}^{n} \Big] &=
\mathbb{E} \bigg[ \mathbb{E} \Big[G_{i_{1},l_{a}}^{n}
\chi_{(i_{2}-1)L+l_{a}}^{n} G_{i_{3},l_{b}}^{n} \chi_{(i_{4}-1)L+l_{b}}^{n}
\mid \mathcal{F}_{ t_{i_3-1, l_b-1}} \Big] \bigg]
\\[0.25cm]
& = C \frac{L^3}{n^3} \mathbb E \big[ \chi_{(i_{2}-1)L+l_{a}}^{n}
\chi_{(i_{4}-1)L+l_{b}}^{n} H_{i_{1},i_{3},l_{a},l_{b}} \big],
\end{split}
\end{align}
where $C = C_{i_{1},i_{3},l_{a},l_{b}}$ is a uniformly bounded constant and
$H_{i_{1},i_{3},l_{a},l_{b}}$ is defined by
\begin{equation*}
H_{i_{1},i_{3},l_{a},l_{b}} = \phi' \big( \sigma_{ t_{i_{1}-1, l_{a}-1}} \big)
\phi' \big( \sigma_{ t_{i_{3}-1, l_{b}-1}} \big) \tilde{ \sigma}_{t_{i_{1}-1,
l_{a}-1}} \tilde{ \sigma}_{t_{i_{3}-1, l_{b}-1}}.
\end{equation*}
By applying the Clark-Ocone formula to $\chi_{(i_{2}-1)L+l_{a}}^{n}$ and
$\chi_{(i_{4}-1)L+l_{b}}^{n}$, we deduce the representation
\begin{align*}
\chi_{(i_{2}-1)L+l_{a}}^{n} &= \sqrt{n} \int_{t_{i_{2}-1,
l_{a}-1}}^{t_{i_{2}-1, l_{a}}} \zeta_{t}^{n,i_{2},l_{a}} \text{d}W_{t},
\qquad \zeta_{t}^{n,i_{2},l_{a}} = \mathbb{E} \Big[ D_{t} f \big(
\sigma_{t_{i_{2}-1, l_{a}-1}}
\sqrt{n} \Delta_{t_{i_{2}-1, l_{a}}} W \big) \mid \mathcal G_{t} \Big]
\\[0.25cm]
\chi_{(i_{4}-1)L+l_{b}}^{n} &= \sqrt{n} \int_{t_{i_{4}-1,
l_{b}-1}}^{t_{i_{4}-1, l_{b}}} \zeta_{t}^{n,i_{4},l_{b}} \text{d}W_{t},
\qquad  \zeta_{t}^{n,i_{4},l_{b}} = \mathbb{E} \Big[ D_{t} f \big(
\sigma_{t_{i_{4}-1, l_{b}-1}}
\sqrt{n} \Delta_{t_{i_{4}-1, l_{b}}} W \big) \mid \mathcal G_{t} \Big],
\end{align*}
where $\mathcal{G}_{t} = \sigma(W_{s} \mid s \leq t)$.
Hence,
\begin{align*}
\chi_{(i_2-1)L+l_a}^{n}  \chi_{(i_4-1)L+l_b}^{n} = \delta^2 \left( 1_{[t_{i_2-1, l_a-1},
t_{i_2-1, l_a}] \times [t_{i_4-1, l_b-1}, t_{i_4-1, l_b}]}(u_1,u_2) \zeta_{u_1}^{n,i_2,l_a}
\zeta_{u_2}^{n,i_4,l_b} \right),
\end{align*}
where $\delta^{2}$ is the adjoint operator of $D^2$ introduced in Section \ref{secA.5}.
Finally, due to Assumption (\textbf{M}) and the integration by parts formula
in Eq. \eqref{IntegrationByParts}, we get that
\begin{align}
%\label{step2}
%\begin{split}
\Big| \mathbb{E} \big[ \chi_{(i_{2}-1)L+l_{a}}^{n} \chi_{(i_{4}-1)L+l_{b}}^{n}
H_{i_1,i_3,l_a,l_b} \big] \Big| &= n
\Big| \mathbb{E} \bigg[ \int_{t_{i_4-1, l_b-1}}^{t_{i_4-1, l_b}}
\int_{t_{i_2-1, l_a-1}}^{t_{i_2-1, l_a}}  \zeta_{u_1}^{n,i_2,l_a}
\zeta_{u_2}^{n,i_4,l_b} D_{u_1,u_2} \big( H_{i_1,i_3,l_a,l_b} \big)
\text{d}u_{1} \text{d}u_{2} \bigg] \Big| \nonumber \\[0.25 cm]
& \leq \frac{C}{n}, \label{step2}
%\end{split}
\end{align}
where the last line follows by Cauchy-Schwarz inequality.
From Eq. \eqref{step1} and \eqref{step2}, we conclude that
\begin{equation*}
\mathbb{E} \bigl[ E_{n}(2.2) \bigr] \leq \frac{C}{n}.
\end{equation*}
This completes the proof of Eq. \eqref{E22}, Eq. (i), and Theorem \ref{c0}. \qed
\end{proof}

\subsection{Proof of Theorem \ref{c0prime}}
We need to show that the left-hand side of Eq. (i)-(v) of the last subsection
all converge to 0 under the weaker assumptions of Theorem \ref{c0prime}. By
observing the proof of Theorem \ref{c0}, we discover that the steps behind Eq.
(ii), (iii) and (v) do not depend on the stronger Assumption (\textbf{H}) and
(\textbf{M}), nor on the differentiability of the function $f$.
Hence, we can immediately deduce that
\begin{align*}
\mathbb{E} \bigl[ | Q_{n} - U_{n} | \bigr] \to 0, \qquad
\mathbb{E} \bigl[ | U_{n} - R_{n} | \bigr] \to 0, \qquad
\mathbb{E} \bigl[ |\hat{ \Sigma}_{n} - \Sigma_{n} | \bigr] \to 0.
\end{align*}
On the other hand, (iv) follows from Section 8 (Step 2) in BGJPS6:
\begin{align*}
R_{n} \overset{p}{ \to} \Sigma.
\end{align*}
Furthermore, Step 3 and 4 in that section also imply that
\begin{align*}
\sup_{1 \leq l \leq L} \mathbb{E} \Biggl[ \Bigl| \sqrt{\frac{n}{L}} \Bigl(
V_{l}(f)^{n} - V(f) \Bigr) -  \sqrt{ \frac{L}{n}} \sum_{i = 1}^{n/L}
\chi_{(i - 1)L+ l}^{n}  \Bigr|^{1+\epsilon} \Biggr] \to 0.
\end{align*}
for $\epsilon>0$ small enough. Hence, we find by the H\"older inequality that
\begin{align*}
\mathbb{E} \bigl[ | \Sigma_{n} - Q_{n} | \bigr] \to 0,
\end{align*}
which corresponds to part (i) of the previous subsection. This finishes the
proof. \qed

%\subsection{Proof of rate optimality of subsampled power variation}
%\begin{proof}
%To prove that $n^{-1/3}$ is the optimal rate of convergence for the subsampled
%power variation, we note that the leading rates $1/\sqrt{L}$ and $L/n$ in
%Theorem \ref{c0} appear in both parts Eq. (i) and (ii) above.
%It is therefore enough to show that these two terms are optimal. As stated in
%the main text, we let
%\begin{equation}
%X_{t} = \int_{0}^{t} W_{s} \text{d}W_{s}, \quad t \in [0,1].
%\end{equation}
%Then, it follows that
%\begin{align*}
%V(f) &= \int_{0}^{1} W_{s}^{2} \text{d}s, \\[0.25cm]
%V_{l}(f)^{n} &= \frac{1}{n/L} \sum_{i = 1}^{n/L} \Biggl( \frac{ \sqrt{n}}{2}
%\biggl( \Delta_{(i - 1)L + l}^{n} W^{2} - \frac{1}{n} \biggr) \Biggr)^{2}, \\[0.25cm]
%\Sigma &= \int_{0}^{1} 2 W_{s}^{4} \text{d}s.
%\end{align*}
%As $\sigma_{s} = W_{s}$, then $\sigma_{0} = 0, \mu_{s}' = 0, \sigma_{s}' = 1$
%and $v_{s}'=0$, for all $s \in [0,1]$. We observe that $\displaystyle X_{t} =
%\frac{W_{t}^2 - t}{2}$.
%In particular, we will show the following:
%\begin{align*}
%c \left(\frac{1}{\sqrt{L}}+\frac{L}{n} \right) \leq \mathbb{E} \left[ \left(\hat \Sigma_n-\Sigma \right)^2  \right]^{1/2} \leq C \left(\frac{1}{\sqrt{L}}+\frac{L}{n} \right)
%\end{align*}
%for some $C>c>0.$ By going over the proof of Theorem \ref{c0}, it is easy to show the above upper bound. Regarding the lower bound, we observe that only the terms in (i) and (ii), that is $(\Sigma_n-Q_n)$ and $(Q_n-U_n),$ contribute to the leading order.
 %\qed
%\end{proof}

\subsection{Proof of Theorem \ref{m11}}
We begin by introducing some notation. For $m \geq i$, we define
\begin{equation}
\label{m13}
\Delta \bar{Y}_{m,i}^{n} = \sum_{j = 1}^{k_{n}} w \biggl( \frac{j}{k_{n}}
\biggr) \Bigl( \sigma_{ \frac{i}{n}} \Delta_{m+j}^{n}W + \Delta_{m+j}^{n}
\epsilon \Bigr).
\end{equation}
We note that $\Delta \bar{Y}_{m,i}^{n}$ approximates $\Delta \bar{Y}_{m}^{n}$
by evaluating $\sigma$ at the point $i/n$. Moreover, we state two auxiliary
results from \citet*{podolskij-vetter:09a}, which provide the stochastic order
of the statistics $\Delta \bar{S}_{i}^{n}$ for the processes, $S= W, X,
\epsilon$ or $Y$ (Lemma \ref{m14}), and which permits to substitute the limit
$\psi_{i}$ for $\psi_{i}^{n}$ for $i = 1,2$ without altering the consistency
statements (Lemma \ref{m16}).
\begin{lemma}
\label{m14}
Assume that $s$ is a non-negative real number, such that $\mathbb{E}[ |\epsilon_{t}|^{s}] <\infty.$ Then, for any $i$ and $n$,
\begin{equation}
\label{m15}
\mathbb{E} \Bigl[ | \Delta \bar{Y}_{m, i}^{n} |^{s} \mid \mathcal{F}_{
\frac{i}{n}} \Bigr] + \mathbb{E} \Bigl[ | \Delta \bar{Y}_{i}^{n} |^{s} \mid
\mathcal{F}_{ \frac{i}{n}} \Bigr] \leq Cn^{-s/4}.
\end{equation}
\end{lemma}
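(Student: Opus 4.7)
The plan is to split each pre-averaged increment into a continuous ``efficient'' piece and an independent ``noise'' piece, bound each in conditional $L^{s}$ separately, and then combine via Minkowski's inequality. Throughout I invoke the standard localization that allows $a$ and $\sigma$ to be assumed uniformly bounded.

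First I would exploit the Abel-summation representation given in Eq.~\eqref{Eqn:PreavgY},
\begin{equation*}
\Delta \bar{Y}_{i}^{n} = \sum_{j=1}^{k_{n}} w_{j}^{n} \Delta_{i+j}^{n} X - \sum_{j=0}^{k_{n}} (w_{j+1}^{n} - w_{j}^{n}) \epsilon_{(i+j)/n},
\end{equation*}
so that the noise appears as a weighted linear combination of i.i.d. $\epsilon$'s with weights of order $k_{n}^{-1}$, which encodes the desired $n^{-1/4}$ scaling. For the continuous part I apply Burkholder's inequality to the stochastic integral $\int_{i/n}^{(i+k_{n})/n} \phi_{s}^{n}\, dX_{s}$ with $\phi_{s}^{n} = \sum_{j} w_{j}^{n}\, \mathbf{1}_{((i+j-1)/n,(i+j)/n]}(s)$: the bounded drift contributes $O(k_{n}/n) = O(n^{-1/2})$ uniformly, and the martingale part has predictable quadratic variation bounded by $\|\sigma\|_{\infty}^{2}\, n^{-1} \sum_{j} (w_{j}^{n})^{2} = O(k_{n}/n) = O(n^{-1/2})$. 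Hence its conditional $L^{s}$-norm (for $s \geq 2$) is $O(n^{-s/4})$.

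For the noise piece, independence of $\epsilon_{(i+j)/n}$ from $\mathcal{F}_{i/n}$ together with Rosenthal's inequality yield, for $s \geq 2$,
\begin{equation*}
\mathbb{E} \Bigl[ \Bigl| \sum_{j=0}^{k_{n}} (w_{j+1}^{n} - w_{j}^{n}) \epsilon_{(i+j)/n} \Bigr|^{s} \, \Big| \, \mathcal{F}_{i/n} \Bigr] \leq C_{s} \Bigl( \mathbb{E}[|\epsilon|^{s}] \sum_{j} |w_{j+1}^{n} - w_{j}^{n}|^{s} + \omega^{s} \Bigl( \sum_{j} (w_{j+1}^{n} - w_{j}^{n})^{2} \Bigr)^{s/2} \Bigr).
\end{equation*}
Using $\sum_{j} (w_{j+1}^{n} - w_{j}^{n})^{2} = \psi_{1}^{n}/k_{n} = O(k_{n}^{-1})$ together with $|w_{j+1}^{n} - w_{j}^{n}| = O(k_{n}^{-1})$ gives $O(k_{n}^{1-s}) + O(k_{n}^{-s/2})$, and for $s \geq 2$ both terms are $O(k_{n}^{-s/2}) = O(n^{-s/4})$. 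The range $0 \leq s < 2$ is then handled by Lyapunov's inequality $\mathbb{E}[|Z|^{s} \mid \mathcal{F}_{i/n}]^{1/s} \leq \mathbb{E}[|Z|^{2} \mid \mathcal{F}_{i/n}]^{1/2}$, which reduces matters to the already-established bound at $s = 2$.

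The estimate for $\Delta \bar{Y}_{m,i}^{n}$ is essentially identical: $\sigma_{i/n}$ is $\mathcal{F}_{i/n}$-measurable and bounded by localization, and for $m \geq i$ the Brownian increments $\Delta_{m+j}^{n} W$ ($j \geq 1$) are independent of $\mathcal{F}_{i/n}$ with variance $1/n$, so the same Burkholder bound applies verbatim. The only genuinely technical step is the Rosenthal inequality for the noise when $s$ exceeds $2$, which is precisely where the hypothesis $\mathbb{E}[|\epsilon_{t}|^{s}] < \infty$ is used; everything else reduces to tracking constants through standard martingale inequalities.
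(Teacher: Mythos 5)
Your proof is correct and is essentially the standard argument: the paper does not prove Lemma \ref{m14} itself but imports it from \citet*{podolskij-vetter:09a}, whose proof rests on the same split into a Burkholder-controlled efficient-price part of conditional size $O(n^{-s/4})$ and a weighted-noise part handled by moment inequalities for independent summands, with the weight-difference scaling $|w_{j+1}^{n}-w_{j}^{n}| = O(k_{n}^{-1})$ delivering the $n^{-s/4}$ rate. The only point worth noting is that your Lyapunov reduction for $0 \leq s < 2$ quietly uses the finiteness of the noise variance, which is supplied by Assumption (\textbf{N}) rather than by the lemma's stated moment hypothesis alone.
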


\begin{lemma}
\label{m16} Let $s \geq 0$. Then,
\begin{equation}\label{m17}
\int_{0}^{1} \biggl( \theta \psi_{2}^{n} \sigma_{u}^{2} + \frac{1}{ \theta}
\psi_{1}^{n} \omega^{2} \biggr)^{s} \text{\upshape{d}}u - \int_{0}^{1}
\biggl( \theta \psi_{2} \sigma_{u}^{2} + \frac{1}{ \theta} \psi_{1} \omega^{2}
\biggr)^{s} \text{\upshape{d}}u = o_{p} \bigl( n^{-1/4} \bigr).
\end{equation}
\end{lemma}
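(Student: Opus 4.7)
The plan is to exploit the fact that the quantitative convergence $\psi_j^n - \psi_j = O(n^{-1/2})$ recorded in \eqref{psierror} is an order of magnitude faster than the $n^{-1/4}$ rate that we need to beat, so the lemma should in fact be provable as $O_p(n^{-1/2})$ once the power nonlinearity is handled. Introduce the shorthands
\begin{equation*}
\phi_n(u) = \theta \psi_2^n \sigma_u^2 + \tfrac{1}{\theta} \psi_1^n \omega^2,
\qquad
\phi(u) = \theta \psi_2 \sigma_u^2 + \tfrac{1}{\theta} \psi_1 \omega^2,
\end{equation*}
so that the quantity in \eqref{m17} is $\int_0^1 \bigl(\phi_n(u)^s - \phi(u)^s\bigr)\text{d}u$. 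By the localization procedure adopted throughout the Appendix, $\sigma$ can be assumed bounded, hence $\phi$ and $\phi_n$ are uniformly bounded from above. Combining this with \eqref{psierror} and the triangle inequality gives the pointwise bound $\sup_{u \in [0,1]} |\phi_n(u) - \phi(u)| \le C n^{-1/2}$ almost surely.

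The next step is to transfer this $n^{-1/2}$ bound through the nonlinearity $x \mapsto x^s$. For $s = 0$ the difference is identically zero, and for $s \ge 1$ a first-order Taylor expansion yields $|\phi_n^s - \phi^s| \le s \, (\phi \vee \phi_n)^{s-1} |\phi_n - \phi|$, which, on account of the upper bound on $\phi, \phi_n$, is itself $O(n^{-1/2})$ uniformly in $u$. Integrating over $[0,1]$ and taking probability limits then yields the desired $o_p(n^{-1/4})$ bound; in fact one obtains the sharper rate $O_p(n^{-1/2})$.

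The delicate case is $0 < s < 1$, since here the derivative of $x \mapsto x^s$ blows up as $x \downarrow 0$. This is handled by observing that under the standing assumption that $\omega^2 > 0$ in the noisy setting, one has the deterministic lower bound $\phi(u), \phi_n(u) \ge \tfrac{1}{\theta}\psi_1^n \omega^2 \wedge \tfrac{1}{\theta}\psi_1 \omega^2 \ge c > 0$ for all $u$ and all $n$ large enough, because $\psi_1 > 0$ by the nontriviality assumption on the weight function $w$ and $\psi_1^n \to \psi_1$. On this uniformly separated-from-zero region the mean value theorem again applies with a bounded derivative, and the same $O(n^{-1/2})$ bound goes through. (In the noiseless case $\omega = 0$, one instead uses the standing assumption $\sigma > 0$ of the relevant theorems to ensure $\phi, \phi_n \ge \theta\psi_2 \inf_u \sigma_u^2 \wedge \theta \psi_2^n \inf_u \sigma_u^2$ is bounded below.)

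The main obstacle, and essentially the only nontrivial point of the argument, is precisely this lower-boundedness in the case $0 < s < 1$: without it the crude H\"older estimate $|\phi_n^s - \phi^s| \le |\phi_n - \phi|^s = O(n^{-s/2})$ only delivers $o_p(n^{-1/4})$ when $s > 1/2$. Invoking the positivity of $\omega^2$ (or of $\sigma$) available in the surrounding theorems is what closes the gap for small $s$, and once that is in place the remainder of the proof is a one-line Taylor estimate followed by integration over $[0,1]$.
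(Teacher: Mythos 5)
The paper does not actually prove this lemma: it is imported verbatim from Podolskij and Vetter (2009) as an auxiliary result, so there is no in-paper argument to compare yours against. On its own terms your proof is sound: the uniform bound $\sup_u|\phi_n(u)-\phi(u)|\leq Cn^{-1/2}$ follows from \eqref{psierror} and localization, the mean-value estimate handles $s\geq 1$, and you correctly identify that the only delicate point is $0<s<1$, where the crude bound $|\phi_n^s-\phi^s|\leq|\phi_n-\phi|^s$ is too weak for $s\leq 1/2$. Two remarks. First, your fix for that case imports hypotheses ($\omega^2>0$, or $\sigma$ bounded away from zero) that are not part of the lemma's statement, though they are available in every theorem that actually invokes the lemma with fractional $s$ (the even-integer powers of Theorem \ref{m11} give integer $s$, and Theorem \ref{NoiseAnyPowerThm} assumes $\sigma>0$), so nothing breaks downstream. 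Second, the lower bound can be dispensed with entirely: since $\theta\sigma_u^2\leq\phi(u)/\psi_2$ and $\theta^{-1}\omega^2\leq\phi(u)/\psi_1$ (with $\psi_1,\psi_2>0$ by the assumptions on $w$), one has $|\phi_n(u)-\phi(u)|\leq\varepsilon_n\,\phi(u)$ with $\varepsilon_n=O(n^{-1/2})$, i.e.\ the perturbation is multiplicative; hence $|\phi_n(u)^s-\phi(u)^s|\leq\bigl((1+\varepsilon_n)^s-1\bigr)\phi(u)^s\leq C_s\varepsilon_n\phi(u)^s$ for every $s>0$, which gives the uniform $O(n^{-1/2})$ rate with no positivity assumption and proves the lemma exactly as stated.
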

We again use the short form notation $t_{i, l}=(i L+l) p k_n/n$ and introduce
some approximations of $\hat{ \Sigma}_{n}^{*}$ and $\Sigma^{*}$:
\begin{align*}
\Sigma_{n}^{*} &= \frac{1}{L} \sum_{l=1}^{L} \Biggl( \frac{n^{1/4}}{\sqrt{L}}
\Bigl( V_l^{*}(q,r)^n-V^{*}(q,r) \Bigr) \Biggr)^{2}, \quad
Q_{n} = \frac{k_{n}^{2}p^{2}}{n^{3/2}} \sum_{l=1}^{L} \Biggl(
\sum_{i=1}^{n/{L p k_{n}}} \chi_{(i-1)L+l}^{n} \Biggr)^{2}, \\[0.25cm]
U_{n} &= \frac{k_{n}^{2}p^{2}}{n^{3/2}} \sum_{l=1}^{L} \sum_{i=1}^{n/{L
p k_{n}}} \bigl( \chi_{(i-1)L+l}^{n} \bigr)^{2}, \quad
R_{n} = \frac{k_{n}^{2}p^{2}}{n^{3/2}} \sum_{l=1}^{L}
\sum_{i=1}^{n/{L p k_{n}}} \mathbb{E} \bigl[ \bigl( \chi_{(i-1)L+l}^{n}
\bigr)^{2} \mid \mathcal{F}_{t_{i-1, l-1}} \bigr],
\end{align*}
where
\begin{equation*}
\eta_{i}^{n} = \frac{n^{ \frac{q+r}{4}}}{p k_{n}-2k_{n}+2}
\sum_{m, m + k_{n} - 1 \in B_{i}(p)} |
\Delta \bar{Y}^{n}_{m, (i-1) p k_{n}} |^{q} |
\Delta \bar{Y}^{n}_{m+k_{n}, (i-1) p k_{n}}|^{r},
\end{equation*}
and
\begin{equation*}
\chi_{i}^{n} = \eta_{i}^{n} - \mathbb{E} \Bigl[ \eta_{i}^{n} \mid
\mathcal{F}_{ \frac{(i-1) p k_{n}}{n}} \Bigr].
\end{equation*}
There exists a $C > 0$, independent of $i$, such that
\begin{equation}
\label{m19}
\mathbb{E} \Bigl[ \bigl( \eta_{i}^{n} \bigr)^{4} \Bigr] \leq C \qquad
\text{and} \qquad \mathbb{E} \Bigl[ \bigl( \chi_{i}^{n} \bigr)^{4} \Bigr]
\leq \frac{C}{p^{2}},
\end{equation}
where the last inequality holds as, for fixed $i$, the terms
$\Delta \bar{Y}^{n}_{m,(i-1)p k_{n}}$ are $k_{n}$-dependent.

As in the no noise setting, we complete the proof by showing
the following results and the relationship
$p/\sqrt{n} \ll \sqrt{p}/n^{1/4} \ll 1/\sqrt{L}$ (which follow
from $\sqrt{n}/Lp^2 \to \infty$):
\begin{align}
\mathbb{E} \bigl[ | \Sigma_{n}^{*} - Q_{n} | \bigr] &\leq C \biggl(
\frac{L p^{2}}{ \sqrt{n}} + \frac{1}{ \sqrt{L}}\biggr),
\tag{i} \\[0.25cm]
\mathbb{E} \bigl[ | Q_{n} - U_{n} | \bigr] &\leq C \frac{1}{ \sqrt{L}}, \tag{ii}
\\[0.25cm]
\mathbb{E} \bigl[ | U_{n} - R_{n} | \bigr] &\leq C \frac{\sqrt{p}}{n^{1/4}},
\tag{iii} \\[0.25cm]
\mathbb{E} \bigl[ | R_{n} - \Sigma^{*} | \bigr] &\leq C \biggl( \frac{p}{
\sqrt{n}} + \frac{1}{p} \biggr) \tag{iv},\\[0.25cm]
\mathbb{E} \bigl[ | \hat{\Sigma}_{n}^{*} - \Sigma_{n}^{*} | \bigr] &\leq C \frac{1}{\sqrt{L}}
\tag{v}.
\end{align}
We proceed as in the part absent of microstructure noise. First, we prove
the steps (ii), (iii), (iv), (v), which are relatively simple and short. Then,
we offer the proof of (i), which is much longer and more complicated.
\begin{proof}[Proof of (iii)]
This part is again easy. We observe that
\begin{equation*}
U_{n} - R_{n} = \frac{k_{n}^{2}p^{2}}{n^{3/2}} \sum_{i=1}^{n/{p k_{n}}}
\bigl( \chi_{i}^{n} \bigr)^{2} - \mathbb{E} \Bigl[ \bigl( \chi_{i}^{n}
\bigr)^{2} \mid \mathcal{F}_{ \frac{(i-1)p k_{n}}{n}} \Bigr] \qquad
\text{and} \qquad \mathbb{E} \bigl[ | U_{n} - R_{n} |^{2} \bigr]
\leq C \frac{p}{\sqrt{n}},
\end{equation*}
where the last part is due to the martingale difference property and Eq.
\eqref{m19}.
\qed
\end{proof}
\begin{proof}[Proof of (v)] This follows by using a difference of
squares and that $\displaystyle \frac{n^{1/4}}{\sqrt{L}} \Bigl(
V_{l}^{*}(q,r)^{n} - V^{*}(q,r) \Bigr) = O_p(1)$ and $\displaystyle
\frac{n^{1/4}}{\sqrt{L}}\Bigl(V^{*}(q,r)^{n} - V^{*}(q,r) \Bigr) =
O_{p} \biggl( \frac{1}{ \sqrt{L}} \biggr)$, where the first term
is due to Lemma \ref{s1}. We skip the details. \qed
\end{proof}

\begin{proof}[Proof of (ii)] We define $S_{l}^{m} =
\sum_{i=1}^{m}\chi_{(i-1)L+l}^{n}$ and
$T_{l}^{m} = \sum_{i=1}^{m} \bigl( \chi_{(i-1)L+l}^{n} \bigr)^{2}$
and note that
\begin{align*}
Q_{n} - U_{n} = \frac{k_{n}^{2}p^{2}}{n^{3/2}} \sum_{l=1}^{L} A_{l}^{n}
\end{align*}
where
\begin{align*}
A_{l}^{n}= \sum_{i, j=1 ~ i \neq j}^{n/L p k_{n}} \chi_{(i-1)L+l}^{n}
\chi_{(j-1)L+l}^{n} = \Bigl( S_{l}^{n/L p k_{n}} \Bigr)^{2} -
T_{l}^{n/L p k_{n}}.
\end{align*}
Since $A_{l_{1}}^{n}$ and $A_{l_{2}}^{n}$ are uncorrelated for every $l_{1}
\neq l_{2}$, we find that
\begin{align}
\label{Eqn:EstimateST}
\mathbb{E} \bigl[ | Q_{n} - U_{n} |^{2} \bigr] &= \frac{k_{n}^{4}
p^{4}}{n^{3}} \sum_{l=1}^{L} \mathbb{E} \bigl[ (A_{l}^{n})^{2} \bigr]
\leq C \frac{k_{n}^{4}p^{4}}{n^{3}} \sum_{l=1}^{L} \biggl( \mathbb{E}
\Bigl[ (S_{l}^{n/{L p k_{n}}})^{4} \Bigr] + \mathbb{E} \Bigl[
(T_{l}^{n/{L p k_{n}}})^{2} \Bigr] \biggr).
\end{align}
%To estimate the first sum, we observe that $(S_{l}^{m})_{m=1}^{n/{L p
%k_{n}}}$
%is a discrete martingale for each $l$. Then, the discrete Burkholder and
%Cauchy-Schwarz inequalities together with Eq. \eqref{m19} imply that
%\begin{align}
%\label{m21} \mathbb{E} \bigl[ (S_{l}^{n/{L p k_{n}}})^{4} \bigr] &\leq C
%\mathbb{E} \Biggl[ \Biggl( \sum_{i=1}^{n/{L pk_{n}}} \bigl(
%\chi_{(i-1)L+l}^{n} \bigr)^{2} \Biggr)^{2} \Biggr] \leq C \biggl(
%\frac{n}{L p k_{n}} \biggr)^{2} \frac{1}{p^{2}} \leq C
%\frac{n}{L^{2}p^{4}}
%\end{align}
%Another application of the Cauchy-Schwarz inequality and Eq. \eqref{m19} yield:
Proceeding as in the noiseless case, Eq. \eqref{m19}, the Cauchy-Schwarz and the
Burkholder inequalities yield:
\begin{align} \label{m21b}
\mathbb{E} \bigl[ (S_{l}^{n/{L p k_{n}}})^{4} \bigr] \leq C \frac{n}{L^{2}p^{4}}
\quad \text{and} \quad \mathbb{E} \bigl[ (T_{l}^{n/{L pk_{n}}})^{2}] \leq C
\frac{n}{L^{2} p^{4}}.
\end{align}
Then, we finish the proof using Eqs. \eqref{Eqn:EstimateST} and \eqref{m21b}.
\qed
\end{proof}
\begin{proof}[Proof of (iv)]
We draw upon the proof of Lemma 8 in \citet*{podolskij-vetter:09a}. We find
that
\begin{align}
\Sigma^{*} &= 2 \theta \int_{0}^{1} \int_{0}^{2} h \bigl( \sigma_{u}, t, f(s)
\bigr) \text{d}s \text{d}u \nonumber \\[0.25cm]
&= \frac{2}{\sqrt{n}} \frac{p k_{n}}{n} \sum_{i = 1}^{n / p k_{n}} \sum_{j =
0}^{2 k_{n} - 1} h \Biggl( \sigma_{ \frac{(i-1)pk_{n}}{n}}, t_{n}, f^{n}
\biggl( \frac{j}{k_{n}}  \biggr) \Biggr) + O_{p} \biggl( \frac{ p k_{n}}{n}
\biggr) \nonumber  \\[0.25cm]
& \equiv R_{n}' + O_{p} \biggl( \frac{ p k_{n}}{n} \biggr) \label{RE1}
\end{align}
where the function $h$ and its associated notation were introduced in Eq. (3.6)
in \citet*{podolskij-vetter:09a}.

To estimate the term $R_{n}-R_{n}'$, we recall that
\begin{equation*}
R_{n} = \frac{k_{n}^{2} p^2}{n^{3/2}} \sum_{i=1}^{n/pk_{n}} \mathbb{E}
\Bigl[( \chi_{i}^{n})^{2} \mid \mathcal{F}_{ \frac{(i-1) pk_{n}}{n}}
\Bigr].
\end{equation*}
For $m \geq l \geq i,$ we get
\begin{align*}
h \Biggl( \sigma_{ \frac{i}{n}}, t_{n}, f^{n} \biggl( \frac{m-l}{k_{n}} \biggr)
\Biggr) &= \mathbb{E} \Bigl[ | n^{1/4} \Delta \bar{Y}_{m,i}^{n}|^{q} | n^{1/4} \Delta
\bar{Y}_{m+k_{n},i}^{n} |^{r} \times |n^{1/4} \Delta \bar{Y}_{l,i}^{n}|^{q}
|n^{1/4} \Delta \bar{Y}_{l+k_{n},i}^{n}|^{r} \mid \mathcal{F}_{
\frac{i}{n}} \Bigr] \\[0.25cm]
&- \mathbb{E} \Bigl[ |n^{1/4} \Delta \bar{Y}_{m,i}^{n}|^{q} |n^{1/4} \Delta
\bar{Y}_{m+k_{n},i}^{n}|^{r} \mid \mathcal{F}_{ \frac{i}{n}} \Bigr] \times
\mathbb{E} \Bigl[ |n^{1/4} \Delta \bar{Y}_{l,i}^{n}|^{q} |n^{1/4} \Delta
\bar{Y}_{l+k_{n}, i}^{n}|^{r} \mid \mathcal{F}_{ \frac{i}{n}} \Bigr].
\end{align*}
Note that the above term vanishes for $m - l \geq 2 k_{n}$. Then, by denoting
$N = pk_{n} - 2k_{n} + 2,$ we find that
\begin{align*}
N \mathbb{E} \Bigl[( \chi_{i}^{n})^{2} \mid \mathcal{F}_{
\frac{(i-1)pk_{n}}{n}} \Bigr] &= h \bigl( \sigma_{ \frac{(i-1)pk_{n}}{n}},
t_{n}, f^{n}(0) \bigr) + \frac{2}{N} \sum_{j = 1}^{2 k_{n} - 1} (N - j)
h \Biggl( \sigma_{ \frac{(i-1)pk_{n}}{n}}, t_{n}, f^{n} \biggl( \frac{j}{k_{n}}
\biggr) \Biggr) \\[0.25cm]
&= 2 \sum_{j = 0}^{2 k_{n}-1} h \Biggl( \sigma_{ \frac{(i-1)pk_{n}}{n}}, t_{n},
f^{n} \biggl( \frac{j}{k_{n}} \biggr) \Biggr) + O_{p}(1) + O_{p} \biggl(
\frac{k_{n}}{p} \biggr).
\end{align*}
This yields that:
\begin{equation*}
\frac{pk_{n}}{\sqrt{n}} \mathbb{E} \Bigl[ \bigl( \chi_{i}^{n}
\bigr)^{2} \mid \mathcal{F}_{ \frac{(i-1) pk_{n}}{n}} \Bigr] =
\frac{2}{\sqrt{n}} \sum_{j = 0}^{2 k_{n}-1} h \Biggl( \sigma_{
\frac{(i-1)pk_{n}}{n}}, t_{n}, f^{n} \biggl( \frac{j}{k_{n}} \biggr) \Biggr) +
O_{p} \biggl( \frac{1}{\sqrt{n}} \biggr) + O_{p} \biggl( \frac{1}{p} \biggr)
\end{equation*}
uniformly in $i$. As a result,
\begin{equation}
\label{RE2}
\mathbb{E} \bigl[ | R_{n} - R_{n}' | \bigr] \leq C \biggl( \frac{1}{
\sqrt{n}}+ \frac{1}{p} \biggr).
\end{equation}
In view of Eqs. \eqref{RE1} -- \eqref{RE2}, the proof is complete. \qed
\end{proof}
As in the noiseless setting, we need to prepare a bit to show Eq. (i).
We denote by:
\begin{equation*}
\tilde{V}_{l}^{*}(q,r)^{n} = \frac{L p k_{n}}{n} \sum_{i=1}^{n/{L p
k_{n}}} \eta_{(i-l)L+l}^{n}, \qquad \hat{V}_{l}^{*}(q,r)^{n} =
\frac{L p k_{n}}{n} \sum_{i=1}^{n/{L p k_{n}}} \mathbb{E} \Bigl[
\eta_{(i-l)L+l}^{n} \mid \mathcal{F}_{t_{i-1,l-1}} \Bigr].
\end{equation*}
Then, from the decomposition
\begin{equation*}
V_{l}^{*}(q,r)^{n} - V^{*}(q,r) = \Bigl( V_{l}^{*}(q,r)^{n} -
\tilde{V}_{l}^{*}(q,r)^{n} \Bigr) + \Bigl( \tilde{V}_{l}^{*}(q,r)^{n} -
\hat{V}_{l}^{*}(q,r)^{n} \Bigr) + \Bigl( \hat{V}_{l}^{*}(q,r)^{n} -
V^{*}(q,r) \Bigr)
\end{equation*}
we find that
\begin{equation*}
\Sigma_{n}^{*} - Q_{n} = D_{n}^{(1)} + D_{n}^{(2)} + D_{n}^{(3)} + D_{n}^{(4)},
\end{equation*}
where
\begin{align*}
D_{n}^{(1)} &= \frac{2\sqrt{n}}{L^{2}} \sum_{l=1}^{L} \Bigl( V_{l}^{*}(q,r)^{n}
- \tilde{V}_{l}^{*}(q,r)^{n} \Bigr) \Bigl( \tilde{V}_{l}^{*}(q,r)^{n} -
V^{*}(q,r) \Bigr), \\[0.25cm]
D_{n}^{(2)} &= \frac{2\sqrt{n}}{L^{2}} \sum_{l=1}^{L} \Bigl(
\hat{V}_{l}^{*}(q,r)^{n} - V^{*}(q,r) \Bigr) \Bigl( \tilde{V}_{l}^{*}(q,r)^{n}
- \hat{V}_{l}^{*}(q,r)^{n} \Bigr), \\[0.25cm]
D_{n}^{(3)} &= \frac{\sqrt{n}}{L^{2}} \sum_{l=1}^{L} \Bigl( V_{l}^{*}(q,r)^{n}
- \tilde{V}_{l}^{*}(q,r)^{n} \Bigr)^{2}, \\[0.25cm]
D_{n}^{(4)} &= \frac{\sqrt{n}}{L^{2}} \sum_{l=1}^{L} \Bigl(
\hat{V}_{l}^{*}(q,r)^{n} - V^{*}(q,r) \Bigr)^{2}.
\end{align*}
To bound these terms, we exploit the following auxiliary Lemma.
\begin{lemma}
\label{s1}
Assume that the conditions of Theorem \ref{m11} are fulfilled. Then, uniformly
in $l$:
\begin{align}
\mathbb{E} \Bigl[ | \tilde{V}_{l}^{*}(q,r)^{n} - \hat{V}_{l}^{*}(q,r)^{n}
|^{2} \Bigr] &\leq C \frac{L}{\sqrt{n}}, \tag{A.6a} \\[0.25cm]
\mathbb{E} \Bigl[ | \hat{V}_{l}^{*}(q,r)^{n} - V^{*}(q,r) |^{2} \Bigr] &\leq
C \Bigl( \frac{L^{2}p^{2}}{n} + \frac{1}{\sqrt{n}} \Bigr), \tag{A.6b}
\\[0.25cm]
\mathbb{E} \Bigl[ | V_{l}^{*}(q,r)^{n} - \tilde{V}_{l}^{*}(q,r)^{n} |^{2}
\Bigr] &\leq C \frac{ L p^2}{ n} \tag{A.6c}.
\end{align}
\end{lemma}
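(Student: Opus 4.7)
The plan is to mirror the proof strategy of Lemma \ref{l1}, adapted to the presence of noise and pre-averaging, by exploiting that the summands $\chi_{(i-1)L+l}^{n}$ form a martingale difference sequence across the blocks $B_{(i-1)L+l}(p)$ as $i$ varies. Throughout, the plan invokes Assumption (\textbf{H}) so that Eq. \eqref{Eqn:Burkholder} applies to $\sigma$ and its coefficients, and Assumption (\textbf{M}) so that the even-integer powers $|\cdot|^{q_{k}}$ and $|\cdot|^{r_{k}}$ may be Taylor expanded freely. Part (A.6a) is then immediate: by orthogonality of martingale differences,
\begin{equation*}
\mathbb{E}\Bigl[|\tilde{V}_{l}^{*}(q,r)^{n} - \hat{V}_{l}^{*}(q,r)^{n}|^{2}\Bigr] = \Bigl(\tfrac{Lpk_{n}}{n}\Bigr)^{2}\sum_{i=1}^{n/Lpk_{n}}\mathbb{E}\bigl[(\chi_{(i-1)L+l}^{n})^{2}\bigr] \leq C\,\tfrac{Lpk_{n}}{n}\cdot \tfrac{1}{p},
\end{equation*}
where the last step uses Eq. \eqref{m19} together with Jensen's inequality. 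Since $k_{n} = O(\sqrt{n})$, this is $O(L/\sqrt{n})$.

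For part (A.6b), I would approximate $\hat{V}_{l}^{*}(q,r)^{n}$ by a Riemann sum of the integrand $g(\sigma_{s}) = \mu_{q}\mu_{r}(\theta\psi_{2}\sigma_{s}^{2} + \psi_{1}\omega^{2}/\theta)^{(q+r)/2}$ with step size $Lpk_{n}/n$. Following the calculation in Eq. \eqref{RE1} together with Lemma \ref{m16} and Burkholder's inequality applied to $\sigma$ over a subblock of length $Lpk_{n}/n$, one obtains a decomposition
\begin{equation*}
\tfrac{Lpk_{n}}{n}\,\mathbb{E}\bigl[\eta_{(i-1)L+l}^{n}\mid \mathcal{F}_{t_{i-1,l-1}}\bigr] = \int_{t_{i-1,l-1}}^{t_{i,l-1}} g(\sigma_{s})\,\text{d}s + M_{i,l}^{n} + r_{i,l}^{n},
\end{equation*}
where $M_{i,l}^{n}$ is a martingale-type increment with $\mathbb{E}[(M_{i,l}^{n})^{2}] = O((Lpk_{n}/n)^{3})$ and $r_{i,l}^{n}$ is a remainder of order $(Lpk_{n}/n)^{2} + Lpk_{n}/(np)$. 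Summing the squared martingale increments over the $n/Lpk_{n}$ blocks yields $O(L^{2}p^{2}k_{n}^{2}/n^{2}) = O(L^{2}p^{2}/n)$, and the remainders contribute $O(1/\sqrt{n})$, which matches the stated bound.

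For part (A.6c), decompose $\Delta\bar{Y}_{m}^{n} - \Delta\bar{Y}_{m,(i-1)Lpk_{n}}^{n} = \xi_{m}^{n}(1) + \xi_{m}^{n}(2)$ into a drift piece and a volatility-freezing piece,
\begin{align*}
\xi_{m}^{n}(1) &= \sum_{j=1}^{k_{n}}w(j/k_{n})\int_{(m+j-1)/n}^{(m+j)/n}a_{s}\,\text{d}s, \\
\xi_{m}^{n}(2) &= \sum_{j=1}^{k_{n}}w(j/k_{n})\int_{(m+j-1)/n}^{(m+j)/n}\bigl(\sigma_{s} - \sigma_{(i-1)Lpk_{n}/n}\bigr)\,\text{d}W_{s},
\end{align*}
with $\mathbb{E}[(\xi_{m}^{n}(1))^{2}] = O(n^{-3/2})$ and $\mathbb{E}[(\xi_{m}^{n}(2))^{2}] = O(n^{-3/2}\cdot Lpk_{n}/n)$ via Burkholder applied to $\sigma$. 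Taylor expand the product $|x|^{q}|y|^{r}$ around $(x,y) = (\Delta\bar{Y}_{m,(i-1)Lpk_{n}}^{n}, \Delta\bar{Y}_{m+k_{n},(i-1)Lpk_{n}}^{n})$; the first-order terms vanish in expectation by the symmetry of $\epsilon$ in Assumption (\textbf{N}) and the oddness of the derivatives of the even functions $|\cdot|^{q}$, leaving the leading contribution governed by the squared $L^{2}$-norms of the $\xi_{m}^{n}(\cdot)$'s. Summing the martingale contributions over $pk_{n}$ returns per block and $n/Lpk_{n}$ blocks and squaring yields a bound of order $Lp/n$, which sits inside the claimed $CLp/n$.

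The main obstacle will be part (A.6c): pre-averaging introduces $k_{n}$-dependence inside every block, so the $\Delta\bar{Y}_{m}^{n}$'s cannot be treated as a simple martingale array as in the noiseless setting. The delicate point is to isolate the martingale-part contribution from $\xi_{m}^{n}(2)$, where the volatility excursion across a \emph{subblock} of length $Lpk_{n}/n$---much larger than $1/n$---has to be controlled, while leveraging the evenness of $q$ and $r$ and the symmetry of the noise so that first-order bias terms cancel rather than accumulate to spoil the rate.
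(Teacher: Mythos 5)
Your treatment of (A.6a) is exactly the paper's: martingale orthogonality across blocks plus the moment bound \eqref{m19}, and the factor $k_{n}\sim\theta\sqrt{n}$ gives $L/\sqrt{n}$. The problems are in (A.6b) and (A.6c). For (A.6b), the paper splits $\hat{V}_{l}^{*}-V^{*}$ at the intermediate quantity $\check{V}_{l}^{*}$ of Eq. \eqref{vboterm}: the piece $\hat{V}_{l}^{*}-\check{V}_{l}^{*}$ is controlled by computing the conditional expectation of each pre-averaged summand \emph{exactly} via the binomial theorem (this is where the evenness of $q,r$ and the symmetry/independence in Assumption (\textbf{N}) are used, together with the $\psi_{i}^{n}$-versus-$\psi_{i}$ replacement of Lemma \ref{m16}), giving an $o_{p}(n^{-1/4})$ error per term and hence the $1/\sqrt{n}$ in \eqref{m26}; the piece $\check{V}_{l}^{*}-V^{*}$ is a Riemann/It\^{o} freezing error over steps of length $Lpk_{n}/n$, handled as in Lemma \ref{l1}(A.2b), giving $L^{2}p^{2}/n$. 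Your bookkeeping does not reproduce this: you posit a per-block remainder of order $Lpk_{n}/(np)$, which after summing over the $n/(Lpk_{n})$ blocks is a $1/p$ bias and after squaring is $1/p^{2}$. Under the theorem's own condition $\sqrt{n}/(Lp^{2})\to\infty$ one has $p\ll n^{1/4}$, so $1/p^{2}\gg 1/\sqrt{n}$; such a term is not covered by (A.6b), and if it were really present it would make $\mathbb{E}[|D_{n}^{(4)}|]\lesssim \sqrt{n}/(Lp^{2})$ diverge, destroying Theorem \ref{m11}. In fact no $1/p$ bias occurs in the conditional \emph{mean} (it arises only in the conditional variance, i.e.\ in the $R_{n}-\Sigma^{*}$ step), so your claim that ``the remainders contribute $O(1/\sqrt{n})$'' both contradicts your stated orders and misses where the $1/\sqrt{n}$ actually comes from.

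For (A.6c) the strategy (freeze volatility, Taylor expand, kill first-order terms by symmetry, sum martingale contributions) is the right one, but the execution as written would not go through. First, the freezing in $\eta_{i}^{n}$ is at the start of each block $B_{i}(p)$, so the volatility excursion to be controlled is over $pk_{n}/n$, not over a subblock of length $Lpk_{n}/n$ as in your formula and closing paragraph. Second, your stated moment orders are wrong: the drift piece is $O(k_{n}/n)=O(n^{-1/2})$ in size, hence $O(n^{-1})$ in second moment (not $n^{-3/2}$), and the freezing piece has second moment of order $pk_{n}^{2}/n^{2}\sim p/n$ (your $n^{-3/2}\cdot Lpk_{n}/n$ is far too small). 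Third, the odd/even cancellation you invoke only applies to the parts of $\Delta\bar{Y}^{n}_{m}-\Delta\bar{Y}^{n}_{m,i}$ with \emph{frozen} coefficients (as in the paper's $\xi^{n}_{m,i}(1)$, which collects $a_{i/n}/n$ and $\tilde{\sigma}_{i/n}\int(W_{s}-W_{i/n})\mathrm{d}W_{s}+\tilde{v}_{i/n}\int(B_{s}-B_{i/n})\mathrm{d}W_{s}$, all even in $(W,B,\epsilon)$ against the odd $f'$); your split into ``drift'' and ``volatility-freezing'' pieces mixes these with coefficient-increment remainders (e.g.\ $\tilde{\sigma}_{u}-\tilde{\sigma}_{i/n}$) that are not even and must instead be bounded via Burkholder under Assumption (\textbf{H}), as in \eqref{s4}--\eqref{s5}. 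Because of these errors your claimed final rate $Lp/n$ (sharper than the paper's $Lp^{2}/n$) is unsupported; the argument needs to be redone with the paper's finer decomposition to actually deliver (A.6c).
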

\begin{proof}[Proof of Lemma \ref{s1}] Part (A.6a) follows by exploiting the
martingale difference property with Eq. \eqref{m19}. To prove part (A.6b), we
start with the decomposition
\begin{equation*}
\hat{V}_{l}^{*}(q,r)^{n} - V^{*}(q,r) = \Bigl( \hat{V}_{l}^{*}(q,r)^{n} -
\check{V}_{l}^{*}(q,r)^{n} \Bigr) + \Bigl( \check{V}_{l}^{*}(q,r)^{n} -
V^{*}(q,r) \Bigr),
\end{equation*}
where
\begin{align}\label{vboterm}
\check{V}_{l}^{*}(q,r)^{n} = \mu_{q} \mu_{r} \frac{L p k_{n}}{n}
\sum_{i=1}^{n/{L p k_{n}}} \biggl( \theta \psi_{2} \sigma_{t_{i-1,
l-1}}^{2} + \frac{1}{ \theta} \psi_{1} \omega^{2} \biggr)^{ \frac{q+r}{2}}.
\end{align}
To deal with the second term, we recall Lemma \ref{l1} Eq. (A.2b). Hence,
the Riemann approximation yields
\begin{equation}\label{m24}
\mathbb{E} \Bigl[ | \check{V}_{l}^{*}(q,r)^{n} - V^{*}(q,r) |^{2} \Bigr]
\leq  C\frac{L^{2}p^{2}}{n}.
\end{equation}
To estimate the first term, let $m \in B_{i}(p)$. Taking $q$ and $r$ to be
even non-negative integers, we invoke Assumption (\textbf{N}) and the
binomial expansion theorem to conclude that
\begin{equation*}
\mathbb{E} \Bigl[ | n^{1/4} \Delta \bar{Y}_{m,(i-1)pk_{n}}^{n} |^{q}
| n^{1/4} \Delta \bar{Y}_{m+k_{n},(i-1)pk_{n}}^{n} |^{r}
\mid \mathcal{F}_{ \frac{(i-1)pk_{n}}{n}} \Bigr] =
\mu_{q} \mu_{r} \biggl( \theta \psi_{2} \sigma^{2}_{
\frac{(i-1)pk_{n}}{n}} + \frac{1}{ \theta} \psi_{1} \omega^{2} \biggr)^{
\frac{q+r}{2}} + o_{p} \bigl(n^{-1/4} \bigr),
\end{equation*}
uniformly in $i$ and $m.$ Hence, we find that
\begin{equation*}
\mathbb{E} \bigl[ \eta_{i}^{n} \mid \mathcal{F}_{t_{i-1,l-1}} \bigr] = \mu_{q}
\mu_{r} \biggl( \theta \psi_{2} \sigma^{2}_{ \frac{(i-1)pk_{n}}{n}} +
\frac{1}{ \theta} \psi_{1} \omega^{2} \biggr)^{ \frac{q+r}{2}} + o_{p}
\bigl(n^{-1/4} \bigr),
\end{equation*}
uniformly in $i$ and $m$. Using these insights, we can finish part (A.6b) by
deducing that
\begin{equation}
\label{m26}
\mathbb{E} \Bigl[ | \hat{V}_{l}^{*}(q,r)^{n} - \check{V}_{l}^{*}(q,r)^{n}
|^{2} \Bigr] \leq \frac{C}{ \sqrt{n}}.
\end{equation}
As for the proof of Eq. (A.6c), we proceed as in the proof of Eq. (A.2c);
see also page 2818 in \citet*{podolskij-vetter:09a}. Thus, we just provide
a sketch of the main steps for $r = 0$. For any $m \geq i$, we define
\begin{align*}
\xi_{m,i}^{n}(1) \equiv & \sum_{j=1}^{k_{n}} w \Big( \frac{j}{k_{n}} \Big)
\Biggl(
\frac{1}{n} a_{ \frac{i}{n}} + \int_{ \frac{m+j-1}{n}}^{ \frac{m+j}{n}} \Bigl[
\tilde{ \sigma}_{ \frac{i}{n}} \bigl( W_{s} - W_{ \frac{i}{n}} \bigr) +
\tilde{v}_{ \frac{i}{n}} \bigl( B_{s} - B_{ \frac{i}{n}} \bigr)
\Bigr] \text{d}
W_{s} \Biggr) \\[0.25cm]
\xi_{m,i}^{n}(2) \equiv & \sum_{j=1}^{k_{n}} w \Big( \frac{j}{k_{n}} \Big)
\Biggl( \int_{ \frac{m+j-1}{n}}^{ \frac{m+j}{n}} \bigl( a_{s} - a_{
\frac{i}{n}} \bigr) \text{d}s + \int_{ \frac{m+j-1}{n}}^{ \frac{m+j}{n}}
\int_{ \frac{i}{n}}^{s} \tilde{a}_{u} \text{d}u \text{d}W_{s} \\[0.25cm]
&+ \int_{ \frac{m+j-1}{n}}^{ \frac{m+j}{n}} \biggl( \int_{ \frac{i}{n}}^{s}
\bigl( \tilde{ \sigma}_{u} - \tilde{ \sigma}_{ \frac{i}{n}} \bigr) \text{d}
W_{u} + \int_{ \frac{i}{n}}^{s} \bigl( \tilde{v}_{u} - \tilde{v}_{
\frac{i}{n}} \bigr) \text{d}B_{u}
\biggr) \text{d}W_{s} \Biggr).
\end{align*}
We note that $\Delta \bar{Y}^{n}_{m} - \Delta \bar{Y}^{n}_{m,i} =
\xi_{m,i}^{n}(1) + \xi_{m,i}^{n}(2)$.
Assumption (\textbf{H}), the H{\"o}lder and Burkholder inequalities imply
\begin{align}
\label{s4}
\mathbb{E} \bigl[ | \xi_{m,i}^{n}(1)|^{4} \bigr] & \leq C
\frac{p^{2}}{n^{2}}, \\[0.25cm]
\label{s5}
\mathbb{E} \bigl[ | \xi_{m,i}^{n}(2)|^{4} \bigr] & \leq C
\frac{p^{4}}{n^{3}}.
\end{align}
Now, we let $f(x) = |x|^{q}$. Taylor's theorem then yields that
\begin{equation*}
V_{l}^{*}(q,r) - \tilde{V}_{l}^{*}(q,r)^{n} = S_{l}^{n}(1) +
S_{l}^{n}(2) + O_{p} \biggl( \frac{p}{\sqrt{n}} \biggr),
\end{equation*}
where
\begin{align*}
S_{l}^{n}(1) &= \frac{Lpk_{n}}{n} \frac{n^{q/4}}{pk_{n}-2k_{n}+2}
\sum_{i=1}^{n/L p k_{n}} \sum_{m \in B_{(i-1)L+l}(p)} f'\Bigl( \Delta
\bar{Y}^{n}_{m,((i-1)L+(l-1))p k_{n}} \Bigr)
\xi_{m,(i-1)L+(l-1))p k_{n}}^{n}(1), \\[0.25cm]
S_{l}^{n}(2) &= \frac{Lpk_{n}}{n} \frac{n^{q/4}}{pk_{n}-2k_{n}+2}
\sum_{i=1}^{n/L p k_{n}} \sum_{m \in B_{(i-1)L+l}(p)} f' \Bigl( \Delta
\bar {Y}^{n}_{m,((i-1)L+(l-1)) p k_{n}} \Bigr) \xi_{m,(i-1)L+(l-1))
pk_{n}}^{n}(2).
\end{align*}
In order to bound these terms, we note that Assumption (\textbf{N})
implies that $(W,B,\epsilon) \overset{d}{=} -(W,B,\epsilon)$. Also, since
$f' \bigl( \Delta \bar{Y}^{n}_{m,i} \bigr)$ is an odd function and
$\xi_{m,i}^{n}(1)$ is an even function of $(W,B,\epsilon)$, it follows that
\begin{equation*}
\mathbb{E} \Big[ f' \bigl( \Delta \bar{Y}^{n}_{m, ipk_{n}} \bigr) \xi_{m,
ipk_{n}}^{n}(1) \mid \mathcal{F}_{ \frac{ipk_{n}}{n}} \Big] = 0.
\end{equation*}
This property---together with the Cauchy-Schwarz inequality, Eqs.
\eqref{m15} and \eqref{s4}---means that
\begin{align}
\label{s7}
\mathbb{E} \bigl[ | S_{l}^{n}(1) |^{2} \bigr] \leq C \frac{Lp^{2}}{n}.
\end{align}
Applying the Cauchy-Schwarz inequality again, combined with Eqs. \eqref{m15}
and \eqref{s5}, we also find that
\begin{align} \label{s8}
\mathbb{E} \bigl[ | S_{l}^{n}(2) |^{2} \bigr] \leq C \frac{p^2}{n},
\end{align}
and with Eqs. \eqref{s7} -- \eqref{s8} at hand, the proof is complete. \qed
\end{proof}
The following results are then sufficient to complete the proof of
Theorem \ref{m11}.
\begin{lemma}
\label{s3}
Assume that the conditions of Theorem \ref{m11} are
fulfilled. Then, it holds that
\begin{align}
\mathbb{E} \bigl[ | D_{n}^{(4)} | \bigr] &\leq C \biggl(
\frac{L p^{2}}{\sqrt{n}} + \frac{1}{L} \biggr), \tag{A.7a} \\[0.25cm]
\mathbb{E} \bigl[ | D_{n}^{(3)} | \bigr] &\leq C \frac{p^2}{\sqrt{n}}, \tag{A.7b}
\\[0.25cm]
\mathbb{E} \bigl[ | D_{n}^{(1)} | \bigr] &\leq C \Biggl( \frac{p}{n^{1/4}} +
\frac{\sqrt{L} p^{2}}{\sqrt{n}} \Biggr). \tag{A.7c}
\end{align}
\end{lemma}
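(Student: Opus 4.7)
The plan is to obtain all three bounds as essentially immediate consequences of Lemma \ref{s1}, combined with Cauchy--Schwarz, so no new probabilistic machinery is needed beyond what was developed for the auxiliary lemma. I would proceed in the order (A.7a), (A.7b), (A.7c), since the first two are direct summations and the third builds on them.

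For (A.7a), I would simply write
\[
\mathbb{E}\bigl[|D_n^{(4)}|\bigr]\;\le\;\frac{\sqrt n}{L^2}\sum_{l=1}^{L}\mathbb{E}\Bigl[\bigl(\hat V_l^{*}(q,r)^n-V^{*}(q,r)\bigr)^{2}\Bigr],
\]
and plug in Lemma \ref{s1}(A.6b), which gives a uniform bound of $C(L^2p^2/n+1/\sqrt n)$ on each summand. Summing $L$ terms and multiplying by $\sqrt n/L^2$ yields $C(Lp^2/\sqrt n+1/L)$, exactly (A.7a). For (A.7b), the same one-line argument applied to Lemma \ref{s1}(A.6c) gives $L\cdot CLp^2/n$ inside the $\sqrt n/L^2$ prefactor, which collapses to $Cp^2/\sqrt n$.

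For (A.7c), I would first apply Cauchy--Schwarz termwise:
\[
\mathbb{E}\bigl[|D_n^{(1)}|\bigr]\;\le\;\frac{2\sqrt n}{L^2}\sum_{l=1}^{L}\sqrt{\mathbb{E}\bigl[(V_l^{*}-\tilde V_l^{*})^{2}\bigr]}\,\sqrt{\mathbb{E}\bigl[(\tilde V_l^{*}-V^{*})^{2}\bigr]}.
\]
The first factor is controlled directly by Lemma \ref{s1}(A.6c). For the second factor, I would split $\tilde V_l^{*}-V^{*}=(\tilde V_l^{*}-\hat V_l^{*})+(\hat V_l^{*}-V^{*})$, use $(a+b)^2\le 2a^2+2b^2$, and then combine Lemma \ref{s1}(A.6a) and (A.6b) to deduce
\[
\mathbb{E}\bigl[(\tilde V_l^{*}-V^{*})^{2}\bigr]\;\le\;C\Bigl(\tfrac{L}{\sqrt n}+\tfrac{L^{2}p^{2}}{n}\Bigr),
\]
uniformly in $l$ (the $1/\sqrt n$ piece from (A.6b) is absorbed into $L/\sqrt n$). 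Plugging both in and using $\sqrt{a+b}\le\sqrt a+\sqrt b$ gives
\[
\mathbb{E}\bigl[|D_n^{(1)}|\bigr]\;\le\;\frac{2\sqrt n}{L^2}\cdot L\cdot \frac{p\sqrt L}{\sqrt n}\cdot\Bigl(\tfrac{L^{1/2}}{n^{1/4}}+\tfrac{Lp}{\sqrt n}\Bigr)\;=\;C\Bigl(\tfrac{p}{n^{1/4}}+\tfrac{\sqrt L\,p^{2}}{\sqrt n}\Bigr),
\]
which is (A.7c).

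There is really no conceptual obstacle: all three estimates reduce to algebra once Lemma \ref{s1} is available. The only mildly delicate point is making sure in (A.7c) that Cauchy--Schwarz is applied termwise rather than with an aggregate bound on $\sum_l (\tilde V_l^{*}-V^{*})^2$; otherwise one loses a factor of $\sqrt L$ and picks up the wrong rate. Under the standing rate condition $\sqrt n/(Lp^{2})\to\infty$, each of the resulting bounds is $o(1)$ and, moreover, dominated by the decomposition rates already required for parts (i)--(v) of the proof of Theorem \ref{m11}, so Lemma \ref{s3} closes the final gap needed for the decomposition $\Sigma_n^{*}-Q_n=D_n^{(1)}+D_n^{(2)}+D_n^{(3)}+D_n^{(4)}$.
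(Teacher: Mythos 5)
Your proof is correct and follows essentially the same route as the paper: (A.7a) and (A.7b) are read off directly from Lemma \ref{s1}, Eqs.\ (A.6b) and (A.6c), and (A.7c) follows by the Cauchy--Schwarz inequality combined with the uniform bound $\mathbb{E}\bigl[(\tilde V_l^{*}(q,r)^n - V^{*}(q,r))^2\bigr] \leq C\bigl(L/\sqrt{n} + L^2p^2/n\bigr)$. The only cosmetic difference is that the paper bounds $\bigl(\mathbb{E}[|D_n^{(1)}|]\bigr)^2 \leq C\bigl(1 + Lp^2/\sqrt{n}\bigr)\,\mathbb{E}[|D_n^{(3)}|]$ via an aggregate Cauchy--Schwarz step, which yields exactly the same rate as your termwise version, so your cautionary remark about losing a factor of $\sqrt{L}$ with the aggregate form is unnecessary.
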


\begin{lemma}\label{s3b}
Assume that the conditions of Theorem \ref{m11} are fulfilled.
Then, it holds that
\begin{equation}
\mathbb{E} \bigl[ | D_{n}^{(2)} | \bigr] \leq C \biggl( \frac{1}{
\sqrt{L}} + \frac{L p^{3/2}}{ \sqrt{n}} \biggr). \tag{A.8a}
\end{equation}
\end{lemma}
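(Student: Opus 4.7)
The plan is to mirror, step by step, the proof of Lemma \ref{m30} part (A.3d) from the noiseless case, but with the noise-adapted block structure of size $pk_n$ and with $\phi(x)$ replaced by the pre-averaged variance functional $\tilde{\phi}(x) = \mu_q\mu_r \bigl(\theta\psi_2 x^2 + \psi_1\omega^2/\theta\bigr)^{(q+r)/2}$. A direct Cauchy--Schwarz bound using Lemma \ref{s1} parts (A.6a) and (A.6b) is not tight enough, so we need to exploit further cancellation via martingale and Malliavin arguments on the cross term.

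First, I would split $\hat V_l^*(q,r)^n - V^*(q,r) = \bigl(\hat V_l^*(q,r)^n - \check V_l^*(q,r)^n\bigr) + \bigl(\check V_l^*(q,r)^n - V^*(q,r)\bigr)$, where $\check V_l^*(q,r)^n$ is the Riemann sum in Eq.~\eqref{vboterm}. The first summand is $O_p(n^{-1/4})$ by the argument behind Eq.~\eqref{m26}, and combining it with $\tilde V_l^* - \hat V_l^* = O_p(\sqrt{L}/n^{1/4})$ from (A.6a) yields a contribution of order $\sqrt{n}\cdot L^{-1}\cdot n^{-1/4}\cdot \sqrt{L}/n^{1/4} = 1/\sqrt L$, which is absorbed into the claimed bound. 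The rest of the work concerns the Riemann remainder $\check V_l^*(q,r)^n - V^*(q,r)$.

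Next, applying Taylor's theorem to $\tilde\phi$ around $\sigma_{t_{i-1,l-1}}$ and using the Assumption (\textbf{H}) dynamics \eqref{Eqn:sigmanew} of $\sigma$, I decompose
\[
D_n^{(2)} = E_n(1) + E_n(2) + F_n + (\text{lower order}),
\]
where $E_n(1)$ is a ``drift-type'' term carrying $\tilde a_{t_{i-1,l-1}}$, $F_n$ is the second-order Taylor remainder carrying $(\sigma_s-\sigma_{t_{i-1,l-1}})^2$, and
\[
E_n(2) = \frac{2\sqrt{n}}{L^{2}}\sum_{l=1}^{L}\Biggl(\sum_{i=1}^{n/Lpk_n} G_{i,l}^{n}\Biggr)\Bigl(\tilde V_l^*(q,r)^n-\hat V_l^*(q,r)^n\Bigr),
\]
with $G_{i,l}^{n}$ of the same form as in Eq.~\eqref{m40} but integrated over the block $[t_{i-1,l-1},t_{i,l-1}]$ of length $Lpk_n/n$. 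For $E_n(1)$, I write it as $\tfrac{Q}{n/(Lpk_n)}\sum_i \chi_{(i-1)L+l}^n$ plus a centered residual exactly as in the noiseless case, with $Q = \int_0^1 \tilde\phi'(\sigma_s)\tilde a_s\,\mathrm{d}s$; the first piece is controlled by the martingale-difference property of $\chi^n$ and yields the $p^{3/2}L/\sqrt n$ order after accounting for the $\sqrt n/L^2$ normalization and the slower rate $n^{-1/4}$ of the pre-averaged chi's, while $F_n$ is handled analogously.

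Finally, and this will be the main obstacle, I square $E_n(2)$ and split into the diagonal contribution $E_n(2.1)$ and the off-diagonal $E_n(2.2)$. The diagonal part uses only orthogonality of the $W$-increments defining the $G_{i,l}^n$'s on disjoint blocks, Lemma \ref{m14}, and (A.6a), giving a bound of order $1/L$. The off-diagonal part forces a Malliavin argument entirely analogous to the noiseless case: for each quadruple $(i_1,i_2,i_3,i_4)$ with $l_a\ne l_b$, the only non-vanishing contributions come from the configuration where the two largest indices coincide; I then represent each $\chi_{(i-1)L+l}^n$ by the Clark--Ocone formula as a stochastic integral of a conditionally expected Malliavin derivative of $\eta^n$, and apply the duality \eqref{IntegrationByParts} together with Assumption (\textbf{M}) to transfer two derivatives onto $H_{i_1,i_3,l_a,l_b} = \tilde\phi'(\sigma_{t_{i_1-1,l_a-1}})\tilde\phi'(\sigma_{t_{i_3-1,l_b-1}})\tilde\sigma_{t_{i_1-1,l_a-1}}\tilde\sigma_{t_{i_3-1,l_b-1}}$. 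The delicate point is bookkeeping of the $k_n\sim\sqrt n$ factors introduced by pre-averaging: each $\eta_i^n$ is an average of $pk_n-2k_n+2$ pre-averaged products, so the Malliavin derivatives $D_{u_1,u_2}\eta_i^n$ must be expanded through the Clark--Ocone representation of the pre-averaged increments $\Delta\bar Y^n$, and the integrations $\mathrm{d}u_1\,\mathrm{d}u_2$ over the block intervals produce the correct $n^{-1}$ decay after the Cauchy--Schwarz step. Assembling everything yields $\mathbb{E}[E_n(2.2)]\le C/L$, so that $\sqrt{\mathbb{E}[E_n(2)^2]}\le C/\sqrt{L}$, and combining with the estimates of $E_n(1)$ and $F_n$ delivers the stated bound $\mathbb{E}[|D_n^{(2)}|]\le C\bigl(1/\sqrt L + Lp^{3/2}/\sqrt n\bigr)$.
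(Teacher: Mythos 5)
Your overall skeleton follows the paper: you peel off the $G_n$-type term $(\hat V_l^*-\check V_l^*)(\tilde V_l^*-\hat V_l^*)$ and bound it by $1/\sqrt L$ via (A.6a) and Eq.~\eqref{m26}, you Taylor-expand $\phi$ around $\sigma_{t_{i-1,l-1}}$ to get $E_n(1)$, $E_n(2)$, $F_n$, and you handle $E_n(1)$, $F_n$ by the noiseless-case argument (the martingale piece actually gives $p/n^{1/4}$ and the residual gives $Lp^{3/2}/\sqrt n$, both acceptable under $\sqrt n/Lp^2\to\infty$). The problem is the key step, $E_n(2.2)$. You propose to represent each noisy $\chi_{(i-1)L+l}^n$ by the Clark--Ocone formula and then use the duality \eqref{IntegrationByParts} with Assumption (\textbf{M}). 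But in the noisy setting $\eta_i^n$ and $\chi_i^n$ are functionals of the pre-averaged observations $\Delta\bar Y^n=\sigma\Delta\bar W^n+\Delta\bar\epsilon^n$, i.e.\ of the microstructure noise as well as of $W$. The Malliavin framework of Section \ref{secA.5} (the derivative $D$, Clark--Ocone \eqref{clark}, the adjoint $\delta^k$) is built on the isonormal Gaussian family generated by $W$ alone, and Assumption (\textbf{M}) gives smoothness only for $\sigma,\tilde\sigma,\tilde v$; the noise is merely i.i.d.\ with finite moments under Assumption (\textbf{N}), not Gaussian, so $\chi_i^n$ is not a Wiener functional and has no Clark--Ocone representation as a $\mathrm{d}W$-integral. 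Your plan to ``expand $D_{u_1,u_2}\eta_i^n$ through the Clark--Ocone representation of $\Delta\bar Y^n$'' therefore does not go through as stated, and this is not a cosmetic issue: without the extra cancellation the crude Cauchy--Schwarz bound on the off-diagonal term is of order $Lp^2/\sqrt n$, which is too large (e.g.\ at the optimal rates $L\sim n^{1/5}$, $p\sim n^{1/10}$ it is not dominated by $1/\sqrt L+Lp^{3/2}/\sqrt n$).

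The paper's way around this is the missing idea in your proposal: since $q,r$ are even integers (and one may take $r=0$), the binomial theorem splits $(\sigma\Delta\bar W_m^n+\Delta\bar\epsilon_m^n)^q$ into terms $(\Delta\bar W_m^n)^{q-s}(\Delta\bar\epsilon_m^n)^s$, and each such term is further decomposed into $\mathbb{E}[(\Delta\bar\epsilon_m^n)^s]\bigl((\Delta\bar W_m^n)^{q-s}-\mathbb{E}[(\Delta\bar W_m^n)^{q-s}]\bigr)$ plus $(\Delta\bar W_m^n)^{q-s}\bigl((\Delta\bar\epsilon_m^n)^s-\mathbb{E}[(\Delta\bar\epsilon_m^n)^s]\bigr)$. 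The first piece, $\bar E_n^{(s)}(2)$, involves only Wiener functionals (the noise enters through uniformly bounded deterministic moments $n^{s/4}\mathbb{E}[(\Delta\bar\epsilon_m^n)^s]$), so the noiseless-case treatment of the squared/mixed terms applies and yields $p^2/\sqrt n$; the second piece, $\tilde E_n^{(s)}(2)$, is handled with no Malliavin calculus at all, because $X\Perp\epsilon$ and the centered noise powers have mean zero, so the cross-$l$ terms vanish in expectation and only the diagonal (again $p^2/\sqrt n$) survives. This gives $\mathbb{E}[|E_n^{(s)}(2)|]\le Cp/n^{1/4}$, which is absorbed into $1/\sqrt L$. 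Until you insert this separation of the noise from the Brownian part before invoking any Clark--Ocone or duality argument, your proof of (A.8a) has a genuine gap at its hardest point.
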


\begin{proof}[Proof of Lemma \ref{s3}]
Again, part (A.7a) follows from Eq. (A.6b). Concerning parts (A.7b) --
(A.7c), we apply Lemma \ref{s1} and find that
\begin{equation}
\label{SomeEstimate2}
\mathbb{E} \biggl[ \Bigl( \tilde{V}_{l}^{*}(q,r)^{n} - V^{*}(q,r) \Bigr)^{2}
\biggr] \leq C \biggl( \frac{L}{\sqrt{n}} + \frac{L^{2}p^{2}}{n} \biggr).
\end{equation}
Then, the Cauchy-Schwarz inequality and Eq. \eqref{SomeEstimate2} yield
\begin{equation*}
\bigl( \mathbb{E} \bigl[ | D_{n}^{(1)} | \bigr] \bigr)^{2} \leq C
\biggl( 1 + \frac{L p^{2}}{\sqrt{n}} \biggr) \frac{\sqrt{n}}{L^{2}}
\sum_{l=1}^{L} \mathbb{E} \bigl[ | V_{l}^{*}(q,r)^{n} -
\tilde{V}_{l}^{*}(q,r)^{n} |^{2} \bigr] = C \biggl( 1 +
\frac{L p^2}{\sqrt{n}} \biggr) \mathbb{E} \bigl[ | D_{n}^{(3)} | \bigr].
\end{equation*}
Hence, it is enough to show part (A.7b), which follows from Eq. (A.6c). \qed
\end{proof}

\begin{proof}[Proof of Lemma \ref{s3b}]
We start by denoting $\displaystyle \phi(x) = \mu_{q} \mu_{r} \Bigl(
\theta \psi_{2} x^{2} + \frac{1}{ \theta} \psi_{1}
\omega^{2} \Bigr)^{ \frac{q+r}{2}}$. Note that $\phi(x)$ is a smooth
function of $x$, because both $q$ and $r$ are even. After recalling Eq.
\eqref{vboterm}, an application of Taylor's theorem, the
Cauchy-Schwarz inequality and Eq. \eqref{Eqn:Burkholder} for $\sigma$
then implies that
\begin{equation*}
D_{n}^{(2)} = E_{n} + F_{n} + G_{n} + O_{p} \biggl( \frac{L p^{3/2}}{
\sqrt{n}} \biggr)+O_{p} \biggl(\frac{p}{n^{1/4}} \biggr),
\end{equation*}
where the last error term comes from the boundary integral around 0 and 1, and
\begin{align*}
E_{n} &= \frac{2\sqrt{n}}{L^{2}} \sum_{l=1}^{L} \Biggl(
\sum_{i=1}^{n/{L p k_{n}}} \phi'( \sigma_{{t_{i-1,l-1}}})
\int_{t_{i-1,l-1}}^{t_{i,l-1}} \bigl( \sigma_{{t_{i-1,l-1}}} - \sigma_{s}
\bigr) \text{d}s \Biggr) \times \Bigl( \tilde{V}_{l}^{*}(q,r)^{n} -
\hat{V}_{l}^{*}(q,r)^{n} \Bigr), \\[0.25cm]
F_{n} &= -\frac{\sqrt{n}}{L^{2}} \sum_{l=1}^{L} \Biggl(
\sum_{i=1}^{n/{L p k_{n}}} \phi''( \sigma_{{t_{i-1,l-1}}})
\int_{t_{i-1,l-1}}^{t_{i,l-1}} \bigl( \sigma_{{t_{i-1,l-1}}} -
\sigma_{s} \bigr)^{2} \text{d}s \Biggr) \times \Bigl(
\tilde{V}_{l}^{*}(q,r)^{n} - \hat{V}_{l}^{*}(q,r)^{n} \Bigr), \\[0.25cm]
G_n &= \frac{2\sqrt{n}}{L^{2}} \sum_{l=1}^{L} \Bigl( \hat{V}_{l}^{*}(q,r)^{n} -
\check{V}_{l}^{*}(q,r)^{n} \Bigr) \Bigl( \tilde{V}_{l}^{*}(q,r)^{n} -
\hat{V}_{l}^{*}(q,r)^{n} \Bigr).
\end{align*}
From the Cauchy-Schwarz inequality, Lemma \ref{s1}, Eq. (A.6a) and Eq.
\eqref{m26}, we get
\begin{equation*}
\mathbb{E} \bigl[ | G_{n} | \bigr] \leq \frac{C}{ \sqrt{L}}.
\end{equation*}
We recall Assumption (\textbf{H}), apply Eq.
\eqref{Eqn:Burkholder} to $\tilde{a}, \tilde{ \sigma}$, and
subsequently use Taylor's theorem to conclude that
\begin{equation*}
E_{n} = - E_{n}(1) - E_{n}(2) + O_{p} \biggl( \frac{L p^{3/2}}{
\sqrt{n}} \biggr),
\end{equation*}
where
\begin{align*}
E_{n}(1) &= \frac{2\sqrt{n}}{L^{2}} \sum_{l=1}^{L} \Biggl(
\sum_{i=1}^{n/{L p k_{n}}} \phi'( \sigma_{{t_{i-1,l-1}}})
\frac{L^{2}k_{n}^{2}p^{2}}{2n^{2}} \tilde{a}_{t_{i-1,l-1}} \Biggr)
\Bigl( \tilde{V}_{l}^{*}(q,r)^{n} - \hat{V}_{l}^{*}(q,r)^{n} \Bigr), \\[0.25cm]
E_{n}(2) &= \frac{2\sqrt{n}}{L^{2}} \sum_{l=1}^{L} \Biggl(
\sum_{i=1}^{n/{L p k_{n}}} \phi'( \sigma_{{t_{i-1,l-1}}})
\int_{t_{i-1, l-1}}^{t_{i, l-1}} \Bigl[ \tilde{\sigma}_{t_{i-1, l-1}}
\bigl(W_{s} - W_{ t_{i-1, l-1}} \bigr)
+ \tilde{v}_{t_{i-1, l-1}} \bigl( B_{s} - B_{t_{i-1, l-1}}\bigr)  \Bigr] \text{d}s  \Biggr)
 \\[0.25cm]
& \times \Bigl( \tilde{V}_{l}^{*}(q,r)^{n} - \hat{V}_{l}^{*}(q,r)^{n} \Bigr).
\end{align*}
For the term $E_n(1)$, we proceed as in the noiseless case. After recalling
Eqs. \eqref{m19} and (A.6a), we find that
\begin{equation*}
\mathbb{E} \bigl[ | E_{n}(1) | \bigr] \leq C \Biggl( \frac{p}{n^{1/4}} +
\frac{L p^{3/2}}{\sqrt{n}}  \Biggr).
\end{equation*}
Next, the $F_{n}$-term can be handled in a similar fashion. Thus, we get
the estimate
\begin{equation*}
\mathbb{E} \bigl[ | F_{n} | \bigr] \leq C \Biggl( \frac{p}{n^{1/4}} +
\frac{L p^{3/2}}{\sqrt{n}} \Biggr).
\end{equation*}
So we will be done, if we can show that
\begin{equation}
\label{m56}
\mathbb{E} \bigl[ | E_{n}(2) | \bigr] \leq C \frac{p}{n^{1/4}}.
\end{equation}
Throughout the remainder of the proof, we assume that $r = 0$, so that
$q$ is an even integer. Again, this is without loss of generality. We
then appeal to the binomial theorem in order to find an expansion of
\begin{equation*}
| \Delta \bar{Y}_{m,i}^{n} |^{q} = ( \Delta \bar{Y}_{m,i}^{n} )^{q} =
\Bigl( \sigma_{ \frac{i}{n}} \Delta \bar{W}_{m}^{n} + \Delta \bar{
\epsilon}_{m}^{n} \Bigr)^{q},
\end{equation*}
whereby we can separate $\tilde{V}_{l}^{*}(q,r)^{n} -
\hat{V}_{l}^{*}(q,r)^{n}$, and hence $E_{n}(2)$, into $q + 1$ terms
of the form
\begin{equation*}
E_{n}(2) = \sum_{s=0}^{q} E_{n}^{(s)}(2),
\end{equation*}
where
\begin{align*}
E_{n}^{(s)}(2) &= \frac{2\sqrt{n}}{L^{2}} \sum_{l=1}^{L} \Biggl(
\sum_{i=1}^{n/{L p k_{n}}} \phi'( \sigma_{{t_{i-1,l-1}}})
\int_{t_{i-1, l-1}}^{t_{i, l-1}} \Bigl[ \tilde{\sigma}_{t_{i-1, l-1}}
\bigl(W_{s} - W_{ t_{i-1, l-1}} \bigr)
+ \tilde{v}_{t_{i-1, l-1}} \bigl( B_{s} - B_{t_{i-1, l-1}}\bigr) \Bigr] \text{d}s  \Biggr)
 \\[0.25cm]
& \times
\Biggl( \frac{L p k_{n}}{n} \sum_{i=1}^{n/{L p k_{n}}}
\chi_{(i-1)L+l}^{n}(s) \Biggr), \\[0.25cm]
\chi_{i}^{n}(s) &= \frac{q!}{s!(q-s)!} \frac{n^{q/4}}{pk_{n}-k_{n}+2} \sum_{
m, m + k_{n} - 1
\in B_{i}(p)} \Bigl( \sigma_{ \frac{(i-1)pk_{n}}{n}} \Bigr)^{q-s}
\Bigl( (\Delta \bar{W}_{m}^{n})^{q-s}(\Delta \bar{ \epsilon}_{m}^{n})^{s} -
\mathbb{E} \bigl[ (\Delta \bar{W}_{m}^{n})^{q-s}(\Delta \bar{
\epsilon}_{m}^{n})^{s} \bigr] \Bigr).
\end{align*}
As a result, it is sufficient to show that
\begin{equation}
\label{m58b}
\mathbb{E} \bigl[ | E_{n}^{(s)}(2) | \bigr] \leq C \frac{p}{n^{1/4}},
\end{equation}
where $s$ is an arbitrary integer chosen from $0 \leq s \leq q$. Note the
equality:
\begin{align*}
(\Delta \bar{W}_{m}^{n})^{q-s}(\Delta \bar{ \epsilon}_{m}^{n})^{s} -
\mathbb{E} \bigl[ (\Delta \bar{W}_{m}^{n})^{q-s}(\Delta \bar{
\epsilon}_{m}^{n})^{s} \bigr] &=
\mathbb{E} \bigl[ (\Delta \bar{\epsilon}_{m}^{n})^{s} \bigr]
\Bigl( ( \Delta \bar{W}_{m}^{n})^{q-s} - \mathbb{E} \bigl[ (\Delta
\bar{W}_{m}^{n})^{q-s} \bigr] \Bigr) \\[0.25cm]
&+ (\Delta \bar{W}_{m}^{n})^{q-s} \Bigl( ( \Delta \bar{ \epsilon}_{m}^{n})^{s}
- \mathbb{E} \bigl[ (\Delta \bar{ \epsilon}_{m}^{n})^{s} \bigr] \Bigr).
\end{align*}
We then divide $\chi_{i}^{n}(s)$, and hence $E_{n}^{(s)}(2)$, into two parts and
denote (by preserving the above order)
\begin{equation}
E_{n}^{(s)}(2) = \bar{E}_{n}^{(s)}(2) + \tilde{E}_{n}^{(s)}(2).
\end{equation}
The term $\bar{E}_n^{(s)}(2)$ can be handled using a decomposition as
in Eq. \eqref{m39} in the no noise proof:
\begin{equation*}
\bigl( \bar{E}_{n}^{(s)}(2) \bigr)^{2} = \bar{E}_{n}^{(s)}(2.1) +
\bar{E}_{n}^{(s)}(2.2),
\end{equation*}
where $\bar{E}_{n}^{(s)}(2.1)$ and $\bar{E}_{n}^{(s)}(2.2)$ are, respectively,
composed of squared and mixed terms. We recall that the sequence $n^{s/4}
\mathbb{E} \bigl[ (\Delta \bar{ \epsilon}_{m}^{n} )^{s} \bigr]$ is uniformly
bounded in $m$ and $n$. Then, proceeding as in Eq. \eqref{m55}, we find that
\begin{equation}
\label{m69}
\mathbb{E} \Bigl[ \bar{E}_{n}^{(s)}(2.1) \Bigr] \leq C \frac{p^{2}}{
\sqrt{n}},
\end{equation}
and
\begin{equation}
\label{m70}
\mathbb{E} \Bigl[ \bar{E}_{n}^{(s)}(2.2) \Bigr] \leq C \frac{p^2}{\sqrt{n}}.
\end{equation}
For the last term $\tilde{E}_{n}^{(s)}(2)$, we recall that $X \Perp \epsilon$
and $( \Delta \bar{ \epsilon}_{m}^{n})^{s} - \mathbb{E} \bigl[ (\Delta \bar{
\epsilon}_{m}^{n})^{s} \bigr]$ has mean zero. Then,
we decompose $\bigl( \tilde{E}_{n}^{(s)}(2) \bigr)^{2}$ as in Eq. \eqref{m39},
and since the mixed terms for different $l$'s are mean zero, we find that
\begin{equation}
\label{m71}
\mathbb{E} \Bigl[ \bigl( \tilde{E}_{n}^{(s)}(2) \bigr)^{2} \Bigr] \leq C
\frac{p^{2}}{ \sqrt{n}}.
\end{equation}
Hence, Eqs. \eqref{m69} -- \eqref{m71} lead to Eq. \eqref{m58b}. \qed
\end{proof}

\subsection{Proof of Theorem \ref{NoiseAnyPowerThm}}
The proof is reminiscent to the proof of Theorem \ref{c0prime}. A careful
inspection of the proof of Theorem \ref{m11} implies
that the following steps also hold under the weaker assumptions of
Theorem \ref{NoiseAnyPowerThm}:
\begin{align*}
\mathbb{E} \bigl[ | Q_{n} - U_{n} | \bigr] \to 0, \qquad
\mathbb{E} \bigl[ | U_{n} - R_{n} | \bigr] \to 0, \qquad
\mathbb{E} \bigl[ | R_{n} - \Sigma^{*} | \bigr] \to 0, \qquad
\mathbb{E} \bigl[ |\hat{ \Sigma}_{n}^{*} - \Sigma_{n}^{*} | \bigr] \to 0.
\end{align*}
Hence, given the rates on $p$ and $L$, it suffices to show that $\mathbb{E}
\bigl[ | \Sigma_{n}^{*} - Q_{n} | \bigr] \to 0$. Following Lemma 4 -- 5 of
\citet*{podolskij-vetter:09a}, we can prove Lemma \ref{s1} under the assumptions
of Theorem \ref{NoiseAnyPowerThm}. However, we note that the right-hand side
estimate of Lemma \ref{s1}(c) changes from $L p^2/n$ to $p/\sqrt{n}$, because
the estimate in Eq. \eqref{s5} is $p^{2}/n^{2}$ instead of $p^{4}/n^{3}$.
Then, we finish the proof by an additional condition $L/p \to \infty$, which
implies that the right-hand side estimate of Lemma
\ref{s1}(a) dominates that of Lemma \ref{s1}(c). \qed

\subsection{Proofs of Theorems \ref{c1} and \ref{bipowerconsistency}}
We can prove Theorem \ref{c1} by employing the techniques that are used in the proof
of Theorem \ref{c0} and \ref{m11}. Hence, here we only sketch the main parts
that enable us to find the convergence rate. We define:
\begin{align*}
\Sigma_{n} &= \frac{1}{L} \sum_{l=1}^{L} \Biggl( \sqrt{ \frac{n}{L}} \left(
V_l(f,g) - V(f,g) \right) \Biggr)^{2}, \quad
Q_{n}= \frac{p^{2}}{n} \sum_{l=1}^{L} \Biggl( \sum_{i=1}^{n/L p}
\chi_{(i-1)L+l}^{n} \Biggr)^{2}, \\[0.25cm]
U_{n} &= \frac{p^{2}}{n} \sum_{l=1}^{L} \sum_{i=1}^{n/L p} \bigl(
\chi_{(i-1)L+l}^{n} \bigr)^{2}, \quad
R_{n}= \frac{p^{2}}{n} \sum_{l=1}^{L} \sum_{i=1}^{n/L p} \mathbb{E}
\Bigl[ \bigl( \chi_{(i-1)L+l}^{n} \bigr)^{2} \mid \mathcal{F}_{
\frac{(i-1)L+l-1}{n}} \Bigr],
\end{align*}
with
\begin{equation*}
\eta_{i}^{n} = \frac{1}{p-1} \sum_{m \in B_{i}(p)}f \Bigl( \sqrt{n}
\sigma_{ \frac{(i-1)p}{n}} \Delta_{m}^{n} W \Bigr) g \Bigl( \sqrt{n}
\sigma_{ \frac{(i-1)p}{n}} \Delta_{m+1}^{n} W \Bigr) ~\text{   and   }~
\chi_{i}^{n} = \eta_{i}^{n} - \mathbb{E} \Bigl[ \eta_{i}^{n} \mid
\mathcal{F}_{ \frac{(i-1)p}{n}} \Bigr].
\end{equation*}
There exists a $C > 0$, independent of $i$, such that
\begin{equation}
\mathbb{E} \bigl[( \eta_{i}^{n})^{4} \bigr] \leq C \qquad \text{ and }
\qquad \mathbb{E} \bigl[ (\chi_{i}^{n})^{4} \bigr] \leq \frac{C}{p^{2}},
\end{equation}
where the last inequality holds, because $\chi_{i}^{n}$ is a sum of
$1$-dependent random variates. Then, we are done due to the relationship
$p/n \ll \sqrt{p/n} \ll p/\sqrt{n} \ll 1/\sqrt{L}$ (the latter follows
from $n/Lp^2 \to \infty$) and the following Lemma. We omit the proof
for the sake of brevity.
\begin{lemma}
\label{m31}
Assume that the conditions of Theorem \ref{c1} are fulfilled. Then, we get that
\begin{align}
\mathbb{E} \bigl[ | \Sigma_{n} - Q_{n} | \bigr] &\leq C \biggl(
\frac{L p^{2}}{n} + \frac{ p}{ \sqrt{n}} \biggr), \tag{A.9a}
\\[0.25cm]
\mathbb{E} \bigl[ | Q_{n} - U_{n} | \bigr] &\leq C \frac{1}{ \sqrt{L}},
\tag{A.9b} \\[0.25cm]
\mathbb{E} \bigl[ | U_{n} - R_{n} | \bigr] &\leq C \sqrt{\frac{p}{n}},
\tag{A.9c} \\[0.25cm]
\mathbb{E} \bigl[ | R_{n} - \Sigma | \bigr] &\leq C \biggl( \frac{p}{n}
+ \frac{1}{p} \tag{A.9d} \biggr), \\[0.25cm]
\mathbb{E} \bigl[ | \hat \Sigma_{n} - \Sigma_n | \bigr] &\leq C \frac{1}{\sqrt{L}}.
\tag{A.9e}
\end{align}
\end{lemma}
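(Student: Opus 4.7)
The plan is to follow the blueprint of the proofs of Theorem \ref{c0} and Theorem \ref{m11}, mixing the ``noiseless'' structure of the former with the ``blocked'' structure of the latter. The new feature specific to bipower variation is that the summands $f(\sqrt{n}\Delta_{i}^{n}X)g(\sqrt{n}\Delta_{i+1}^{n}X)$ are asymptotically $1$-dependent, so the block statistics $v_{i}(f,g)^{n}$ are built on blocks $B_{i}(p)$ of size $p$ (rather than single returns). This forces the introduction of the HAC error $1/p$ in (A.9d) and modifies the blocking error to $Lp^{2}/n$ in (A.9a); elsewhere the bookkeeping is parallel. Throughout, the key analytical tool is the uniform bound $\mathbb{E}[(\chi_{i}^{n})^{4}]\leq C/p^{2}$, which reflects the averaging over $p$ one-dependent summands within a block.

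The easy bounds come first. For (A.9b), one writes $Q_{n}-U_{n}$ as $(p^{2}/n)\sum_{l}[(S_{l}^{n/Lp})^{2}-T_{l}^{n/Lp}]$ for the obvious martingale sums and copies the Burkholder--Cauchy--Schwarz chain from Eq. \eqref{cc16} together with the $\chi_{i}^{n}$ moment bound. For (A.9c), $U_{n}-R_{n}$ is a sum of martingale differences $(\chi_{i}^{n})^{2}-\mathbb{E}[(\chi_{i}^{n})^{2}|\mathcal{F}]$, and a second-moment computation gives the $\sqrt{p/n}$ rate. For (A.9e), the identity $\sum_{l}V_{l}(f,g)^{n}=LV(f,g)^{n}$ collapses $\hat\Sigma_{n}-\Sigma_{n}$ to $-(n/L)(V(f,g)^{n}-V(f,g))^{2}$ and the standard $n^{-1/2}$ rate of the full-sample bipower variation (cf. BGJPS6) delivers the $1/L$ bound. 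Finally, (A.9d) proceeds like step (iv) in the proof of Theorem \ref{c0}: the conditional expectation $\mathbb{E}[(\chi_{i}^{n})^{2}|\mathcal{F}_{(i-1)p/n}]$ gives a Riemann sum over blocks of width $p/n$, yielding a $p/n$ error from Taylor expansion of $\sigma$ and an additional $1/p$ error from the standard $m$-dependent HAC bias illustrated by Eq. \eqref{Eqn:Example}.

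The main obstacle is (A.9a). Here one decomposes
\begin{equation*}
V_{l}(f,g)^{n}-V(f,g) = \bigl(V_{l}(f,g)^{n}-\tilde{V}_{l}(f,g)^{n}\bigr) + \bigl(\tilde{V}_{l}(f,g)^{n}-\hat{V}_{l}(f,g)^{n}\bigr) + \bigl(\hat{V}_{l}(f,g)^{n}-V(f,g)\bigr),
\end{equation*}
where $\tilde{V}_{l}$ replaces $\sqrt{n}\Delta_{i}^{n}X$ by the first-order approximation $\alpha_{i}^{n}$ and $\hat{V}_{l}$ further replaces each summand by its conditional expectation at the left endpoint of its block. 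This splits $\Sigma_{n}-Q_{n}$ into four pieces $D_{n}^{(1)},\dots,D_{n}^{(4)}$ as in Lemma \ref{m30}. Bounding $D_{n}^{(4)}$ reduces to a bipower analog of Lemma \ref{l1}(A.2b), which gives $O(L^{2}p^{2}/n^{2})$ after summation and thus $O(Lp^{2}/n)$ for $D_{n}^{(4)}$. The terms $D_{n}^{(3)}$ and $D_{n}^{(1)}$ are controlled by a Taylor--Burkholder argument on $\sqrt{n}\Delta_{i}^{n}X-\alpha_{i}^{n}$, using that $f'$ and $g'$ inherit an odd/even structure from Assumption (\textbf{K}) which kills the leading martingale contribution, exactly as in the proof of Lemma \ref{l1}(A.2c); the resulting bounds are $O(p/\sqrt{n})$ and better.

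The genuinely hard piece is $D_{n}^{(2)}$. After another Taylor expansion of $\sigma$, a drift-volatility-of-volatility split isolates a term $E_{n}(2)$ of the form $(\sqrt{n}/L^{2})\sum_{l}(\sum_{i}G_{i,l}^{n})(\tilde{V}_{l}-\hat{V}_{l})$, and squaring produces a diagonal piece $E_{n}(2.1)$ plus a cross piece $E_{n}(2.2)$. The diagonal piece is handled with the Cauchy--Schwarz inequality and Lemma \ref{s1}(A.6a). The cross piece is where one must import the Malliavin calculus step from the proof of Theorem \ref{c0}: the Clark--Ocone formula represents each $\chi_{i}^{n}$ as a Skorohod integral, and Assumption (\textbf{M}) together with the integration-by-parts formula \eqref{IntegrationByParts} produces a second-order Malliavin derivative whose $L^{2}$-norm is $O(1)$, delivering an extra factor of $p^{2}/n$ beyond the naive bound. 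The bookkeeping is heavier than in Theorem \ref{c0} because the indices must now account for the block width $p$, but the algebra is identical once one reduces to the four cases in which the two largest indices coincide. Combining all four contributions yields (A.9a), and the announced balancing $L=O(n^{2/5})$, $p=O(n^{1/5})$ then gives $\hat\Sigma_{n}-\Sigma=O_{p}(n^{-1/5})$. For Theorem \ref{bipowerconsistency}, the differentiability and Malliavin conditions are not needed: inspection of steps (A.9b), (A.9c), (A.9e) shows they converge to zero under the weaker Assumptions (\textbf{H'}) and (\textbf{K'}), while (A.9d) follows from the corresponding Riemann approximation in BGJPS6; the remaining convergence of $\Sigma_{n}-Q_{n}$ is handled via a H\"older argument exactly as in the proof of Theorem \ref{c0prime}, with the auxiliary condition $L/p\to\infty$ used to dominate the bias term that dropped out of the rate version.
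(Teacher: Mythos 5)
Your overall route is the one the paper intends: it explicitly omits the proof of Lemma \ref{m31} and says it follows by the techniques of Theorems \ref{c0} and \ref{m11}, and your treatment of (A.9a)--(A.9d) — the $D_{n}^{(1)},\dots,D_{n}^{(4)}$ decomposition, the bound $\mathbb{E}[(\chi_{i}^{n})^{4}]\leq C/p^{2}$ from $1$-dependence, the odd/even cancellation for the $\tilde{V}_{l}$-type terms, the Clark--Ocone/integration-by-parts step under Assumption (\textbf{M}) for the cross piece of $E_{n}(2)$, and the Riemann-plus-HAC split behind $p/n+1/p$ — matches that blueprint. One caveat of presentation: the intermediate quantity must freeze the volatility at the left endpoint of the \emph{block}, i.e.\ be built from the paper's $\eta_{i}^{n}$ with $\sigma_{(i-1)p/n}$, not from the per-return approximation $\alpha_{i}^{n}$; otherwise $\tilde{V}_{l}-\hat{V}_{l}$ does not line up with the $\chi_{i}^{n}$ entering $Q_{n}$, and the blocking error $Lp^{2}/n$ does not emerge. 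This is repairable but should be stated as such.

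There is, however, a genuine flaw in your argument for (A.9e). The identity $\sum_{l}V_{l}(f,g)^{n}=LV(f,g)^{n}$, which is exact in the power-variation case and gives the collapse $\hat{\Sigma}_{n}-\Sigma_{n}=-(n/L)\bigl(V(f,g)^{n}-V(f,g)\bigr)^{2}$, is false for the blocked bipower subsampler: the within-block statistics $v_{i}(f,g)^{n}$ are normalized by $p-1$ while $V(f,g)^{n}$ is normalized by $n$, so aggregating all $n/p$ blocks produces a factor of order $p/(p-1)$, and moreover the boundary products $f(\sqrt{n}\Delta_{ip}^{n}X)\,g(\sqrt{n}\Delta_{ip+1}^{n}X)$ appear in $V(f,g)^{n}$ but in none of the $V_{l}(f,g)^{n}$. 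Hence $\hat{\Sigma}_{n}-\Sigma_{n}$ does not reduce to a single squared term, and your claimed $C/L$ bound does not follow from this step. The correct argument is the one used for step (v) in the proof of Theorem \ref{m11}: write the difference of squares $\hat{\Sigma}_{n}-\Sigma_{n}=\frac{1}{L}\sum_{l}(a_{l}-b_{l})(a_{l}+b_{l})$ with $a_{l}=\sqrt{n/L}\,(V_{l}(f,g)^{n}-V(f,g)^{n})$ and $b_{l}=\sqrt{n/L}\,(V_{l}(f,g)^{n}-V(f,g))$, use that $b_{l}=O_{p}(1)$ uniformly in $l$ (from the Lemma \ref{s1}-type moment bounds) and $\sqrt{n/L}\,(V(f,g)^{n}-V(f,g))=O_{p}(1/\sqrt{L})$; this yields exactly the $C/\sqrt{L}$ stated in (A.9e), which is also why the lemma asserts $1/\sqrt{L}$ here rather than the $1/L$ available for power variation.
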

Lastly, the proof of Theorem \ref{bipowerconsistency} is equivalent to the proof of
Theorem \ref{c0prime} and \ref{NoiseAnyPowerThm}, and we omit it. \qed

\subsection{Proof of Theorem \ref{theorem:truncation}}
We denote with $X'$ the continuous part of $X$
and introduce the following approximation of $\hat{ \Sigma}_{n}$:
\begin{equation*}
\hat{ \Sigma}_{n}' = \frac{1}{L} \sum_{l=1}^{L} \Biggl( \sqrt{ \frac{n}{L}}
\Bigl( V_l'(q,r)^n-V'(q,r)^n \Bigr) \Biggr)^{2},
\end{equation*}
where
\begin{align*}
\begin{split}
V'(q,r)^{n} &=\frac{1}{n}\sum_{i =
1}^{n -1} | \sqrt{n} \Delta_{i}^{n} \check{X}'|^{q} | \sqrt{n} \Delta_{i+1}^{n}
\check{X}'|^{r}, \\[0.25cm]
V_{l}'(q,r)^{n} &= \frac{Lp}{n} \sum_{i = 1}^{n/L p} v'_{(i-1)L+l}(q,r)^{n},
\\[0.25cm]
v'_{i}(q,r)^{n} &= \frac{1}{p-1} \sum_{j,j+1 \in B_{i}(p)} | \sqrt{n}
\Delta_{j}^{n} \check{X}' |^{q} | \sqrt{n} \Delta_{j+1}^{n} \check{X}' |^{r}.
\end{split}
\end{align*}
The proof of Theorem \ref{bipowerconsistency} implies that $\hat{ \Sigma}_{n}'
\overset{p}{ \to} \Sigma$. So, it suffices to show that $\hat{ \Sigma}_{n} -
\hat{ \Sigma}_{n}' \overset{p}{ \to} 0$.
Note that
\begin{align*}
\hat{ \Sigma}_{n} - \hat{ \Sigma}_{n}' = \frac{1}{L} & \sum_{l=1}^{L} \Bigg(
\sqrt{ \frac{n}{L}} \Big( V_{l}(q,r)^{n} - V_{l}'(q,r)^{n} + V'(q,r)^{n} -
V(q,r)^{n} \Big) \Bigg) \\[0.25cm]
& \ \ \, \times \Bigg( \sqrt{ \frac{n}{L}} \Big( V_{l}(q,r)^{n} - V(q,r)^{n} +
V_{l}'(q,r)^{n} - V'(q,r)^{n} \Big) \Bigg).
\end{align*}
For any $j \geq 1$, we set:
\begin{equation*}
\bar{ \eta}_{j}^{n} = | \sqrt{n} \Delta_{j}^{n} \check{X}|^{q} | \sqrt{n}
\Delta_{j+1}^{n} \check{X}|^{r} - | \sqrt{n} \Delta_{j}^{n} \check{X}'|^{q}
| \sqrt{n} \Delta_{j+1}^{n} \check{X}'|^{r}.
\end{equation*}
Applying Eq. (13.2.21) from \citet*{jacod-protter:12a} with $m = 1 + \epsilon$
and $\theta=0$, we find that:
\begin{equation*}
\mathbb{E} \big[| \bar{ \eta}_{j}^{n}|^{1 + \epsilon}| \big] \leq
\frac{1}{n^{(1 + \epsilon)/2}} \phi_{n},
\end{equation*}
uniformly in $j$, for some sequence $\phi_{n}$ going to $0$ and $\epsilon \in
(0,1 - \beta]$, and under $\check{ \omega} \geq (m s' + \epsilon -
1)/2(ms' - \beta)$. Then, the discrete H\"{o}lder inequality implies that
\begin{align*}
\sup_{1 \leq l \leq L} \mathbb{E} \Biggl[ \Bigl| \sqrt{ \frac{n}{L}} \Big(
V_{l}(q,r)^{n} - V_{l}'(q,r)^{n} \Big) \Bigr|^{1+\epsilon} \Biggr] \to 0 \quad
\text{and} \quad \sup_{1 \leq l \leq L} \mathbb{E} \Biggl[ \Bigl| \sqrt{
\frac{n}{L}} \Big( V(q,r)^{n} - V'(q,r)^{n} \Big) \Bigr|^{1+\epsilon} \Biggr]
\to 0.
\end{align*}
Applying the arguments of Lemma \ref{l1}, we also have that
\begin{equation*}
\sup_{1 \leq l \leq L} \mathbb{E} \Biggl[ \Big| \sqrt{ \frac{n}{L}} \Big(
V_{l}'(q,r)^{n} - V'(q,r)^{n} \Big) \Big|^{1+\frac{1}{ \epsilon}} \Biggr] +
\sup_{1 \leq l \leq L} \mathbb{E} \Biggl[ \Big| \sqrt{ \frac{n}{L}} \Big(
V_{l}(q,r)^{n} - V(q,r)^{n} \Big) \Big|^{1+\frac{1}{ \epsilon}} \Biggr] \leq C.
\end{equation*}
Therefore, again by the H\"older inequality,
\begin{equation*}
\mathbb{E} \Big[ | \hat{ \Sigma}_{n} - \hat{ \Sigma}_{n}'| \Big] \to 0.
\end{equation*}
As $\epsilon > 0$ can be chosen as small as possible, the proof is complete. \qed

\subsection{Selected elements of Malliavin calculus}
\label{secA.5}
Here, we introduce some concepts from Malliavin calculus. The interested
readers are referred to \citet*{ikeda-watanabe:89a} and \citet*{nualart:06a}
for more thorough textbooks on this subject. Set $\mathbb{H} =
\mathbb{L}^2([0,1], \text{d}x)$ and let $\langle \cdot, \cdot
\rangle_{\mathbb H}$ denote the scalar product on $\mathbb{H}$ and $W =
\{ W(h) : h \in \mathbb{H} \}$ be an isonormal Gaussian family indexed by
$\mathbb{H}$, i.e. the random variables $W(h)$ are centered Gaussian with a
covariance structure determined via
\begin{equation*}
\mathbb{E} \bigl[ W(g) W(h) \bigr] = \langle g,h \rangle_{\mathbb{H}}.
\end{equation*}
In our setting, $W(h) = \int_{0}^{1} h_{s} \text{d}W_{s}$, where $W$ is a
Brownian motion. The set of smooth random variables is introduced with
\begin{equation*}
\mathcal{S} = \Bigl\{ F = f \bigl( W(h_{1}), \ldots, W(h_{n}) \bigr), n \geq
1, h_{i} \in \mathbb{H} \Bigr\},
\end{equation*}
where $f \in C_{p}^{ \infty} (\mathbb{R}^{n})$ (i.e., the space of infinitely
often differentiable functions such that all derivatives exhibit polynomial
growth). The $k$th order Malliavin derivative of $F \in \mathcal S$, denoted by
$D^{k} F$, is defined by
\begin{equation} \label{Derivative}
D^{k} F = \sum_{i_{1}, \ldots, i_{k} = 1}^{n} \frac{ \partial^{k}}{
\partial x_{i_{1}} \cdots \partial x_{i_{k}}} f \bigl(W(h_{1}), \ldots,
W(h_{n}) \bigr) h_{i_{1}} \otimes \cdots \otimes h_{i_{k}}.
\end{equation}
The space $\mathbb D_{k,q}$ is the completion of the set $\mathcal{S}$ with
respect to the norm
\begin{equation*}
\| F \|_{k, q} \equiv \left( \mathbb{E} \bigl[ |F|^{q} \bigr] + \sum_{m =
1}^{k} \mathbb{E} \bigl[ \| D^{m} F \|_{ \mathbb{H}^{ \otimes m}}^{q}
\bigr] \right)^{1/q}.
\end{equation*}
If $(X_{t})_{t \in [0,1]}$ is a solution of a stochastic differential equation
(SDE)
\begin{equation*}
\text{d}X_t = a(X_{t})\text{d}t + \sigma(X_{t}) \text{d}W_{t},
\end{equation*}
and $a,\sigma \in C^{1}(\mathbb{R})$, then $DX_{t}$ is given as the solution of
the SDE
\begin{equation}
\label{MalSDE}
D_{s}X_{t} = \sigma(X_{s}) + \int_{s}^{t} a'(X_{u})D_{s}(X_{u}) \text{d}u +
\int_{s}^{t} \sigma'(X_{u}) D_{s}(X_{u}) \text{d}W_{u},
\end{equation}
for $s \leq t$, and $D_{s}X_{t} = 0$, if $s>t$.

The Malliavin derivative satisfies a chain rule: If $F \in \mathbb{D}_{1,2}$
and $g \in C^{1}( \mathbb{R})$, then it holds that
\begin{equation}
\label{ChainRule}
D(g(F))= g'(F)DF.
\end{equation}
Another application of Malliavin calculus is a refinement of the Clark-Ocone
formula. Let $F \in \mathbb D_{1,2}$, then
\begin{equation} \label{clark}
F= \mathbb{E} [F] + \int_{0}^{1}  \mathbb{E}[ D_{t} F \mid \mathcal{F}_{t} ]
\text{d}W_{t}.
\end{equation}
The operator $D^{k}$ possesses an unbounded adjoint denoted by $\delta^{k}$
(also called a multiple Skorokhod integral). The following integration by parts
formula holds: if $u \in \text{Dom}( \delta^{k})$ and $F \in \mathbb D_{k,2}$,
then
\begin{equation}\label{IntegrationByParts}
\mathbb{E} \bigl[ F \delta^{k}(u) \bigr] = \mathbb{E} \bigl[ \langle D^{k} F,
u \rangle_{ \mathbb{H}^{ \otimes k}} \bigr].
\end{equation}

\end{document}